\renewcommand{\backref}[1]{}
\renewcommand{\backrefalt}[4]{%
\small
\ifcase #1 %
\or
[p.\ #2]%
\else
[pp.\ #2]%
\fi}
\renewcommand{\paragraph}{%
  \@startsection{paragraph}{4}%
  {\z@}{3ex \@plus .5ex \@minus .3ex}{-1em}%
  {\normalfont\normalsize\bfseries}%
}
\newcommand{\para}{%
  \@startsection{paragraph}{4}%
  {\z@}{1.5ex \@plus .5ex \@minus .3ex}{-1em}%
  {\normalfont\normalsize\bfseries}%
}
\newtheorem{theorem}{Theorem}
\newtheorem{lemma}{Lemma}
\newtheorem{corollary}[theorem]{Corollary}
\newtheorem{fact}[lemma]{Fact}
\newtheorem{claim}[lemma]{Claim}
\theoremstyle{definition}
\newtheorem{definition}[lemma]{Definition}
\newtheorem{remark}[lemma]{Remark}
\newtheoremstyle{part}
  {-0.7\topsep}   
  {\topsep}   
  {\itshape}  
  {0pt}       
  {\bfseries} 
  {.}         
  {5pt plus 1pt minus 1pt} 
  {}          
\theoremstyle{part}
\newtheorem{factpart}{Fact}[lemma]
\newcommand{\eq}[1]{\hyperref[eq:#1]{(\ref*{eq:#1})}}
\renewcommand{\sec}[1]{\hyperref[sec:#1]{Section~\ref*{sec:#1}}}
\newcommand{\thm}[1]{\hyperref[thm:#1]{Theorem~\ref*{thm:#1}}}
\newcommand{\lem}[1]{\hyperref[lem:#1]{Lemma~\ref*{lem:#1}}}
\newcommand{\prop}[1]{\hyperref[prop:#1]{Proposition~\ref*{prop:#1}}}
\newcommand{\cor}[1]{\hyperref[cor:#1]{Corollary~\ref*{cor:#1}}}
\newcommand{\fig}[1]{\hyperref[fig:#1]{Figure~\ref*{fig:#1}}}
\newcommand{\tab}[1]{\hyperref[tab:#1]{Table~\ref*{tab:#1}}}
\newcommand{\alg}[1]{\hyperref[alg:#1]{Algorithm~\ref*{alg:#1}}}
\newcommand{\app}[1]{\hyperref[app:#1]{Appendix~\ref*{app:#1}}}
\newcommand{\defn}[1]{\hyperref[def:#1]{Definition~\ref*{def:#1}}}
\newcommand{\clm}[1]{\hyperref[clm:#1]{Claim~\ref*{clm:#1}}}
\newcommand{\fct}[1]{\hyperref[fact:#1]{Fact~\ref*{fact:#1}}}
\newcommand*{\fullref}[1]{\hyperref[{#1}]{\autoref*{#1}:~\nameref*{#1}}}
\newcommand*{\fullbref}[1]{\hyperref[{#1}]{\autoref*{#1} (\nameref*{#1})}}
\newcommand{\B}{\{0,1\}}
\newcommand{\AND}{\textsc{And}}
\newcommand{\OR}{\textsc{Or}}
\newcommand{\pOR}{\textsc{Pr-Or}}
\newcommand{\FORR}{\textsc{Forrelation}}
\newcommand{\SIMON}{\textsc{Simon}}
\newcommand{\IP}{\textsc{IP}}
\newcommand{\STR}{\textsc{Str}}
\newcommand{\Addr}{\textsc{Addr}}
\newcommand{\XOR}{\textsc{Xor}}
\newcommand{\TR}{\textsc{Tr}}
\newcommand{\pTR}{\textsc{Ptr}}
\newcommand{\Disj}{\textsc{Disj}}
\newcommand{\X}{\mathcal{X}}
\newcommand{\Y}{\mathcal{Y}}
\newcommand{\Z}{\mathcal{Z}}
\newcommand{\CS}{\mathrm{CS}}
\newcommand{\tO}{\widetilde{O}}
\newcommand{\tOmega}{\widetilde{\Omega}}
\newcommand{\dt}{\mathrm{dt}}
\renewcommand{\(}{\left(}
\renewcommand{\)}{\right)}
\renewcommand{\deg}{\operatorname{deg}^+}
\newcommand{\bR}{\mathbb{R}}
\DeclareMathOperator{\dom}{{dom}}
\DeclareMathOperator{\D}{D}
\DeclareMathOperator{\R}{R}
\DeclareMathOperator{\Q}{Q}
\DeclareMathOperator{\N}{N}
\DeclareMathOperator{\C}{C}
\DeclareMathOperator{\IC}{IC}
\DeclareMathOperator{\UN}{UN}
\newcommand{\Ddt}{\D^\dt}
\newcommand{\Rdt}{\R^\dt}
\newcommand{\Qdt}{\Q^\dt}
\newcommand{\QEdt}{\Q_E^\dt}
\newcommand{\UNdt}{\UN^\dt}
\newcommand{\G}{\mathcal{G}}
\newcommand{\T}{\mathcal{T}}
\newcommand{\x}{\mathbf{x}}
\newcommand{\y}{\mathbf{y}}
\newcommand{\fu}{\mathbf{u}}
\newcommand{\fv}{\mathbf{v}}
\newcommand{\defeq}{\coloneqq} 
\newcommand{\eps}{\varepsilon}
\renewcommand{\epsilon}{\varepsilon}
\renewcommand{\Pr}{\mathrm{Pr}}
\newcommand{\E}{\mathbb{E}}
\newcommand{\I}{\mathbb{I}}
\newcommand{\DIV}{\mathbb{D}}
\newcommand{\h}{\mathrm{h}}
\newcommand{\mH}{\mathbb{H}}
\newcommand{\VR}{\Delta}
\newcommand{\err}{\mathrm{err}}
\newcommand{\CC}{\mathrm{CC}}
\begin{document}

\title{\bfseries Separations in communication complexity\\ using cheat sheets and information complexity}

\author{
Anurag Anshu\footnote{Centre for Quantum Technologies, National University of Singapore, Singapore. \texttt{a0109169@u.nus.edu}} \and 
Aleksandrs Belovs\footnote{CWI, Amsterdam, The Netherlands. {\tt stiboh@gmail.com}} \and 
Shalev Ben-David\footnote{Massachusetts Institute of Technology. \texttt{shalev@mit.edu}} \and 
Mika G\"o{\"o}s\footnote{University of Toronto. \texttt{mgoos@cs.toronto.edu}} \and
Rahul Jain\footnote{Centre for Quantum Technologies, National University of Singapore and MajuLab, UMI 3654, 
Singapore. \texttt{rahul@comp.nus.edu.sg}} \and 
Robin Kothari\footnote{Massachusetts Institute of Technology. \texttt{rkothari@mit.edu}} \and 
Troy Lee\footnote{SPMS, Nanyang Technological University and Centre for Quantum Technologies and MajuLab, UMI 3654, Singapore. {\tt troyjlee@gmail.com}} \and 
Miklos Santha\footnote{IRIF, Universit\'e Paris Diderot, CNRS, 75205 Paris, France;  and
Centre for Quantum Technologies, National University of Singapore,
Singapore. {\tt miklos.santha@liafa.univ-paris-diderot.fr}}
}

\hypersetup{pageanchor=false} 
\date{}
\maketitle

\begin{abstract}
While exponential separations are known between quantum and randomized communication complexity for partial 
functions (Raz, {\footnotesize STOC 1999}), 
the best known separation between these measures for a total function is quadratic, 
witnessed by the disjointness function.  We give the first super-quadratic separation between quantum and randomized 
communication complexity for a total function, giving an example exhibiting a power $2.5$ gap.  We further 
present a $1.5$ power separation between exact quantum and randomized communication complexity, improving 
on the previous $\approx 1.15$ separation by Ambainis ({\footnotesize STOC 2013}).  Finally, we present a nearly 
optimal quadratic separation between randomized communication complexity and the logarithm of the partition number, 
improving upon the previous best power $1.5$ separation due to G\"o\"os, Jayram, Pitassi, and Watson. 

Our results are the communication analogues of separations in query complexity proved using the recent cheat sheet 
framework of Aaronson, Ben-David, and Kothari ({\footnotesize STOC 2016}).
Our main technical results are randomized communication and information complexity lower bounds for a family of 
functions, called lookup functions, that generalize and port the cheat sheet framework to communication complexity.
\end{abstract}

\thispagestyle{empty}
\clearpage
\section{Introduction}
\label{sec:intro}
\hypersetup{pageanchor=true} 
\setcounter{page}{1}

Understanding the power of different computational resources is one of the primary aims of complexity theory.  
Communication complexity provides an ideal setting to study these questions, as it is a nontrivial model for which we are 
still able to show interesting lower bounds.  Moreover, lower bounds in communication complexity have applications to 
many other areas of complexity theory, for example yielding lower bounds for circuits, data structures, streaming algorithms, 
property testing, and linear and semi-definite programs. 

In communication complexity, two players Alice and Bob are given inputs $x\in\X$ and $y\in\Y$ respectively, and their task 
is to compute a known function $F\colon\X \times \Y \to \{0,1,*\}$ while minimizing the number of bits communicated between 
them.  We call such a function a communication function. The players only need to be correct on inputs $(x,y)$ for which $F(x,y) \in \B$.  The function is called total if 
$F(x,y) \in \B$ for all $(x,y) \in \X \times \Y$, and otherwise is called partial.  

A major question in communication complexity is what advantage players who exchange quantum messages can 
achieve over their classical counterparts.  We will use $\R(F)$ and $\Q(F)$ to denote bounded-error (say $1/3$)
public-coin randomized  and bounded-error 
quantum communication complexities of~$F$, respectively.  We also use $\D(F)$ for the deterministic communication
complexity and $\Q_E(F)$ for the exact quantum communication complexities of~$F$, respectively.  Note the easy 
relationships $\D(F) \ge \R(F) \ge \Q(F)$ and $\D(F) \ge \Q_E(F) \ge \Q(F)$.

There are examples of \emph{partial} functions $F$ for which $\Q(F)$ is exponentially smaller than $\R(F)$ \cite{Raz99}.  
For total functions, however, it is an open question if $\Q(F)$ and $\R(F)$ are always polynomially related.  On the other 
hand, the largest separation between these measures is quadratic, witnessed by the disjointness function which satisfies
$\R(\Disj_n) = \Omega(n)$ \cite{KS92,Raz92} and $\Q(\Disj_n) = O(\sqrt{n})$ \cite{BCW98,AA03}.  Our first result gives the first super-quadratic separation between $\Q(F)$ and $\R(F)$ for a total function.
\begin{restatable}{theorem}{qvsr} \label{thm:q-vs-r}
There exists a total function $F\colon\X \times \Y \to \B$ with $\R(F) = \tOmega(\Q(F)^{2.5})$.
\end{restatable}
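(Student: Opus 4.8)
The plan is to port the cheat-sheet construction of Aaronson--Ben-David--Kothari from query complexity to communication complexity via "lookup functions." Recall that in query complexity, the cheat-sheet version $f_{\CS}$ of a function $f$ takes $t$ blocks of inputs to $f$ together with an "address array" of cheat-sheet cells; the value of $f_{\CS}$ depends on the contents of the cell addressed by the bit-vector $(f(\text{block}_1),\dots,f(\text{block}_t))$, and that cell is supposed to contain certificates for the claimed values of all $t$ blocks. The known query separation uses $f = \FORR^{\circ k}$ (or a $k$-fold composed Forrelation-type function), whose query complexity is $\tTheta(n^{1/2})$ quantumly but $\tOmega(n)$ classically — in fact $\tOmega(n)$ even to distinguish a single block, so that classically one is forced to essentially evaluate all $t \approx$ (certificate length) blocks, costing $\tOmega(t n)$, while the quantum algorithm can use Grover search over the $2^t$ cheat-sheet cells combined with the quadratically faster block evaluation, costing roughly $\tO(\sqrt{2^t}\cdot \text{cell size} + \sqrt{t}\cdot \sqrt{n})$, and choosing $t = \Theta(\log n)$ balances this to $\tO(n^{1/2})$ vs $\tOmega(n^{3/2})$, giving a power $3$ (here power $2.5$ after the appropriate bookkeeping in communication). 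First I would fix a base communication function $G$ playing the role of Forrelation: a partial function on $\X\times\Y$ with $\Q(G) = \tO(\log n)$ but $\R(G) = \tOmega(\sqrt n)$ — e.g. the communication version of $\FORR$ obtained by the standard gadget/lifting trick, whose quantum protocol is the Raz-style protocol and whose randomized lower bound is Raz's — and then compose it suitably so that even distinguishing one YES block from one NO block costs $\tOmega(\sqrt n)$ randomized bits while staying $\tO(\log n)$ quantum.

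The key steps, in order, are: (1) Define the lookup function $F = G_{\mathrm{lookup}}$ formally: Alice and Bob each hold their halves of $t$ independent instances of $G$ and their halves of an address array of $2^t$ cells, each cell holding (their shares of) a purported transcript/certificate; $F$ evaluates the cell indexed by the string $b = (G(\text{instance}_i))_{i\in[t]}$ and accepts iff that cell correctly certifies $b$. This is the generalization promised in the abstract. (2) Upper bound $\Q(F) = \tO(\sqrt{n})$: the quantum protocol runs distributed Grover search over the $2^t$ cells to find one whose certificate is internally consistent, verifies the $t$ claimed instance-values using $t$ parallel repetitions of the $\tO(\log n)$ quantum protocol for $G$ (amplified), and checks the cell matches; with $t = \Theta(\log n)$ and cell size $\polylog(n)$ this is $\tO(\sqrt{2^t}\cdot\polylog + \sqrt t\cdot\sqrt n) = \tO(\sqrt n)$. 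This uses only standard ingredients (distributed Grover of Buhrman--Cleve--Wigderson style, parallel amplification) — routine. (3) Lower bound $\R(F) = \tOmega(n^{3/2})$ — this is the heart of the paper and is where I expect the main obstacle. The intuition is that a randomized protocol gets no usable information about which cell is "live" without evaluating the $G$-instances, each of which costs $\tOmega(\sqrt n)$; and there are $\sim t$ of them that must all be evaluated to pin down the address, and moreover the address hiding must be robust against a protocol that adaptively narrows down the cell. Finally, choosing parameters so that $\R(F) = \tOmega(n^{3/2})$ and $\Q(F) = \tO(\sqrt n)$ yields $\R(F) = \tOmega(\Q(F)^{3})$; the loss from $3$ to $2.5$ in the theorem statement comes from the fact that the sharpest available randomized lower bound for the base function (or for the lookup combiner) is slightly weaker than the query analogue — e.g. only $\tOmega(n^{1/2})$ hardness of distinguishing is available rather than a full direct-sum-style $\tOmega(n)$ — so the honest bound one can push through is $n^{1/2}$ vs $n^{5/4}$, i.e. power $2.5$.

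The hard part will be the randomized (and the underlying information-complexity) lower bound in step (3): one cannot simply quote a query-complexity separation, because in communication there is no clean "certificate complexity" bookkeeping, and a randomized protocol could in principle learn partial information about several cells simultaneously. The plan is to prove an information-complexity lower bound for lookup functions: design a hard distribution $\mu$ on inputs under which (a) the address $b$ is (close to) uniform on $[2^t]$ and essentially independent of the players' views until they have invested $\tOmega(\sqrt n)$ communication per instance, using a hardcore/round-elimination argument built on the information cost of the base function $G$ under its own hard distribution (Raz's lower bound is most naturally an information-complexity statement, which is exactly why the abstract emphasizes information complexity); and (b) conditioned on the correct cell, the protocol still must verify consistency of all $t$ instances. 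Combining a direct-sum lower bound across the $t$ instances (each contributing $\tOmega(\sqrt n)$ information, via a suitable distributional/information-theoretic direct sum for $G^{\otimes t}$ embedded in the lookup structure) with the "address is hidden" argument gives $\IC_\mu(F) = \tOmega(t\sqrt n) = \tOmega(\sqrt n \log n)$ — wait, that is only $\tOmega(\sqrt n)$ up to logs, not $n^{3/2}$; so in fact the amplification to $n^{3/2}$ must come from taking the base function itself to already be $\tOmega(n)$-hard randomized while $\tO(\sqrt n)$-ish quantum via a $k$-fold composition $G = H^{\circ k}$ inside each block, so that each of the $t = \Theta(\log n)$ blocks costs $\tOmega(n)$ randomized, the quantum cost per block is $\tO(\sqrt n)$, and one ends up with $\R(F) = \tOmega(n)$ per block times the fact that you cannot parallelize/Grover over blocks classically, set against $\Q(F) = \tO(\sqrt n)$ — and the careful accounting of the composition overhead is exactly what degrades the exponent to $2.5$. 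The remaining obstacle, then, is to make the "no classical Grover over the $2^t$ cells" argument rigorous at the level of information complexity: formally, one shows that a protocol transcript of length $o(n^{3/2})$ reveals $o(1)$ bits of information about the live cell's address relative to $\mu$, hence cannot beat random guessing among $2^t = \poly(n)$ cells with the required success probability — and this is the technical core I would devote the bulk of the paper to.
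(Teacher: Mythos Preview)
Your proposal has the right high-level shape (port cheat sheets to communication via lookup functions), but several of the concrete steps are mis-aimed, and the parameter accounting does not produce the exponent $2.5$.

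\textbf{Where the $2.5$ actually comes from.} In the paper the base partial function is $\STR = \SIMON_n\circ\OR_n\circ\AND_n$ composed with $\IP_b$. This function already has $\Q=\tO(n)$ (since $\Qdt(\SIMON_n)=\polylog n$ and $\Qdt(\OR_n\circ\AND_n)=O(n)$) and $\R=\tOmega(n^{2.5})$; the lookup/cheat-sheet construction only serves to make it \emph{total} while preserving both bounds up to $\polylog$. In particular, the main lower-bound theorem (\autoref{thm:ICCS}) shows $\R(F_\G)=\Omega(\R(F)/c^2)$ --- it \emph{loses} a polylog factor, it does not gain a factor of $t$. So the amplification you are hunting for (``direct sum over $t$ blocks'' or ``classically you must try all $2^t$ cells'') is not where the separation comes from; the full $n^{2.5}$ is already in the single partial base function.

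\textbf{The randomized lower bound on the base.} You propose deriving the randomized lower bound for $G$ from Raz's information-cost argument for Forrelation-type problems. The paper instead proves $\R(\STR\circ\IP_b)=\tOmega(n^{2.5})$ by (i) constructing explicit one-/two-sided conical-junta certificates for $\SIMON_n$, $\OR_n$, $\AND_n$, (ii) composing them via the composition theorem of \cite{GJ16} to get $\deg^+_\epsilon(\STR)=\Omega(n^{2.5})$, and (iii) lifting via \cite{GLM+15}. This machinery is essential: a randomized lifting theorem is not known, and a direct information-complexity bound on a composed Forrelation function is not available. Your plan does not have a substitute for this step.

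\textbf{The quantum upper bound.} The quantum protocol does \emph{not} Grover-search over the $2^t$ cells. It simply evaluates $F$ on each of the $c=\Theta(\log n)$ blocks (cost $\tO(c\,\Q(F))=\tO(n)$) to learn the address $\ell$, and then evaluates $G_\ell$ on the single addressed cell. The only place Grover appears is \emph{inside} $G_\ell$: the cell contents $u_\ell\oplus v_\ell$ hold pointers to $O(n^2)$ purported $\IP_b$-values, and distributed Grover over those $n^2$ items verifies them in $\tO(n)$ communication. With your ``Grover over $2^t$ cells'' plan, the cost would be $\tO(\sqrt{2^t}\cdot(\text{cell size}))$, but the cell size here is $\tTheta(n)$ (it encodes $O(n)$ certificate pointers for each of $c$ blocks), so that route already costs $\tOmega(n)$ before touching $F$ and buys nothing.

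\textbf{Summary.} Replace your step (2) by the straightforward ``evaluate all $c$ copies, then check one cell'' protocol; replace step (3)'s direct-sum/address-hiding argument by the combination of conical-junta-degree composition $+$ lifting for the base function, together with a general theorem that $\R$ of a lookup function is at least $\R$ of the base (up to $\poly(c)$); and move the entire source of the $2.5$ gap into the choice of base function $\SIMON\circ\OR\circ\AND$ (or $\FORR\circ\OR\circ\AND$). The technical heart is then not ``classically you cannot Grover,'' but rather the information-theoretic proof that any protocol for $F_\G$ can be converted into a protocol for $F$ with comparable cost.
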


In fact, we establish a power $2.5$ separation between $\Q(F)$ and information complexity~\cite{BJKS04}, a lower bound technique for randomized communication complexity (defined in \sec{prelim}).

We also give a $1.5$ power separation between randomized communication complexity and \emph{exact} quantum 
communication complexity.  This improves the previous best separation of $\approx 1.15$ due to Ambainis \cite{Amb13}.
\begin{restatable}{theorem}{qevsr} \label{thm:qe-vs-r}
There exists a total function $F\colon\X \times \Y \to \B$ with $\R(F) = \tOmega(\Q_E(F)^{1.5})$.
\end{restatable}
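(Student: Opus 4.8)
The plan is to realize $F$ as a \emph{lookup function}, the construction that ports the cheat-sheet framework of Aaronson, Ben-David and Kothari to communication complexity. Fix a base problem $g$ (which will be a gadget-composed version of a query function with an exact quantum speed-up) for which, for each $b\in\B$, every input in $g^{-1}(b)$ is contained in some combinatorial rectangle that is a subset of $g^{-1}(b)$ and has a short description; call such a description a \emph{certificate} for the value $b$. The lookup function $F$ hands Alice and Bob $k=\Theta(\log n)$ independent instances of $g$ (where $n$ is the input length of $g$, and $k$ is chosen so there are $\poly(n)$ data cells) together with an array of data cells indexed by $\{0,1\}^k$, each cell split between the two players; the $k$ instance values form an \emph{address} $\ell\in\{0,1\}^k$, and $F(x,y)=1$ iff the data cell at address $\ell$ specifies, for each of the $k$ instances, a certificate for its value that is consistent with that instance's inputs. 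Two features make this a well-behaved total function: an instance whose inputs violate the promise underlying $g$ lies in no valid certificate (a rectangle that is a subset of $g^{-1}(b)$ cannot contain a promise-violator), so $F$ is total; and the data cells at the $2^k-1$ addresses other than $\ell$ are information-free, so no protocol can avoid computing the address.

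The exact quantum upper bound is short. A protocol first learns $\ell$ by running an optimal exact quantum protocol for $g$ on each of the $k$ instances, at cost $k\cdot\Q_E(g)$; it then checks the certificates stored in cell $\ell$. Verification is cheap in the communication model: each player reads its own half of cell $\ell$ locally, locally checks that the claimed rectangle is $g$-monochromatic with the claimed value, checks that its own input lies on the corresponding side, and the two players exchange the $O(1)$ membership bits per instance (plus $O(\log n)$ bits per gadget evaluation, if the consistency checks need the underlying gadget resolved). Hence $\Q_E(F)=\tO(\Q_E(g))$, and the very same recipe gives $\D(F)=\tO(\D(g))$, $\R(F)=\tO(\R(g))$ and $\Q(F)=\tO(\Q(g))$, which is what the paper's other theorems need.

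The technical core is the lower bound, namely $\IC(F)=\tOmega(M)$ for a suitable complexity measure $M$ of $g$; since $\IC\le\R$ this lower bounds $\R(F)$. I would fix a hard input distribution $\mu$ for $g$, feed $F$ the product $\mu^{\otimes k}$ on the instances together with a distribution on the data cells under which, with high probability, there is a unique consistent certificate vector (so $F$'s value genuinely depends on the address). Then a direct-sum argument converts any protocol for $F$ of small information cost into one that solves a single fresh $\mu$-distributed instance of $g$: embed that instance into a uniformly random one of the $k$ slots, simulate the other slots and all the data cells with private randomness, and observe that to decide $F$ the protocol must identify the cell at address $\ell$; since the other cells are uninformative, this forces it to learn every coordinate $\ell_i$ of the address, i.e., to solve $g$ on each slot. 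Making this rigorous is where essentially all the work sits: one must run the chain rule across the $k$ coordinates of $\ell$ while ruling out that the protocol ``hedges'' by learning the address only partially, one must make the embedded instance statistically indistinguishable from a genuine slot, and --- the delicate point for getting a nontrivial exponent --- one wants the bound to be in terms of a randomized/fractional certificate-type measure $M$ of $g$ that can be polynomially larger than $\Q_E(g)$. I expect this to be the main obstacle, and it is presumably exactly the ``lookup function'' lower bound advertised in the abstract.

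To finish, instantiate the base problem. To reach exponent $3/2$, take $g$ to be the composition, through an index gadget, of a query function $f$ that has a power-$3/2$ separation between its exact quantum and randomized query complexities --- an Ambainis-type exact function amplified through the composition and cheat-sheet tools available in the query world, improving on the $\approx 1.15$ attainable from Ambainis's function alone. Lifting with the index gadget turns the exact quantum query algorithm into an exact quantum protocol, so $\Q_E(F)=\tO(\Q_E(g))=\tO(\QEdt(f))$ up to log factors, while the randomized query lower bound on $f$ feeds, via the argument of the previous paragraph, into $\IC(F)=\tOmega(\Rdt(f))$ up to the gadget size. Combining, $\R(F)\ge\IC(F)=\tOmega(\Q_E(F)^{3/2})$, and $F$ is total, which is the claim. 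The loose ends to verify are that the base query separation genuinely attains exponent $3/2$ and that the certificate length of $g$ stays below $\Q_E(g)$ so it does not dominate the upper bound, and that the gadget is simultaneously compatible with the exact simulation of queries and with the information-complexity argument; if a single lookup layer falls short, one iterates the construction.
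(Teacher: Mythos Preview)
Your high-level architecture is right: a lookup (cheat-sheet) function over $c=\Theta(\log n)$ copies of a gadget-lifted base problem, an exact quantum upper bound that first computes the address and then checks certificates, and an information-complexity lower bound showing the lookup function inherits the randomized hardness of the base. That is exactly the paper's shape.

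The genuine gap is in how you pass from query to communication on the lower-bound side. You write that ``the randomized query lower bound on $f$ feeds \ldots\ into $\IC(F)=\tOmega(\Rdt(f))$ up to the gadget size.'' But no general lifting theorem from $\Rdt$ to $\R$ is known; the paper says this explicitly. The paper's lookup lower bound (\autoref{thm:ICCS}) only gives $\R(F_\G)=\tOmega(\R(F))$, i.e.\ it reduces the \emph{communication} complexity of the lookup function to the \emph{communication} complexity of the base. To then lower bound $\R(F)$, the paper does \emph{not} appeal to $\Rdt(f)$: it instead lower-bounds the \emph{approximate conical junta degree} $\deg_\epsilon$ of the base query function and invokes the lifting theorem of~\cite{GLM+15} for that measure, using the inner-product gadget $\IP_b$ (not the index gadget). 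This is why the argument goes through junta certificates and the composition theorem of~\cite{GJ16}.

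The second gap is your instantiation. You propose an unspecified ``Ambainis-type exact function amplified through composition and cheat-sheet tools'' to reach a $3/2$ query separation, and then to lift. The paper's choice is much more concrete and avoids iterating cheat sheets in query: take the \emph{partial} function $\pTR_{n,m}\coloneqq \pOR_n\circ\AND_m$, where $\pOR_n$ is $\OR$ restricted to inputs of Hamming weight $0$ or $1$. The point is that $\QEdt(\pOR_n)=O(\sqrt{n})$ \emph{exactly} (unique-search Grover), so $\QEdt(\pTR_{n,m})=O(\sqrt{n}\,m)$, while one can exhibit a one-sided $(nm/4,nm)$ junta certificate for $\neg\pTR_{n,m}$, giving $\deg_{1/16}(\neg\pTR_{n,m})=\Omega(nm)$ and hence (after $\IP_b$-lifting) $\R(\pTR_{n,m}\circ\IP_b)=\Omega(nm)$. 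Setting $m=\Theta(\sqrt{n})$ and applying one lookup layer yields $\Q_E(H_\G)=\tO(n)$ versus $\R(H_\G)=\tOmega(n^{3/2})$. No iteration is needed, and the certificate length ($\tO(n)$, to point out $n-1$ zero coordinates of the $\pOR$ instance) sits at the same scale as the exact quantum cost, so it does not spoil the upper bound.

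In short: replace ``lift $\Rdt$'' by ``lower-bound $\deg_\epsilon$ and lift that via~\cite{GLM+15} with the $\IP_b$ gadget,'' and replace the vague Ambainis-based base function by $\pOR_n\circ\AND_{\sqrt{n}}$.
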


Another interesting question in communication complexity is the power of different lower bound techniques.  
After years of work on randomized communication complexity lower bounds, there are essentially two lower bound 
techniques that stand at the top of the heap, the aforementioned information complexity~\cite{BJKS04} and the partition bound \cite{JK10}.  Both of these 
techniques are known to dominate many other techniques in the literature, such as the smooth rectangle bound, 
corruption bound, discrepancy, etc., but the relationship between them is not yet known. For deterministic protocols, a bound even more powerful 
than the partition bound, is the logarithm of the partition number.  
The partition number, denoted $\chi(F)$, is the smallest number of $F$-monochromatic 
rectangles in a partition of $\X \times \Y$ (see \sec{prelim} for more precise definitions).  We use the notation 
$\UN(F) = \log \chi(F)$, where $\UN$ stands for unambiguous nondeterministic communication complexity.  

Showing separations between $\R(F)$ and $\UN(F)$ is very difficult because there are few techniques available to 
lower bound $\R(F)$ that do not also lower bound $\UN(F)$.  Indeed, until recently only a factor 2 separation was known 
even between $\D(F)$ and $\UN(F)$, shown by Kushilevitz, Linial, and Ostrovsky \cite{KLO99}.  This changed with the breakthrough work of G\"o\"os, Pitassi, and Watson \cite{GPW15}, who 
exhibited a total function $F$ with $\D(F) = \tOmega(\UN(F)^{1.5})$.  
Ambainis, Kokainis and Kothari \cite{AKK16} improved this by constructing a total function 
$F$ with $\D(F) \ge \UN(F)^{2-o(1)}$.  This separation is nearly optimal as 
Aho, Ullman, and Yannakakis \cite{AUY83} showed 
$\D(F) =O(\UN(F)^2)$ for all total $F$.

G\"o\"os, Jayram, Pitassi, and Watson~\cite{GJPW15} improved the original \cite{GPW15} separation in 
a different direction, constructing a total $F$ for which $\R(F) = \tOmega(\UN(F)^{1.5})$.  
In this paper, we achieve a nearly optimal separation between these measures.
\begin{restatable}{theorem}{unvsr} \label{thm:un-vs-r}
There exists a total function $F\colon\X \times \Y \to \B$ with $\R(F) \ge \UN(F)^{2-o(1)}$.
\end{restatable}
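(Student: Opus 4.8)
The plan is to build a total communication function $F$ by porting the Aaronson--Ben-David--Kothari cheat-sheet construction to the two-party setting as a \emph{lookup function}, instantiated on a pointer function. Concretely, take the pointer function $g\colon \Sigma^N \to \B$ from the line of work of Ambainis et al.\ on query separations via pointer functions; it satisfies $\Rdt(g) = \tOmega(N)$ yet admits an unambiguous subcube partition of size $2^{\tO(\sqrt N)}$ with each subcube of codimension $d = \tO(\sqrt N)$, so that $\UNdt(g) = \tO(\sqrt N)$. Replace each coordinate of $g$ (an alphabet symbol, $O(\log N)$ bits) by a bundle of small standard lifting gadgets $G_0\colon \B^t \times \B^t \to \B$ with $t = \polylog N$ (e.g.\ the index gadget), for which $\chi(G_0) = \polylog N$, and let $F$ be the composition of $g$ with these gadget bundles: Alice and Bob hold the gadget inputs, the induced string $z \in \Sigma^N$ has $z_i$ read off the $i$-th bundle, and $F$ outputs $g(z)$. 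Since no randomized lifting theorem for general gadgets is available, the value of the lookup-function viewpoint is that it exposes enough structure to prove the randomized lower bound for $F$ \emph{directly}, via information complexity.

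The $\UN$ upper bound is the routine direction: lift the optimal unambiguous subcube partition of $g$. For each monochromatic subcube $S$ of $g$ — fixing some set $I_S$ of at most $d$ coordinates to particular symbols — and for each choice, for every bundle indexed by $I_S$, of a $G_0$-monochromatic rectangle from $G_0$'s unambiguous cover consistent with the corresponding symbol, form the product rectangle that constrains the bundles in $I_S$ accordingly and leaves all other bundles free. These rectangles are $F$-monochromatic; and since the subcubes tile $\Sigma^N$ and the $G_0$-rectangles tile the gadget domains, the product rectangles tile $\X \times \Y$. Their number is at most $2^{\UNdt(g)} \cdot \chi(G_0)^{O(d \log N)} = 2^{\tO(\sqrt N)}$, so $\UN(F) = \tO(\sqrt N)$.

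The technical core is the lower bound $\R(F) = \tOmega(N)$, which I would obtain by showing $\IC_\mu(F) = \tOmega(N)$ for a suitable hard distribution $\mu$ on inputs (and using $\IC_\mu(F) \le \R(F)$). I would take $\mu$ to mirror the pointer function's hard input distribution, so that the induced string $z$ is distributed as $g$'s hard input and there is a random ``pointer path'' visiting $\Omega(N)$ coordinates whose identity retains $\Omega(N)$ bits of entropy given any $o(N)$-bit prefix of the protocol transcript. A direct-sum / information-additivity argument in the style of Bar-Yossef--Jayram--Kumar--Sivakumar then embeds many nearly independent copies of the gadget problem along this path and shows that any bounded-error protocol for $F$ must, on average over $\mu$, reveal a constant amount of information about each of $\Omega(N)$ of the relevant gadget bundles, so $\IC_\mu(F) = \tOmega(N)$.

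The main obstacle — and the reason plain composition plus a black-box lifting theorem does not suffice — is ruling out that a protocol exploits the lookup/cheat-sheet structure of $F$: for instance guessing the pointer path and resolving only the gadgets on it, or short-cutting through a table cell. The point to establish, porting ABK's query-complexity argument to the information-theoretic setting, is that a cell (or a path) is useless unless one can already compute its address, which is exactly the hard quantity: one must show that conditioned on any $o(N)$-bit transcript the relevant address/path is still near-uniform, so the table cells carry essentially no information and nothing is gained. Making this quantitative at the level of conditional transcript distributions — rather than via the query-adversary bounds that ABK use — is the crux, and is where the lookup-function-plus-information-complexity framework advertised in the abstract earns its keep. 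Combining $\UN(F) = \tO(\sqrt N)$ with $\R(F) = \tOmega(N) = \tOmega(\UN(F)^2)$ and absorbing the $\polylog$ factors into the exponent then gives $\R(F) \ge \UN(F)^{2-o(1)}$; the Aho--Ullman--Yannakakis bound $\D(F) = O(\UN(F)^2)$ shows this is optimal up to the $o(1)$.
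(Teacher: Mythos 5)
Your proposal diverges fundamentally from the paper's proof, and it has a gap at its very first step. You assume there is a pointer function $g$ with $\Rdt(g) = \tOmega(N)$ and $\UNdt(g) = \tO(\sqrt N)$ that you can then lift. No such function is known: the pointer-function constructions give only $\Rdt(g) = \tOmega(\UNdt(g)^{1.5})$ (G\"o\"os--Jayram--Pitassi--Watson), and the quadratic $\Rdt$-versus-$\UNdt$ query separation of Ambainis--Kokainis--Kothari is itself obtained by \emph{iterating} cheat sheets interleaved with $\OR_n\circ\AND_n$ compositions, not by exhibiting a single base function with a quadratic gap (their direct, non-cheat-sheet quadratic separation is against $\Ddt$, not $\Rdt$). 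So even if your lifting and direct information-complexity argument went through, you would recover at best the known $1.5$-power separation, which is prior work, not the theorem. A second gap is that what you call the technical core --- the direct $\IC_\mu$ lower bound for the lifted function, i.e.\ the claim that table cells are useless unless the address is already computable --- is precisely the hard part and is left unargued; it is also not what the paper's lookup-function theorem supplies, since that theorem reduces the lookup function to its \emph{base communication function}, it does not prove hardness of a lifted pointer function.

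The paper's actual route never lifts a query function for this theorem. It sets $F_0 = \Disj_n$ and iterates $F_{i+1} = \TR_{n^2}\circ (F_i)_\CS$ entirely in the communication world. Each iteration multiplies $\UN$ by roughly $n$ (cheat-sheeting keeps $\UN_1$ at $\tO(\N(F))$, and tribes composition costs a factor of about $n$ on the dominant side) while multiplying $\R$ by roughly $n^2$. The lower bound per iteration combines three ingredients: a composition theorem for \emph{one-sided} information complexity under $\AND_n$ and $\OR_n$ (\autoref{fact:AND}), the information-odometer conversion from one-sided IC back to IC while controlling communication (\autoref{fact:IC_onesided}), and the lookup-function lower bound (\autoref{thm:ICCS}) to strip off the cheat sheet; the delicate part is tracking both $\IC$ and $\CC$ simultaneously through all $k$ levels. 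The $o(1)$ in the exponent is then genuinely necessary: one takes $k$ growing slowly with $n$ and absorbs the $k$-fold polylogarithmic losses, whereas your plan aims for a one-shot $\tOmega(\UN(F)^2)$ that is not achieved by any known construction. To salvage your plan you would need either a randomized lifting theorem together with a (currently nonexistent) quadratically separating base query function, or you should switch to the iterated construction.
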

In particular, this means the partition bound can be quadratically smaller than $\R(F)$, since the partition bound is at most 
$\UN(F)$.

\subsection{Comparison with prior work}
The model of query complexity provides insight into communication complexity and is usually easier to understand.  
Many theorems in query complexity have analogous results in communication complexity.  There is also a more precise connection between these models, which we now explain.  For a function 
$f\colon \B^n \rightarrow \B$, let $\Ddt(f)$ be the deterministic query complexity of $f$, the minimum number of queries 
an algorithm needs to the bits of the input $x$ to compute $f(x)$, in the worst case.  Similarly, let $\Rdt(f)$, $\Qdt(f)$, and 
$\UNdt(f)$ denote 
the randomized, quantum and unambiguous nondeterministic query complexities of $f$.

Any function $f$ can be turned into a communication problem by composing it with a communication ``gadget'' 
$G\colon \X \times \Y \rightarrow \B$.  On input $((x_1, \ldots, x_n), (y_1, \ldots, y_n))$ the function $f \circ G$ evaluates to 
$f(G(x_1, y_1), \ldots, G(x_n,y_n))$.  It is straightforward to see that $\D(f \circ G) \le \Ddt(f) \D(G)$, and analogous 
results hold for $\UN(f\circ G)$, $\R(f \circ G)$, and $\Q(f \circ G)$ (with extra logarithmic factors).  

The reverse direction, that is, lower bounding the communication complexity of $f \circ G$ in terms of the query 
complexity of $f$ is not always true, but can hold for specific functions $G$.  Such results are called ``lifting'' theorems 
and are highly nontrivial.  
G\"o\"os, Pitassi, and Watson~\cite{GPW15}, building on work of  Raz and McKenzie~\cite{RazM99}, show a general lifting theorem for deterministic query complexity: for a specific 
$G\colon\B^{20\log n} \times \B^{n^{20}} \to \B$, with $\D(G) = O(\log n)$, it holds that $\D(f\circ G) = \Omega(\Ddt(f)\log n)$, 
for any $f\colon\B^n \to \B$.  

This allowed them to achieve their separation between $\D$ and $\UN$ by first showing the analogous result in the 
query world, i.e., exhibiting a function $f\colon\B^n\to\B$ with $\Ddt(f) = \tOmega(\UNdt(f)^{1.5})$, and then using the
lifting theorem to achieve the same separation for a communication problem.
The work of Ambainis, Kokainis, and Kothari~\cite{AKK16} followed the same plan and obtained their 
communication complexity separation by improving the query complexity separation of \cite{GPW15} to 
$\Ddt(f) \ge \UNdt(f)^{2-o(1)}$.

For separations against randomized communication complexity, as in our case, the situation is different.  Analogs 
of our results have been shown in query complexity.   Aaronson, Ben-David, and Kothari \cite{ABK16} defined a 
transformation of a Boolean function, which they called the ``cheat sheet technique.''   This transformation takes a 
function $f$ and returns a cheat sheet function, $f_{\CS}$, whose randomized query complexity is at least that of $f$.  
They used this method to give a total function $f$ with $\Rdt(f) = \tOmega(\Qdt(f)^{2.5})$.
The cheat sheet technique is also used in \cite{AKK16} to show the query analog of
our~\autoref{thm:un-vs-r}, giving an $f$ with $\Rdt(f) \ge \UNdt(f)^{2-o(1)}$.  These results, however, do not immediately imply similar results for communication complexity as no general theorem is known
to lift randomized query lower bounds to randomized communication 
lower bounds.  Such a theorem could hold and is an interesting open problem.

The most similar result to ours is that of G\"o\"os, Jayram, Pitassi, and Watson~\cite{GJPW15} who show 
$\R(F) = \tOmega(\UN(F)^{1.5})$. While the query analogue $\Rdt(f) = \tOmega(\UNdt(f)^{1.5})$ was not hard to show, the communication separation required developing new communication complexity techniques.  We similarly work directly 
in the setting of communication complexity, as described next. 

\subsection{Techniques}
While a lifting theorem is not known for randomized query complexity, a lifting theorem is known for a stronger model
known as \emph{approximate conical junta degree}, denoted $\deg_{1/3}(f)$ (formally defined 
in~\autoref{sec:juntadegree}).  This is a 
query measure that satisfies $\deg_{1/3}(f) \le \R(f)$ and has a known lifting theorem~\cite{GLM+15} 
(see~\autoref{thm:simu}).  
The first idea to obtain our theorems would be to show (say) that $\deg_{1/10}(\neg f_{\CS}) = \tOmega(\deg_{1/3}(f))$\footnote{We negate the function $f_\CS$ because the obvious statement $\deg_{1/10}({f_{\CS}}) = \tOmega(\deg_{1/3}(f))$ is false in general.} and to 
use this lifting theorem.  We were not able to show such a theorem, however, in part because $\deg_\epsilon(f)$ does not 
behave well with respect to the error parameter $\epsilon$.

Instead we work directly in the setting of communication complexity.  
We show randomized communication lower bounds for a broad family of communication functions called lookup functions.  
For intuition about a lookup function, consider first the query setting and 
the familiar address function $\Addr \colon \B^{c + 2^c} \rightarrow \B$.  Think of the input as divided into two parts, 
$\x=(x_1, \ldots, x_c) \in \B^c$ and the data $\fu=(u_0, \ldots, u_{2^c-1}) \in \B^{2^c}$.  The bit string $\x$ is interpreted as an integer 
$\ell \in \{0, \ldots, 2^c-1\}$ and the output of $\Addr(\x,\fu)$ is $u_\ell$.

A natural generalization of this problem is to instead have a function\footnote{For simplicity we restrict to total functions 
here. The full definition (\autoref{def:lookup}) also allows for partial functions.} $f\colon \B^n \rightarrow \B$ and functions 
$g_j \colon \B^{cn} \times \B^m \rightarrow \B$ for $j \in \{0,\ldots, 2^c-1\}$.  Now the input consists of $\x=(x_1, \ldots, x_c)$ 
where each $x_i \in \B^n$, and $\fu=(u_0, \ldots, u_{2^c-1})$ where each $u_j \in \B^m$.  An address 
$\ell \in  \{0,\ldots, 2^c-1\}$ 
is defined by the string $(f(x_1), \ldots, f(x_c))$, and the output of the function is $g_\ell(\x, u_\ell)$.  Call such 
a function a $(f,\{g_0, \ldots, g_{2^c-1}\})$-lookup function.  The cheat sheet framework of~\cite{ABK16} naturally fits 
into this framework: the cheat sheet function $f_{\CS}$ of $f$ is a lookup function where 
$g_\ell(x_1, \ldots, x_c, u_\ell)=1$ if and only if $u_\ell$ provides certificates that $f(x_i)=\ell_i$ for each $i \in [c]$.  

This idea also extends to communication complexity where one can define a $(F,\G)$-lookup function in the same way,  
with $F$ a communication function and $\G=\{G_0, \ldots, G_{2^c-1}\}$ a family of communication functions.  Our main 
technical 
theorem (\autoref{thm:ICCS}) states that, 
under mild conditions on the family $\G$, the randomized communication complexity of the $(F,\G)$-lookup function is at 
least that of $F$.  
To prove the separation of~\autoref{thm:q-vs-r}, we take the function $f=\SIMON_n\circ \OR_n\circ \AND_n$ and let $F$ be 
$f$ composed with the inner product communication gadget.  We define the family of functions $\G$ in a similar 
fashion as in 
the cheat sheet framework.  We show a randomized communication lower bound on $F$ using the approximate conical 
junta degree and the lifting theorem of~\cite{GLM+15}.  The separation of~\autoref{thm:qe-vs-r} follows a 
very similar plan, starting instead with the query function $h = \pOR_n \circ \AND_m$ for $m = \Theta(\sqrt{n})$, 
where $\pOR_n$ is a promise version of the $\OR_n$ function restricted to inputs of Hamming weight $0$ or $1$.  

Moving on to our third result (\thm{un-vs-r}), we find that just having a lower bound on the randomized 
communication complexity of a $(F, \G)$-lookup function is not enough to obtain the separation. The query analogue of \thm{un-vs-r} \cite{AKK16} relies on repeatedly composing a function with $\AND_n$ (or $\OR_n$), which raises its randomized query complexity by $\Omega(n)$. More precisely, it relies on the fact that $\Rdt(\AND_n\circ f) = \Omega(n \Rdt(f))$. However, the analogous communication complexity claim, $\R(\AND_n \circ F) = \Omega(n\R(F))$, is false. For a silly example, if $F$ itself is $\AND_n$ (under some bipartition of input bits), then $\R(\AND_n\circ F)\leq\D(\AND_{n^2})=O(1)$. Another example is if $F\colon\B\times \B \to \B$ is the equality function on $1$ bit, then $\R(\AND_n \circ F)=O(1)$, since this is the equality function on $n$ bits.

To circumvent this issue, we use information complexity instead of randomized communication complexity. Let $\IC(F)$ denote the information complexity of a function $F$ (defined in \sec{prelim}). Information complexity, or more precisely one-sided information complexity, satisfies a composition theorem for the $\AND_n$ function (\fct{AND}). While one-sided information complexity upper bounds can be converted to information complexity upper bounds (\fct{IC_onesided}), the conversion also requires upper bounding the communication complexity of the protocol. This makes the argument delicate and requires simultaneously keeping track of the information complexity and communication complexity throughout the argument. Informally, we show the following theorem.

\begin{theorem}[informal]
For any function $F$, and any family of functions $\G=\{G_0, \ldots, G_{2^c-1}\}$ let $F_\G$ be the $(F,\G)$-lookup function.  Provided $\G$ satisfies certain mild technical conditions, $\R(F_\G) = \tOmega(\R(F))$ and $\IC(F_\G) = \tOmega(\IC(F))$.
\end{theorem}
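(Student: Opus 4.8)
The plan is to prove the two bounds in the informal theorem essentially in parallel, tracking communication complexity and (one-sided) information complexity simultaneously. First I would make precise what the ``mild technical conditions'' on $\G$ must be: every $G_j$ should be cheap to compute (say $\D(G_j) = \polylog$), and the family should be \emph{consistent} in the sense that on any input $(\x,\fu)$ at most the ``correct'' coordinate $u_\ell$ (with $\ell$ determined by $F$ applied coordinatewise to $\x$) can make $G_\ell$ output $1$; this is exactly what lets the cheat sheet / lookup structure behave like a faithful copy of $F$. I would then set up the natural reduction: given a protocol $\Pi$ for $F_\G$, use it to build a protocol for $F$ by having Alice and Bob embed their $F$-input into the $c$ copies $x_1,\dots,x_c$ (padding the other copies and the data blocks $\fu$ with shared randomness / fixed values so that the induced address $\ell$ is forced to a known location), and then decode $F$'s answer from $F_\G$'s answer. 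Proving $\R(F_\G) = \tOmega(\R(F))$ should be the relatively routine part: the embedding is public-coin, costs no extra communication beyond $\Pi$, and correctness follows from the consistency condition on $\G$, up to the $\polylog$ overhead hidden in $\tOmega$.

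The information-complexity bound is where the real work lies, and here I would follow the strategy sketched in the ``Techniques'' subsection of the introduction. The obstacle is that information complexity does not compose with $\AND_n$ the way randomized communication complexity fails to, but \emph{one-sided} information complexity does (\fct{AND}), while the passage from a one-sided bound back to an ordinary $\IC$ bound (\fct{IC_onesided}) also charges the \emph{communication} cost of the protocol. So the argument cannot be purely information-theoretic: I would carry an invariant of the form ``any protocol solving $F_\G$ with small information cost also has small communication cost,'' and feed both quantities into the conversion lemmas at each stage. Concretely, starting from a protocol $\Pi$ for $F_\G$, I would (i) fix the data blocks and the irrelevant copies via an averaging argument over the distribution witnessing $\IC(F)$, being careful that conditioning on these fixings only decreases information revealed about the live copy; (ii) argue that the residual protocol computes (a copy of) $F$ with information cost $\tO(\IC(\Pi))$ \emph{and} communication cost $\tO(\CC(\Pi))$; and (iii) invoke \fct{AND}/\fct{IC_onesided} where the lookup construction internally uses an $\AND$-type gadget, so that the desired $\IC(F_\G) = \tOmega(\IC(F))$ drops out after accounting for the $\polylog$ losses in each conversion.

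I expect the main obstacle to be bookkeeping the ``delicate'' interaction flagged in the introduction: one must choose the right input distribution on $F_\G$ (a product of the hard distribution for $F$ on one copy with carefully chosen distributions on the remaining copies and on $\fu$), show that the lookup address concentrates so that with high probability exactly the planted copy is ``read,'' and then verify that Alice's and Bob's internal randomness in the simulated protocol can be split so that the one-sided information measure is the relevant one when the $\AND$ composition theorem is applied. A secondary technical point is ensuring the error parameter behaves: since the conversions and the reduction each incur constant-factor error blow-ups, I would run the whole construction with a polynomially small error (amplified by $\polylog$ repetitions at the outset), so that the final protocol for $F$ still has bounded error. Once these pieces are in place, both conclusions $\R(F_\G) = \tOmega(\R(F))$ and $\IC(F_\G) = \tOmega(\IC(F))$ follow; the formal version is \autoref{thm:ICCS}, and the specific families $\G$ used for \autoref{thm:q-vs-r}, \autoref{thm:qe-vs-r}, and \autoref{thm:un-vs-r} are then checked to satisfy the technical conditions separately.
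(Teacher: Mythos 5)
Your proposal has a genuine gap at its core: the reduction you describe for the randomized bound does not exist, and the paper does not attempt one. You propose to ``embed'' the $F$-input into one copy, pad the rest, ``force the induced address $\ell$ to a known location,'' and ``decode $F$'s answer from $F_\G$'s answer.'' But the address is $\ell = F^c(\x,\y)$, which depends on $F$ evaluated at the live copy --- precisely the value the players are trying to compute --- so it cannot be forced to a known location; and the output $G_\ell(\x,u_\ell,\y,v_\ell)$ need not encode $F(x_1,y_1)$ at all (e.g.\ if all $G_j$ are equal and all data cells are filled identically, the output is independent of $\ell$). The actual argument (\thm{ICCS}) is not a black-box simulation but an information-theoretic dichotomy: fix any distribution $\mu$ on $\dom(F)$, put i.i.d.\ copies of $\mu$ on all $c$ coordinates and uniform data in all $2^c$ cells, and show that \emph{either} the transcript reveals $\Omega(1/c)$ information about $L_i=F(X_i,Y_i)$ conditioned on $X_i$ (whence Alice can predict $F$ with advantage $\Omega(1/c)$, \clm{lowinf}), \emph{or} the transcript is nearly independent of the contents of the correct cell $(U_\ell,V_\ell)$ (Claims~\ref*{clm:X}--\ref*{clm:Z}, using that $\IC(\Pi)\ll 2^c$ so the protocol cannot have information about all cells) --- and the latter contradicts correctness because $G_\ell$ is a \emph{nontrivial XOR function}, via the Hellinger cut-and-paste property (\clm{ZZ}). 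A minimax argument plus amplification then converts the $\Omega(1/c)$-advantage distributional protocols into a bounded-error protocol for $F$, which is where the $\CC(\Pi')=O(c^2\CC(\Pi))$ and $\IC(\Pi')=O(c^3\IC(\Pi))$ losses come from; the $\R$ and $\IC$ bounds are obtained by the same route, not by separate arguments.

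Two further misidentifications: the hypothesis needed for the lower bound is that $\G$ is a nontrivial XOR family (\defn{xor}); the consistency condition (\defn{consistency}) you cite is used only for the quantum \emph{upper} bound (\thm{lookup_upper}), and no bound on $\D(G_j)$ is needed for the lower bound. And \fct{AND} together with \fct{IC_onesided} are not ingredients of the lookup lower bound at all --- the lookup construction contains no internal $\AND$ gadget. They are used in the separate recursion of \autoref{thm:recursion}, where $F_{i+1}=\TR_{n^2}\circ (F_i)_\CS$ alternates cheat sheets with $\OR_n\circ\AND_n$; the reason \thm{ICCS} must control $\IC$ and $\CC$ simultaneously is so that its output can be fed into that recursion, not because the lookup argument itself invokes the composition facts.
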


We prove this formally as \thm{ICCS} in \sec{cheat}. This is the most technical part of the paper, requiring all the preliminary facts and notation set up in \autoref{sec:info} and \autoref{sec:comm}. The proof relies on an information theoretic argument that establishes that a correct protocol for $F_\G$ already has enough information to compute one copy of the base function $F$.

\section{Preliminaries and notation}
\label{sec:prelim}

In this paper we denote query complexity (or decision tree complexity) measures using the superscript $\dt$. For example, the deterministic, bounded-error randomized, exact quantum, and bounded-error quantum query complexities of a function $f$ are denoted $\Ddt(f)$, $\Rdt(f)$, $\QEdt(f)$, and $\Qdt(f)$ respectively. We refer the reader to the survey by Buhrman and de Wolf \cite{BdW02} for formal definitions of these measures. 

A function $f\colon\B^n \to \{0,1,*\}$ is said to be a total function if $f(x)\in \B$ for all $x\in\B^n$ and is said to be partial otherwise. We define $\dom(f) \defeq \{x:f(x)\neq *\}$ to be the set of valid inputs to $f$. An algorithm computing $f$ is allowed to output an arbitrary value for inputs outside 
$\dom(f)$.
$\AND_n$ and $\OR_n$ denote the $\AND$ and $\OR$ functions on $n$ bits, defined as $\AND_n(x_1,\ldots, x_n) \defeq \bigwedge_{i=1}^n x_i$ and $\OR_n(x_1,\ldots, x_n) \defeq \bigvee_{i=1}^n x_i$. In general, $f_n$ denotes an $n$-bit function.

In communication complexity, we wish to compute a function $F\colon\X \times \Y \to \{0,1,*\}$ for some finite sets $\X$ and $\Y$, where the inputs $x\in \X$ and $y\in\Y$ are given to two players Alice and Bob, while minimizing the communication between the two. As in query complexity, $F$ is total if $F(x,y) \in \B$ for all $(x,y)\in \X \times \Y$ and is partial otherwise. We define
$\dom(F) \defeq \{(x,y):F(x,y)\neq *\}$. As before a correct protocol may behave arbitrarily on inputs outside $\dom(F)$. Formal definitions of the measures studied here can be found in the textbook by Kushilevitz and Nisan \cite{KN06}.

For a function $f\colon\B^n \rightarrow \{0,1,*\}$ we let $f^c$ denote the function $f^c\colon \B^{nc} \rightarrow \{0,1,*\}^c$ 
where $f^c(x_1, \ldots, x_c) = (f(x_1), \ldots, f(x_c))$.  Note that $\dom(f^c) = \dom(f)^c$.  For 
a communication function $F\colon\X \times \Y \to \B$ we let $F^c \colon \X^c \times \Y^c \rightarrow \B^c$ be 
$F^c((x_1, \ldots, x_c), (y_1, \ldots, y_c)) = (F(x_1, y_1), \ldots, F(x_c,y_c))$.

We use $\D(F)$ to denote the deterministic communication complexity of $F$, the minimum number of bits exchanged in a deterministic communication protocol that correctly computes $F(x,y)$ for all inputs in $\dom(F)$. Public-coin randomized and quantum (without entanglement) communication complexities, denoted $\R(F)$ and $\Q(F)$, are defined similarly except the protocol may now err with probability at most $1/3$ on any input and may use random coins or quantum messages respectively. Exact quantum communication complexity, denoted $\Q_E(F)$, is defined similarly, except it must output the correct answer with certainty.

We use $\N(F)$ and $\UN(F)$ to denote the nondeterministic (or certificate) complexity of $F$ and the unambiguous nondeterministic complexity of $F$ respectively. $\UN(F)$ equals $\log \chi(F)$, where $\chi(F)$ is the partition number of $F$, the least number of monochromatic rectangles in a partition (or disjoint cover) of $\X \times \Y$. We now define these measures formally.

Given a partial function $F\colon\X \times \Y \to \{0,1,*\}$ and $b\in\B$, a $b$-monochromatic rectangle is a set $A\times B$ with $A\subseteq\X$ and $B\subseteq\Y$ such that all inputs in $A \times B$ evaluate to $b$ or $*$ on $F$. A $b$-cover of $F$ is a set of $b$-monochromatic rectangles that cover all the $b$-inputs (i.e., inputs that evaluate to $b$ on $F$) of $F$. 
If the rectangles form a partition of the $b$-inputs, we say that the cover is unambiguous. 
Given a $b$-cover of $F$, a $b$-certificate for input $(x,y)$ is the label of a rectangle containing $(x,y)$ in the $b$-cover.
The $b$-cover number $\C_b(F)$ is the size of the smallest $b$-cover, and we set $\N_b(F) \coloneqq \lceil \log \C_b(F)\rceil$. The 
nondeterministic complexity of $F$ is $\N(F) \coloneqq \max \{\N_0(F), \N_1(F)\}$. The quantities $\UN_b(F)$ and the
unambiguous non-deterministic complexity $\UN(F)$ are defined analogously from partitions. 

It is useful to interpret a $b$-certificate for $(x,y)\in\dom(F)$ as a message that an all-powerful prover can send to the players to convince each of them that $F(x,y) = b$. In this interpretation,  $\N_b(F)$ is the minimum over prover strategies of the maximum length of a message taken over all inputs.
Similarly, $\UN_b(F)$ is the maximum length of a message when, in addition, for every input in $\dom(F)$,
there is exactly one certificate the prover can send.  

We also use $\IC(F)$ to denote the information complexity of $F$, defined formally in~\autoref{sec:comm}. Informally, the information complexity of a function $F$ is the minimum amount of information about their inputs that the players have to reveal to each other to compute $F$. $\IC(F)$ is a lower bound on randomized communication complexity, because the number of bits communicated in a protocol is certainly an upper bound on the information gained by any player, since $1$ bit of communication can at most have $1$ bit of information.

In \sec{info} and \sec{comm} we cover some preliminaries needed to prove \thm{ICCS}.

\subsection{Information theory}
\label{sec:info}
We now introduce some definitions and facts from information theory. Please refer to the textbook by Cover and Thomas~\cite{CT06} for an excellent introduction to information theory. 

For a finite set $S$, we say $P\colon S \to \bR^+$ is a probability distribution over $S$ if $\sum_{s\in S} P(s) = 1$. 
For correlated random variables $XYZ \in \X \times \Y \times \Z$, we use the same symbol represent the random variable and its distribution. If $\mu$ is a distribution over $\X$, we say $X \sim \mu$ to represent that $X$ is distributed according to $\mu$ and $X \sim Y$ to represent that $X$ and $Y$ are similarly distributed. We use $Y^x$ as shorthand for $(Y~|~(X=x))$.   We define the joint random variable $X \otimes Y \in \X \times \Y$ as
$$ \Pr(X \otimes Y = (x,y)) = \Pr(X = x) \cdot \Pr(Y=y) .$$
We call $X$ and $Y$ independent random variables if  $XY \sim X \otimes Y$. 

A basic fact about random variables is Markov's inequality.
We'll often make use of one particular corollary of the inequality, which we state
here for convenience.

\begin{fact}[Markov's Inequality]\label{fact:markov_ineq}
	If $Z$ is a random variable over $\bR^+$, then for any $c\geq 1$,
	\[\Pr(Z\geq c\E[Z])\leq \frac{1}{c}.\]
	In particular, if $f$ is a function mapping the domains of $X$ and $Y$ to
	$\bR^+$, then
	\[\Pr_{x\leftarrow X}(\E_{y\leftarrow Y^x}[f(x,y)]>\alpha)<\beta\quad\Rightarrow\quad
	\Pr_{(x,y)\leftarrow XY}(f(x,y)>100\alpha)<\beta+0.01.\]
\end{fact}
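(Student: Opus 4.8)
The plan is to observe that the whole statement is two nested applications of the one-line Markov inequality, so I would first establish the basic form and then bootstrap it over the two coordinates.

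First I would prove the classical bound: for a nonnegative random variable $Z$ and any $a>0$,
\[
\E[Z]\ \geq\ \E\!\left[Z\cdot\mathds{1}\{Z\geq a\}\right]\ \geq\ a\,\Pr(Z\geq a),
\]
hence $\Pr(Z\geq a)\leq \E[Z]/a$. Taking $a=c\E[Z]$ for $c\geq 1$ (the statement being vacuous when $\E[Z]=0$, since then $Z=0$ almost surely) yields $\Pr(Z\geq c\E[Z])\leq 1/c$, which is the displayed inequality.

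For the corollary I would split the domain of $X$ into a ``good'' part $\G\defeq\{x:\E_{y\leftarrow Y^x}[f(x,y)]\leq\alpha\}$ and its complement. The hypothesis says precisely that $\Pr_{x\leftarrow X}(x\notin\G)<\beta$. For a fixed good $x\in\G$, the nonnegative random variable $f(x,y)$ with $y\leftarrow Y^x$ has expectation at most $\alpha$, so the basic Markov inequality with threshold $a=100\alpha$ gives $\Pr_{y\leftarrow Y^x}(f(x,y)>100\alpha)\leq \alpha/(100\alpha)=0.01$ (and when $\alpha=0$ this is immediate, since $\E_{y\leftarrow Y^x}[f(x,y)]=0$ forces $f(x,y)=0$ almost surely under $Y^x$). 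Finally I would combine the two cases by the law of total probability, bounding the conditional failure probability by $1$ on the bad part and by $0.01$ on the good part:
\[
\Pr_{(x,y)\leftarrow XY}(f(x,y)>100\alpha)=\E_{x\leftarrow X}\!\left[\Pr_{y\leftarrow Y^x}(f(x,y)>100\alpha)\right]\leq \Pr_{x\leftarrow X}(x\notin\G)+\sup_{x\in\G}\Pr_{y\leftarrow Y^x}(f(x,y)>100\alpha)<\beta+0.01.
\]

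I do not expect a genuine obstacle here; the only points requiring care are the edge case $\alpha=0$ (handled separately above) and checking that the slack $0.01$ is exactly what the multiplicative factor $100$ in ``$100\alpha$'' buys from Markov. Both are routine, and the argument is short.
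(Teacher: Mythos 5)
Your proof is correct and follows essentially the same route as the paper's: the indicator-function derivation of the basic bound, the good/bad split on $x$ according to whether $\E_{y\leftarrow Y^x}[f(x,y)]\leq\alpha$, and the union of the two failure events. The only differences are cosmetic (explicit handling of the $\E[Z]=0$ and $\alpha=0$ edge cases, which the paper leaves implicit).
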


\begin{proof}
	To see that the first equation holds, note that if the elements $z\in\Z$
	that are larger than $c\E[Z]$ have probability mass more than $1/c$, then they
	contribute more than $\E[Z]$ to the expectation of $Z$; but since the domain
	of $Z$ is non-negative, this implies the expectation of $Z$ is larger than $\E[Z]$,
	which is a contradiction.
	
	Now suppose that $\Pr_{x\leftarrow X}(\E_{y\leftarrow Y^x}[f(x,y)]>\alpha)<\beta$.
	We classify the elements $x\in \X$ into two types: the ``bad'' ones, which satisfy
	$\E_{y\leftarrow Y^x}[f(x,y)]>\alpha$, and the ``good'' ones, which satisfy
	$\E_{y\leftarrow Y^x}[f(x,y)]\leq\alpha$. Note that the probability that an $x$ sampled
	from $X$ is bad is less than $\beta$. For good $x$, we have
	$\Pr_{y\leftarrow Y^x}(f(x,y)>100\alpha)\leq 0.01$ by Markov's inequality above
	(using the fact that $f(x,y)$ is non-negative and $\E_{y\leftarrow Y^x}[f(x,y)]\leq\alpha$). Since the probability of a bad $x$ is less than $\beta$
	and for good $x$ the equation $f(x,y)\leq 100\alpha$
	only fails with probability $0.01$ (over choice of $y\leftarrow Y^x$), we conclude
	\[\Pr_{(x,y)\leftarrow XY}(f(x,y)>100\alpha)<\beta+0.01\]
	as desired.
\end{proof}

\subsubsection{Distance measures}

We now define the main distance measures we use and some properties of these measures.

\begin{definition}[Distance measures]
Let $P$ and $Q$ be probability distributions over $S$. We define the following distance measures between distributions.
\begin{align*}
\text{Total variation distance:}& \quad \VR(P,Q) \defeq \max_{T \subseteq S}  \sum_{s \in T}\big(P(s) - Q(s)\big) = \frac{1}{2}\sum_{s\in S} |P(s) - Q(s)|.  \\
\text{Hellinger distance:}& \quad \h(P,Q) \defeq 
\frac{1}{\sqrt{2}} \sqrt{\sum_{s \in S}\left(\sqrt{P(s)}-\sqrt{Q(s)}\right)^2 } . 
\end{align*}
\end{definition}

Note that this definition can be extended to arbitrary functions $P\colon S \to \bR^+$ and  $Q\colon S \to \bR^+$. However, when $P$ and $Q$ are probability distributions these measures are between $0$ and $1$. These extremes are achieved when $P=Q$ and when $P$ and $Q$ have disjoint support, respectively. Conveniently, these measures are closely related and are interchangeable up to a quadratic factor. 

\begin{fact}[Relation between $\VR$ and $\h$]\label{fact:relation-inf}
Let $P$ and $Q$ be probability distributions. Then
\begin{align*}
&\quad \frac{1}{\sqrt{2}}\VR(P,Q) \leq \h(P,Q) \leq \sqrt{ \VR(P,Q)}. 
\end{align*}
\end{fact}
\begin{proof}
This follows from~\cite[Theorem $15.2$, p.~$515$]{ADasG11}. In this reference, the quantity $\sqrt{2}\cdot\h(P,Q)$ is used for Hellinger distance.
\end{proof}

In this paper, we only use Hellinger distance when we invoke \fullbref{fact:pyth}, a key step in the proof of \thm{ICCS}. Hence we do not require any further properties of this measure.

On the other hand, total variation distance satisfies several useful properties that we use in our arguments. We review some of its basic properties below.

\begin{fact}[Facts about $\VR$] Let $P$, $P'$, $Q$, $Q'$, and $R$ be probability distributions and let $XY \in \X \times \Y$ and $X'Y'$ in $\X \times \Y$ be correlated random variables. Then we have the following facts.
\end{fact}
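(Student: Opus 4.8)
The plan is to derive every item on the list from the two equivalent descriptions of total variation distance supplied in the definition: the distinguisher form $\VR(P,Q) = \max_{T \subseteq S} \sum_{s \in T}(P(s) - Q(s))$ and the $\ell_1$ form $\VR(P,Q) = \tfrac12 \sum_{s} |P(s) - Q(s)|$. Each stated property is a short manipulation of one of these, and I will indicate which form is convenient in each case. For the triangle inequality $\VR(P,R) \le \VR(P,Q) + \VR(Q,R)$ I would apply the scalar inequality $|P(s) - R(s)| \le |P(s) - Q(s)| + |Q(s) - R(s)|$ inside the $\ell_1$ form and sum over $s$; the same termwise estimate applied to a convex combination $\sum_i \lambda_i P_i$ versus $\sum_i \lambda_i Q_i$ gives the (joint) convexity of $\VR$ in its arguments.

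For the data-processing and monotonicity statements --- that pushing two distributions through a common (possibly randomized) map cannot increase $\VR$, and in particular $\VR(X,X') \le \VR(XY, X'Y')$ for correlated pairs obtained by marginalizing out a coordinate --- I would use the distinguisher form: a test $T$ on the output space pulls back to a test on the input space that distinguishes at least as well, so the maximum over tests can only grow when the underlying space is refined. For a stochastic channel $K$ the same fact comes from bounding $\sum_t \bigl|\sum_s K(t\mid s)(P(s)-Q(s))\bigr| \le \sum_s |P(s)-Q(s)|$ by the triangle inequality, using $\sum_t K(t \mid s) = 1$.

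For a chain-rule-type bound such as $\VR(XY, X'Y') \le \VR(X,X') + \E_{x \leftarrow X}\VR(Y^x,(Y')^{x})$ (and its one-sided variants), I would split the $\ell_1$ sum over pairs $(x,y)$, add and subtract $X'(x) \cdot Y^x(y)$, and bound the two resulting sums by $\VR(X,X')$ and the stated expectation respectively. Subadditivity on products, $\VR(P \otimes Q, P' \otimes Q') \le \VR(P,P') + \VR(Q,Q')$, is the special case in which $Y^x$ does not depend on $x$, or follows from a two-step hybrid argument together with the fact (a special case of data processing) that tensoring with a fixed distribution preserves $\VR$. If the list includes the coupling characterization $\VR(P,Q) = \min \Pr(X \ne Y)$ over couplings of $P$ and $Q$, one direction is immediate since $P(T) - Q(T) = \Pr(X \in T) - \Pr(Y \in T) \le \Pr(X \ne Y)$ for every coupling and every $T$, and the reverse is witnessed by the maximal coupling, which with probability $1 - \VR(P,Q)$ sets $X = Y$ according to the normalized overlap $\min(P,Q)$ and otherwise draws $X$ and $Y$ independently from the normalized positive and negative parts of $P - Q$.

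The only step that requires any care is the maximal-coupling construction --- one verifies the marginals and handles the degenerate cases $\VR(P,Q) \in \{0,1\}$ separately --- and this is entirely standard; every other item is a one- or two-line computation directly from the two defining formulas, so I do not expect a genuine obstacle.
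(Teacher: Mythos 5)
Your proposal is correct and takes essentially the same route as the paper: the four items actually listed (triangle inequality, product distributions, monotonicity under marginalization, and partial measurement) are each proved there by exactly the termwise $\ell_1$-form manipulations you describe, with your stochastic-channel/distinguisher argument for monotonicity reducing to the paper's one-line bound $\sum_x \bigl|\sum_y (P(x,y)-Q(x,y))\bigr| \le \sum_{x,y}|P(x,y)-Q(x,y)|$. The only point worth noting is that the paper's partial-measurement fact is an \emph{equality} under the hypothesis $X\sim X'$, which your split of the $\ell_1$ sum over pairs $(x,y)$ delivers immediately since $P(x)=Q(x)$ lets you factor exactly rather than merely bound.
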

\begin{factpart}[Triangle inequality]\label{fact:triangle}
$\VR(P,Q) \leq \VR(P,R) + \VR(R,Q).$
\end{factpart}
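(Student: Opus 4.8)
The plan is to reduce the triangle inequality for total variation distance to the ordinary scalar triangle inequality, using the $\ell_1$ formula $\VR(P,Q) = \frac12\sum_{s\in S}|P(s)-Q(s)|$ recorded in the definition above.

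First I would fix an arbitrary element $s\in S$ and apply the triangle inequality on $\bR$ to the three real numbers $P(s),R(s),Q(s)$, obtaining
\[
|P(s)-Q(s)| \;\le\; |P(s)-R(s)| + |R(s)-Q(s)|.
\]
Then I would sum over all $s\in S$ and divide by $2$:
\begin{align*}
\VR(P,Q) = \frac12\sum_{s\in S}|P(s)-Q(s)|
&\le \frac12\sum_{s\in S}|P(s)-R(s)| + \frac12\sum_{s\in S}|R(s)-Q(s)| \\
&= \VR(P,R) + \VR(R,Q),
\end{align*}
which is exactly the claimed inequality. Since the $\ell_1$ identity holds for arbitrary functions $S\to\bR^+$ (as noted right after the definition), this same computation in fact proves the statement even without assuming $P,Q,R$ are probability distributions.

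Alternatively, one can argue straight from the $\max_T$ form of $\VR$: pick $T\subseteq S$ attaining the maximum in the definition of $\VR(P,Q)$, write $\sum_{s\in T}(P(s)-Q(s)) = \sum_{s\in T}(P(s)-R(s)) + \sum_{s\in T}(R(s)-Q(s))$, and bound each of the two sums on the right by the corresponding maximum over subsets of $S$, giving $\VR(P,R)+\VR(R,Q)$.

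There is no genuine obstacle here; the only point requiring a little care is to invoke the identity $\VR(P,Q)=\frac12\sum_{s\in S}|P(s)-Q(s)|$ directly from the definition rather than re-proving the equivalence of the two expressions for $\VR$, and to note that the inequality $|a-c|\le|a-b|+|b-c|$ is applied termwise before summing.
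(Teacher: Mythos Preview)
Your proof is correct and matches the paper's approach exactly: apply the scalar triangle inequality $|P(s)-Q(s)|\le|P(s)-R(s)|+|R(s)-Q(s)|$ termwise and sum over $s$. The alternative argument via the $\max_T$ form is also fine, though the paper only gives the $\ell_1$ version.
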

\begin{factpart}[Product distributions]\label{fact:prod} 
$\VR(P\otimes Q, P' \otimes Q')   \leq  \VR(P, P') + \VR(Q, Q').$
Additionally, if $Q= Q'$ then $\VR(P\otimes Q, P' \otimes Q')  = \VR(P, P')$. 
\end{factpart}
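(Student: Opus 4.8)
The plan is to reduce the product inequality to the triangle inequality (\fct{triangle}) together with the special case stated in the ``additionally'' clause, which I would establish first. For the special case, assume $Q = Q'$ and compute directly from the $\ell_1$ form of total variation distance:
\[
\VR(P\otimes Q,\, P'\otimes Q) \;=\; \frac{1}{2}\sum_{s,t}\big|P(s)Q(t) - P'(s)Q(t)\big| \;=\; \frac{1}{2}\sum_{t} Q(t)\sum_{s}\big|P(s) - P'(s)\big| \;=\; \VR(P,P'),
\]
where the last step uses $\sum_t Q(t) = 1$. Since the product of two distributions is symmetric in its two arguments up to relabeling the ground set (a relabeling which preserves total variation distance), the same identity holds with the roles of the factors reversed: $\VR(P\otimes Q,\, P\otimes Q') = \VR(Q,Q')$ whenever the first factors coincide.

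For the general bound I would insert the hybrid distribution $P'\otimes Q$ and invoke \fct{triangle}:
\[
\VR(P\otimes Q,\, P'\otimes Q') \;\le\; \VR(P\otimes Q,\, P'\otimes Q) \;+\; \VR(P'\otimes Q,\, P'\otimes Q').
\]
The first term on the right has matching second factors, so by the special case it equals $\VR(P,P')$; the second term has matching first factors, so it equals $\VR(Q,Q')$. Adding the two gives $\VR(P\otimes Q,\, P'\otimes Q') \le \VR(P,P') + \VR(Q,Q')$, which is the claimed inequality, and the special case above is precisely the ``additionally'' part.

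There is no genuine obstacle in this argument; the only things to be careful about are tracking which coordinate is held fixed in each application of the special case, and noting at the outset that the special case applies symmetrically to both hybrid steps (once with a common second factor, once with a common first factor).
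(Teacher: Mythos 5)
Your proof is correct and follows essentially the same route as the paper: both insert a hybrid product distribution and use that a common factor can be pulled out of the $\ell_1$ sum, which is exactly the ``additionally'' clause. The only cosmetic difference is that you apply the triangle inequality at the level of $\VR$ (citing the stated fact) after proving the equality case, whereas the paper applies it pointwise to the absolute values inside the sum and factors afterwards.
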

\begin{factpart}[Monotonicity]\label{fact:monoTV} 
$\VR(XY, X'Y')   \geq  \VR(X, X').$
\end{factpart}
\begin{factpart}[Partial measurement]\label{fact:subadd} 
If $X \sim X'$, then $\VR(XY, X'Y') = \E_{x \leftarrow X} [\VR(Y^x, Y'^x)].$
\end{factpart}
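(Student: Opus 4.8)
The final statement is \textbf{Fact 22.D} (Partial measurement): if $X \sim X'$, then $\VR(XY, X'Y') = \E_{x \leftarrow X} [\VR(Y^x, Y'^x)]$.

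\medskip

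The plan is to expand both sides of the claimed identity using the $\ell_1$ formula for total variation distance and then regroup the sum by conditioning on the first coordinate. First I would recall that for distributions on a product set $\X \times \Y$ we have $\VR(XY, X'Y') = \tfrac{1}{2}\sum_{x \in \X}\sum_{y \in \Y} \bigl| \Pr(XY = (x,y)) - \Pr(X'Y' = (x,y)) \bigr|$. Now I would use the chain rule for probabilities to write $\Pr(XY = (x,y)) = \Pr(X = x)\cdot \Pr(Y^x = y)$ and similarly $\Pr(X'Y' = (x,y)) = \Pr(X' = x)\cdot \Pr(Y'^x = y)$ (adopting the paper's convention $Y^x = (Y \mid X = x)$, and being slightly careful about inputs $x$ with $\Pr(X = x) = 0$, where the conditional is undefined but the corresponding term contributes $0$ and can be dropped). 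This is the only place the hypothesis $X \sim X'$ enters: it lets me replace both $\Pr(X = x)$ and $\Pr(X' = x)$ by a common value $\Pr(X = x)$, so that the factor $\Pr(X=x)$ pulls out of the inner sum over $y$.

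\medskip

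The key computation is then
\begin{align*}
\VR(XY, X'Y')
&= \frac{1}{2}\sum_{x}\sum_{y} \bigl| \Pr(X = x)\Pr(Y^x = y) - \Pr(X = x)\Pr(Y'^x = y) \bigr| \\
&= \sum_{x} \Pr(X = x) \cdot \frac{1}{2}\sum_{y} \bigl| \Pr(Y^x = y) - \Pr(Y'^x = y) \bigr| \\
&= \sum_{x} \Pr(X = x) \cdot \VR(Y^x, Y'^x) \;=\; \E_{x \leftarrow X}\bigl[\VR(Y^x, Y'^x)\bigr],
\end{align*}
where the factoring in the second line uses nonnegativity of $\Pr(X = x)$ to pull it outside the absolute value, and the third line applies the $\ell_1$ definition of $\VR$ again to the conditional distributions $Y^x$ and $Y'^x$.

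\medskip

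I do not expect any genuine obstacle here; this is a routine unfolding of definitions. The one point that deserves a sentence of care is the handling of $x$ with $\Pr(X=x) = 0$: since $X \sim X'$ these are exactly the $x$ with $\Pr(X'=x)=0$ as well, and every term of the double sum indexed by such an $x$ vanishes, so restricting the outer sum to the support of $X$ (where $Y^x$ and $Y'^x$ are well defined) changes nothing. With that caveat noted, the three displayed equalities above constitute the complete proof.
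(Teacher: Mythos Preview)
Your proof is correct and follows essentially the same approach as the paper: expand $\VR$ via the $\ell_1$ formula, use the chain rule $P(x,y)=P(x)P(y\mid x)$, invoke $X\sim X'$ to pull out the common marginal weight, and recognize the inner sum as $\VR(Y^x,Y'^x)$. Your extra remark about zero-probability $x$ is a nice bit of care the paper omits.
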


\begin{proof}
These facts are proved as follows.
\renewcommand{\theenumi}{\bfseries \Alph{enumi}}
\begin{enumerate}
\item Let $P$, $Q$, and $R$ be distributions over $\X$. Then for any $x \in \X$ we have
$|P(x) - Q(x)| = |P(x) - R(x) + R(x) - Q(x)| \leq |P(x) - R(x)| + |R(x) - Q(x)|$.
Summing over all $x\in \X$ yields the inequality.
\item Let $P$ and $P'$ be distributions over $\X$; $Q$ and $Q'$ be distributions over $\Y$.
Then 
\begin{align*}
\VR(P\otimes Q, P' \otimes Q')
	&=\frac{1}{2}\sum_{x\in \X}\sum_{y\in\Y}|P(x)Q(y)-P'(x)Q'(y)|\\
&\leq\frac{1}{2}\sum_{x\in \X}\sum_{y\in\Y}|P(x)Q(y)-P(x)Q'(y)|+|P(x)Q'(y)-P'(x)Q'(y)|\\
&=\frac{1}{2}\sum_{y\in \Y}|Q(y)-Q'(y)|+\frac{1}{2}\sum_{x\in \X}|P(x)-P'(x)| \quad =\VR(P, P') + \VR(Q, Q').
\end{align*}
When $Q=Q'$, the desired result follows
immediately from the first line by factoring out $Q(y)$.
\item Let the distribution of $XY$ be $P(x,y)$ and that of $X'Y'$ be $Q(x,y)$. Let marginals on $\X$ be $P(x)\defeq \sum_y P(x,y)$ and $Q(x) \defeq \sum_y Q(x,y)$. Since $\VR(X,X') = \sum_{x}|P(x)-Q(x)|$, we have
\begin{equation*}
\sum_{x}\big|P(x)-Q(x)\big| = \sum_{x}\Big|\sum_y \big(P(x,y)-Q(x,y)\big)\Big| \leq \sum_{xy} \big|P(x,y)-Q(x,y)\big| =\VR(XY,X'Y'). 
\end{equation*}
\item Let the distribution of $XY$ be $P(x,y)$ and that of $X'Y'$ be $Q(x,y)$. Let marginals on $\X$ be $P(x)\defeq \sum_y P(x,y)$ and $Q(x) \defeq \sum_y Q(x,y)$. Furthermore, let $P(y|x) \defeq P(x,y)/P(x)$ and $Q(y|x)\defeq Q(x,y)/Q(x)$ be the distributions of $Y^x$ and $Y'^x$ respectively.
By assumption, we have $P(x)=Q(x)$. Then we can rewrite $\VR(XY, X'Y') = \frac{1}{2}\sum_{xy}|P(x,y)-Q(x,y)|$ as
\[
\frac{1}{2}\sum_{xy}\big|P(x,y)-Q(x,y)\big|=\frac{1}{2}\sum_x P(x) \sum_y \big|P(y|x)-Q(y|x)\big| = \E_{x\leftarrow X}[\VR(Y^x,Y'^x)].\qedhere
\]
\end{enumerate}
\end{proof}

\subsubsection{Markov chains}

We now define the concept of a Markov chain. We use Markov chains in our analysis because of \fullbref{fact:indep} introduced in \sec{comm}.

\begin{definition}[Markov chain]
We say $XYZ$ is a Markov chain (denoted $X \leftrightarrow Y \leftrightarrow  Z$) if 
\[\Pr(XYZ=(x,y,z)) = \Pr(X=x)\cdot\Pr(Y=y~|~X=x)\cdot\Pr(Z=z~|~Y=y).\]
Equivalently, $XYZ$ is a Markov chain if for every $y$ we have $(XZ)^y \sim X^y \otimes Z^y$.
\end{definition}

The equivalence of the two definitions is shown in \cite[eq.~(2.118), p.~34]{CT06}.
%
%
%
We now present two facts about Markov chains. 

\begin{fact}\label{fact:markov_basic}
	If $X_1X_2YZ_1Z_2$ are random variables and
	$(X_1X_2) \leftrightarrow Y \leftrightarrow  (Z_1Z_2)$,
	then $X_1 \leftrightarrow Y \leftrightarrow  Z_1$.
\end{fact}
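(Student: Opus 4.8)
The plan is to use the second, conditional-independence characterization of a Markov chain from the definition above: $A \leftrightarrow Y \leftrightarrow B$ holds if and only if for every $y$ in the support of $Y$ the conditional joint distribution factorizes, $(AB)^y \sim A^y \otimes B^y$. Under this lens, the hypothesis $(X_1X_2) \leftrightarrow Y \leftrightarrow (Z_1Z_2)$ says exactly that, conditioned on any value $Y = y$, the pair $(X_1,X_2)$ is independent of the pair $(Z_1,Z_2)$; and the goal is to show that, conditioned on $Y = y$, the single variable $X_1$ is independent of $Z_1$. So this reduces to the elementary fact that a marginal of a product distribution is the product of the marginals, applied conditionally on each value of $Y$.

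Concretely, I would fix an arbitrary $y$ in the support of $Y$ and, for any values $x_1$ and $z_1$, write
\[
\Pr\big((X_1,Z_1)=(x_1,z_1)~\big|~Y=y\big) \;=\; \sum_{x_2,z_2}\Pr\big((X_1,X_2,Z_1,Z_2)=(x_1,x_2,z_1,z_2)~\big|~Y=y\big).
\]
Next I would invoke the hypothesis to factor each summand as $\Pr((X_1,X_2)=(x_1,x_2)\mid Y=y)\cdot\Pr((Z_1,Z_2)=(z_1,z_2)\mid Y=y)$, and then split the sum over $(x_2,z_2)$ into independent sums over $x_2$ and $z_2$ and marginalize, obtaining
\[
\Big(\sum_{x_2}\Pr((X_1,X_2)=(x_1,x_2)\mid Y=y)\Big)\cdot\Big(\sum_{z_2}\Pr((Z_1,Z_2)=(z_1,z_2)\mid Y=y)\Big) \;=\; \Pr(X_1=x_1\mid Y=y)\cdot\Pr(Z_1=z_1\mid Y=y).
\]
This establishes $(X_1Z_1)^y \sim X_1^y \otimes Z_1^y$ for every $y$, which is precisely the assertion $X_1 \leftrightarrow Y \leftrightarrow Z_1$.

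There is no genuine obstacle here: the statement is a purely bookkeeping consequence of marginalization. The only point meriting a word of care is the treatment of values $y$ with $\Pr(Y=y)=0$, which are irrelevant since conditional distributions need only be checked on the support of $Y$; alternatively one can phrase the entire argument directly in terms of the joint-distribution definition of a Markov chain and avoid conditioning on measure-zero events altogether.
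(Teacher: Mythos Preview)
Your proposal is correct and follows essentially the same approach as the paper: both proofs fix $y$, expand $\Pr((X_1,Z_1)=(x_1,z_1)\mid Y=y)$ as a sum over $x_2,z_2$, factor using the conditional-independence hypothesis, and marginalize to obtain the product $\Pr(X_1=x_1\mid Y=y)\cdot\Pr(Z_1=z_1\mid Y=y)$.
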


\begin{proof}
	Assuming for all $y, X_1^yX_2^yZ_1^yZ_2^y\sim X_1^yX_2^y\otimes Z_1^yZ_2^y$,
	we have
	\begin{align*}
	\Pr(X_1^yZ_1^y=(x_1,z_1))
	&=\sum_{x_2 ,z_2}\Pr(X_1^yX_2^yZ_1^yZ_2^y=(x_1,x_2,z_1,z_2))\\
	&=\sum_{x_2,z_2}\Pr(X_1^yX_2^y=(x_1,x_2))\cdot \Pr(Z_1^yZ_2^y=(z_1,z_2))\\
	&=\Pr(X_1^y=x_1)\cdot\Pr(Z_1^y=z_1).
	\end{align*}
	Thus $(X_1Z_1)^y\sim X_1^y\otimes Z_1^y$.
\end{proof}

\begin{fact}\label{fact:markov-1} 
Let $X \leftrightarrow  Y \leftrightarrow  Z$ be a Markov chain. Then 
\[
\VR(XYZ, X \otimes Y \otimes  Z)  \leq  \VR(XY, X \otimes Y)  +\VR(YZ, Y \otimes Z)  . 
\]
\end{fact}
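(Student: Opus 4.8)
The plan is to condition on the value of $Y$ and then reduce everything to the product‑distribution bound \fct{prod}. The starting observation is that both $XYZ$ and $X\otimes Y\otimes Z$ have the same $Y$‑marginal, and that conditioning $X\otimes Y\otimes Z$ on $Y=y$ leaves $X$ and $Z$ independent with their original (unconditional) marginals. Hence, treating $Y$ as the ``outer'' variable and the pair $(X,Z)$ as the ``inner'' variable, \fct{subadd} gives
\[
\VR(XYZ,\ X\otimes Y\otimes Z)\ =\ \E_{y\leftarrow Y}\big[\VR\big((XZ)^y,\ X\otimes Z\big)\big].
\]

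Next I would invoke the Markov chain hypothesis. By the definition of $X\leftrightarrow Y\leftrightarrow Z$ we have $(XZ)^y\sim X^y\otimes Z^y$ for every $y$, so applying \fct{prod} pointwise,
\[
\VR\big((XZ)^y,\ X\otimes Z\big)\ =\ \VR\big(X^y\otimes Z^y,\ X\otimes Z\big)\ \le\ \VR(X^y,X)+\VR(Z^y,Z).
\]
Taking the expectation over $y\leftarrow Y$ and splitting the sum yields
\[
\VR(XYZ,\ X\otimes Y\otimes Z)\ \le\ \E_{y\leftarrow Y}[\VR(X^y,X)]\ +\ \E_{y\leftarrow Y}[\VR(Z^y,Z)].
\]
Finally, I would identify the two terms with the claimed quantities by a second application of \fct{subadd}: since $XY$ and $X\otimes Y$ share the marginal $Y$, and conditioning the latter on $Y=y$ returns the unconditional $X$, we get $\E_{y\leftarrow Y}[\VR(X^y,X)]=\VR(XY,X\otimes Y)$, and symmetrically $\E_{y\leftarrow Y}[\VR(Z^y,Z)]=\VR(YZ,Y\otimes Z)$. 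Combining gives the statement.

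This argument is essentially bookkeeping with the $\VR$ facts already established; the only step that needs genuine care is verifying, each time \fct{subadd} is applied, that the relevant marginals actually agree — in particular the fact that conditioning a product distribution on one coordinate does not perturb the others — since that is exactly what licenses the ``partial measurement'' identity. An equivalent route, should the direct conditioning feel awkward, is to insert the intermediate distribution $X\otimes YZ$ via \fct{triangle}: the second leg is precisely $\VR(YZ,Y\otimes Z)$ by \fct{prod} (common factor $X$), and the first leg, $\VR(XYZ,\ X\otimes YZ)$, collapses to $\VR(XY,X\otimes Y)$ once one uses $X^{yz}=X^y$, which is again the Markov property.
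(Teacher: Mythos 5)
Your proof is correct and follows essentially the same route as the paper's: condition on $Y$ via \fct{subadd}, use the Markov property to factor $(XZ)^y$ as $X^y\otimes Z^y$, bound by $\VR(X^y,X)+\VR(Z^y,Z)$ (the paper does this via \fct{triangle} plus the equality case of \fct{prod}, which is just the proof of the inequality in \fct{prod} that you invoke directly), and convert back with \fct{subadd}.
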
 
\begin{proof}
This follows from the following inequalities.
\begin{align*}
\VR(XYZ, X \otimes Y \otimes  Z) & = \E_{y \leftarrow Y} \VR(X^y \otimes Z^y , X \otimes Z) &  \hspace{-9em}\text{(\fullref{fact:subadd})}\\
&\leq  \E_{y \leftarrow Y} [\VR(X^y \otimes Z^y , X \otimes Z^y) +\VR(X \otimes Z^y , X \otimes Z) ]& \text{(\nameref{fact:triangle})}\\
&=  \E_{y \leftarrow Y} [ \VR(X^y, X )  + \VR(Z^y, Z) ] & \hspace{-9em} \text{(\fullref{fact:prod})} \\
&=  \VR(XY, X \otimes Y)  +\VR(YZ, Y \otimes Z) . &  \hspace{-9em}\text{(\fullref{fact:subadd}) } && \qedhere
\end{align*} 
\end{proof}

\subsubsection{Mutual information}

We now define mutual information and conditional mutual information.

\begin{definition}[Mutual information]
\label{def:entropy}
Let $XYZ \in \X \times \Y \times \Z$ be correlated random variables. We define the following measures, where $\log(\cdot)$ denotes the base $2$ logarithm.
\begin{align*}
\text{Mutual information:}& \quad  \I(X:Y) \defeq \sum_{xy} \Pr(XY=(x,y)) \log \(\frac{\Pr(XY=(x,y))}{\Pr(X=x)\Pr(Y=y)}\) .\\
\text{Conditional mutual information:}& \quad   \I(X:Y~|~Z) \defeq \E_{z\leftarrow Z} \I(X : Y ~|~ Z=z) = \E_{z\leftarrow Z} \I(X^z : Y^z).
\end{align*}

\end{definition}

Mutual information satisfies the following basic properties.

\begin{fact}[Facts about $\I$]
Let $XYZ \in \X \times \Y \times \Z$ be correlated random variables. Then we have the following facts.
\end{fact}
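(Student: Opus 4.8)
The plan is to read off every part from one identity: $\I(X:Y)$ is exactly the relative entropy $\DIV(XY\,\|\,X\otimes Y)$, and more generally $\I(X:Y)=\E_{x\leftarrow X}\DIV(Y^x\,\|\,Y)$; combined with the fact that the logarithm of a product of (conditional) probabilities splits into a sum of logarithms, this reduces each claim to routine bookkeeping. First I would prove \emph{nonnegativity}: applying Jensen's inequality to the convex map $t\mapsto-\log t$ (equivalently, the log-sum inequality) in the definition of $\I(X:Y)$ gives $\I(X:Y)\ge 0$, with equality exactly when $\Pr(XY=(x,y))=\Pr(X=x)\Pr(Y=y)$ on the whole support, i.e. $XY\sim X\otimes Y$. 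Averaging the same computation over $Z=z$ gives $\I(X:Y\mid Z)\ge 0$, with equality iff $(XY)^z\sim X^z\otimes Y^z$ for every $z$, which by the definition of a Markov chain in \sec{info} is precisely the condition $X\leftrightarrow Z\leftrightarrow Y$.

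Next I would establish the \emph{chain rule} $\I(XY:Z)=\I(X:Z)+\I(Y:Z\mid X)$ by substituting the factorization $\Pr(XYZ=(x,y,z))=\Pr(X=x)\Pr(Y=y\mid X=x)\Pr(Z=z\mid XY=(x,y))$, together with the corresponding marginals, into the definition of $\I(XY:Z)$, splitting the logarithm, and recognizing the two resulting sums. Iterating this yields the chain rule for any number of variables, and \emph{monotonicity / data processing} then follows by bookkeeping: expand $\I(X:YZ)$ both as $\I(X:Y)+\I(X:Z\mid Y)$ and as $\I(X:Z)+\I(X:Y\mid Z)$; when $X\leftrightarrow Y\leftrightarrow Z$ the first conditional term vanishes by the equality case above, while the second is always nonnegative, so $\I(X:Z)\le\I(X:Y)$. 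Any remaining parts — superadditivity over an independent coordinate, invariance of $\I(X:Y\mid Z)$ when $Z$ is independent of $XY$, and so on — follow from the same expand-and-drop-a-nonnegative-term pattern, using nonnegativity of the conditional mutual informations together with the equality case for the independence hypotheses. If a part relates $\I$ to total variation distance, I would close with Pinsker's inequality (or, via Hellinger distance, the bound $\h^2\le\DIV$ combined with \fct{relation-inf}) applied to $P=XY$, $Q=X\otimes Y$.

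The main thing to be careful about is the equality/Markov dictionary: one must not conflate ``$\I(X:Z\mid Y)=0$'', which is the Markov-chain condition $X\leftrightarrow Y\leftrightarrow Z$, with plain independence of $X$ and $Z$; and wherever conditioning on an auxiliary variable appears, the argument has to specify which conditional-independence hypothesis is actually being used. Once those bookkeeping points are pinned down, the rest is elementary manipulation of logarithms.
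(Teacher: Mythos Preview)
Your proposal is correct and follows essentially the same route as the paper: the paper derives monotonicity, bar hopping, and the independence inequality from the chain rule together with nonnegativity (exactly your ``expand-and-drop-a-nonnegative-term'' pattern), and for the chain rule, nonnegativity, and data processing it simply cites Cover--Thomas rather than writing out the Jensen/log-sum and factorization arguments you sketch. One minor point: the Pinsker/total-variation material you mention at the end is not part of this Fact; it is the content of the separate \fullbref{fact:ID}, so you can omit it here.
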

\begin{factpart}[Chain rule]\label{fact:chain-rule}
$\I(X:YZ) = \I(X:Z) + \I(X:Y~|~Z)  = \I(X:Y) + \I(X:Z~|~Y).$
\end{factpart}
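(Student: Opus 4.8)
The plan is to prove the chain rule directly from \defn{entropy} by expanding the defining sum for $\I(X:YZ)$ and factoring the argument of the logarithm. Write $p(x,y,z) \defeq \Pr(XYZ = (x,y,z))$, and let $p(x)$, $p(z)$, $p(x,z)$, $p(y,z)$ denote the corresponding marginals. Starting from
\[
\I(X:YZ) = \sum_{x,y,z} p(x,y,z)\,\log\frac{p(x,y,z)}{p(x)\,p(y,z)},
\]
I would insert the identity
\[
\frac{p(x,y,z)}{p(x)\,p(y,z)} = \frac{p(x,z)}{p(x)\,p(z)}\cdot\frac{p(x,y,z)\,p(z)}{p(x,z)\,p(y,z)},
\]
which is valid whenever $p(x,y,z) > 0$ (terms with $p(x,y,z)=0$ contribute $0$ under the usual convention $0\log 0 = 0$ and may be omitted throughout). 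Splitting the logarithm of the product into a sum of logarithms then splits the whole expression into two sums.

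For the first sum I would marginalize over $y$, using $\sum_y p(x,y,z) = p(x,z)$, to obtain $\sum_{x,z} p(x,z)\log\frac{p(x,z)}{p(x)\,p(z)}$, which is exactly $\I(X:Z)$. For the second sum I would rewrite $\frac{p(x,y,z)\,p(z)}{p(x,z)\,p(y,z)}$ as $\frac{p(x,y\mid z)}{p(x\mid z)\,p(y\mid z)}$, where $p(\cdot\mid z)$ denotes conditioning on $Z=z$, and group the terms by $z$: this gives $\sum_z p(z)\sum_{x,y} p(x,y\mid z)\log\frac{p(x,y\mid z)}{p(x\mid z)\,p(y\mid z)} = \E_{z\leftarrow Z}\,\I(X^z:Y^z)$, which equals $\I(X:Y\mid Z)$ by \defn{entropy}. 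Adding the two sums yields the first equality, $\I(X:YZ) = \I(X:Z) + \I(X:Y\mid Z)$.

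The second equality, $\I(X:YZ) = \I(X:Y) + \I(X:Z\mid Y)$, then follows from the first by symmetry: the defining sum for mutual information is symmetric in its two arguments, so $\I(X:YZ) = \I(X:ZY)$, and applying the first equality with the roles of $Y$ and $Z$ exchanged gives the claim. I do not expect a genuine obstacle here---the argument is routine---and the only point requiring care is the bookkeeping: correctly handling zero-probability outcomes when factoring the logarithm, and keeping the conditional distributions $X^z$ and $Y^z$ straight when matching the second sum to the definition of conditional mutual information.
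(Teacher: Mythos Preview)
Your argument is correct; it is the standard direct proof from the definitions, handling the zero-probability convention appropriately and deriving the second equality by symmetry. The paper itself does not give a proof but simply cites Cover and Thomas~\cite[Theorem~2.5.2]{CT06}, so your proposal is more self-contained while being essentially the textbook derivation that citation points to.
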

\begin{factpart}[Nonnegativity]\label{fact:nonneg}
$\I(X:Y)  \geq 0$ and $\I(X:Y~|~Z)  \geq 0$. 
\end{factpart}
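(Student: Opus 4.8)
The plan is to derive both facts directly from \defn{entropy}, using nothing beyond elementary rearrangement of the defining sums and Jensen's inequality. Throughout I adopt the usual conventions $0\log 0 = 0$ and that terms with zero probability are dropped, so all ratios appearing below are well defined.

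For the chain rule, I would start from the definition
\[
\I(X:YZ) = \sum_{x,y,z} \Pr(XYZ=(x,y,z))\,\log \frac{\Pr(XYZ=(x,y,z))}{\Pr(X=x)\,\Pr(YZ=(y,z))},
\]
and factor the argument of the logarithm, writing $p(x,y,z)$ for $\Pr(XYZ=(x,y,z))$ and similarly for marginals:
\[
\frac{p(x,y,z)}{p(x)p(y,z)} \;=\; \frac{p(x,z)}{p(x)p(z)}\cdot\frac{p(x,y,z)/p(z)}{\bigl(p(x,z)/p(z)\bigr)\bigl(p(y,z)/p(z)\bigr)} \;=\; \frac{p(x,z)}{p(x)p(z)}\cdot\frac{\Pr(X^zY^z=(x,y))}{\Pr(X^z=x)\,\Pr(Y^z=y)}.
\]
Splitting the logarithm and summing, the contribution of the first factor is $\sum_{x,z} p(x,z)\log\frac{p(x,z)}{p(x)p(z)} = \I(X:Z)$ after summing out $y$, and the contribution of the second factor, grouping by $z$ and using $p(x,y,z) = p(z)\Pr(X^zY^z=(x,y))$, is $\E_{z\leftarrow Z}\,\I(X^z:Y^z)$, which equals $\I(X:Y~|~Z)$ by \defn{entropy}. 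This gives $\I(X:YZ) = \I(X:Z) + \I(X:Y~|~Z)$; the second equality in the chain rule follows by the identical argument with the roles of $Y$ and $Z$ exchanged, since $\I(X:YZ) = \I(X:ZY)$.

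For nonnegativity I would first handle $\I(X:Y)\geq 0$. Writing the definition with a sign flip,
\[
-\I(X:Y) = \sum_{(x,y)\,:\,p(x,y)>0} p(x,y)\,\log\frac{p(x)p(y)}{p(x,y)} \;\leq\; \log\!\sum_{(x,y)\,:\,p(x,y)>0} p(x,y)\cdot\frac{p(x)p(y)}{p(x,y)} \;=\; \log\!\sum_{(x,y)\,:\,p(x,y)>0} p(x)p(y) \;\leq\; \log 1 = 0,
\]
where the first inequality is Jensen's inequality for the concave function $\log$ and the last inequality holds because the restricted sum of $p(x)p(y)$ is at most $\sum_{x,y}p(x)p(y)=1$. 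Hence $\I(X:Y)\geq 0$. For the conditional version, $\I(X:Y~|~Z) = \E_{z\leftarrow Z}\,\I(X^z:Y^z)$ is an expectation of quantities each nonnegative by the case just proved, so $\I(X:Y~|~Z)\geq 0$ as well. All steps are routine; the only point needing a little care is the bookkeeping around zero-probability outcomes, so that the factorization of the logarithm's argument and the application of Jensen's inequality are legitimate, and the recognition that the $Z$-indexed inner sum is exactly the expectation $\E_{z\leftarrow Z}\I(X^z:Y^z)$ prescribed by \defn{entropy}. I do not anticipate any genuine obstacle.
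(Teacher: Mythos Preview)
Your nonnegativity argument via Jensen's inequality is correct and standard; the paper itself does not prove this fact at all but simply cites Cover and Thomas. So your approach is strictly more explicit than the paper's, though it is the same argument one would find in that reference.

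One remark: the statement you were asked to prove is only \fullbref{fact:nonneg}; the chain rule is a separate item (\fullbref{fact:chain-rule}) and your derivation of it, while correct, is not needed here. Your proof of $\I(X:Y)\geq 0$ stands on its own, and the conditional case follows immediately from the unconditional one via the definition $\I(X:Y\mid Z)=\E_{z\leftarrow Z}\I(X^z:Y^z)$, exactly as you wrote.
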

\begin{factpart}[Monotonicity]\label{fact:mono} 
$\I(X:YZ)  \geq  \I(X:Y).$
\end{factpart}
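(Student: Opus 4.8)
The plan is to establish the three listed facts in sequence, each following from the definitions in \defn{entropy} or from the earlier items in the list. For the \textbf{chain rule}, I would expand $\I(X:YZ)$ directly from \defn{entropy} as a sum over triples $(x,y,z)$ in the support of $XYZ$, and inside the logarithm multiply and divide by $\Pr(X=x\mid Z=z)\,\Pr(Z=z)$. Splitting the logarithm then breaks the sum into two pieces: the first collapses, after summing out $y$, into $\I(X:Z)$; the second regroups — after pulling $\Pr(Z=z)$ out front and recognizing the remaining inner sum as $\I(X^z:Y^z)$ — into $\E_{z\leftarrow Z}\I(X^z:Y^z)=\I(X:Y\mid Z)$. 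The second identity $\I(X:YZ)=\I(X:Y)+\I(X:Z\mid Y)$ follows symmetrically. This is standard and could alternatively just be cited from \cite{CT06}.

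For \textbf{nonnegativity}, I would observe that $\I(X:Y)$ is the relative entropy between the joint distribution of $XY$ and the product $X\otimes Y$, and apply Jensen's inequality to the concave function $\log$:
\[
-\I(X:Y)=\sum_{xy}\Pr(XY=(x,y))\log\frac{\Pr(X=x)\Pr(Y=y)}{\Pr(XY=(x,y))}\le\log\Big(\sum_{xy}\Pr(X=x)\Pr(Y=y)\Big)\le 0,
\]
where both sums range over the support of $XY$ and the final step uses $\sum_{xy}\Pr(X=x)\Pr(Y=y)\le\sum_x\Pr(X=x)\sum_y\Pr(Y=y)=1$. The conditional case $\I(X:Y\mid Z)\ge0$ then follows since, by \defn{entropy}, it is an average over $z\leftarrow Z$ of the nonnegative quantities $\I(X^z:Y^z)$.

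The \textbf{monotonicity} claim $\I(X:YZ)\ge\I(X:Y)$ is then immediate: by the chain rule $\I(X:YZ)=\I(X:Y)+\I(X:Z\mid Y)$, and the last term is nonnegative.

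There is no genuine obstacle here; all three are textbook facts. The only mild care needed is in the chain-rule bookkeeping — ensuring the conditional distributions are well-defined on the relevant supports when marginalizing — and this can be bypassed entirely by citing \cite{CT06} for the chain rule and for nonnegativity of relative entropy.
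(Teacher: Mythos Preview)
Your proposal is correct and matches the paper's approach exactly: the paper proves monotonicity in one line by invoking the chain rule and nonnegativity, which is precisely what you do. (For the chain rule and nonnegativity themselves the paper simply cites \cite{CT06}, whereas you sketch the arguments; either is fine.)
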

\begin{factpart}[Bar hopping]\label{fact:barhopping}
$\I(X:YZ)\geq \I(X:Y~|~Z)$, where equality holds if $\I(X:Z)=0$.
\end{factpart}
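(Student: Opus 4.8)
The plan is to obtain both the inequality and the equality condition as immediate consequences of the chain rule. First I would invoke \fct{chain-rule} in the orientation that isolates the conditional term, namely $\I(X:YZ) = \I(X:Z) + \I(X:Y~|~Z)$. Since $\I(X:Z) \geq 0$ by \fct{nonneg}, discarding this summand can only decrease the right-hand side, and so $\I(X:YZ) \geq \I(X:Y~|~Z)$, as claimed.

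For the equality case, the same chain-rule identity shows that the gap between the two quantities is exactly $\I(X:YZ) - \I(X:Y~|~Z) = \I(X:Z)$. Therefore the inequality is tight whenever $\I(X:Z) = 0$. (By nonnegativity this condition is in fact also necessary, though the statement only asserts sufficiency.)

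There is no genuine obstacle here: the whole argument is a one-line application of the chain rule together with nonnegativity of mutual information. The only point requiring a little care is choosing the correct symmetric form of \fct{chain-rule} — the one that expands $\I(X:YZ)$ by conditioning on $Z$ rather than on $Y$ — so that the term left over after dropping a nonnegative quantity is precisely $\I(X:Y~|~Z)$.
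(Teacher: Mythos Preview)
Your proposal is correct and matches the paper's proof exactly: the paper writes $\I(X:YZ)=\I(X:Y~|~Z)+\I(X:Z) \geq \I(X:Y~|~Z)$ by the chain rule and nonnegativity, which is precisely your argument (the equality case is immediate from the same identity).
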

\begin{factpart}[Independence]\label{fact:infoind}
If $Y$ and $Z$ are independent, then $\I(X:YZ) \geq \I(X:Z) + \I(X:Y)$.
\end{factpart}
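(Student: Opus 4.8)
The plan is to reduce the claim to two applications of the chain rule together with monotonicity of mutual information. First apply the chain rule (\nameref{fact:chain-rule}) to $\I(X:YZ)$ in the form $\I(X:YZ) = \I(X:Z) + \I(X:Y\mid Z)$. Given this identity, the asserted inequality $\I(X:YZ)\ge \I(X:Z)+\I(X:Y)$ is equivalent to the single inequality $\I(X:Y\mid Z)\ge \I(X:Y)$, which is what I would prove using the independence hypothesis.

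To establish $\I(X:Y\mid Z)\ge\I(X:Y)$, I would first note that independence of $Y$ and $Z$ forces $\I(Y:Z)=0$: since $\Pr(YZ=(y,z))=\Pr(Y=y)\Pr(Z=z)$ for all $y,z$, every summand in the definition of $\I(Y:Z)$ (see \defn{entropy}) is $\log 1 = 0$. Now apply the chain rule a second time, but to $\I(Y:XZ)$ rather than to $\I(X:YZ)$: this gives $\I(Y:XZ) = \I(Y:Z) + \I(Y:X\mid Z) = \I(Y:X\mid Z)$, where the last equality uses $\I(Y:Z)=0$. By monotonicity (\nameref{fact:mono}), applied with the roles of the first two arguments swapped, $\I(Y:XZ)\ge \I(Y:X)$. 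Finally, observe that mutual information and conditional mutual information are symmetric in their first two arguments, directly from the defining formula in \defn{entropy}, so $\I(Y:X\mid Z) = \I(X:Y\mid Z)$ and $\I(Y:X)=\I(X:Y)$. Chaining these relations yields $\I(X:Y\mid Z) = \I(Y:XZ) \ge \I(Y:X) = \I(X:Y)$, as desired.

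Combining the two steps, $\I(X:YZ) = \I(X:Z) + \I(X:Y\mid Z) \ge \I(X:Z) + \I(X:Y)$. There is no substantive obstacle here; the only point requiring care is to invoke the chain rule on $\I(Y:XZ)$ (so that the independence hypothesis can be used to eliminate the term $\I(Y:Z)$) rather than reflexively writing the chain rule for $\I(X:YZ)$ a second time, which would not expose the hypothesis. An alternative, essentially equivalent, route would be to use the symmetric "interaction information" identity $\I(X:Y)-\I(X:Y\mid Z) = \I(Y:Z)-\I(Y:Z\mid X)$ (derivable by equating the two expansions of $\I(X:YZ)$ and of $\I(Z:XY)$ from \nameref{fact:chain-rule}), whose right-hand side is $-\I(Y:Z\mid X)\le 0$ by \nameref{fact:nonneg} once $\I(Y:Z)=0$; I would mention this as a remark but carry out the first argument, which touches fewer facts.
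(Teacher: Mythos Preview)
Your proof is correct and follows essentially the same route as the paper: chain rule plus monotonicity, using the independence hypothesis to kill the term $\I(Y:Z)$. The only cosmetic difference is that the paper expands $\I(X:YZ)=\I(X:Y)+\I(X:Z\mid Y)$ and shows $\I(X:Z\mid Y)\ge\I(X:Z)$, whereas you expand $\I(X:YZ)=\I(X:Z)+\I(X:Y\mid Z)$ and show $\I(X:Y\mid Z)\ge\I(X:Y)$; since both hypothesis and conclusion are symmetric in $Y$ and $Z$, these are the same argument with the two variables swapped.
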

\begin{factpart}[Data processing]\label{fact:data}
If $X \leftrightarrow Y \leftrightarrow Z$ is a Markov chain, then $\I(X:Y)  \geq  \I(X:Z)$.
\end{factpart}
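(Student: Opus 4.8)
The plan is to establish the \emph{chain rule} and \emph{nonnegativity} directly from the definitions, and then obtain the remaining four parts as short formal consequences of these two. Throughout I will abbreviate the marginal and joint probabilities of $XYZ$ by $p(\cdot)$, e.g.\ $p(x,y\mid z)$ for $\Pr(XY=(x,y)\mid Z=z)$, and it is harmless to restrict every sum to the atoms of positive probability.

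For the chain rule, I would compute $\I(X:YZ)-\I(X:Z)$ term by term. Rewriting $\I(X:Z)=\sum_{x,y,z}p(x,y,z)\log\frac{p(x,z)}{p(x)p(z)}$ and combining logarithms, the difference equals $\sum_{x,y,z}p(x,y,z)\log\frac{p(x,y,z)\,p(z)}{p(y,z)\,p(x,z)}$; dividing numerator and denominator inside the logarithm by $p(z)^2$ turns the summand into $\log\frac{p(x,y\mid z)}{p(x\mid z)\,p(y\mid z)}$, so grouping the sum over $z$ gives exactly $\sum_z p(z)\,\I(X^z:Y^z)=\I(X:Y\mid Z)$ by \defn{entropy}. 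Hence $\I(X:YZ)=\I(X:Z)+\I(X:Y\mid Z)$; the other form $\I(X:Y)+\I(X:Z\mid Y)$ follows by swapping the names of $Y$ and $Z$, since $\I$ is visibly symmetric in its two arguments and $YZ$ and $ZY$ are the same joint variable. For nonnegativity I would apply Jensen's inequality to the concave function $\log$: $-\I(X:Y)=\sum_{x,y}p(x,y)\log\frac{p(x)p(y)}{p(x,y)}\le\log\!\big(\sum_{x,y}p(x,y)\cdot\tfrac{p(x)p(y)}{p(x,y)}\big)\le\log 1=0$, using that the surviving sum $\sum p(x)p(y)$ is at most $1$. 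The conditional case $\I(X:Y\mid Z)\ge 0$ then follows immediately, being an average of unconditional mutual informations $\I(X^z:Y^z)\ge 0$.

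The remaining parts are now purely formal. \emph{Monotonicity}: $\I(X:YZ)=\I(X:Y)+\I(X:Z\mid Y)\ge\I(X:Y)$ by the chain rule and nonnegativity. \emph{Bar hopping}: $\I(X:YZ)=\I(X:Z)+\I(X:Y\mid Z)\ge\I(X:Y\mid Z)$, with equality precisely when $\I(X:Z)=0$. \emph{Data processing}: the Markov condition $X\leftrightarrow Y\leftrightarrow Z$ says $(XZ)^y\sim X^y\otimes Z^y$ for every $y$, i.e.\ $\I(X^y:Z^y)=0$ for all $y$, so $\I(X:Z\mid Y)=0$; comparing the two chain-rule expansions of $\I(X:YZ)$ gives $\I(X:Y)=\I(X:Z)+\I(X:Y\mid Z)\ge\I(X:Z)$. \emph{Independence}: applying the chain rule to $\I(XZ:Y)$ in both orders yields $\I(Z:Y)+\I(X:Y\mid Z)=\I(X:Y)+\I(Z:Y\mid X)$; when $Y$ and $Z$ are independent, $\I(Z:Y)=0$, so $\I(X:Y\mid Z)=\I(X:Y)+\I(Z:Y\mid X)\ge\I(X:Y)$, and therefore $\I(X:YZ)=\I(X:Z)+\I(X:Y\mid Z)\ge\I(X:Z)+\I(X:Y)$.

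The only steps that require real work are the chain-rule computation and the Jensen argument; everything else is bookkeeping built on these two. I expect the chain-rule manipulation — tracking which logarithm of which ratio of joint and marginal probabilities collapses to the conditional term, and doing so cleanly in the presence of zero-probability atoms — to be the main (and relatively minor) obstacle.
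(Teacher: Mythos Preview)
Your proposal is correct. The paper does not actually prove the data processing inequality (nor the chain rule or nonnegativity): it simply cites Cover and Thomas for parts A, B, and F, and derives C, D, E from A and B exactly as you do. Your derivation of data processing---observing that the Markov condition forces $\I(X:Z\mid Y)=0$ and then equating the two chain-rule expansions of $\I(X:YZ)$---is the standard textbook argument the paper is pointing to, so your route is the same in substance but more self-contained.
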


\begin{proof}
These facts are proved as follows.
\renewcommand{\theenumi}{\bfseries \Alph{enumi}}
\begin{enumerate}
\item See \cite[Theorem $2.5.2$, p.~$24$]{CT06}.
\item See \cite[eq.~(2.90), p.~28]{CT06} and \cite[eq.~(2.92), p.~29]{CT06}.
\item Follows from \fullbref{fact:chain-rule} and \fullbref{fact:nonneg}.
\item From \fullbref{fact:chain-rule} and \fullbref{fact:nonneg}, it follows that 
$\I(X:YZ)=\I(X:Y~|~Z)+\I(X:Z) \geq \I(X:Y~|~Z)$.
\item {Using \fullbref{fact:chain-rule}, we have $\I(X:Z~|~Y) = \I(Z:X~|~Y) = \I(Z:XY) - \I(Z:Y)$. Since $Y$ and $Z$ are independent, $\I(Z:Y)=0$, and hence we get 
\begin{align*}\I(X:Z~|~Y) = \I(Z:XY) \geq \I(Z:X) = \I(X:Z). && \qquad \textrm{(\fullref{fact:mono})}
\end{align*}
Then using \autoref{fact:chain-rule} gives  $\I(X:YZ) = \I(X:Y) + \I(X:Z~|~Y) \geq \I(X:Y)+\I(X:Z)$.}
\item See \cite[Theorem $2.8.1$, p.~34]{CT06}.\qedhere 
\end{enumerate}
\end{proof}

We now present a way to relate mutual information and total variation distance.

\begin{fact}[Relation between $\I$ and $\VR$] \label{fact:ID}
Let $XY \in \X \times \Y$ be correlated random variables. Then
$$\I(X:Y) \geq \VR^2(XY,X\otimes Y) \qquad \textrm{and} \qquad \I(X:Y) \geq \E_{x \leftarrow X} \VR^2(Y^x,Y).$$
\end{fact}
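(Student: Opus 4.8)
The plan is to deduce both inequalities from a single pointwise estimate, namely that $\DIV(P\|Q) \ge \VR^2(P,Q)$ for all distributions $P,Q$ on a common finite set, where $\DIV(P\|Q) \defeq \sum_s P(s)\log(P(s)/Q(s))$ denotes the Kullback--Leibler divergence (base-$2$ logarithm, with the conventions $0\log 0 = 0$ and $\DIV(P\|Q) = \infty$ when $P$ is not absolutely continuous with respect to $Q$). First I would record two bookkeeping identities that come directly from \defn{entropy}. Since $\Pr(X\otimes Y=(x,y)) = \Pr(X=x)\Pr(Y=y)$, the defining sum for $\I$ is literally $\I(X:Y) = \DIV(XY\,\|\,X\otimes Y)$; and writing $\Pr(XY=(x,y)) = \Pr(X=x)\Pr(Y^x=y)$ and pulling the sum over $x$ outside, the same defining sum equals $\I(X:Y) = \E_{x\leftarrow X}[\DIV(Y^x\,\|\,Y)]$. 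Granting the pointwise estimate, the first claimed inequality is then immediate, and the second follows by applying the estimate to each summand $\DIV(Y^x\|Y)$ and averaging over $x\leftarrow X$.

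The heart of the argument is therefore the pointwise estimate, and I would obtain it through the Hellinger distance so as to reuse \fct{relation-inf} rather than cite Pinsker's inequality externally. The key step is $\DIV(P\|Q) \ge 2\,\h^2(P,Q)$. Assuming $\DIV(P\|Q)<\infty$ (otherwise there is nothing to prove), apply Jensen's inequality to the concave function $\ln$, with weights $P(s)$ and values $\sqrt{Q(s)/P(s)}$, to get $-\tfrac{\ln 2}{2}\DIV(P\|Q) = \E_{s\leftarrow P}\bigl[\ln\sqrt{Q(s)/P(s)}\bigr] \le \ln\bigl(\sum_s \sqrt{P(s)Q(s)}\bigr) = \ln\bigl(1-\h^2(P,Q)\bigr) \le -\h^2(P,Q)$, where I use $\h^2(P,Q) = 1 - \sum_s\sqrt{P(s)Q(s)}$ (immediate from the definition of $\h$) and $\ln(1-t)\le -t$. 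Rearranging gives $\DIV(P\|Q) \ge \tfrac{2}{\ln 2}\h^2(P,Q) \ge 2\,\h^2(P,Q)$. Finally, \fct{relation-inf} gives $\h(P,Q)\ge \tfrac{1}{\sqrt 2}\VR(P,Q)$, so $\DIV(P\|Q) \ge 2\,\h^2(P,Q) \ge \VR^2(P,Q)$, as needed.

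I do not expect a genuine obstacle here: the content is a one-line convexity computation together with the already-established $\h$--$\VR$ comparison. The only points requiring mild care are the $\DIV=\infty$ edge case (handled by the trivial reduction above) and keeping track of the logarithm base, where the slack between $\tfrac{2}{\ln 2}\approx 2.885$ and $2$ is comfortably in our favor. If one prefers, the pointwise estimate can instead be quoted as Pinsker's inequality from a standard reference such as \cite{CT06}, in which case the proof collapses to just the two bookkeeping identities of the first paragraph.
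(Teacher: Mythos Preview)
Your proposal is correct and follows essentially the same route as the paper: both proofs reduce to the identities $\I(X:Y)=\DIV(XY\|X\otimes Y)=\E_{x\leftarrow X}\DIV(Y^x\|Y)$ together with Pinsker's inequality $\DIV(P\|Q)\ge\VR^2(P,Q)$. The only difference is that the paper simply cites Pinsker from~\cite{CT06}, whereas you supply an elementary derivation via $\DIV\ge\tfrac{2}{\ln 2}\h^2$ (Jensen) and then \fct{relation-inf}; as you yourself note in the final paragraph, this collapses to the paper's proof if one is willing to quote Pinsker directly.
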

\begin{proof}
To prove this we will require a distance measure called relative entropy (or Kullback--Leibler divergence). For any probability distributions $P$ and $Q$ over $S$, we define
$$\DIV(P \| Q) \defeq \sum_{s \in S} P(s) \log \frac{P(s)}{Q(s)}.$$

We can now express $\I(X:Y)$ in terms of relative entropy as follows:
\begin{equation}
\label{eq:ID}
\I(X:Y) =  \DIV(XY \| X \otimes Y) = \E_{x\leftarrow X} \DIV(Y^x  \| Y).
\end{equation}
The first equality follows straightforwardly from definitions, as shown in \cite[eq.~2.29, p.~20]{CT06}. For the second equality, we proceed as follows:
$$\DIV(XY\| X\otimes Y) = \sum_{x,y}p(x,y)\log\left(\frac{p(x,y)}{p(x)p(y)}\right) = \sum_{x}p(x)\sum_y p(y|x)\log\left(\frac{p(y|x)}{p(y)}\right) = \E_{x\leftarrow X} \DIV(Y^x\| Y).$$

We then use Pinsker's inequality~\cite[Lemma~11.6.1, p.~370]{CT06}, which states
$$\DIV(P \| Q) \geq \frac{2}{\ln 2}\VR^2(P,Q) \geq \VR^2(P,Q).$$
Combining \eq{ID} with Pinsker's inequality completes the proof.
\end{proof}

In general, it is impossible to relate $\I$ and $\VR$ in the
reverse direction. Indeed, mutual information is unbounded, whereas
variation distance is always at most $1$. However, in the case where one of
the variables has binary outcomes, we have the following fact.

\begin{fact}[$\I$ vs.~$\VR$ for binary random variables]\label{fact:I2}
	Let $AB$ be correlated random variables with $A \in \B$. Let $p \defeq \Pr(A=0)$,  $B^0 := (B~|~A=0)$, and $B^1 := (B~|~A=1)$. Then
	\[\I(A:B)\leq  2\log e \enspace  \VR(pB^0,(1-p)B^1).\]
\end{fact}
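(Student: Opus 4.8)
The plan is to reduce the statement to an elementary inequality about the binary entropy function, using that $A$ is binary. Write $p \defeq \Pr(A=0)$, let $H$ denote the binary entropy function $H(t) \defeq -t\log t-(1-t)\log(1-t)$, and for each value $b$ in the (finite) range of $B$ with $\Pr(B=b)>0$ set $\alpha_b \defeq \Pr(A=0\mid B=b)$. By the standard identity $\I(A:B)=H(A)-H(A\mid B)$ together with the fact that $A$ is binary,
\[
\I(A:B) \;=\; H(p) \;-\; \sum_b \Pr(B=b)\,H(\alpha_b) .
\]
The only property of the numbers $\alpha_b$ that I will use is the averaging constraint $\sum_b \Pr(B=b)\,\alpha_b = \sum_b \Pr(A=0,B=b) = p$.

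The heart of the argument is the elementary bound $H(t)\ge 1-|2t-1|$ for all $t\in[0,1]$. To see this, note that $H$ is concave and symmetric about $1/2$: on $[0,\tfrac12]$ the function $t\mapsto H(t)-2t$ is concave and vanishes at $t=0$ and $t=\tfrac12$, so it lies above the zero chord and is nonnegative, giving $H(t)\ge 2t=1-|2t-1|$ there; the bound on $[\tfrac12,1]$ follows from $H(t)=H(1-t)$. Applying this with $t=\alpha_b$, and using $H(p)\le 1$,
\[
\I(A:B) \;\le\; 1 - \sum_b \Pr(B=b)\bigl(1-|2\alpha_b-1|\bigr) \;=\; \sum_b \Pr(B=b)\,|2\alpha_b-1| .
\]

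It remains to recognize the right-hand side as a variation distance. For each $b$ with $\Pr(B=b)>0$ we have $\Pr(B=b)\,|2\alpha_b-1| = \bigl|\,2\Pr(A=0,B=b)-\Pr(B=b)\,\bigr| = \bigl|\,pB^0(b)-(1-p)B^1(b)\,\bigr|$, while for $b$ with $\Pr(B=b)=0$ both sides below vanish; hence, by the extension of total variation distance to nonnegative functions,
\[
\sum_b \Pr(B=b)\,|2\alpha_b-1| \;=\; \sum_b \bigl|\,pB^0(b)-(1-p)B^1(b)\,\bigr| \;=\; 2\,\VR\bigl(pB^0,(1-p)B^1\bigr).
\]
This already gives $\I(A:B)\le 2\,\VR(pB^0,(1-p)B^1)$, which implies the claimed inequality since $\log e\ge 1$.

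I do not anticipate a real obstacle; the one conceptual point is to resist trying a Pinsker-type inequality (which bounds $\I$ from below, the wrong direction; see \fct{ID}) and instead to exploit binarity through the entropy expansion, after which the averaging constraint $\E_b[\alpha_b]=p$ together with the piecewise-linear lower bound on $H$ does all the work. Everything else is bookkeeping, including the degenerate cases $p\in\{0,1\}$ and $\Pr(B=b)=0$.
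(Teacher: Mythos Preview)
Your proof is correct and in fact yields the sharper constant $2$ in place of the paper's $2\log e$. The route, however, is genuinely different from the paper's. You expand $\I(A:B)=H(p)-\sum_b\Pr(B=b)\,H(\alpha_b)$ and invoke the piecewise-linear lower bound $H(t)\ge 1-|2t-1|$, which lands you directly at $\sum_b|pB^0(b)-(1-p)B^1(b)|$. The paper instead writes $\I(A:B)$ in the KL-divergence form $\sum_{a,s}\Pr(A=a)B^a(s)\log\bigl(B^a(s)/B(s)\bigr)$, introduces $D(s)=pB^0(s)-(1-p)B^1(s)$, applies $\log(1+x)\le x\log e$, passes through the $\chi^2$-type quantity $\sum_s D(s)^2/B(s)$, and only then bounds by $\sum_s|D(s)|$ using $|D(s)|\le B(s)$. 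Your argument is more elementary (no logarithm inequality, no intermediate $\chi^2$ step) and gives a tighter constant; the paper's approach has the minor virtue of exhibiting the intermediate $\chi^2$ bound, which is sometimes useful on its own. One small remark: you announce that the averaging constraint $\sum_b\Pr(B=b)\alpha_b=p$ is the only property of the $\alpha_b$ you will use, but you never actually invoke it---your bound $H(p)\le 1$ together with $H(\alpha_b)\ge 1-|2\alpha_b-1|$ suffices by itself. This does not affect correctness.
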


\begin{proof}
For every $s\in S$, we define 
\[B(s) \defeq pB^0(s) + (1-p)B^1(s) \qquad \textrm{and} \qquad D(s) \defeq pB^0(s)-(1-p)B^1(s),\] 
which gives
$$B^0(s) = \frac{B(s) + D(s)}{2p} \qquad \textrm{and} \qquad B^1(s) = \frac{B(s) - D(s)}{2(1-p)}.$$
Recall that although $\VR(P,Q)$ is a distance measure for probability distributions, it is well defined when $P$ and $Q$ are unnormalized. In particular, $\VR(pB^0,(1-p)B^1)= \frac{1}{2}\sum_s |D(s)|$. 
We can now upper bound $\I(A:B)$ as follows.
\begin{align*}
\I(A:B) &= \sum_{a\in\B} \sum_{s\in S} \Pr(A=a)B^a(s) \log\(\frac{B^a(s)}{B(s)}\)\\
&= \sum_{s\in S} \(pB^0(s) \log\(\frac{B(s)+D(s)}{2pB(s)}\)+(1-p)B^1(s) \log\(\frac{B(s)-D(s)}{2(1-p)B(s)}\)\) \\
&= \mH(p)-1 + \sum_{s\in S} \(pB^0(s) \log\(1+\frac{D(s)}{B(s)}\)+(1-p)B^1(s) \log\(1-\frac{D(s)}{B(s)}\)\),
\end{align*}
where $\mH(p) \defeq -p\log p -(1-p)\log (1-p)$ is the binary entropy function. Since $\mH(p) \leq 1$, we have
\[
\I(A:B) \leq (\log e) \sum_{s\in S} \(pB^0(s) \frac{D(s)}{B(s)}-(1-p)B^1(s) \frac{D(s)}{B(s)}\) = (\log e)\sum_{s\in S} \frac{D(s)^2}{B(s)}, 
\]
using $\log(1+x)\leq x\log e$ (for all real $x$). Since $B^0(s) \geq 0$ and $B^1(s) \geq 0$ for all $s$, we have $|D(s)| \leq B(s)$. Hence
\[
\I(A:B) \leq  (\log e)\sum_{s\in S} \frac{D(s)^2}{B(s)} =(\log e) \sum_s \frac{|D(s)|^2}{B(s)} \leq (\log e) \sum_{s\in S} |D(s)| = 2\log e \enspace \VR(pB^0,(1-p)B^1). \qedhere
\]
\end{proof}

Note that this inequality is tight up to constants. To see this, for any $\delta\in[0,1]$, consider the distributions $B^0=(1-\delta,0,\delta)$ and $B^1=(1-\delta,\delta,0)$. If $p=1/2$, then $I(A:B)=\delta$ and $\VR(pB^0,(1-p)B^1)=\delta/2$.

Our next fact gives us a way to use high mutual information between two variables
to get a good prediction of one variable using a sample from the other.

\begin{fact}[Information $\Rightarrow$ prediction] \label{fact:maxlike}
Let $AB$ be correlated random variables with $A \in \{0,1\}$.
The probability of predicting $A$ by a measurement on $B$ is  at least
\[\frac{1}{2}+\frac{\I(A:B)}{3}.\]
\end{fact}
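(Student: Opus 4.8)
The plan is to exhibit the optimal predictor explicitly and bound its success probability directly via \fct{I2}. Adopt the notation of that fact: set $p \defeq \Pr(A=0)$, $B^0 \defeq (B~|~A=0)$, $B^1 \defeq (B~|~A=1)$, and $D(s) \defeq pB^0(s)-(1-p)B^1(s)$, so that $\Pr(A=0,B=s)=pB^0(s)$ and $\Pr(A=1,B=s)=(1-p)B^1(s)$. Since a randomized predictor cannot beat the best deterministic one (average over its coins), the optimal strategy is the maximum-likelihood rule: on observing $B=s$, output the bit $a$ maximizing $\Pr(A=a~|~B=s)$; equivalently, output $0$ if $pB^0(s)\geq (1-p)B^1(s)$ and $1$ otherwise. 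Its success probability is $\sum_{s\in S}\max\big(pB^0(s),(1-p)B^1(s)\big)$.

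The first step is the elementary identity $\max(x,y)=\tfrac{x+y}{2}+\tfrac{|x-y|}{2}$ for $x,y\geq 0$. Summing it over $s$ and using $\sum_s pB^0(s)=p$, $\sum_s (1-p)B^1(s)=1-p$, together with the equality $\VR(pB^0,(1-p)B^1)=\tfrac{1}{2}\sum_s|D(s)|$ recorded in the proof of \fct{I2}, the success probability collapses to exactly $\tfrac{1}{2}+\VR(pB^0,(1-p)B^1)$.

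The second step is to invoke \fct{I2}, which gives $\I(A:B)\leq 2\log e\cdot\VR(pB^0,(1-p)B^1)$, hence $\VR(pB^0,(1-p)B^1)\geq \I(A:B)/(2\log e)$. Since $2\log e = 2/\ln 2 = 2.885\ldots < 3$, the success probability is at least $\tfrac{1}{2} + \I(A:B)/(2\log e) \geq \tfrac{1}{2} + \I(A:B)/3$, as claimed. I do not expect a genuine obstacle here: the only points requiring care are verifying that maximum-likelihood prediction succeeds with probability precisely $\tfrac{1}{2}+\VR(pB^0,(1-p)B^1)$ (which is where the $\max$ identity is used and where one must keep the sub-normalized distributions straight), and checking the numerical fact $2\log e<3$ that lets us replace the sharp constant by the round denominator $3$.
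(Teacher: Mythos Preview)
Your proof is correct and essentially identical to the paper's: both use the maximum-likelihood predictor, show its success probability is exactly $\tfrac{1}{2}+\VR(pB^0,(1-p)B^1)$, and then invoke \fct{I2} together with $2\log e<3$. The only cosmetic difference is that the paper computes the success probability via an indicator-vector inner-product manipulation, whereas you use the identity $\max(x,y)=\tfrac{x+y}{2}+\tfrac{|x-y|}{2}$; these are the same calculation.
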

\begin{proof}
Let $p=\Pr(A=0)$ and define a measurement $M$ corresponding to output $1$ as follows: $M(s)=0$
for all $s \in S$ such that $p B^0(s) \geq  (1-p)B^1(s)$ and $M(s) =1$ otherwise.
We view $M$ as a vector, and let $\mathds{1}$ represents the all-$1$ vector. Then
the success probability of this measurement is 
\begin{align*}
p \langle \mathds{1} -M, B^0\rangle + (1-p) \langle M, B^1\rangle
&= p\langle \mathds{1},B^0\rangle + \langle M,  (1-p)B^1 - p B^0\rangle & \\
&= p+\sum_{s\in S\;:\; (1-p)B^1(s) - p B^0(s)> 0} (1-p)B^1(s) - p B^0(s) & \\
&= p+\frac{1}{2}\sum_{s\in S} |(1-p)B^1(s) - p B^0(s)|+(1-p)B^1(s) - p B^0(s) & \\
&=\frac{1}{2} + \frac{1}{2} \sum_{s\in S}|(1-p) B^1(s)-pB^0(s)| & \\
&=\frac{1}{2} + \VR(pB^0,(1-p)B^1) & 
\end{align*} 
From \fct{I2}, we know that $\VR(pB^0,(1-p)B^1) \geq {\I(A:B)}/(2\log e) \geq {\I(A:B)}/{3}$. 
\end{proof}

\subsection{Communication complexity}
\label{sec:comm}

Let $F\colon \X \times \Y \rightarrow \{0,1,*\}$ be a partial function, with $\dom(F) \defeq \{(x,y)\in \X \times \Y: F(x,y) \neq *\}$, and let $\eps \in (0,1/2)$. In a (randomized) communication protocol for computing   $F$, Alice gets input $x\in \X$, Bob gets input $y \in \Y$. Alice and Bob may use private and public coins. They exchange messages and at the end of the protocol, they output $O(x,y)$. We assume $O(x,y)$ is contained in the messages exchanged by Alice and Bob. We let the random variable $\Pi$ represent the {\em transcript} of the protocol, that is the messages exchanged and the public-coins used in the protocol $\Pi$. Let $\mu$ be a distribution over $\dom(F)$ and let $XY \sim \mu$. We define the following quantities.
\begin{align*}
\textrm{Worst-case error:} & \quad \err(\Pi) \defeq \max_{(x,y)\in \dom(F)} \{ \Pr[O(x,y) \neq F(x,y)] \} .\\
\textrm{Distributional error:} &\quad  \err^\mu(\Pi) \defeq   \E_{(x,y) \leftarrow XY} \Pr[O(x,y) \neq F(x,y)]. \\ 
\textrm{Distributional IC:}&\quad  \IC^\mu(\Pi) \defeq \I(X:\Pi~|~Y) + \I(Y:\Pi~|~X)  . \\
\textrm{Max.\ distributional IC:}&\quad \IC(\Pi)  \defeq \max_{\mu \textrm{ on }\dom(F)}\IC^\mu(\Pi)  . \\
\textrm{IC of $F$:}&\quad  \IC_\eps(F) \defeq \inf_{\Pi: \err(\Pi) \leq \eps} \IC(\Pi) = \inf_{\Pi: \err(\Pi) \leq \eps} \max_{\mu \textrm{ on }\dom(F)}\IC^\mu(\Pi).  \\
\textrm{Randomized CC:}&\quad  \CC(\Pi) \defeq \text{ max.\ number of bits exchanged in $\Pi$ (over inputs and coins)}  . \\
\textrm{Randomized CC of $F$:}&\quad  \R_\eps(F) \defeq \min_{\Pi: \err(\Pi) \leq \eps} \CC(\Pi).  
\end{align*}

Note that since one bit of communication can hold at most one bit of information, for any protocol $\Pi$ and distribution $\mu$ we have $\IC^\mu(\Pi) \leq \CC(\Pi)$. Consequently, we have $\IC_\eps(F) \leq \R_\eps(F)$.
When $\eps$ is unspecified, we assume $\eps = 1/3$. Hence $\IC(F) \defeq \IC_{1/3}(F)$, $\R(F) \defeq R_{1/3}(F)$, and $\IC(F) \leq \R(F)$.  We now present some facts that relate these measures.

Our first fact justifies using $\eps=1/3$ by default since the exact constant does not matter since the success probability of a protocol can be boosted for IC and CC.

\begin{fact}[Error reduction] \label{fact:boost}
Let $0 < \delta < \eps < 1/2$. Let $\Pi$ be a protocol for $F$ with $\err(\Pi) \leq \eps$. There exists protocol $\Pi'$ for $F$ such that $\err(\Pi') \leq \delta$ and   
\begin{align*}
\IC(\Pi') \leq O\left(\frac{\log(1/\delta)}{\big(\frac{1}{2}-\eps\big)^2}\cdot  \IC(\Pi)\right) \quad \textrm{and} \quad 
\CC(\Pi') \leq O\left(\frac{\log(1/\delta)}{\big(\frac{1}{2}-\eps\big)^2}\cdot  \CC(\Pi)\right). 
\end{align*}
\end{fact}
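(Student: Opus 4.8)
The plan is the standard error-reduction argument for communication protocols, adapted so that it simultaneously controls both information cost and communication cost. First I would run the given protocol $\Pi$ independently $k$ times, using fresh private and public randomness each time, where $k = \Theta\big(\log(1/\delta)/(\tfrac12-\eps)^2\big)$; call the resulting protocol $\Pi'$, whose output is the majority vote of the $k$ outputs. A Chernoff bound over the $k$ independent runs shows that since each run is correct with probability at least $1-\eps > 1/2$ on every fixed input $(x,y) \in \dom(F)$, the majority is wrong with probability at most $\delta$, giving $\err(\Pi') \le \delta$. (Here one uses that for a fixed input the $k$ repetitions really are independent Bernoulli trials, since the repetitions use independent coins.)

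Next I would bound the communication cost: the transcript of $\Pi'$ is just the concatenation of the $k$ transcripts together with the independent public coins, so $\CC(\Pi') \le k \cdot \CC(\Pi) = O\big(\tfrac{\log(1/\delta)}{(\frac12-\eps)^2}\,\CC(\Pi)\big)$, as claimed (the final majority vote can be folded into, say, Alice's last message at no extra asymptotic cost). For the information cost, fix any distribution $\mu$ on $\dom(F)$ and $XY \sim \mu$. Write $\Pi' = (R, \Pi_1, \ldots, \Pi_k)$ where $R$ is the public randomness used across the runs (which is independent of $XY$) and $\Pi_i$ is the $i$-th transcript; conditioned on $Y$ (resp.\ $X$) and on all the randomness, the runs are run sequentially but each $\Pi_i$ depends on the input only through one independent execution. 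The key step is to peel the $k$ transcripts apart with the chain rule (\fullbref{fact:chain-rule}): $\I(X : \Pi' \mid Y) = \sum_{i=1}^k \I(X : \Pi_i \mid Y, R, \Pi_{<i})$, and for each $i$ I would argue this conditional term is at most $\I(X : \Pi_i \mid Y, R_i)$ for the relevant block $R_i$ of public coins, which in turn equals $\I^{\mu}$-cost of one copy of $\Pi$; here one uses that conditioning on the earlier transcripts $\Pi_{<i}$ and the unrelated public coins does not increase the information a fresh, independently-randomized run leaks, because that run's transcript is conditionally independent of $\Pi_{<i}$ given $XY$. Summing over $i$ and symmetrically for the $\I(Y : \Pi' \mid X)$ term gives $\IC^\mu(\Pi') \le k \cdot \IC^\mu(\Pi)$; taking the max over $\mu$ yields $\IC(\Pi') \le k\cdot \IC(\Pi) = O\big(\tfrac{\log(1/\delta)}{(\frac12-\eps)^2}\,\IC(\Pi)\big)$.

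The main obstacle is the information-cost bookkeeping in the last step: one must be careful that running the copies sequentially (so that Alice and Bob can just append transcripts) does not let later copies "remember" information from earlier copies in a way that inflates $\I(X:\Pi_i \mid Y, \Pi_{<i})$. The clean way around this is to observe that each copy uses fresh private coins independent of everything before it, so that given $(X,Y)$ the transcript $\Pi_i$ is independent of $(R_{<i}, \Pi_{<i})$; this independence lets one drop the conditioning on the earlier transcripts (formally via \fullbref{fact:markov_basic}/\fullbref{fact:barhopping}-type reasoning, noting $\I(X : \Pi_{<i} \mid Y, \Pi_i) $ manipulations) and reduce each term to the single-copy information cost. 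Everything else — the Chernoff bound and the additive communication bound — is routine.
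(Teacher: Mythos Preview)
Your proposal is correct and follows the same approach as the paper: repeat $\Pi$ independently $k=\Theta\big(\log(1/\delta)/(\tfrac12-\eps)^2\big)$ times and take the majority vote. In fact you supply considerably more detail than the paper does; the paper's ``proof'' is a three-sentence sketch that simply asserts the repetition-and-majority argument and states that both $\CC$ and $\IC$ scale by the repetition factor, without working through the chain-rule/independence bookkeeping for the information cost that you outline.
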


This fact is proved by simply repeating the protocol sufficiently many times and taking the majority vote of the outputs. If the error $\epsilon$ is close to $1/2$, we can first reduce the error to a constant by using $O(\frac{1}{(1/2-\eps)^2})$ repetitions. Then $O\left(\log(1/\delta)\right)$ repetitions suffice to reduce the error down to $\delta$. Hence the communication and information complexities only increase by a factor of $O\left(\frac{\log(1/\delta)}{(1/2-\eps)^2}\right)$.

A useful tool in proving lower bounds on randomized communication complexity is Yao's minimax principle~\cite{Yao77}, which says that the worst-case randomized communication complexity of $F$ is the same as the maximum distributional communication complexity over distributions $\mu$ on $\dom(F)$. In particular, this means there always exists a hard distribution $\mu$ over which any protocol needs as much communication as in the worst case. More precisely, it states that 
$$\R_\eps(F)  =  \max_{\mu \textrm{ on } \dom(F)} \min_{\Pi: \err^\mu(\Pi) \leq \eps} \CC(\Pi).$$

Similar to Yao's minimax principle for randomized communication complexity, we have a (slightly weaker) minimax principle for information complexity due to Braverman~\cite{Bra12}.

\begin{fact}[Minimax principle] \label{fact:equiv}
Let $F\colon\X \times \Y \rightarrow \{0,1,*\}$ be a partial function.
Fix an error parameter $\eps \in (0,1/2)$ and an information bound $I \geq 0$.
Suppose $\mathcal{P}$ is a family of protocols such that for every distribution $\mu$ on $\dom(F)$ there exists a protocol $\Pi \in \mathcal{P}$ such that $$\err^\mu(\Pi) \leq \eps \quad \mathrm{and} \quad \IC^\mu(\Pi)\leq I.$$
Then for any $\alpha \in (0,1)$ there exists a protocol $\Pi'$ such that $$\err(\Pi')\leq \eps/\alpha \quad \mathrm{and} \quad \IC(\Pi') \leq I/(1-\alpha).$$
Moreover, $\Pi'$ is a probability distribution over protocols in $\mathcal{P}$, and hence $\CC(\Pi') \leq \max_{\Pi\in \mathcal{P}} \CC(\Pi)$.
\end{fact}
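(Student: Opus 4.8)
The plan is to cast the statement as a zero-sum game between a ``protocol player'' and an ``adversary'' and to apply a minimax theorem, using the hypothesis only to bound the game value. Concretely, we want a single protocol $\Pi'$ --- obtained as a public-coin mixture of protocols from $\mathcal{P}$ --- satisfying $\err^\mu(\Pi') \le \eps/\alpha$ and $\IC^\mu(\Pi') \le I/(1-\alpha)$ simultaneously for every distribution $\mu$ on $\dom(F)$. Equivalently, we want
$\tfrac{1}{\eps}\err^{\xi_0}(\Pi') + \tfrac{1-\alpha}{\alpha I}\IC^{\xi_1}(\Pi') \le \tfrac{1}{\alpha}$
for all pairs $(\xi_0,\xi_1)$ of non-negative measures on $\dom(F)$ with $\|\xi_0\|_1 + \|\xi_1\|_1 = 1$ (the pair encodes ``test the error on $\xi_0$ or the information on $\xi_1$''; here $\err^{\xi}$ and $\IC^{\xi}$ denote the quantities of \sec{comm} with the input pair drawn from the measure $\xi$, which still makes sense when $\xi$ is a sub-probability measure). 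So it suffices to show that the game with payoff $u(\Pi;\xi_0,\xi_1) \defeq \tfrac{1}{\eps}\err^{\xi_0}(\Pi) + \tfrac{1-\alpha}{\alpha I}\IC^{\xi_1}(\Pi)$, minimized over public-coin mixtures of protocols in $\mathcal{P}$ and maximized over such $(\xi_0,\xi_1)$, has value at most $1/\alpha$. The coefficient $\tfrac{1-\alpha}{\alpha I}$ is chosen precisely so that the ``worst case'' allowed by the hypothesis, $\tfrac{1}{\eps}\cdot\eps + \tfrac{1-\alpha}{\alpha I}\cdot I = \tfrac1\alpha$, lands exactly at the target threshold.

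To invoke Sion's minimax theorem I would verify two structural facts. First, $u$ is affine in the protocol player's mixed strategy: if $\Pi'$ publicly samples $\Pi\sim P$ and runs it, then $\err^{\xi_0}(\Pi') = \E_{\Pi\sim P}\err^{\xi_0}(\Pi)$ trivially, and $\IC^{\xi_1}(\Pi') = \E_{\Pi\sim P}\IC^{\xi_1}(\Pi)$ because the public sampling variable is independent of the inputs and so contributes no mutual information (a one-line computation using the chain rule and \fct{barhopping}). In particular the inner minimum over mixtures is attained at a pure protocol in $\mathcal{P}$. Second, $u$ is concave in $(\xi_0,\xi_1)$ over the compact convex set of mass-one pairs: $\err^{\xi_0}(\Pi)$ is linear in $\xi_0$, while $\IC^{\xi_1}(\Pi)$ is concave and $1$-homogeneous in $\xi_1$ --- this rests on the standard fact that $\I(X:\Pi\mid Y)$ and $\I(Y:\Pi\mid X)$ are concave in the joint input distribution (mutual information is concave in the source distribution for a fixed channel, and conditioning on $Y=y$ only introduces a perspective-function composition), which in particular gives superadditivity $\IC^{\xi+\xi'}(\Pi)\ge \IC^{\xi}(\Pi)+\IC^{\xi'}(\Pi)$. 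I expect this concavity/superadditivity of information cost to be the one point that genuinely needs care; the only other subtlety is the routine compactness bookkeeping when $\mathcal{P}$ is infinite, for which Sion's theorem needs compactness of only the adversary's side, and which can in any case be sidestepped by restricting to a finite subfamily.

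Finally, I would bound the game value from the adversary's side using the hypothesis. Fix any mass-one pair $(\xi_0,\xi_1)$ and set $\mu \defeq \xi_0+\xi_1$, which is a genuine probability distribution on $\dom(F)$. By hypothesis there is $\Pi_\mu\in\mathcal{P}$ with $\err^\mu(\Pi_\mu)\le\eps$ and $\IC^\mu(\Pi_\mu)\le I$. Since $\xi_0\le\mu$ pointwise and the error terms are non-negative, $\err^{\xi_0}(\Pi_\mu)\le\err^\mu(\Pi_\mu)\le\eps$; and by superadditivity and nonnegativity of $\IC$, $\IC^{\xi_1}(\Pi_\mu)\le\IC^{\xi_0}(\Pi_\mu)+\IC^{\xi_1}(\Pi_\mu)\le\IC^\mu(\Pi_\mu)\le I$. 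Hence $u(\Pi_\mu;\xi_0,\xi_1)\le 1+\tfrac{1-\alpha}{\alpha}=\tfrac1\alpha$, so $\min_{\Pi\in\mathcal{P}}u(\Pi;\xi_0,\xi_1)\le 1/\alpha$ for every $(\xi_0,\xi_1)$. By the minimax theorem there is therefore a mixture $\Pi'$ of protocols in $\mathcal{P}$ with $u(\Pi';\xi_0,\xi_1)\le 1/\alpha$ for all $(\xi_0,\xi_1)$. Taking $\xi_1=0$ with $\xi_0$ a point mass at an arbitrary input gives $\err(\Pi')\le\eps/\alpha$; taking $\xi_0=0$ with $\xi_1$ an arbitrary distribution $\mu$ gives $\IC^\mu(\Pi')\le I/(1-\alpha)$, i.e.\ $\IC(\Pi')\le I/(1-\alpha)$. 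Since $\Pi'$ only tosses a public coin and then runs some $\Pi\in\mathcal{P}$, $\CC(\Pi')\le\max_{\Pi\in\mathcal{P}}\CC(\Pi)$, completing the argument.
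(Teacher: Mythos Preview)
The paper does not supply its own proof of this fact; it is stated with a citation to Braverman~\cite{Bra12}. Your proposal is a correct fleshing-out of the minimax argument underlying that result, and the combined payoff $u(\Pi;\xi_0,\xi_1)=\tfrac1\eps\err^{\xi_0}(\Pi)+\tfrac{1-\alpha}{\alpha I}\IC^{\xi_1}(\Pi)$ is exactly the right device to trade off the two constraints. Your two nontrivial structural claims check out: first, for a public-coin mixture $\Pi'$ of protocols one indeed has $\IC^\mu(\Pi')=\E_{\Pi\sim P}\IC^\mu(\Pi)$ (chain-rule plus independence of the selection coin from the inputs), so $u$ is affine in the mixture; second, $\IC^\mu(\Pi)$ is concave in $\mu$ for fixed $\Pi$, which follows from the Markov chain $B\leftrightarrow(X,Y)\leftrightarrow\Pi$ when one introduces a Bernoulli mixing variable $B$, giving $\I(Y{:}\Pi\mid X)\ge\I(Y{:}\Pi\mid X,B)$ and symmetrically for the other term. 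Superadditivity of the $1$-homogeneous extension then follows, and the value bound from the hypothesis goes through exactly as you wrote.

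The one residual technicality you flagged---attainment of the infimum on the protocol side when $\mathcal{P}$ is infinite---is real but standard: Sion's theorem only guarantees $\inf_P\max_\xi u\le 1/\alpha$, so strictly speaking you get $\err(\Pi')\le\eps/\alpha+\delta$ and $\IC(\Pi')\le I/(1-\alpha)+\delta$ for any $\delta>0$. In the paper's applications this slack is harmless (it is immediately followed by \fullbref{fact:boost}), and in any case one can discretize $\mathcal{P}$ or absorb the slack into $\alpha$. This is the usual treatment in the literature and not a gap in your argument.
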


Our next fact is the observation that if Alice's and Bob's inputs are drawn independently from each other, conditioning on the transcript at any stage of the protocol keeps the input distributions independent of each other.

\begin{fact}[Independence] \label{fact:indep}
Let $\Pi$ be a communication protocol on input $X \otimes Y$.  Then $X\leftrightarrow \Pi \leftrightarrow Y$ forms a Markov chain, or equivalently, for each $\pi$ in the support of $\Pi$, we have
\[(XY)^\pi  \sim X^\pi \otimes Y^\pi .\]
\end{fact}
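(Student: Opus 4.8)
The plan is to prove the claim $X \leftrightarrow \Pi \leftrightarrow Y$ by induction on the rounds of the protocol, tracking the stronger invariant that after any prefix of the transcript, Alice's and Bob's inputs remain conditionally independent given what has been communicated so far. Concretely, write the transcript as $\Pi = (\Pi_1, \Pi_2, \ldots, \Pi_r)$, where $\Pi_1$ is the public coin $R$ used (if any) and each subsequent $\Pi_t$ is the message sent in round $t$, possibly depending on private randomness. I would show by induction on $t$ that for each fixed prefix value $\pi_{\le t} = (\pi_1, \ldots, \pi_t)$ in the support, we have $(XY)^{\pi_{\le t}} \sim X^{\pi_{\le t}} \otimes Y^{\pi_{\le t}}$. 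The base case $t=0$ is exactly the hypothesis that the input is $X \otimes Y$ (with the public coin independent of the inputs, one can absorb $R$ into the base case since $R$ is independent of $(X,Y)$). The claim of the fact is then the $t = r$ instance, and the equivalence with the Markov chain statement is just the second characterization of a Markov chain recorded in the Definition of Markov chains in \autoref{sec:info}.

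For the inductive step, suppose the invariant holds after prefix $\pi_{\le t}$, and without loss of generality say Alice speaks in round $t+1$. Her message $\Pi_{t+1}$ is produced by applying a (randomized) function to her input $X$ and the prefix $\pi_{\le t}$, using fresh private randomness that is independent of everything else. Thus, conditioned on $\pi_{\le t}$, the new message $\Pi_{t+1}$ is a (randomized) function of $X^{\pi_{\le t}}$ alone, and in particular $Y^{\pi_{\le t}} \leftrightarrow X^{\pi_{\le t}} \leftrightarrow \Pi_{t+1}^{\pi_{\le t}}$ is a Markov chain. Combining this with the inductive hypothesis that $X^{\pi_{\le t}}$ and $Y^{\pi_{\le t}}$ are independent, a short computation with conditional probabilities shows that $(X Y \Pi_{t+1})^{\pi_{\le t}} \sim X^{\pi_{\le t}} \otimes Y^{\pi_{\le t}} \otimes \Pi_{t+1}^{\pi_{\le t}}$ is fully independent; restricting to a fixed value $\pi_{t+1}$ of the message then gives $(XY)^{\pi_{\le t+1}} \sim X^{\pi_{\le t+1}} \otimes Y^{\pi_{\le t+1}}$, completing the step.

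The only mildly delicate point, and the step I would be most careful about, is the treatment of private randomness: one must argue that Bob's private coins do not create correlation between $X$ and $Y$ once we condition on the transcript. This is handled by observing that at each round only one party's message is appended, that message is a function of that party's input, the shared prefix, and that party's private coins, and private coins are by definition independent of the other party's input and private coins. A clean way to package this is to enlarge the probability space to include all private coins $R_A, R_B$ as auxiliary variables, note $X R_A \leftrightarrow (\text{prefix}) \leftrightarrow Y R_B$ is maintained at each round by the same argument (now deterministically, since with the coins exposed each message is a deterministic function), and then apply \fct{markov_basic} to project out $R_A$ and $R_B$ at the end. Either route is routine; the substance is just that the round-by-round structure of communication never lets information flow from $X$ to $Y$ except through the public transcript.
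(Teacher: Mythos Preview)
Your approach is exactly the one the paper uses (the paper's entire proof is the single sentence ``Follows easily by induction on the number of message exchanges in protocol $\Pi$''), and your write-up supplies the details the paper omits.

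There is one slip in your inductive step. You assert that $(X\,Y\,\Pi_{t+1})^{\pi_{\le t}} \sim X^{\pi_{\le t}} \otimes Y^{\pi_{\le t}} \otimes \Pi_{t+1}^{\pi_{\le t}}$ is \emph{fully} independent. That is false in general: Alice's message $\Pi_{t+1}$ depends on $X$, so $X^{\pi_{\le t}}$ and $\Pi_{t+1}^{\pi_{\le t}}$ are typically correlated. What you actually get from the Markov chain $Y^{\pi_{\le t}} \leftrightarrow X^{\pi_{\le t}} \leftrightarrow \Pi_{t+1}^{\pi_{\le t}}$ together with the inductive hypothesis is that $Y^{\pi_{\le t}}$ is independent of the \emph{pair} $(X^{\pi_{\le t}}, \Pi_{t+1}^{\pi_{\le t}})$. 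That weaker statement is still enough: conditioning further on $\Pi_{t+1}=\pi_{t+1}$ leaves $Y$ unchanged and independent of $X^{\pi_{\le t+1}}$, which is precisely the invariant you need. Your alternative route via exposing the private coins $R_A,R_B$ and applying \fct{markov_basic} is clean and avoids this issue entirely.
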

\begin{proof}
Follows easily by induction on the number of message exchanges in protocol $\Pi$.
\end{proof}

The next property of communication protocols formalizes the intuitive idea that if Alice and Bob had essentially the same transcript for input pairs $(x,y)$ and $(x',y')$, then if we fix Bob's input to either $y$ or $y'$, the transcripts obtained for the two different inputs to Alice are nearly the same. This was shown by  Bar-Yossef, Jayram, Kumar, and Sivakumar~\cite{BJKS04}.

\begin{fact}[Pythagorean property] \label{fact:pyth} 
Let $(x,y)$ and $(x',y')$ be two inputs to a protocol $\Pi$. Then
$$ \h^2(\Pi(x,y), \Pi(x',y)) +  \h^2(\Pi(x,y'), \Pi(x',y'))  \leq 2 \cdot \h^2(\Pi(x,y), \Pi(x',y')).$$
\end{fact}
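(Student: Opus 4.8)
The plan is to reduce the inequality to the ``cut-and-paste'' identity for communication protocols and then apply the arithmetic--geometric mean inequality term by term over transcripts.

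First I would fix a transcript and write it as $\pi = (r, m_1, \dots, m_k)$, where $r$ records the public coins and $m_1, \dots, m_k$ are the messages. Because the protocol is fixed, the identity of the player who sends round $i$ depends only on $r$ and on $m_1, \dots, m_{i-1}$, not on the inputs; so for each $\pi$ there is a well-defined, input-independent partition of the rounds into ``Alice's rounds'' and ``Bob's rounds''. Marginalizing out each player's private coins round by round (a routine induction on $k$), the probability of $\pi$ factorizes as
\[
\Pi(x,y)(\pi) \;=\; c(\pi)\cdot\alpha(x,\pi)\cdot\beta(y,\pi),
\]
where $c(\pi)$ is the public-coin probability of $r$, $\alpha(x,\pi)$ is the product of the probabilities of Alice's messages along $\pi$ (a function of $x$ and $\pi$ only), and $\beta(y,\pi)$ is the analogous product for Bob. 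This immediately yields the cut-and-paste identity: for all inputs $x,x',y,y'$ and every transcript $\pi$,
\[
\Pi(x,y)(\pi)\cdot\Pi(x',y')(\pi) \;=\; \Pi(x,y')(\pi)\cdot\Pi(x',y)(\pi),
\]
since both sides equal $c(\pi)^2\,\alpha(x,\pi)\,\alpha(x',\pi)\,\beta(y,\pi)\,\beta(y',\pi)$.

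Next I would rephrase the Hellinger distances using the identity $\h^2(P,Q) = 1 - \sum_\pi \sqrt{P(\pi)Q(\pi)}$, valid since each $\Pi(\cdot,\cdot)$ is a probability distribution over transcripts. Writing $p = \Pi(x,y)$, $p' = \Pi(x',y')$, $q = \Pi(x,y')$, $q' = \Pi(x',y)$, the claimed inequality becomes
\[
\sum_\pi \Bigl( \sqrt{p(\pi)q'(\pi)} + \sqrt{q(\pi)p'(\pi)} - 2\sqrt{p(\pi)p'(\pi)} \Bigr) \;\ge\; 0 .
\]
For each fixed $\pi$, the AM--GM inequality gives $\sqrt{pq'} + \sqrt{qp'} \ge 2\bigl(pq'\cdot qp'\bigr)^{1/4} = 2\bigl((pp')(qq')\bigr)^{1/4}$, and the cut-and-paste identity $pp' = qq'$ collapses the right-hand side to $2\sqrt{pp'}$. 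Summing over $\pi$ proves the claim; transcripts of probability $0$ cause no trouble, since each term is then trivially nonnegative.

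The hard part is purely bookkeeping: making the factorization $\Pi(x,y)(\pi) = c(\pi)\,\alpha(x,\pi)\,\beta(y,\pi)$ precise, in particular checking that each player's private randomness can be marginalized out so that $\alpha$ depends on $x$ alone and $\beta$ on $y$ alone, regardless of the alternation pattern of the rounds. Once the cut-and-paste identity is in hand, the rest is the one-line termwise AM--GM computation above.
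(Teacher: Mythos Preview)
Your proof is correct and is exactly the standard argument: the rectangle property of protocols gives the cut-and-paste identity $\Pi(x,y)(\pi)\,\Pi(x',y')(\pi)=\Pi(x,y')(\pi)\,\Pi(x',y)(\pi)$, after which the inequality reduces to a termwise AM--GM on the Bhattacharyya coefficients. The paper does not actually prove this fact---it simply attributes it to Bar-Yossef, Jayram, Kumar, and Sivakumar~\cite{BJKS04}---and your write-up is precisely their argument, so there is nothing to compare.
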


Our next claim shows that having some information about the output of a Boolean function $F$ allows us to predict the output of $F$ with some probability greater than $1/2$.

\begin{claim} \label{clm:lowinf}
Let $F\colon\X \times \Y \rightarrow \{0,1,*\}$ be a partial function and $\mu$ be a distribution over $\dom(F)$. Let $XY \sim \mu$ and define the random variable $F \defeq F(X,Y)$. Let $\Pi$ be a communication protocol with input $(X,Y)$ to Alice and Bob respectively. There exists a communication protocol  $\Pi'$ for $F$, with input $(X,Y)$ to Alice and Bob respectively, such that 
$$\IC^\mu(\Pi') \leq \IC^\mu(\Pi) + 1,\quad \CC(\Pi') = \CC(\Pi) + 1, \quad\text{and}\quad \err^\mu(\Pi') < \frac{1}{2} - \frac{\I(F:\Pi~|~X)}{3} . $$
\end{claim}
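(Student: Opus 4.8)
The plan is to build $\Pi'$ from $\Pi$ by having the players run $\Pi$ and then, using the transcript $\pi$ together with Alice's input $x$, make a prediction for $F(x,y)$. The natural predictor is the maximum-likelihood rule applied to the binary random variable $F = F(X,Y)$ conditioned on $(X=x, \Pi = \pi)$: Alice computes the better guess for $F$ given what she knows, namely $x$ and $\pi$, and announces it as the output. This costs one extra bit of communication (to announce the guess), which accounts for $\CC(\Pi') = \CC(\Pi)+1$ and, since one bit carries at most one bit of information, $\IC^\mu(\Pi') \le \IC^\mu(\Pi) + 1$ (here one should check that appending a bit to the transcript increases $\I(X:\Pi\mid Y) + \I(Y:\Pi\mid X)$ by at most $1$, which follows from the chain rule and the fact that the appended bit is a deterministic function of Alice's side, so it only adds to the $\I(Y:\cdot\mid X)$ term, bounded by $H$ of one bit; the $\I(X:\cdot\mid Y)$ term can likewise only grow by one bit).

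The heart of the argument is the error bound. Condition on Alice's input $X = x$. The relevant random variables are $F$ and $\Pi$, both conditioned on $X = x$, so we are in exactly the setting of \fct{maxlike} with $A = F^x$ (binary, since $F(x,y)\in\B$ on $\dom(F)$) and $B = \Pi^x$. That fact says the maximum-likelihood predictor of $A$ from a sample of $B$ succeeds with probability at least $1/2 + \I(A:B)/3 = 1/2 + \I(F^x : \Pi^x)/3$. The output of $\Pi'$ on input $(x,y)$ is precisely this predictor applied to the sample $\pi \leftarrow \Pi^x$ that actually occurs (where $y$ is drawn from $Y^x$ under $\mu$), so the probability it is correct, averaged over $y \leftarrow Y^x$ and the protocol's randomness, is at least $1/2 + \I(F^x:\Pi^x)/3$. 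Hence $\err^\mu(\Pi') = \E_{x\leftarrow X}\Pr[O(x,y)\ne F(x,y)] \le \E_{x\leftarrow X}\big(1/2 - \I(F^x:\Pi^x)/3\big) = 1/2 - \I(F:\Pi\mid X)/3$, using the definition of conditional mutual information as $\E_{x\leftarrow X}\I(F^x:\Pi^x)$. To get the strict inequality claimed in the statement, note \fct{maxlike} actually gives success probability \emph{at least} $1/2 + \I/3$, so the error is at most $1/2 - \I/3$; the strict version holds because the max-likelihood rule is correct with probability strictly greater than $1/2$ unless $F^x$ is independent of $\Pi^x$ for a.e.\ $x$, in which case $\I(F:\Pi\mid X)=0$ and one can still break ties to get error strictly below $1/2$ — alternatively, one simply folds an extra public coin tie-break that costs nothing asymptotically, or observes the strict inequality is what \fct{maxlike}'s proof yields since $\VR \geq \I/(2\log e) > \I/3$ strictly.

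The main obstacle I anticipate is bookkeeping rather than conceptual: carefully verifying that appending Alice's one-bit guess to the transcript raises $\IC^\mu$ by at most $1$ (not $2$), since $\IC^\mu$ is a \emph{sum} of two conditional mutual informations and a naive bound would give $2$. The resolution is that the guess bit $G$ is a deterministic function of $(X, \Pi)$, so conditioned on $X$ it adds nothing new relative to $\Pi$: $\I(Y : \Pi G \mid X) = \I(Y : \Pi \mid X)$ since $G$ is determined by the conditioning variables and $\Pi$. Thus only the term $\I(X : \Pi G \mid Y)$ grows, and by the chain rule it grows by $\I(X : G \mid Y\Pi) \le H(G) \le 1$. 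So the $+1$ is correct. One should also make sure $\Pi'$ genuinely outputs a value (the problem statement requires $O(x,y)$ to appear in the transcript), which is why the extra bit is literally transmitted. Everything else is a direct invocation of \fct{maxlike} together with the definition of conditional mutual information.
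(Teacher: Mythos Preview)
Your proposal is correct and follows essentially the same approach as the paper: run $\Pi$, have Alice apply the maximum-likelihood predictor from \fct{maxlike} to $(F^x,\Pi^x)$, send the one-bit guess, and average over $x\leftarrow X$ to convert $\I(F^x:\Pi^x)$ into $\I(F:\Pi\mid X)$. Your careful justification that the extra bit contributes only to the $\I(X:\Pi G\mid Y)$ term (since $G$ is determined by $(X,\Pi)$) is more detailed than the paper, which simply asserts $\IC^\mu(\Pi')\le \IC^\mu(\Pi)+1$ as ``clear''; and your remark that the strict inequality comes from $\VR\ge \I/(2\log e) > \I/3$ is likewise finer than the paper's own proof, which literally only establishes the non-strict bound.
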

\begin{proof}
In $\Pi'$, Alice and Bob run the protocol $\Pi$ and at the end Alice makes a
prediction for $F$ based on the transcript and her input, essentially
applying \fullbref{fact:maxlike} to the random variables $F^x$ and $\Pi^x$. Alice
then sends her prediction, a single additional bit, to Bob. Clearly,
\[\IC^\mu(\Pi') \leq \IC^\mu(\Pi) + 1 \quad \textrm{and} \quad \CC(\Pi') = \CC(\Pi) + 1 . \]
For every input $x$ for Alice, her prediction is successful (assuming Bob's input
is sampled from $Y^x$) with probability at least ${1}/{2}+{\I(F^x:\Pi^x)}/{3}$
by \fct{maxlike}. Hence the overall success probability of $\Pi'$ is at least
\[\E_{x\leftarrow X}\left[\frac{1}{2}+\frac{\I(F^x:\Pi^x)}{3}\right]
=\frac{1}{2}+\frac{\E_{x\leftarrow X}[\I(F^x:\Pi^x)]}{3}
=\frac{1}{2}+\frac{\I(F:\Pi|X)}{3}.\qedhere \]
\end{proof}

The following claim is used in the proof of our main~\autoref{thm:ICCS} to handle the easy case of a biased input distribution. 

\begin{claim} \label{clm:biasmu}
Let $F\colon\X \times \Y \rightarrow \{0,1,*\}$ be a partial function and let $\mu$ be a distribution over $\dom(F)$.  Let $\epsilon \in (0,1/2)$ and $c \geq 1$ be a positive integer.    For $i \in [c]$, let $X_iY_i \sim \mu$ be i.i.d.\  and define $L_i \defeq F(X_i,Y_i)$. Define $X \defeq (X_1,\ldots,X_c)$, $Y \defeq (Y_1,\ldots,Y_c)$, and $L \defeq (L_1,\ldots,L_c)$. Then 
either
\begin{enumerate}
\item[(a)] There exists a protocol $\Pi$ for $F$ such that $\CC(\Pi) = 1$, $\IC^\mu(\Pi) \leq 1$, and $\err^\mu(\Pi) \leq \frac{1}{2} - \epsilon$, or
\item[(b)] $\VR(XL, X \otimes W^{\otimes c})  \leq c\epsilon $,
where $W$ is the uniform distribution over $\B$.
\end{enumerate}
\end{claim}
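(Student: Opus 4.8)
The plan is to reduce everything to a single-copy dichotomy and then tensorize. Set $\delta \defeq \VR(X_1L_1, X_1\otimes W)$. Since $X_1\otimes W$ has the same $\X$-marginal as $X_1L_1$ (namely $X_1$), \fct{subadd} gives $\delta = \E_{x\leftarrow X_1}\VR(L_1^x, W)$, where $L_1^x = (L_1 \mid X_1 = x)$ is the distribution of $F(x, Y_1^x)$, supported on $\B$ (here we use that $\mu$ is supported on $\dom(F)$, so $F(x,y)\in\B$ whenever $(x,y)$ has positive $\mu$-probability). For a distribution on $\B$ one has $\VR(L_1^x, W) = \tfrac12 - \min_{b\in\B}\Pr(L_1^x = b)$. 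I will show that $\delta\ge\epsilon$ forces option (a) and $\delta<\epsilon$ forces option (b); since exactly one of these holds, this proves the claim.

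\emph{The case $\delta\ge\epsilon$.} Here I would exhibit the obvious deterministic ``best-guess'' protocol $\Pi$: on input $x$, Alice picks $b_x\in\B$ with $\Pr(L_1^x = b_x)\ge\tfrac12$ (a most likely value of $F(x, Y_1^x)$, computed using only her knowledge of $F$ and $\mu$), sends the single bit $b_x$ to Bob, and both players output $b_x$. Then $\CC(\Pi) = 1$, and since one bit of communication carries at most one bit of information, $\IC^\mu(\Pi)\le\CC(\Pi)=1$. Its distributional error is
\[
\err^\mu(\Pi) = \E_{x\leftarrow X_1}\Pr_{y\leftarrow Y_1^x}\bigl(b_x \ne F(x,y)\bigr) = \E_{x\leftarrow X_1}\Bigl[\tfrac12 - \VR(L_1^x, W)\Bigr] = \tfrac12 - \delta \le \tfrac12 - \epsilon,
\]
so (a) holds.

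\emph{The case $\delta<\epsilon$.} Here I would tensorize. Because the pairs $(X_i,Y_i)$ are i.i.d.\ and $L_i = F(X_i,Y_i)$ is a deterministic function of $(X_i,Y_i)$, the pairs $(X_i,L_i)$ are i.i.d.\ copies of $(X_1,L_1)$; hence, after permuting coordinates, $XL$ equals the $c$-fold product $(X_1L_1)^{\otimes c}$. Likewise, the $X_i$ are mutually independent and $W^{\otimes c}$ is $c$ independent copies of $W$ independent of everything, so under the \emph{same} permutation $X\otimes W^{\otimes c}$ becomes $(X_1\otimes W)^{\otimes c}$, with the $i$-th copy of $W$ paired to $X_i$. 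Applying \fct{prod} once per block then yields
\[
\VR(XL, X\otimes W^{\otimes c}) = \VR\bigl((X_1L_1)^{\otimes c}, (X_1\otimes W)^{\otimes c}\bigr) \le c\cdot\VR(X_1L_1, X_1\otimes W) = c\delta < c\epsilon,
\]
which is even stronger than (b). I expect the only subtlety is exactly this last bookkeeping step: one must check that under the coordinate permutation interleaving the $X$- and $L$-coordinates both distributions genuinely factor as $c$-fold products with matching blocks (using mutual independence of the $X_i$ and of the copies of $W$). Everything else is a direct computation.
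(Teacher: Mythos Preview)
Your proof is correct and follows essentially the same approach as the paper. The paper phrases the key quantity as $\E_{x_1\leftarrow X_1}\bigl[|\tfrac12 - q^{x_1}|\bigr]$ with $q^{x_1}\defeq\Pr[F=0\mid X_1=x_1]$, which is exactly your $\delta=\E_{x\leftarrow X_1}\VR(L_1^x,W)$; both proofs then split on whether this quantity is at least $\epsilon$, build Alice's best-guess one-bit protocol in the first case, and tensorize via \fct{prod} in the second.
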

\begin{proof}
Define, $q^{x_1} \defeq \Pr[F=0~|~X_1=x_1]$. Assume $\E_{x_1 \leftarrow X_1} [| \frac{1}{2} - q^{x_1}|] \geq \epsilon$.  Let $\Pi$ be a protocol where Alice, on input $x_1$, outputs $0$ if $q^{x_1} \geq 1/2$ and $1$ otherwise. Then, 
$$\err^\mu(\Pi)  =  \frac{1}{2} - \E_{x_1 \leftarrow X_1} [| \frac{1}{2} - q^{x_1}|] \leq \frac{1}{2} - \epsilon. $$
Assume otherwise $\E_{x_1 \leftarrow X_1} [| \frac{1}{2} - q^{x_1}|]  < \epsilon$. Let $W$ be the uniform distribution on $\{0,1\}$. This implies
\begin{align*}
 \VR(XL, X \otimes W^{\otimes c}) &\leq c \cdot \VR(X_1L_1, X_1 \otimes W) = c \cdot  \E_{x_1 \leftarrow X_1} [| \frac{1}{2} - q^{x_1}|] < c\epsilon,
\end{align*}
where the first inequality follows from \fullbref{fact:prod}.
\end{proof}

\section{Lookup functions in communication complexity}
\label{sec:cheat}
We now describe the class of functions we will use for our separations, $(F, \mathcal{G})$-lookup functions.  
This class of communication functions and our applications of them are inspired by the cheat sheet functions defined in 
query complexity in \cite{ABK16}.   

A $(F, \mathcal{G})$-lookup function, denoted $F_\mathcal{G}$, is defined by a (partial) communication function
$F\colon \X \times \Y \to \{0,1,*\}$ and a family $\mathcal{G}=\{G_0, \ldots, G_{2^c-1}\}$ of communication functions, 
where each $G_i \colon (\X^c \times \{0,1\}^m) \times (\Y^c \times \{0,1\}^m) \to \B$.  It can be viewed as a generalization of 
the address function.  Alice receives input $\x=(x_1, \ldots, x_c) \in \X^c$ and $(u_0, \ldots, u_{2^c-1}) \in \{0,1\}^{m 2^c}$ 
and likewise Bob receives input $\y=(y_1, \ldots, y_c) \in \Y^c$ and $(v_0, \ldots, v_{2^c-1}) \in \{0,1\}^{m2^c}$.  The \emph{address}, $\ell$, is determined by the evaluation of $F$ on $(x_1, y_1), \ldots, (x_c,y_c)$, that is 
 $\ell = F^c(\x,\y) \in \{0,1,*\}^c$.  This address (interpreted as an integer in $\{0,\ldots, 2^c-1\}$) then determines which 
 function $G_i$ the  players should evaluate.  If $\ell \in \B^c$, i.e., all $(x_i,y_i) \in \dom(F)$, then the goal of the players is to output 
 $G_\ell(\x, u_\ell, \y, v_\ell)$; otherwise, if some $(x_i,y_i) \not \in \dom(F)$, 
then the goal is to output $G_0(\x, u_0, \y, v_0)$.

The formal definition follows.
\begin{definition}[$(F, \G)$-lookup function]
\label{def:lookup}
Let $F\colon \X \times \Y \to \{0,1,*\}$ be a (partial) communication function and $\mathcal{G}=\{G_0, \ldots, G_{2^c-1}\}$ a 
family of communication functions, where each $G_i \colon (\X^c \times \{0,1\}^m) \times (\Y^c \times \{0,1\}^m) \to \B$.  
A $(F, \mathcal{G})$-lookup function, denoted $F_\mathcal{G}$, is a total communication function 
$F_\G \colon (\X^c \times \B^{m2^c}) \times \Y^c \times \B^{m2^c}$ defined as follows.
Let $\x=(x_1, \ldots, x_c) \in \X^c, \y =(y_1, \ldots, y_c) \in \Y^c, 
\fu=(u_0, \ldots, u_{2^c-1}) \in \B^{m2^c}, \fv=(v_0, \ldots, v_{2^c-1}) \in \B^{m2^c}$.  Then
\[
F_{\G}(\x, \fu, \y, \fv) = 
\begin{cases}
G_\ell(\mathbf{x}, u_\ell, \mathbf{y}, v_\ell) & \text{if } \ell = F^c(\x,\y) \in \B^c \\
G_0(\x, u_0, \y, v_0) & \text{otherwise.}
\end{cases}
\]
\end{definition}

As lookup functions form quite a general class of functions, we will need to impose additional constraints on the 
family of functions $\G$ in order to show interesting theorems about them.  To show
\emph{upper bounds} on the communication complexity of lookup functions (\thm{lookup_upper}), we
need a \emph{consistency} condition.  This says that whenever some $(x_i, y_i) \not \in \dom(F)$, the output of the $G_j$ 
functions can depend only on $\x,\y$ and not on $u,v$ or $j$.
  
\begin{definition}[Consistency outside $F$]
\label{def:consistency}
Let $F\colon \X \times \Y \to \{0,1,*\}$ be a (partial) communication function and $\mathcal{G}=\{G_0, \ldots, G_{2^c-1}\}$ a 
family of communication functions, where each $G_i \colon (\X^c \times \{0,1\}^m) \times (\Y^c \times \{0,1\}^m) \to \B$.  We 
say that $\G$ is \emph{consistent} outside $F$ if for all $i \in \{0, \ldots, 2^c-1\}, u,v,u',v' \in \B^m$ and 
$\x=(x_1, \ldots, x_c) \in \X^c, \y =(y_1, \ldots, y_c) \in \Y^c$ with $\ell=F^c(\x,\y) \not \in \B^c$
we have $G_0(\x, u, \y, v) = G_i(\x, u', \y, v')$.
\end{definition}

In order to show lower bounds on the communication complexity of $F_\G$ (\thm{ICCS}) we add two additional constraints on the family $\G$.  

\begin{definition}[Nontrivial XOR family]
\label{def:xor}
Let $\mathcal{G}=\{G_0, \ldots, G_{2^c-1}\}$ a 
family of communication functions, where each $G_i \colon (\X^c \times \{0,1\}^m) \times (\Y^c \times \{0,1\}^m) \to \B$. We 
say that $\G$ is a nontrivial XOR family if the following conditions hold.
\begin{enumerate}
\item (Nontriviality) For all $\x=(x_1, \ldots, x_c) \in \X^c$ and $\y =(y_1, \ldots, y_c) \in \Y^c$, if we have 
$\ell = F^c(\x,\y) \in \B^c$ 
then for every $i \in \{0,\ldots,2^c-1\}$ there exists $u,v,u', v' \in \B^m$ such that 
$G_i(\x, u, \y, v) \ne G_i(\x, u', \y, v')$.
\item (XOR function) For all $i \in \{0,\ldots,2^c-1\}, u,u',v,v' \in \B^m$ and $\x=(x_1, \ldots, x_c) \in \X^c, 
\y=(y_1, \ldots, y_c) \in \Y^c$ 
if $u \oplus v = u' \oplus v'$ then $G_i(\x, u, \y, v) = G_i(\x, u', \y, v')$.
\end{enumerate}
\end{definition}

\subsection{Upper bound}
We now show a general upper bound on the quantum communication complexity of a $(F, \G)$ lookup function, when 
$\G$ is consistent outside $F$.  A similar result holds for randomized communication complexity, but we will not need this.

\begin{theorem}
\label{thm:lookup_upper}
Let $F\colon \X \times \Y \to \{0,1,*\}$ be a (partial) function and $\mathcal{G}=\{G_0, \ldots, G_{2^c-1}\}$ a 
family of communication functions, where each $G_i \colon (\X^c \times \{0,1\}^m) \times (\Y^c \times \{0,1\}^m) \to \B$.  If 
$\G$ is consistent outside $F$ (\defn{consistency}) then 
\begin{align*}
\Q(F_\G) &=  O(\Q(F) \cdot c \log c) + \max_{i \in [2^c]} O(\Q(G_i)) \\
\Q_E(F_\G) &= \Q_E(F) \cdot c + \max_{i \in [2^c]} \Q_E(G_i)
\end{align*}
where $F_\G$ is the $(F,\G)$-lookup function.
\end{theorem}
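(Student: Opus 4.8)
The plan is to build a protocol for $F_\G$ that first \emph{finds a guessed address} $\ell'\in\B^c$ by running a protocol for $F$ on each of the $c$ coordinate pairs $(x_i,y_i)$, and then runs a protocol for $G_{\ell'}$ on Alice's input $(\x,u_{\ell'})$ and Bob's input $(\y,v_{\ell'})$. Both players can carry this out: Alice holds all of $\x,\fu$ and Bob all of $\y,\fv$, and the index $\ell'=(\ell'_1,\dots,\ell'_c)$ is produced publicly in the first phase, so both know which function $G_{\ell'}$ to evaluate and which data blocks $u_{\ell'},v_{\ell'}$ to feed it. Since the outputs of the first-phase sub-protocols are classical bits, the sub-protocols compose by concatenation and the communication costs simply add.

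For correctness I would split on whether $\ell:=F^c(\x,\y)$ lies in $\B^c$. If $\ell\in\B^c$, then every $(x_i,y_i)\in\dom(F)$, so (with high probability, after amplification) each sub-protocol returns $\ell'_i=F(x_i,y_i)$ and hence $\ell'=\ell$; conditioned on this, the $G_\ell$ sub-protocol outputs $G_\ell(\x,u_\ell,\y,v_\ell)=F_\G(\x,\fu,\y,\fv)$. If instead $\ell\notin\B^c$, then some $(x_i,y_i)\notin\dom(F)$ and the first-phase sub-protocols carry no guarantee, so $\ell'$ may be an arbitrary string in $\B^c$; but this is exactly where the hypothesis that $\G$ is consistent outside $F$ (\defn{consistency}) enters, forcing $G_{\ell'}(\x,u_{\ell'},\y,v_{\ell'})=G_0(\x,u_0,\y,v_0)=F_\G(\x,\fu,\y,\fv)$ regardless of $\ell'$. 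In both cases the final output equals $F_\G$ up to the error of the last sub-protocol.

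For the exact statement, no amplification is needed: $c$ runs of an exact protocol for $F$ cost $c\cdot\Q_E(F)$ qubits and, on domain inputs, return the correct $\ell'_i$ with certainty; one run of an exact protocol for the total function $G_{\ell'}$ costs at most $\max_i\Q_E(G_i)$; and the two-case analysis shows the composite is exact, giving $\Q_E(F_\G)\le\Q_E(F)\cdot c+\max_i\Q_E(G_i)$. For the bounded-error statement, I would amplify the protocol for $F$ to error $O(1/c)$ using $O(\log c)$ repetitions and a majority vote, so that a union bound gives $\Pr[\ell'\ne\ell]\le 1/10$ in the case $\ell\in\B^c$; amplifying the final $G_{\ell'}$-protocol to constant error $1/10$ then keeps the total error at most $1/5<1/3$. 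The communication is $O(c\log c\cdot\Q(F))$ for the first phase and $O(\Q(G_{\ell'}))=O(\max_i\Q(G_i))$ for the second, as claimed.

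The one delicate point is the branch $\ell\notin\B^c$: there the address-finding sub-protocols are worthless, so correctness hinges entirely on the consistency hypothesis making the value of $G_{\ell'}$ — and hence of $F_\G$ — independent of $\ell'$ and of the data blocks. The remaining ingredients, sequential composition of quantum protocols through classical intermediate outcomes and standard success amplification, are routine.
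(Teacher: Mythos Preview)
Your proposal is correct and follows essentially the same approach as the paper: compute a guessed address $\ell'$ by running (amplified, in the bounded-error case) the protocol for $F$ on each of the $c$ coordinates, then run the protocol for $G_{\ell'}$, with the consistency hypothesis handling the case $\ell\notin\B^c$. The paper's proof is identical in structure and in the cost and correctness analysis.
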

\begin{proof}
We first give the proof for the bounded-error quantum communication complexity.

Consider an input where Alice holds $\x=(x_1, \ldots, x_c) \in \X^c$ and 
$\fu= (u_0, \ldots, u_{2^c-1}) \in \{0,1\}^{m 2^c}$ and 
Bob holds $\y=(y_1, \ldots, y_c) \in \Y^c$ and $\fv=(v_0, \ldots, v_{2^c-1}) \in \{0,1\}^{m2^c}$.  For each $i=1, \ldots, c$, 
Alice and Bob run an optimal protocol for $F$ on input $(x_i,y_i)$ $O(\log c)$ many times and let $\ell_i$ be the resulting 
majority vote.  Letting $\ell=(\ell_1, \ldots, \ell_c)$, they then run an optimal protocol for $G_\ell$ on input 
$\x,u_\ell, \y, v_\ell$ a constant number of times and output the majority result.  

The complexity of this protocol is clearly at most $O(\Q(F) \cdot c\log c) + \max_i O(\Q(G_i))$.  We now argue correctness.  
First suppose that each $(x_i,y_i) \in \dom(F)$ for $i=1,\ldots, c$.  In this case, the protocol for $F$ computes 
$F(x_i,y_i)$ with error at most $1/3$.  Thus by running this protocol $O(\log c)$ many times and taking a majority vote
$\ell = (F(x_1, y_1),\ldots, F(x_c,y_c))$ with error probability at most (say) $1/6$.  Similarly by running the protocol for 
$G_\ell$ a constant number of times the error probability can be reduced to $1/6$ and thus the players' output equals
$G_\ell(\x,u_\ell, \y, v_\ell)$ with error probability at most $1/3$.  

If some $(x_i, y_i) \not \in \dom(F)$ then by the consistency condition $G_1(\x,u_1, \y, v_1) = G_\ell(\x, u_\ell, \y, v_\ell)$.  
Thus in this case the players' also output the correct answer with error probability at most $1/3$.

The proof for the exact quantum communication complexity follows similarly.  In this case, Alice and Bob run an 
exact quantum protocol for $F$ on each input $(x_i,y_i)$ to obtain $\ell=(\ell_1, \ldots, \ell_c)$, and then run an 
exact quantum protocol to evaluate $G_\ell$ on input $\x,u_\ell, \y, v_\ell$.  

If each $(x_i,y_i) \in \dom(F)$ for $i=1,\ldots, c$ then $\ell = F(x_1,y_1), \ldots, F(x_c,y_c)$ and the output will be 
correct.  Otherwise, the output will also be correct as $\G$ is consistent outside of $F$.  
\end{proof}

\subsection{Lower bound}
The next theorem is the key result of our work.  It gives a lower bound on the randomized communication complexity 
and information complexity of any $(F,\G)$-lookup function $F_\G$, when $\G$ is a nontrivial XOR family, in terms of the 
same quantities for $F$.  Recall that the value of $F_\G(\x, \fu, \y, \fv)$ is equal to $G_\ell(\x, u_\ell, \y, v_\ell)$, where 
$\ell = F^c(\x,\y)$.  Intuitively, if $\G$ is a nontrivial family, then to evaluate $G_\ell(\x, u_\ell, \y, v_\ell)$ the players must 
at least know the relevant input $u_\ell, v_\ell$.  This in turn requires knowing $\ell$, which can only be figured out
by evaluating $F$.

Since the argument is long, we separate out several claims that will be proven
afterwards. The overall structure of the argument is explained in the main
proof, and displayed visually in \fig{proof_figure}.

	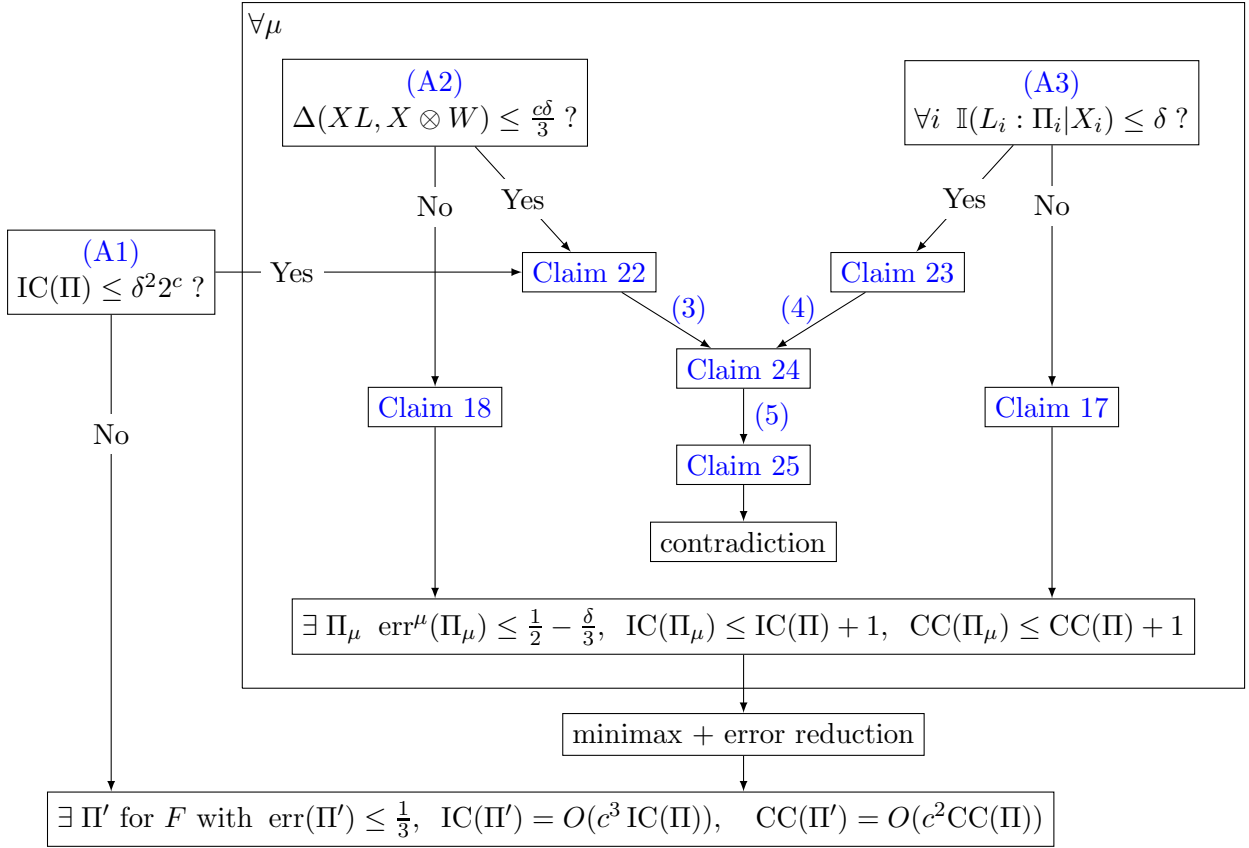
\begin{figure}[tbh]
		\centering
		\begin{tikzpicture}[scale=1.025]
		\pgfmathsetmacro{\left}{0};
		\pgfmathsetmacro{\right}{13};
		\pgfmathsetmacro{\up}{8.4};
		\pgfmathsetmacro{\down}{0.0};
		\pgfmathsetmacro{\middle}{(\left+\right)/2)};
		\draw (\left,\down) -- (\left,\up+0.5) -- (\right,\up+0.5) -- (\right,\down) -- (\left,\down);
		\node (mu) at (\left+0.3,\up+0.2) {$\forall\mu$};
		\node[draw] (conclusion) at (\middle-2.5,\down-1.7) {$\exists\;\Pi'\textrm{ for }F\textrm{ with}\;\;\err(\Pi')\leq\frac{1}{3},\;\;\IC(\Pi')=O(c^3\IC(\Pi)),
			\quad\CC(\Pi')=O(c^2\CC(\Pi))$};
		\node[draw] (minimax) at (\middle, \down-0.6) {minimax + error reduction};
		\node[draw,align=center] (ass1) at (\left-1.7,\up-3) {\eq{assumption1}\\ $\IC(\Pi)\leq \delta^2 2^c\;?$};
		\node[draw,align=center] (ass2) at (\middle-4,\up-0.8) {\eq{assumption2}\\$\VR(XL,X\otimes W)\leq\frac{c\delta}{3}\; ?$};
		\node[draw,align=center] (ass3) at (\middle+4,\up-0.8) {\eq{assumption3}\\$\forall i \;\; \I(L_i:\Pi_i|X_i)\leq\delta\; ?$};
		\node[draw] (X) at (\middle-2,\up-3) {\clm{X}};
		\node[draw] (Y) at (\middle+2,\up-3) {\clm{Y}};
		\node[draw] (pi_mu) at (\middle,\down+0.8) {$\exists\;\Pi_\mu\;\;
			\err^\mu(\Pi_\mu)\leq\frac{1}{2}-\frac{\delta}{3},\;\;\IC(\Pi_\mu)\leq\IC(\Pi)+1,
			\;\;\CC(\Pi_\mu)\leq\CC(\Pi)+1$};
		\node[draw] (Z) at (\middle, \up-4.25) {\clm{Z}};
		\node[draw] (Z2) at (\middle,\up-5.5) {\clm{ZZ}};
		\node[draw] (contradiction) at (\middle, \up-6.5) {contradiction};
		\node[draw] (ass2no) at (\middle-4,\up-4.75) {\clm{biasmu}};
		\node[draw] (ass3no) at (\middle+4,\up-4.75) {\clm{lowinf}};
		\draw[-latex] (ass2) -- (X) node[midway,fill=white] {Yes};
		\draw[-latex] (ass3) -- (Y) node[midway,fill=white] {Yes};
		\draw[-latex] (ass2) -- (ass2no) node[near start,fill=white] {No};
		\draw[-latex] (ass3) -- (ass3no) node[near start,fill=white] {No};
		\draw[-latex] (ass2no.south) -- (ass2.south |- pi_mu.north);
		\draw[-latex] (ass3no.south) -- (ass3.south |- pi_mu.north);
		\draw[-latex] (ass1) -- (X) node[near start,fill=white] {Yes};
		\draw[-latex] (ass1.south) -- (ass1.south |- conclusion.north)
		node[near start,fill=white] {No};
		\draw[-latex] (X) -- (Z) node[near end,above] {\eq{X}};
		\draw[-latex] (Y) -- (Z) node[near end,above] {\eq{Y}};
		\draw[-latex] (Z) -- (Z2)  node[midway,right] {\eq{Z}};
		\draw[-latex] (Z2) -- (contradiction);
		\draw[-latex] (pi_mu) -- (minimax);
		\draw[-latex] (minimax.south) -- (minimax.south |- conclusion.north);
		\end{tikzpicture}
		\caption{The structure of the proof of \thm{ICCS}.
			Note that \clm{X} and \clm{Z} only follow if both of their incoming
			arcs hold.}
		\label{fig:proof_figure}
	\end{figure}

In \thm{ICCS}, we are given a bounded-error protocol $\Pi$ for $F_\G$, and our goal is to construct a bounded-error protocol $\Pi'$ for $F$ such that its communication complexity and information complexity do not increase by more than a polynomial in $c$ compared to the protocol $\Pi$.

As depicted in \fig{proof_figure}, if \eq{assumption1} fails to hold, then we are done. Otherwise, we assume $\mu$ is an arbitrary distribution over $\dom(F)$, and check if \eq{assumption2} or \eq{assumption3} hold. We show that it is not possible for both to hold, since that leads to a contradiction. If either \eq{assumption2} or \eq{assumption3} fail to hold, then we have a protocol $\Pi_\mu$ that does well for the distrubition $\mu$. Finally we apply a minimax argument, which converts protocols that work well against a known distribution into a protocol that works on all inputs, and obtain the desired protocol $\Pi'$.

\subsubsection{Main result}

\begin{theorem}
\label{thm:ICCS}
Let $F\colon\X \times \Y \to \{0,1,*\}$ be a (partial) function and let $c\geq \log \R(F)$. Let $\G=\{G_0, \ldots, G_{2^c-1}\}$ 
be a nontrivial family of XOR functions (\defn{xor}) where each $G_i \colon (\X^c \times \{0,1\}^m) \times (\Y^c \times \{0,1\}^m) \to \B$, and 
let $F_\G$ be the $(F, \G)$-lookup function.  For any $1/3$-error protocol $\Pi$ for $F_\G$, there exists a $1/3$-error protocol $\Pi'$ for $F$ such that 
$$\IC(\Pi') \leq O(c^3\IC(\Pi)) \quad \text{and} \quad  \CC(\Pi') \leq O(c^2 \CC(\Pi)).$$
In particular,
$\R(F_\G)= \Omega(\R(F)/c^2)$ and $\IC(F_\G)= \Omega(\IC(F)/c^3)$.
\end{theorem}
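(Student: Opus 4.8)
It suffices to construct the promised $\Pi'$; the lower bounds $\R(F_\G)=\Omega(\R(F)/c^2)$ and $\IC(F_\G)=\Omega(\IC(F)/c^3)$ then follow by taking $\Pi$ near-optimal for $F_\G$. Fix a parameter $\delta=\Theta(1/c)$. As in \fig{proof_figure}, I would reduce the statement, via the information-complexity minimax principle (\fct{equiv}) together with error reduction (\fct{boost}), to the local task of producing, for \emph{every} distribution $\mu$ on $\dom(F)$, a protocol $\Pi_\mu$ for $F$ with $\err^\mu(\Pi_\mu)\le\tfrac12-\tfrac\delta3$, $\IC^\mu(\Pi_\mu)\le\IC(\Pi)+1$, and $\CC(\Pi_\mu)\le\CC(\Pi)+1$. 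Given such a family, \fct{equiv} with $\alpha=1-\delta/3$ yields a single protocol for $F$ of worst-case error $<\tfrac12-\tfrac\delta6$, information cost $O(\IC(\Pi)/\delta)=O(c\,\IC(\Pi))$, and communication cost $\CC(\Pi)+1$; amplifying to error $\tfrac13$ with \fct{boost} multiplies both costs by $O(1/\delta^2)=O(c^2)$, giving $\IC(\Pi')=O(c^3\IC(\Pi))$ and $\CC(\Pi')=O(c^2\CC(\Pi))$ exactly. The asymmetric exponents reflect the point flagged in the introduction: minimax inflates $\IC$ but not $\CC$, while amplification inflates both, so one must track $\IC$ and $\CC$ simultaneously.

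Before the loop, dispose of the high-information case: if $\IC(\Pi)>\delta^2 2^c$ then, since $c\ge\log\R(F)$, we get $\R(F)\le 2^c<\IC(\Pi)/\delta^2=O(c^2\IC(\Pi))$, so a near-optimal $\tfrac13$-error protocol for $F$ already meets the bounds (using $\IC(\Pi)\le\CC(\Pi)$ for the communication side), and we are done outright. So assume $\IC(\Pi)\le\delta^2 2^c$. Now fix $\mu$ and set up the embedding: $(X_i,Y_i)\sim\mu$ i.i.d.\ for $i\in[c]$, $L_i\defeq F(X_i,Y_i)$, $X\defeq(X_1,\dots,X_c)$, $L\defeq(L_1,\dots,L_c)$, and let $\Pi_i$ be $\Pi$ run with the real instance placed in coordinate $i$ and the other coordinates and all data blocks $\fu,\fv$ drawn from public coins. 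If $\VR(XL,X\otimes W^{\otimes c})>\tfrac{c\delta}3$ ($W$ uniform on $\B$), apply \clm{biasmu} with $\epsilon=\delta/3$ to obtain a one-bit $\Pi_\mu$ of error $\le\tfrac12-\tfrac\delta3$. Otherwise, if $\I(L_i:\Pi_i\mid X_i)>\delta$ for some $i$, apply \clm{lowinf} to $\Pi_i$ to obtain $\Pi_\mu$ of error $<\tfrac12-\tfrac\delta3$ with $\IC^\mu(\Pi_\mu)\le\IC^\mu(\Pi_i)+1$ and $\CC(\Pi_\mu)\le\CC(\Pi_i)+1$; here one checks $\IC^\mu(\Pi_i)\le\IC(\Pi)$ and $\CC(\Pi_i)\le\CC(\Pi)$, which uses independence of the planted coordinates from $(X_i,Y_i)$ and the XOR clause of \defn{xor} (so the players mask the relevant data block with public randomness and leak nothing through it).

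It remains to derive a contradiction from the case where all three conditions hold: $\IC(\Pi)\le\delta^2 2^c$, $L$ is $\tfrac{c\delta}3$-close to uniform given $X$, and $\I(L_i:\Pi_i\mid X_i)\le\delta$ for every $i$. For the lower-bound side, feed $\Pi$ an input in $\dom(F_\G)$ in which every coordinate is a genuine $\mu$-sample (so the true address is $\ell=L$) and a fresh uniformly random XOR-instance of $G_\ell$ is planted at that address. Correctness of $\Pi$ forces its transcript to determine $g_\ell(\x,\y,u_\ell\oplus v_\ell)$, which by the nontriviality clause of \defn{xor} is non-constant in $u_\ell\oplus v_\ell$; a Pythagorean/Hellinger argument (\fct{pyth}, \fct{relation-inf}) then forces the transcript to carry $\Omega(1)$ worth of information about the data block at address $\ell$.

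The upper-bound side is where the three conditions combine, and this is the main obstacle. Because $\IC(\Pi)\le\delta^2 2^c$ spreads over the $2^c$ data blocks, the transcript has at most $\delta^2$ bits of information about a \emph{uniformly random} block on average, hence (Pinsker, \fct{ID}) is within total variation $\delta$ of being independent of it; near-uniformity of $L$ transfers this, up to an additive $O(c\delta)$, to the random address $\ell=L$ (first auxiliary claim). Separately, $\I(L_i:\Pi_i\mid X_i)\le\delta$ for all $i$, summed over the $c$ coordinates through a chain-rule/hybrid argument with the $\I$-to-$\VR$ conversions of \fct{ID} and \fct{I2}, shows the transcript is within total variation $O(c\delta)$ of being independent of the address $\ell$ itself (second auxiliary claim). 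Chaining these — the transcript knows neither which block to read nor its contents — puts the transcript within total variation $O(c\delta)$ of being independent of the correct output, so it cannot predict that output with advantage better than $O(c\delta)$ over $\tfrac12$; for $\delta=\Theta(1/c)$ small enough this is a tiny constant, contradicting the $\Omega(1)$ lower bound and completing the proof. The genuinely hard part, throughout this last step, is arranging the information/distance conversions so that the slack in replacing the random address by a uniform one is charged against a \emph{bounded} ($\le1$) total-variation quantity rather than the unbounded information cost $\IC(\Pi)$.
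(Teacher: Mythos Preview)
Your proposal is correct and follows the same approach as the paper—the (A1)/(A2)/(A3) case split, the per-$\mu$ construction of $\Pi_\mu$ via \clm{biasmu} or \clm{lowinf}, the contradiction when all three conditions hold, and the minimax-plus-boost wrapper giving the $c^3$/$c^2$ exponents. Two small misattributions to watch when you write it out: the bound $\IC^\mu(\Pi_i)\le\IC(\Pi)$ needs no XOR masking—it is pure bar-hopping (\fct{barhopping}) using independence of the planted coordinates from $(X_i,Y_i)$; the XOR clause and the Pythagorean lemma (\fct{pyth}) enter only at the final step (\clm{ZZ}), and there Pythagoras is used not to exhibit ``$\Omega(1)$ information about the block'' but to pass from closeness of $\Pi^{x,y,\ell,u,v}$ and $\Pi^{x,y,\ell,u',v'}$ (which your chained claims give) to closeness of $\Pi^{x,y,\ell,u,v}$ and $\Pi^{x,y,\ell,u',v}$ with one side held fixed, which is what directly contradicts $1/3$-correctness once $u\oplus v$ and $u'\oplus v$ give opposite values of $G_\ell$.
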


\begin{proof}
In this proof, for convenience we define $\delta = \frac{1}{10^{22}c}$ (we are not trying to optimize the constants).

\para{Rule out trivial protocols.} We first rule out the easy case where the protocol we are given, $\Pi$, has very high information complexity. More precisely, we check if the following condition holds.
\begin{equation*}
\label{eq:assumption1}
\IC(\Pi) < \delta^2 2^c
\tag{A1}
\end{equation*}
If this does not hold then $\IC(\Pi) \geq \delta^2 2^c = \Omega(\R(F)/c^2)$. By choosing the protocol whose communication complexity is $\R(F)$, we obtain a protocol $\Pi'$ for $F$ with $\IC(\Pi') \leq \CC(\Pi') = \R(F) = O(c^2\IC(\Pi))$ and we are done. Hence for the rest of the proof we may assume \eq{assumption1}.

\para{Protocols correct on a distribution.}
Instead of directly constructing a protocol $\Pi'$ for $F$ that is correct on all inputs with bounded error, we instead construct for every distribution $\mu$ on $\dom(F)$, a protocol $\Pi_\mu$ that does well on $\mu$ and then use \fullbref{fact:equiv} to construct our final protocol. More precisely, for every $\mu$ over $\dom(F)$  we construct a protocol $\Pi_\mu$ for $F$ that has the following properties:
\begin{align}
\IC^\mu(\Pi_\mu)  \leq \IC(\Pi) + 1, \qquad \CC(\Pi_\mu)  = \CC(\Pi) + 1 \qquad \textrm{and} \qquad \err^\mu(\Pi_\mu) < 1/2-\delta/3. \label{eq:niceprot}
\end{align}

Hence for the remainder of the proof let $\mu$ be any distribution over $\dom(F)$ and our aim is to construct a protocol satisfying \eq{niceprot}.

\para{Construct a distribution for $F_\G$.}
Using the distribution $\mu$ on $\dom(F)$, we now construct a distribution over the inputs to $F_\G$.
Let the random variable $T$ be defined as follows: 
\[
T \defeq(X_1,\ldots,X_c,U_0,\ldots,U_{2^c-1},Y_1\ldots,Y_c,V_0,\ldots,V_{2^c-1}),
\]
where for all $ i \in [c]$, $X_iY_i$ is distributed according to $\mu$ and is independent of all other random variables and for  $j \in \{0, \ldots, 2^c-1\}$, $U_jV_j$ are uniformly distributed in $\{0,1\}^{2m}$ and independent of all other random variables. 

For  $i \in [c]$, we define  $L_i \defeq F(X_i,Y_i)$. We also define $X \defeq (X_1,\ldots,X_c)$, $Y \defeq (Y_1,\ldots,Y_c)$, $L \defeq (L_1,\ldots,L_c)$, $U \defeq (U_0,\ldots,U_{2^c-1})$, and $V \defeq (V_0,\ldots,V_{2^c-1})$. Lastly, for $i \in [c]$ we define $X_{-i} \defeq X_1 \ldots X_{i-1} X_{i+1} \ldots X_c$ and  $X_{<i} \defeq X_1 \ldots X_{i-1}$. Similar definitions hold for $L$ and $Y$.

\para{Rule out easy distributions $\mu$.} We now show that if $\mu$ is such that the output of $F(X,Y)$ is predictable simply by looking at Alice's input $X$, then this distribution is easy and we can construct a protocol $\Pi_\mu$ that does well on this distribution since Alice can simply guess the value of $F(X,Y)$ after seeing $X$. More precisely, we check if the following condition holds.
\begin{equation*}
\label{eq:assumption2}
\VR(XL, X \otimes W)  \leq c \delta/3,
\tag{A2}
\end{equation*}
where $W$ is the uniform distribution on $\{0,1\}^c$. 

If the condition does not hold, we invoke \clm{biasmu} with $\epsilon =\delta/3$. Then we must be in case~(a) of this claim and hence we get the desired protocol $\Pi_\mu$.
Therefore we can assume \eq{assumption2} holds.

\para{Construct new protocols $\Pi_i$.}
We now define a collection of protocols $\Pi_i$ for each $i\in [c]$.  $\Pi_i$ is a protocol in which Alice and Bob receive inputs from $\dom(F)$. We construct $\Pi_i$ as follows: Given the input pair $(X_i,Y_i)$ distributed according to $\mu$, Alice and Bob use their public coins to sample $c-1$ inputs $X_{-i}Y_{-i}$ according to $\mu^{\otimes c}$ and inputs $U$ and $V$ uniformly at random. Note that Alice and Bob now have inputs $XU$ and $YV$ distributed according to $T$. The random variable corresponding to their transcript, which includes the messages exchanges and the public coins, is  $\Pi_i = (\Pi, X_{-i}, U, Y_{-i}, V)$.   We then claim that 
\[
\forall i \in [c], \quad \CC(\Pi_i) = \CC(\Pi) \quad \text{ and } \quad \IC^\mu(\Pi_i) \leq \IC(\Pi).
\]
It is obvious that  $\CC(\Pi_i) = \CC(\Pi)$ because the bits transmitted in $\Pi_i$ are the same as those in $\Pi$. The second part uses the following chain of inequalities, which hold for any $i\in [c]$.
\begin{align*}
\IC(\Pi) \geq \IC^T(\Pi)  &= \I(X U: \Pi~|~ Y V) + \I(YV: \Pi~|~ XU) &\quad \textrm{(definition)}\\
& \geq  \I(X_i : \Pi~|~ Y_i X_{-i}  U Y_{-i} V) + \I(Y_i : \Pi~|~ X_i X_{-i} U Y_{-i} V) & \,\,\, \textrm{(\fullref{fact:barhopping})}\\
& = \I(X_i : \Pi X_{-i}  U Y_{-i} V ~|~ Y_i ) + \I(Y_i : \Pi X_{-i} U Y_{-i} V~|~ X_i )  & \,\,\, \textrm{(\fullref{fact:barhopping})}\\
& = \IC^\mu(\Pi_i). &\quad \textrm{(definition)}
\end{align*}
The second equality used the fact that $ \I(X_i : X_{-i}  U Y_{-i} V~|~ Y_i )  =  \I(Y_i : X_{-i}  U Y_{-i} V~|~ X_i ) = 0$.

\para{Rule out informative protocols $\Pi_i$.}
We then check if any of the protocols $\Pi_i$ that we just constructed have a lot of information about the output $L_i$. If this happens then $\Pi_i$ can solve $F$ on $\mu$ and will yield the desired protocol $\Pi_\mu$. More precisely, we check if the following condition holds.
\begin{equation*}
\label{eq:assumption3}
\forall i \in [c] \quad \I(L_i : \Pi_i ~|~ X_i) \leq  \delta.
\tag{A3}
\end{equation*}
If it does not hold, then we apply \clm{lowinf}, which gives us the desired protocol $\Pi_\mu$ satisfying \eq{niceprot}. Hence we may assume that \eq{assumption3} holds for the rest of the proof.

\para{Obtain a contradiction.}
We have already established that \eq{assumption1}, \eq{assumption2}, and \eq{assumption3} must hold, otherwise we have obtained our protocol $\Pi_\mu$. We will now show that if  \eq{assumption1}, \eq{assumption2}, and \eq{assumption3}  simultaneously hold, then we obtain a contradiction. To show this, we use some claims that are proved after this theorem.

First we apply \clm{X} to get the following from \eq{assumption1} and \eq{assumption2}.
\begin{equation}
\Pr_{(x,\ell) \leftarrow XL} (\VR((\Pi U_\ell)^{x},\Pi^x  \otimes U_\ell  ) > \sqrt{\delta} )  < 0.01.  \label{eq:X}
\end{equation}
Intuitively this claim asserts that for a typical $x$ and $\ell$, the transcript $\Pi^x$ has very little information about the correct cell $U_\ell$, which is quantified by saying their joint distribution is close to being a product distribution. This would be false without assuming \eq{assumption1} because if there was no upper bound on the information contained in $\Pi$, then the protocol could simply communicate all of $U$, in which case it would have a lot of information about any $U_j$. We need \eq{assumption2} as well, since otherwise it is possible that the correct answer $\ell$ is easily predicted by Alice by looking at her input alone, in which case she can send over the contents of cell $U_\ell$ to Bob.

We then apply \clm{Y} to get the following from \eq{assumption3}.
\begin{equation}
\Pr_{(x,\ell) \leftarrow XL} (\VR((\Pi U_\ell)^{x,\ell} , (\Pi U_\ell)^x  ) > 100  \sqrt{c \delta})\leq 0.01 \label{eq:Y}
\end{equation}
Intuitively, this claim asserts that for a typical $x$, the transcript (and even the
contents of $U_\ell$, Alice's part of the contents of the correct cheat sheet cell)
does not change much upon further conditioning on $\ell$. This is just one way
of saying that Alice (who knows $x$ and $U$) does not learn much about $\ell$
from the transcript $\Pi$. The assumption \eq{assumption3} was necessary,
because without it, it would be possible for $\Pi$ to provide a lot of information
about $L$ (conditioned on $X$).

We then apply \clm{Z}, which uses \eq{X} and \eq{Y} to obtain the following:
\begin{align} 
\Pr_{(x,y,\ell,u_\ell,v_\ell) \leftarrow XYLU_LV_L} \left(\VR(\Pi^{x,y,\ell,u_\ell,v_\ell}  ,  \Pi^{x,y,\ell} ) > 6 \cdot 10^6 \cdot \sqrt{c\delta}  \right) < 0.09   \label{eq:Z}.
\end{align}

This equation is a key result. It says that conditioning on a typical $(x,y,\ell)$, the message transcript does not change much on further conditioning on a typical $(u_\ell, v_\ell)$.
Finally, we use \clm{ZZ} to obtain a contradiction from \eq{Z}. 

\para{Minimax argument.}
Note that in all branches where we did not reach a contradiction, we constructed a protocol satisfying \eq{niceprot}. Hence we constructed, for any $\mu$ over $\dom(F)$, a protocol $\Pi_\mu$ that satisfies \eq{niceprot}. 
From here it is easy to complete the proof. First we use \fullbref{fact:equiv} with the choice $\alpha = 1-\frac{\delta}{6}$ and $\eps = \frac{1}{2}-\frac{\delta}{3}$ to get a protocol $\widetilde{\Pi}$ for $F$ such that 
\[ \IC(\widetilde{\Pi}) \leq O\left(\frac{1}{\delta}\IC(\Pi)  \right), \qquad  \CC(\widetilde{\Pi}) \leq \CC(\Pi) +1 \qquad \text{and} \qquad \err(\widetilde{\Pi}) \leq 1/2-\delta/6.\]
Finally, using \fullbref{fact:boost}, we get a protocol $\Pi'$ for $F$ such that
$$\IC(\Pi') \leq O\left(\frac{1}{\delta^3} \IC(\Pi)  \right), \qquad \CC(\Pi') \leq O\left(\frac{1}{\delta^2}\CC(\Pi)  \right) \qquad \text{and} \qquad \err(\Pi') \leq 1/3.$$
This completes the proof since $1/\delta = O(c)$. 
\end{proof}

This completes the proof of the theorem, except the claims we did not prove, \clm{X}, \clm{Y}, \clm{Z}, and \clm{ZZ}. We now prove these claims.

\subsubsection{Proofs of claims}

\begin{claim}\label{clm:X}
Assume the following conditions hold.
\begin{equation*}
\IC(\Pi) < \delta^2 2^c \tag{\ref{eq:assumption1}}
\end{equation*}
\begin{equation*}
\VR(XL, X \otimes W)  \leq c \delta/3 \tag{\ref{eq:assumption2}}
\end{equation*}
Then we have 
\begin{equation*}
\Pr_{(x,\ell) \leftarrow XL} (\VR((\Pi U_\ell)^{x},\Pi^x  \otimes U_\ell  ) > \sqrt{\delta} )  < 0.01.  \tag{\ref{eq:X}}
\end{equation*}
\end{claim}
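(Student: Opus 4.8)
The plan is to reduce the claimed total‑variation bound to a statement about mutual information, and then to combine a ``global information budget'' coming from \eq{assumption1} with the near‑uniformity of the relevant address coming from \eq{assumption2}. First I would pass from $\VR$ to $\I$: since $U$ is independent of $X$, the $U_\ell$‑marginal of $(\Pi U_\ell)^x$ is just $U_\ell$, so $\Pi^x\otimes U_\ell$ is exactly the product of the two marginals of $(\Pi U_\ell)^x$; hence the first inequality of \fct{ID} gives $\VR^2\big((\Pi U_\ell)^x,\Pi^x\otimes U_\ell\big)\le\I(\Pi:U_\ell\mid X{=}x)$. It therefore suffices to prove $\Pr_{(x,\ell)\leftarrow XL}\big(\I(\Pi:U_\ell\mid X{=}x)>\delta\big)<0.01$, since $\VR>\sqrt\delta$ implies $\VR^2>\delta$.

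Next I would establish a budget on the total information the transcript has about all the cells. Because $F_\G$ is total, the constructed distribution $T$ is a legal input distribution, so $\IC(\Pi)\ge\IC^T(\Pi)\ge\I(XU:\Pi\mid YV)\ge\I(U:\Pi\mid XYV)$ by nonnegativity and the chain rule. Since $U$ is independent of $(X,Y,V)$ we have $\I(U:YV\mid X)=0$, and then the chain rule together with \fct{mono} yields $\I(U:\Pi\mid XYV)\ge\I(U:\Pi\mid X)$; combining with \eq{assumption1} this gives $\I(\Pi:U\mid X)\le\IC(\Pi)<\delta^2 2^c$. Writing $h(x):=\I(\Pi:U\mid X{=}x)$ we get $\E_{x\leftarrow X}[h(x)]<\delta^2 2^c$, and since $U_0,\dots,U_{2^c-1}$ are still mutually independent after conditioning on $X{=}x$, iterating \fct{infoind} gives $\sum_{j}\I(\Pi:U_j\mid X{=}x)\le h(x)$ for every $x$.

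Finally I would combine these. By Markov's inequality (\fct{markov_ineq}), $\Pr_x\big(h(x)>1000\,\delta^2 2^c\big)<0.001$; call the remaining $x$ \emph{good}. For a good $x$, at most $1000\,\delta^2 2^c/\delta=1000\,\delta\,2^c$ of the $2^c$ indices $j$ satisfy $\I(\Pi:U_j\mid X{=}x)>\delta$, so under $\ell$ drawn uniformly from $\{0,1\}^c$ this probability is at most $1000\,\delta$; drawing $\ell$ instead from $L^x:=(L\mid X{=}x)$ changes it by at most $\VR(L^x,W)$, where $W$ is uniform over $\{0,1\}^c$. Averaging over $x$, using \fct{subadd} (the $X$‑marginals of $XL$ and $X\otimes W$ agree) and \eq{assumption2}, $\E_{x\leftarrow X}[\VR(L^x,W)]=\VR(XL,X\otimes W)\le c\delta/3$, so
\[\Pr_{(x,\ell)\leftarrow XL}\big(\I(\Pi:U_\ell\mid X{=}x)>\delta\big)\le 0.001+1000\,\delta+\tfrac{c\delta}{3}<0.01\]
because $\delta=\tfrac{1}{10^{22}c}$, which completes the proof.

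The step I expect to be the main obstacle — really the only one requiring care — is the information‑budget argument: one must verify that enlarging the conditioning from $X$ to $(X,Y,V)$ does not decrease the mutual information with $U$ (which is where independence of $U$ from $(X,Y,V)$ is essential), and that $\IC(\Pi)$, defined as a supremum over input distributions, dominates $\IC^T(\Pi)$ precisely because $F_\G$ is total. Everything else is routine manipulation of the facts collected in \sec{info}.
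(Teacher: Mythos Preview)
Your proposal is correct and follows essentially the same approach as the paper: bound $\I(U:\Pi\mid X)$ by $\IC(\Pi)<\delta^2 2^c$ via the chain rule and independence of $U$ from $(X,Y,V)$, split into $\sum_j\I(U_j:\Pi\mid X{=}x)$ using \fct{infoind}, pass from $\I$ to $\VR$ via \fct{ID}, and use \eq{assumption2} to convert the uniform distribution on addresses to the actual distribution $L^x$. The only cosmetic difference is that the paper applies a single Markov inequality directly to $\E_{(x,\ell)\leftarrow X\otimes W}\I(U_\ell^x:\Pi^x)<\delta^2$ and then changes the distribution to $XL$, whereas you first apply Markov over $x$ and then count bad indices $\ell$ for each good $x$; both routes give the same conclusion with room to spare in the constants.
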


\begin{proof}
Using \eq{assumption1}, we have
\begin{align}
\delta^2 2^c > \IC(\Pi) &>  \IC^T(\Pi)  = \I(UX:\Pi~|~YV) + \I(YV:\Pi~|~XU)& \textrm{(definition)} \nonumber \\
&\geq  \I(UX:\Pi~|~YV) & \text{(\fullref{fact:nonneg})} \nonumber \\
&\geq  \I(U:\Pi ~|~ XYV) & \textrm{(\fullref{fact:barhopping})}\nonumber  \\
&=  \I(U:\Pi YV ~|~ X) & \textrm{(\fullref{fact:barhopping})}\nonumber  \\
&\geq  \I(U:\Pi ~|~ X)  & \text{(\fullref{fact:mono})} \nonumber \\
&=   \E_{x \leftarrow X}   \I(U:\Pi ~|~ X=x) & \hspace{-6em}\text{(\fullref{def:entropy})} \nonumber  \\
&=   \E_{x \leftarrow X}   \I(U_1^x\cdots U_{2^c}^x:\Pi^x) &  \text{(notation)} \nonumber  \\
&\geq  \E_{x \leftarrow X}\textstyle \sum_{\ell=1}^{2^c}  \I(U_\ell^x:\Pi^x) & \hspace{-1em} \text{(\fullref{fact:infoind})} \nonumber  \\
&=  2^c \, \E_{x \leftarrow X} \E_{\ell \leftarrow W}  \I(U_\ell^x:\Pi^x). &  \text{($W$ is the uniform distribution)} \nonumber  \\
\Rightarrow \quad \delta^2 & >  \E_{(x,\ell) \leftarrow X \otimes W}  \I(U_\ell^x  : \Pi^x)   .\nonumber  \\
\Rightarrow \quad \delta  & > \Pr_{(x,\ell) \leftarrow X \otimes W}(\I(U_\ell^x  : \Pi^x) > \delta)  .  & \text{(\fullref{fact:markov_ineq})} \nonumber 
\end{align}
We now want to replace the distribution $X\otimes W$ with $XL$ on the right hand side. Since  $\VR(XL,X\otimes W) \leq c\delta/3$ from \eq{assumption2}, changing the distribution from $X\otimes W$ to $XL$ only changes the probability of any event by $2c\delta/3\leq c\delta$. Therefore 
\begin{align}
0.01 > c\delta +\delta   &> \Pr_{(x,\ell) \leftarrow XL}(\I(U_\ell^x  : \Pi^x) > \delta)   & \nonumber \\
& \geq \Pr_{(x,\ell) \leftarrow XL} (\VR^2((\Pi U_\ell)^{x},\Pi^x  \otimes U_\ell^x  ) > \delta) & \text{(\fullref{fact:ID})}  \nonumber  \\
& =  \Pr_{(x,\ell) \leftarrow XL} (\VR^2((\Pi U_\ell)^{x},\Pi^x  \otimes U_\ell) > \delta) & \text{($U_\ell$ is independent of $X$)}  \nonumber  \\
& = \Pr_{(x,\ell) \leftarrow XL} (\VR((\Pi U_\ell)^{x},\Pi^x  \otimes U_\ell  ) > \sqrt{\delta} )   .  \label{eq:l3} &\qedhere
\end{align}
\end{proof}

\begin{claim}\label{clm:Y}
Assume the following condition holds.
\begin{equation*}
\forall i \in [c] \quad \I(L_i : \Pi_i ~|~ X_i) \leq  \delta\tag{\ref{eq:assumption3}}
\end{equation*}
Then we have 
\begin{equation*}
 \Pr_{(x,\ell) \leftarrow XL} (\VR((\Pi U_\ell)^{x,\ell} , (\Pi U_\ell)^x  ) > 100  \sqrt{c \delta})\leq 0.01.  \tag{\ref{eq:Y}}
\end{equation*}
\end{claim}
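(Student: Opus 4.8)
The plan is to derive \eq{Y} from \eq{assumption3} in two moves. First I would combine the $c$ hypotheses $\I(L_i:\Pi_i\mid X_i)\le\delta$ into the single statement $\I(L:\Pi U\mid X)\le c\delta$ using the chain rule. Then I would convert this mutual-information bound into the claimed variation-distance bound via the second inequality of \fct{ID} and Markov's inequality (\fct{markov_ineq}); since \eq{Y} only involves $(\Pi,U_\ell)$ and not all of $(\Pi,U)$, monotonicity of total variation distance (\fct{monoTV}) lets me pass from the $(\Pi U)$-level estimate down to the $(\Pi U_\ell)$-level one for free.

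For the first move, I would apply the chain rule (\fct{chain-rule}) repeatedly to get
\[
\I(L:\Pi U\mid X)=\sum_{i=1}^{c}\I(L_i:\Pi U\mid X,L_{<i})=\sum_{i=1}^{c}\I(L_i:\Pi U\mid X_i,X_{-i},L_{<i}),
\]
and bound each summand by $\delta$. Fixing $i$: because the blocks $(X_j,Y_j)$ are i.i.d.\ under $T$, the value $L_i=F(X_i,Y_i)$ depends only on the $i$-th block while $(X_{-i},L_{<i})$ depends only on the other blocks, so $\I(L_i:X_{-i}L_{<i}\mid X_i)=0$, and \fct{barhopping} turns the summand into $\I(L_i:\Pi U X_{-i}L_{<i}\mid X_i)$. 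Now $(\Pi,U,X_{-i},L_{<i})$ is a deterministic function of $\Pi_i=(\Pi,X_{-i},U,Y_{-i},V)$ (note $L_{<i}$ is a function of $(X_{<i},Y_{<i})$, a sub-tuple of $(X_{-i},Y_{-i})$), so data processing (\fct{data}) bounds it by $\I(L_i:\Pi_i\mid X_i)\le\delta$ using \eq{assumption3}. Summing over $i$ gives $\I(L:\Pi U\mid X)\le c\delta$.

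For the second move, I would write $\I(L:\Pi U\mid X)=\E_{x\leftarrow X}\I(L^x:(\Pi U)^x)$ and apply the second inequality of \fct{ID} to each pair $(L^x,(\Pi U)^x)$, which yields
\[
c\delta\ \ge\ \E_{x\leftarrow X}\,\E_{\ell\leftarrow L^x}\,\VR^2\big((\Pi U)^{x,\ell},(\Pi U)^x\big)\ =\ \E_{(x,\ell)\leftarrow XL}\,\VR^2\big((\Pi U)^{x,\ell},(\Pi U)^x\big).
\]
Since $U_\ell$ is one coordinate of $U$, \fct{monoTV} gives $\VR\big((\Pi U_\ell)^{x,\ell},(\Pi U_\ell)^x\big)\le\VR\big((\Pi U)^{x,\ell},(\Pi U)^x\big)$ for every $(x,\ell)$, hence $\E_{(x,\ell)\leftarrow XL}\VR^2\big((\Pi U_\ell)^{x,\ell},(\Pi U_\ell)^x\big)\le c\delta$. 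Markov's inequality (\fct{markov_ineq}) with constant $100$ then gives $\Pr_{(x,\ell)\leftarrow XL}\big(\VR((\Pi U_\ell)^{x,\ell},(\Pi U_\ell)^x)>10\sqrt{c\delta}\big)<0.01$, which is even stronger than the bound asserted in \eq{Y}.

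The hard part will be the first move, and within it the bookkeeping step of confirming from the product structure of $T$ that $\I(L_i:X_{-i}L_{<i}\mid X_i)=0$ and that, after the bar-hopping step, the residual information is genuinely controlled by $\I(L_i:\Pi_i\mid X_i)$ because $(\Pi,U,X_{-i},L_{<i})$ can be read off from $\Pi_i$. Once $\I(L:\Pi U\mid X)\le c\delta$ is in hand the remainder is a routine information-to-variation conversion followed by Markov, needing nothing about $F$ or $\G$.
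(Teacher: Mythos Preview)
Your proposal is correct and follows essentially the same approach as the paper: both first establish $\I(L:\Pi U\mid X)\le c\delta$ via the chain rule, bar hopping, and data processing (with only cosmetic differences in the order of those steps), and then convert this to the variation-distance bound via \fct{ID}, \fct{monoTV}, and Markov. The only minor difference is that the paper first applies Jensen to pass from $\E[\VR^2]\le c\delta$ to $\E[\VR]\le\sqrt{c\delta}$ and then Markov to get the threshold $100\sqrt{c\delta}$, whereas you apply Markov directly to $\VR^2$ and obtain the sharper threshold $10\sqrt{c\delta}$; both are valid.
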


\begin{proof}
We first show that $(\Pi U)$ together carries low information about $L$ even conditioned on $X$.  More precisely we show that
\begin{equation}\label{eq:lowinfL}
c \delta \geq  \I(L:\Pi U~|~X).
\end{equation}
This follows from the following chain on inequalities starting with \eq{assumption3}.
\begin{align*}
\delta & \geq \I(L_i : \Pi_i ~|~ X_i) \\ 
& = \I(L_i : \Pi X_{-i} U Y_{-i} V ~|~ X_i) &\quad \textrm{(definition of $\Pi_i$)}\\ 
& = \I(L_i : \Pi X_{-i} U Y_{-i} V X_{<i}Y_{<i} ~|~ X_i)  &\quad \textrm{($X_{<i}Y_{<i}$ contained in $X_{-i}Y_{-i})$}\\ 
&\geq \I(L_i: X_{<i}Y_{<i} \Pi  U  ~|~ X) &\quad \textrm{(\fullref{fact:barhopping} and \fullref{fact:mono})}\\ 
&\geq \I(L_i: L_{<i} \Pi  U ~|~ X)  & \quad \text{(\fullref{fact:data})} \\ 
& =  \I(L_i:\Pi U ~|~ L_{<i} X). &\quad \textrm{(\fullref{fact:barhopping})} \\
 \intertext{By summing this inequality over $i$, we get}
c \delta 
& \geq\textstyle \sum_{i=1}^c \I(L_i:\Pi U ~|~ L_{<i} X) & \\
& =  \I(L:\Pi U ~|~  X). & \textrm{(\fullref{fact:chain-rule})} 
\end{align*}
This is \eq{lowinfL}, which we set out to show. Using this inequality, we have
\begin{align}
c \delta & \geq  \I(L:\Pi U ~|~  X)     \nonumber \\
& =  \E_{x \leftarrow X} \I(L: \Pi U ~|~ X=x)  &  \text{(\fullref{def:entropy})}   \nonumber \\
& =  \E_{x \leftarrow X} \I(L^x: (\Pi U)^x)  &  \text{(notation)}   \nonumber \\
& \geq \E_{(x,\ell) \leftarrow XL} \VR^2((\Pi U)^{x,\ell}, (\Pi U)^x) & \hspace{-1em}\text{(\fullref{fact:ID})}  \nonumber  \\
& \geq \E_{(x,\ell) \leftarrow XL} \VR^2((\Pi U_\ell )^{x,\ell} , (\Pi U_\ell)^x ).  & \text{(\fullref{fact:monoTV})} \nonumber  \\
\Rightarrow \quad  \sqrt{c \delta} & \geq \E_{(x,\ell) \leftarrow XL} \VR((\Pi U_\ell)^{x,\ell} , (\Pi U_\ell)^x  ).   & \text{(convexity of square)}  \nonumber  \\
\Rightarrow \quad  0.01  & \geq \Pr_{(x,\ell) \leftarrow XL} (\VR((\Pi U_\ell)^{x,\ell} , (\Pi U_\ell)^x  ) > 100  \sqrt{c \delta} ) &  \text{(\fullref{fact:markov_ineq})} \nonumber \tag{\ref{eq:Y}} 
\end{align}
This completes the proof.
\end{proof}

\begin{claim}\label{clm:Z}
Assume the following conditions hold.
\begin{equation*}
\Pr_{(x,\ell) \leftarrow XL} (\VR((\Pi U_\ell)^{x},\Pi^x  \otimes U_\ell  ) > \sqrt{\delta} )  < 0.01. \tag{\ref{eq:X}}
\end{equation*}
\begin{equation*}
 \Pr_{(x,\ell) \leftarrow XL} (\VR((\Pi U_\ell)^{x,\ell} , (\Pi U_\ell)^x  ) > 100  \sqrt{c \delta})\leq 0.01. \tag{\ref{eq:Y}}
\end{equation*}
Then we have 
\begin{equation*}
\Pr_{(x,y,\ell,u_\ell,v_\ell) \leftarrow XYLU_LV_L} \left(\VR(\Pi^{x,y,\ell,u_\ell,v_\ell}  ,  \Pi^{x,y,\ell} ) > 6 \cdot 10^6 \cdot \sqrt{c\delta}  \right) < 0.09.  \tag{\ref{eq:Z}}
\end{equation*}
\end{claim}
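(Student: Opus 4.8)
The plan is to combine \eq{X} and \eq{Y} into a single statement saying that, for a typical $(x,\ell)\leftarrow XL$, conditioning on $(x,\ell)$ leaves the transcript almost independent of Alice's copy $U_\ell$ of the correct cell, and then to bootstrap this twice — first to conditioning on $(x,y,\ell)$, then to conditioning on $(x,y,\ell,u_\ell,v_\ell)$. For the combination, note that since $U$ is independent of $(X,Y)$ and hence of $(X,L)$, the conditional $(U_\ell)^x=(U_\ell)^{x,\ell}=U_\ell$ is uniform; so a triangle inequality together with \fct{monoTV}, \eq{X}, and \eq{Y} gives that for a $\geq 0.98$ fraction of $(x,\ell)\leftarrow XL$,
\[
\VR\big((\Pi U_\ell)^{x,\ell},\ \Pi^{x,\ell}\otimes U_\ell\big)\ \leq\ 201\sqrt{c\delta} .
\]
I would also use the symmetric counterparts of \eq{X} and \eq{Y} obtained by swapping $(X,U)$ with $(Y,V)$; these are available by the same reasoning, since the proof of \eq{X} used only the summand $\I(UX:\Pi~|~YV)$ of $\IC^T(\Pi)$, leaving $\I(YV:\Pi~|~XU)<\delta^2 2^c$, and since the symmetric analogues of \eq{assumption2} and \eq{assumption3} may be assumed (otherwise \clm{biasmu} or \clm{lowinf} applied on Bob's side directly produce the protocol $\Pi_\mu$ sought in \thm{ICCS}).

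For the first bootstrap, the key structural input is \fct{indep}. Fixing $X=x$ makes Alice's remaining input $U$ independent of Bob's input $(Y,V)$, so $U\leftrightarrow\Pi\leftrightarrow(Y,V)$ conditioned on $X=x$; conditioning further on the event $L=\ell$, which depends only on $Y$ once $x$ is fixed, preserves this and yields the Markov chain $U_\ell\leftrightarrow\Pi\leftrightarrow Y$ conditioned on $(x,\ell)$. In particular $(U_\ell)^{x,y,\ell,\pi}=(U_\ell)^{x,\ell,\pi}$, so $\VR\big((U_\ell)^{x,\ell,\pi},U_\ell\big)$ does not depend on $y$; combining this with \fct{subadd} (applied with $\Pi$ as the common marginal on both sides) and the law-of-total-probability identity $\E_{y\leftarrow Y^{x,\ell}}[\Pi^{x,y,\ell}]=\Pi^{x,\ell}$ yields the \emph{exact} equality
\[
\E_{y\leftarrow Y^{x,\ell}}\big[\VR((\Pi U_\ell)^{x,y,\ell},\Pi^{x,y,\ell}\otimes U_\ell)\big]=\VR\big((\Pi U_\ell)^{x,\ell},\Pi^{x,\ell}\otimes U_\ell\big).
\]
Feeding the displayed bound above into the ``in particular'' part of \fct{markov_ineq} (with the pair $(X,L)$ in the role of $X$ and $Y$ in the role of $Y$) then gives $\Pr_{(x,y,\ell)\leftarrow XYL}[\VR((\Pi U_\ell)^{x,y,\ell},\Pi^{x,y,\ell}\otimes U_\ell)>20100\sqrt{c\delta}]<0.03$, and the symmetric argument gives the same bound with $(Y,V_\ell)$ in place of $(X,U_\ell)$.

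For the second bootstrap, take $(x,y,\ell)$ in the $\geq 0.94$ fraction where both one-sided bounds hold. Fixing $X=x$ and $Y=y$ makes $U$ and $V$ independent, so \fct{indep} gives the Markov chain $U_\ell\leftrightarrow\Pi\leftrightarrow V_\ell$ conditioned on $(x,y,\ell)$; since $(U_\ell V_\ell)^{x,y,\ell}=U_\ell\otimes V_\ell$ is uniform, \fct{markov-1} gives $\VR((\Pi U_\ell V_\ell)^{x,y,\ell},\Pi^{x,y,\ell}\otimes U_\ell\otimes V_\ell)\leq 40200\sqrt{c\delta}$. Hence this deviation exceeds $40200\sqrt{c\delta}$ with probability $<0.06$ over $(x,y,\ell)\leftarrow XYL$; rewriting the left-hand side via \fct{subadd} as $\E_{(u_\ell,v_\ell)\leftarrow U_\ell\otimes V_\ell}[\VR(\Pi^{x,y,\ell,u_\ell,v_\ell},\Pi^{x,y,\ell})]$ and applying \fct{markov_ineq} one last time gives $\Pr[\VR(\Pi^{x,y,\ell,u_\ell,v_\ell},\Pi^{x,y,\ell})>100\cdot 40200\sqrt{c\delta}]<0.07$, which is exactly \eq{Z} (since $100\cdot 40200<6\cdot 10^6$ and $0.07<0.09$).

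The step I expect to be the main obstacle is the first bootstrap. The naive route — turning the $(x,\ell)$-level bound into a statement that holds for a typical transcript $\pi$, then re-averaging over $\pi\leftarrow\Pi^{x,y,\ell}$ — loses a factor through Markov's inequality each time and yields only a useless $\Theta(1)$ bound, nowhere near the $6\cdot 10^6\sqrt{c\delta}\approx 6\cdot 10^{-5}$ target. The fix is to never pass through a typical-$\pi$ statement and instead use the exact equality above, which works precisely because the Markov chain $U_\ell\leftrightarrow\Pi\leftrightarrow Y$ makes $\VR((U_\ell)^{x,\ell,\pi},U_\ell)$ free of $y$. The delicate part is the conditional-independence bookkeeping: one must condition \emph{before} invoking \fct{indep}, since Alice's and Bob's inputs to $F_\G$ are themselves correlated, and one must check carefully which conditionings preserve which independences; the remaining work is just tracking the small constant losses so that the thresholds in \eq{Z} are met.
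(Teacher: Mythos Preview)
Your proposal is correct and follows essentially the same route as the paper: combine \eq{X} and \eq{Y} (together with the monotonicity consequence $\VR(\Pi^{x,\ell},\Pi^x)\le 100\sqrt{c\delta}$) by the triangle inequality to bound $\VR((\Pi U_\ell)^{x,\ell},\Pi^{x,\ell}\otimes U_\ell)$ for typical $(x,\ell)$, push this to the $(x,y,\ell)$ level via the Markov chain $U_\ell\leftrightarrow\Pi\leftrightarrow Y$ and \fct{markov_ineq}, invoke the Alice/Bob symmetry for $V_\ell$, combine via \fct{markov-1}, and finish with \fct{subadd} plus one more Markov. Your packaging of the ``first bootstrap'' as the exact identity $\E_{y\leftarrow Y^{x,\ell}}[\VR((\Pi U_\ell)^{x,y,\ell},\Pi^{x,y,\ell}\otimes U_\ell)]=\VR((\Pi U_\ell)^{x,\ell},\Pi^{x,\ell}\otimes U_\ell)$ is exactly what the paper derives via its chain of \fct{subadd}/\fct{prod}/\fct{indep} equalities, and your explicit justification of the symmetric assumptions (which the paper invokes with a one-line ``by symmetry'') is appropriate.
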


\begin{proof}
First, using \eq{Y} we can show
\begin{align}
0.01  & \geq \Pr_{(x,\ell) \leftarrow XL} (\VR((\Pi U_\ell)^{x,\ell} , (\Pi U_\ell)^x  ) > 100  \sqrt{c \delta} ) &   \tag{\ref{eq:Y}} \nonumber \\
 & \geq \Pr_{(x,\ell) \leftarrow XL} (\VR(\Pi^{x,\ell} , \Pi^x)  > 100  \sqrt{c \delta} )   &  \qquad \text{(\fullref{fact:monoTV})}  \nonumber \\
 & = \Pr_{(x,\ell) \leftarrow XL} (\VR(\Pi^{x,\ell} \otimes U_\ell, \Pi^x\otimes U_\ell)  > 100  \sqrt{c \delta} ). & \qquad \text{(\fullref{fact:prod})}   \label{eq:YY}
\end{align}

Using \eq{X}, \eq{Y}, and \eq{YY}, the union bound and \fullbref{fact:triangle} we get
\begin{align} 
0.03 &>   \Pr_{(x,\ell) \leftarrow XL} \left(\VR((\Pi U_\ell)^{x,\ell}  ,  \Pi^{x,\ell}  \otimes U_\ell ) > 300 \sqrt{c\delta}\right)  & \nonumber \\
&=\Pr_{(x,\ell) \leftarrow XL}
\left(\E_{\pi\leftarrow\Pi^{x,\ell}}(\VR(U_\ell^{x,\ell,\pi} , U_\ell ) )>
300 \sqrt{c\delta}\right) & \hspace{-5em} \text{(\fullref{fact:subadd})}\nonumber \\
&=\Pr_{(x,\ell) \leftarrow XL}
\left(\E_{\pi\leftarrow\Pi^{x,\ell}}(\VR(U_\ell^{x,\ell,\pi}\otimes Y^{x,\ell,\pi} , U_\ell\otimes Y^{x,\ell,\pi} ) )>
300 \sqrt{c\delta}\right) & \text{(\fct{prod})}\nonumber \\
&=\Pr_{(x,\ell) \leftarrow XL}
\left(\E_{\pi\leftarrow\Pi^{x,\ell}}(\VR(U_\ell^{x,\ell,\pi}Y^{x,\ell,\pi} , U_\ell\otimes Y^{x,\ell,\pi} ) )>
300 \sqrt{c\delta}\right) & \text{(\fullref{fact:indep})}\nonumber \\
&=\Pr_{(x,\ell) \leftarrow XL}
\left(\VR(U_\ell^{x,\ell}Y^{x,\ell}\Pi^{x,\ell} , U_\ell\otimes Y^{x,\ell}\Pi^{x,\ell} )>
300 \sqrt{c\delta}\right) &  \hspace{-5em} \text{(\fullref{fact:subadd})}\nonumber \\
&=\Pr_{(x,\ell) \leftarrow XL}
\left(\E_{y\leftarrow Y^{x,\ell}}(\VR(U_\ell^{x,y,\ell}\Pi^{x,y,\ell} , U_\ell\otimes \Pi^{x,y,\ell} ) ) >
300 \sqrt{c\delta}\right) &\text{(\fct{subadd})}\nonumber
%
%
%
\end{align}
where the third equality follows since for all $(x,\ell)$, the variables $(U_\ell \Pi Y)^{x,\ell}$ form a Markov chain. To see this, fix $x$ and $\ell$,
and consider giving Alice the input $x$ together with an input distributed from
$U^{x,\ell}$. Also, give Bob an input generated from $(YV)^{x,\ell}$.
Since $U$ is uniform and independent of everything else, Alice's input is independent
of Bob's. \fullbref{fact:indep} then implies that
$U^{x,\ell}\leftrightarrow\Pi^{x,\ell}\leftrightarrow(YV)^{x,\ell}$
is a Markov chain. Then \fct{markov_basic} allows us to conclude
$U_\ell^{x,\ell}\leftrightarrow \Pi^{x,\ell}\leftrightarrow Y^{x,\ell}$.

Next, using \fullbref{fact:markov_ineq}, we get
\begin{align}
0.04 & > \Pr_{(x,y,\ell) \leftarrow XYL}\left(\VR((U_\ell \Pi )^{x,y,\ell}  ,  U_\ell   \otimes (\Pi^{x,y,\ell})  ) > 30000 \sqrt{c\delta}\right).  \label{eq:l4}
\end{align}
By symmetry between Alice and Bob, we get
\begin{align}
0.04 & > \Pr_{(x,y,\ell) \leftarrow XYL}\left(\VR((V_\ell \Pi )^{x,y,\ell}  ,  V_\ell   \otimes (\Pi^{x,y,\ell})  ) > 30000 \sqrt{c\delta}\right)  .  \label{eq:l5}
\end{align}
Using Eqs.~\eqref{eq:l4} and~\eqref{eq:l5}  and the union bound we get
\begin{align*} 0.08& > \Pr_{(x,y,\ell) \leftarrow XYL} \left( \VR((U_\ell \Pi )^{x,y,\ell}  ,  U_\ell   \otimes (\Pi^{x,y,\ell})  )  +  \VR((V_\ell \Pi )^{x,y,\ell}  ,  V_\ell   \otimes (\Pi^{x,y,\ell})  ) > 60000\sqrt{c\delta}  \right) \hspace{-2em}&  \\
& \geq   \Pr_{(x,y,\ell) \leftarrow XYL} \left(\VR((U_\ell \Pi V_\ell )^{x,y,\ell}  ,  U_\ell   \otimes (\Pi^{x,y,\ell})  \otimes V_\ell )  > 60000\sqrt{c\delta}  \right), & \text{(\fct{markov-1})}
\end{align*}
where the last inequality used the fact that $(U_\ell \Pi V_\ell )^{x,y,\ell}$ is a Markov chain, which follows from a similar argument to before.
Using \fullbref{fact:subadd} and \fullbref{fact:markov_ineq}, we then get
\[
0.09 >   \Pr_{(x,y,\ell,u_\ell,v_\ell) \leftarrow XYLU_LV_L} \left(\VR(\Pi^{x,y,\ell,u_\ell,v_\ell}  ,  \Pi^{x,y,\ell} ) > 6 \cdot 10^6 \cdot \sqrt{c\delta}  \right). \qedhere
\] 
\end{proof}

\begin{claim}\label{clm:ZZ}
If we assume
\begin{equation}
\Pr_{(x,y,\ell,u_\ell,v_\ell) \leftarrow XYLU_LV_L} \left(\VR(\Pi^{x,y,\ell,u_\ell,v_\ell}  ,  \Pi^{x,y,\ell} ) > 6 \cdot 10^6 \cdot \sqrt{c\delta}  \right) < 0.09  \tag{\ref{eq:Z}}
\end{equation}
then we have a contradiction.
\end{claim}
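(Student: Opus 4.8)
The plan is to contradict the standing hypotheses that $\Pi$ is a $1/3$-error protocol for $F_\G$ and that $\G$ is a nontrivial XOR family, using the one preliminary not yet invoked in the proof of \thm{ICCS}: the Pythagorean property \fct{pyth}. Fix $x,y$ in the support of $\mu^{\otimes c}$ and set $\ell := F^c(x,y)$, which lies in $\B^c$ because $\mu$ is supported on $\dom(F)$; by the XOR property of $\G$ the map $g_\ell\colon\B^m\to\B$, $g_\ell(w):=G_\ell(x,a,y,b)$ for any $a\oplus b=w$, is well defined, and by nontriviality both $S_0:=g_\ell^{-1}(0)$ and $S_1:=g_\ell^{-1}(1)$ are nonempty. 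I would then view the marginalized transcript $\Pi^{x,y,\ell,u,v}$ as the transcript $\Pi''(u,v)$ of the protocol $\Pi''$ in which Alice (holding $u\in\B^m$) privately samples $U_{-\ell}$, Bob (holding $v\in\B^m$) privately samples $V_{-\ell}$, uniformly, and they then run $\Pi$ on the $F_\G$-input with $\x=x$, $\y=y$, cell $\ell$ equal to $(u,v)$, and remaining cells given by the samples. That input always evaluates to $g_\ell(u\oplus v)$, so $\Pi''$ outputs $g_\ell(u\oplus v)$ with probability $\ge 2/3$; hence whenever $g_\ell(u\oplus v)\neq g_\ell(u'\oplus v')$ the two outputs differ with probability $\ge 1/3$, so $\VR(\Pi''(u,v),\Pi''(u',v'))\ge 1/3$ and, by \fct{relation-inf}, $\h^2(\Pi''(u,v),\Pi''(u',v'))\ge\tfrac12\cdot(1/3)^2=1/18$.

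Next I would invoke \eq{Z}. Writing $\epsilon_0:=6\cdot10^6\sqrt{c\delta}=6\cdot10^{-5}$, \eq{Z} says $\E_{(x,y,\ell)\sim XYL}\Pr_{(u_\ell,v_\ell)}[\VR(\Pi^{x,y,\ell,u_\ell,v_\ell},\Pi^{x,y,\ell})>\epsilon_0]<0.09$, so some $(x,y,\ell)$ in the support has inner probability below $0.09$; fix it and call $(u,v)$ \emph{good} if $\VR(\Pi^{x,y,\ell,u,v},\Pi^{x,y,\ell})\le\epsilon_0$. Let $b\in\B$ be a value with $|S_b|\ge 2^{m-1}$; then the value-$b$ pairs $\{(s,t):s\oplus t\in S_b\}$ form at least half of all pairs, so a uniformly random value-$b$ pair is good with probability $>1-0.09/(1/2)=0.82$. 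Now fix any $w'\in S_{1-b}$, sample $w$ uniformly from $S_b$ and $p$ uniformly from $\B^m$, and set $e:=w\oplus w'$, $q:=p\oplus w$, $p':=p\oplus e$, $q':=p\oplus e\oplus w$. A one-line check gives $p\oplus q=p'\oplus q'=w\in S_b$ and $p'\oplus q=p\oplus q'=w'\in S_{1-b}$, and both $(p,q)$ and $(p',q')$ are uniformly distributed over the value-$b$ pairs; hence with positive probability both are good, and I fix such $w,p$.

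Finally I would apply \fct{pyth} to $\Pi''$ with Alice's inputs $p,p'$ and Bob's inputs $q,q'$:
\[
\h^2(\Pi''(p,q),\Pi''(p',q))+\h^2(\Pi''(p,q'),\Pi''(p',q'))\ \le\ 2\,\h^2(\Pi''(p,q),\Pi''(p',q')).
\]
Each left-hand term compares inputs of $g_\ell$-value $b$ and $1-b$, so each is $\ge 1/18$ by the first paragraph; therefore $\h^2(\Pi''(p,q),\Pi''(p',q'))\ge 1/18$, and by \fct{relation-inf}, $\VR(\Pi''(p,q),\Pi''(p',q'))\ge\h^2(\Pi''(p,q),\Pi''(p',q'))\ge 1/18$. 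But $\Pi''(p,q)=\Pi^{x,y,\ell,p,q}$ and $\Pi''(p',q')=\Pi^{x,y,\ell,p',q'}$ are both within $\epsilon_0$ of $\Pi^{x,y,\ell}$ in variation distance, so by the triangle inequality their distance is at most $2\epsilon_0=1.2\cdot10^{-4}<1/18$, which is the desired contradiction.

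The hard part will be the middle step, and this is exactly where the XOR structure is indispensable: one cannot simply pair a good value-$0$ input with a good value-$1$ input, because $g_\ell$ may be wildly unbalanced — in the cheat-sheet instantiation, value $1$ means ``valid certificates'', which is exponentially rare — so every pair of the minority value could be bad and hence out of reach of \eq{Z}. The remedy is to work only with the majority value $b$, whose pairs are forced to be mostly good, and to use the XOR identity to build a Pythagorean quadruple whose two diagonal inputs both have value $b$ while the two off-diagonal inputs both have value $1-b$; \fct{pyth} then transfers the unavoidable separation between the value-$b$ and value-$(1-b)$ transcripts onto the diagonal pair, where it collides with their near-equality to $\Pi^{x,y,\ell}$.
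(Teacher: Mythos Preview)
Your proposal is correct and follows essentially the same route as the paper: fix a good $(x,y,\ell)$, build the same Pythagorean quadruple $(p,q),(p',q),(p,q'),(p',q')$ with the two diagonal inputs lying in the majority value and the two off-diagonal inputs in the minority value, and use \fct{pyth} together with \fct{relation-inf} to collide ``close because both good'' with ``far because different $F_\G$-values''.  The only cosmetic difference is the direction of the contradiction: the paper bounds $\h^2$ on the diagonal by $2\epsilon_0$, pushes this through \fct{pyth} to get a small $\VR$ between a value-$0$ and a value-$1$ input, and concludes $\err(\Pi)>1/3$; you instead start from $\h^2\ge 1/18$ on each off-diagonal pair, push this back through \fct{pyth} to force the diagonal $\VR$ to exceed $1/18$, and contradict the $2\epsilon_0$ bound directly.
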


\begin{proof}
Eq.~\eq{Z} implies that there exists $(x,y,\ell)$ such that 
\begin{align*} 
& 0.09>   \Pr_{(u_\ell,v_\ell) \leftarrow U_\ell V_\ell} \left(\VR(\Pi^{x,y,\ell,u_\ell,v_\ell}  ,  \Pi^{x,y,\ell} ) > 6 \cdot 10^6 \cdot \sqrt{c\delta}  \right)   .
\end{align*}

Recall that $G_\ell$ only depends on the XOR of the strings $u_\ell$ and $v_\ell$.  We assume without loss of 
generality that the number of 
strings $s \in \B^m$ such that $G_\ell(x,u_\ell, y, v_\ell)=1$ when $u_\ell \oplus v_\ell=s$ is at least the number of 
strings $s$ for 
which $G_\ell(x,u_\ell, y, v_\ell)=0$ when $u_\ell \oplus v_\ell=s$.  A symmetric argument holds otherwise.
This implies that 
there exists a string $s$ such that $G_\ell(x,u_\ell, y, v_\ell)=1$ whenever $u_\ell \oplus v_\ell = s$ and
\[
 0.18>   \Pr_{u_\ell \leftarrow U_\ell} \left(\VR(\Pi^{x,y,\ell,u_\ell,u_\ell \oplus s}  ,  \Pi^{x,y,\ell} ) > 6 \cdot 10^6 \cdot  \sqrt{c\delta}  \right).
\]
Fix any $t \in \B^m$ such that $G_\ell(x,u_\ell, y, v_\ell)=0$ whenever $u_\ell \oplus v_\ell=t$. The inequality above implies 
that there exists a string $u_\ell$  such that
\begin{align*} 
6 \cdot 10^6 \cdot  \sqrt{c\delta} & \geq   \VR(\Pi^{x,y,\ell,u_\ell,u_\ell \oplus s}  ,  \Pi^{x,y,\ell} )   \\
 \text{and} \quad  6 \cdot 10^6 \cdot  \sqrt{c\delta} & \geq   \VR(\Pi^{x,y,\ell,u_\ell \oplus t \oplus s,u_\ell \oplus  t}  ,  \Pi^{x,y,\ell} )    .
\end{align*}
Using \fullbref{fact:triangle} we get
\begin{align*} 
0.001\geq 12 \cdot 10^6 \cdot  \sqrt{c\delta} & \geq   \VR(\Pi^{x,y,\ell,u_\ell,u_\ell \oplus s}  , \Pi^{x,y,\ell,u_\ell \oplus s \oplus t,u_\ell \oplus t} ) \\
 &\geq \h^2(\Pi^{x,y,\ell,u_\ell,u_\ell \oplus s}  , \Pi^{x,y,\ell,u_\ell \oplus s \oplus t,u_\ell \oplus t} )    &  \hspace{-1em} \text{(\fullref{fact:relation-inf})}  \\
&=\h^2((\Pi^{x,y,\ell})^{u_\ell,u_\ell \oplus s}  , (\Pi^{x,y,\ell})^{u_\ell \oplus s \oplus t,u_\ell \oplus t} )    &  \text{(notation)}\\
 & \geq  \frac{1}{2} \h^2((\Pi^{x,y,\ell})^{u_\ell,u_\ell \oplus s}  , (\Pi^{x,y,\ell})^{u_\ell \oplus s \oplus t, u_\ell \oplus s } ) & \hspace{-1em} \text{(\fullref{fact:pyth})} \\
 & \geq  \frac{1}{4}\VR^2(\Pi^{x,y,\ell,u_\ell,u_\ell \oplus s}  , \Pi^{x,y,\ell,u_\ell \oplus s \oplus t, u_\ell \oplus s } )   & \hspace{-1em} \text{(\fullref{fact:relation-inf})} \\
\Rightarrow 0.1> \sqrt{0.004} &>  \VR(\Pi^{x,y,\ell,u_\ell,u_\ell \oplus s}  , \Pi^{x,y,\ell,u_\ell \oplus s \oplus t, u_\ell \oplus s } )   .
\end{align*}
This implies that the worst case error of protocol $\Pi$ is at least $0.5 - 0.1 > 1/3$. This is a contradiction because $\Pi$ was assumed to have error less than $1/3$.
\end{proof}


\section{Randomized lower bounds for \autoref{thm:q-vs-r} and \autoref{thm:qe-vs-r}}
\label{sec:R_lower}
In this section we will prove the randomized communication complexity lower bounds needed for the separation against 
bounded-error quantum communication complexity of \autoref{thm:q-vs-r} and the separation against exact quantum 
communication complexity separation of 
\autoref{thm:qe-vs-r}.  We start by giving a high-level overview of the whole proof, before showing the randomized 
lower bounds in this section, and the quantum upper bounds in the next section.

In both cases, we begin in the world of query complexity.  The starting point of \autoref{thm:q-vs-r} is
the partial function 
\begin{equation}
\label{eq:str}
\STR \coloneqq \SIMON_n \circ \OR_n \circ \AND_n.
\end{equation}
Here $\SIMON_n$ is a certain property testing version of Simon's problem \cite{Sim97} introduced in \cite{BFNR08} 
(defined in \autoref{sec:simon} below) which witnesses a large gap between its randomized $\Rdt(\SIMON_n) = 
\Omega(\sqrt{n})$ and quantum $\Qdt(\SIMON_n) = O(\log n \log \log n)$ query complexities.  
As shown in~\cite[\S3]{ABK16}, the 
cheat sheet 
version of  $\STR$ witnesses an $\tO(n)$-vs-$\tOmega(n^{2.5})$ separation between quantum and randomized query 
complexities.  (Actually, they use $\FORR$~\cite{AA15} in place of $\SIMON$, but we find it more convenient to work with 
$\SIMON$.)  

We follow a similar approach to the query case and first ``lift'' $\STR$ to a partial two-party function 
$F=\STR \circ \IP_b$ by 
composing it with $\IP_b$, the two-party inner-product function on $b=\Theta(\log n)$ bits per party.  Our final function 
achieving the desired separation will be a $(F, \G)$-lookup function $F_\G$ where $\G$ forms a consistent family of 
nontrivial XOR functions, described in \sec{quantum-ub}.  

By~\autoref{thm:ICCS}, to show a lower bound on the 
randomized communication complexity of $F_\G$, it suffices to show a randomized communication lower bound on 
$F=\STR \circ \IP_b$.  To do this, we use the query-to-communication lifting theorem of \cite{GLM+15}, 
which requires us to show a lower bound on the approximate \emph{conical junta degree} of $\STR$ (see \autoref{sec:juntadegree} for definitions). For this, we would like to show that each of 
$\SIMON_n$, 
$\OR_n$, $\AND_n$ individually have large junta degree and then invoke a \emph{composition theorem} for conical junta 
degree~\cite{GJ16}.  Because of certain technical conditions in the composition theorem, we will actually need to show 
a lower bound on the functions $\SIMON_n$, 
$\OR_n$, $\AND_n$ in a slightly stronger model, giving dual certificates for these functions of a special form.  This will 
prove \autoref{thm:lifted-lb}. 

The other half of \autoref{thm:q-vs-r} is a quantum upper bound on the communication complexity of $F_\G$, for a 
particular family of functions $\G$.  We 
need that the family $\G$ is consistent 
outside $F$, and that each function $G_i \in G$ 
has $\Q(G_i) = \tO(n)$.  We do this in a way very analogous to the cheat sheet framework: each function 
$G_i(\x,u,\y,v)$ evaluates to $1$ if and only if $u \oplus v$ verifies that $(x_i, y_i) \in \dom(F)$ for all $i \in [c]$.  The 
players check this using a distributed version of Grover search.  The formal definition of $F_\G$ and the upper bound 
on its quantum communication complexity appear in \autoref{sec:quantum-ub}.

For the separation between randomized and exact quantum communication complexity, we begin in the setting of 
query complexity with the partial function
\begin{equation}
\label{eq:ptr}
\pTR_{n,m} \coloneqq \pOR_n \circ \AND_m,
\end{equation}
where we eventually choose $m = \Theta(\sqrt{n})$ and $\pOR_n$ is a promise version of the $\OR_n$ function
\[
\pOR_n(x) = 
\begin{cases}
0 & \text{ if } |x| = 0 \\
1 & \text{ if } |x| = 1 \\
* & \text{ otherwise}
\end{cases} \enspace .
\]
The exact quantum query complexity of $\pTR$ is $O(\sqrt{n}m)$, while its randomized query complexity is 
$\Omega(nm)$.  As shown in~\cite[\S6.4]{ABK16}, taking $m = \Theta(\sqrt{n})$, the 
cheat sheet version of $\pTR$ is a total function that witnesses an $\tO(n)$ versus $\Omega(n^{3/2})$ separation 
between randomized and exact quantum query complexities.  

We again lift $\pTR$ to a partial two-party function $H \coloneqq \pTR \circ \IP_b$ by composing it with $\IP_b$.%
\footnote{For this separation one could alternatively use $\pOR_n \circ \Disj_m$, where 
$\Disj_m(\x,\y) \coloneqq \wedge_{i=1}^m \neg x_i \vee \neg y_i$.  Jayram et al.~\cite{JKS03} show an $\Omega(mn)$ randomized lower bound for $\OR\circ\Disj_m$ and a closer look at their proof, especially at the key lemma of \cite{BJKS04} that is used (which we reproduce in \autoref{fact:AND}), shows that the argument also works for the promise version. However, to give a unified exposition, we prefer to work with $\pTR \circ \IP_b$ here.}
The final function for the separation of \autoref{thm:qe-vs-r} will be a $(H,\T)$-lookup function for a 
particular family of XOR functions $\T$ that consistent outside of $H$ and defined in a similar fashion to the 
family $\G$ described above.

The main theorem of this section is the following.
\begin{theorem} \label{thm:lifted-lb}
Let $m \le n$ and let $b \ge t \log n$ for a sufficiently large constant $t$.  Then 
\begin{align*}
\R(\STR\circ \IP_b) = \tOmega(n^{2.5}) \qquad \textrm{and} \qquad 
\R(\pTR_{n,m} \circ \IP_b) = \Omega(nm).
\end{align*}
\end{theorem}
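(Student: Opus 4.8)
The strategy is to apply the query-to-communication lifting theorem for approximate conical junta degree of G\"o\"os, Lovett, Meka, Watson, and Zuckerman \cite{GLM+15}. That theorem (cited here as \thm{simu}) says that for a suitable gadget $\IP_b$ with $b\ge t\log n$, one has $\R(f\circ \IP_b) = \tOmega(\deg_{1/3}(f))$ (up to logarithmic factors), so it suffices to prove
\[
\deg_{1/3}(\STR) = \tOmega(n^{2.5}) \qquad\text{and}\qquad \deg_{1/3}(\pTR_{n,m}) = \Omega(nm).
\]
Since $\STR = \SIMON_n\circ\OR_n\circ\AND_n$ and $\pTR_{n,m}=\pOR_n\circ\AND_m$ are compositions, the idea is to obtain the junta-degree lower bound of the composed function from junta-degree lower bounds on the individual pieces via a composition theorem for conical junta degree (in the style of G\"o\"os and Jayram \cite{GJ16}). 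So the work splits into: (i) establish strong dual witnesses (dual certificates of a special form required by the composition theorem) for each of $\SIMON_n$, $\OR_n$, $\AND_n$, $\pOR_n$ individually; (ii) feed these into the composition theorem to conclude $\deg_{1/3}(\STR)=\tOmega(\Rdt(\SIMON_n)\cdot n\cdot n)=\tOmega(\sqrt n\cdot n^2)=\tOmega(n^{2.5})$ and $\deg_{1/3}(\pTR_{n,m})=\Omega(n\cdot m)$; (iii) invoke \thm{simu} to lift.

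\textbf{Key steps in order.} First I would recall the LP-dual formulation of approximate conical junta degree: a lower bound of $d$ is witnessed by a feasible dual solution, i.e.\ roughly a function $\psi$ with $\sum_x\psi(x)f(x)$ large while $\sum_x \psi(x) C(x)$ is small for every conjunction (term) $C$ of width at most $d$. For $\OR_n$ and $\AND_n$ the required dual objects are essentially the standard ones coming from the $\Omega(n)$ approximate-degree-type bounds, and for $\pOR_n$ one uses the promise structure: inputs of Hamming weight $0$ versus $1$ already force width $\Omega(n)$ because any narrow term is fooled. The more substantial ingredient is a dual witness for $\SIMON_n$ certifying $\deg_{1/3}(\SIMON_n)=\tOmega(\sqrt n)$; here I would either reproduce or cite the randomized query lower bound $\Rdt(\SIMON_n)=\Omega(\sqrt n)$ from \cite{BFNR08} and extract from it (or from the distribution over hard instances) a dual feasible solution of the special form demanded by the composition theorem. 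Second, I would state the precise form of the composition theorem for conical junta degree (with its technical hypotheses on the shape of the dual certificates), check that the witnesses from step one satisfy those hypotheses, and apply it twice for $\STR$ (composing $\SIMON_n$ with $\OR_n$, then with $\AND_n$) and once for $\pTR_{n,m}$. Third, I would invoke the lifting theorem \thm{simu} with the inner-product gadget on $b\ge t\log n$ bits, noting $m\le n$ ensures the gadget size is polynomially related to the input length so that the logarithmic loss is absorbed into the $\tOmega$, yielding exactly the two claimed bounds. This proves \autoref{thm:lifted-lb}.

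\textbf{Main obstacle.} The hard part is step (i) for $\SIMON_n$: constructing a dual witness for the approximate conical junta degree of the property-testing version of Simon's problem that additionally has the structural properties required to be plugged into the composition theorem. Approximate conical junta degree is not well-behaved under changing the error parameter, and $\SIMON_n$ is not a simple symmetric function, so one cannot just quote a clean approximate-degree bound; instead one must work with the explicit hard input distribution from \cite{BFNR08} (or \cite{BFNR08}-style arguments), verify that no low-width conjunction distinguishes the yes- and no-distributions by more than a tiny amount, and massage this into the exact dual form. Handling the \emph{promise} aspect (so that the lifted communication lower bound is valid as a statement about the partial function $\STR\circ\IP_b$, and that composing with $\OR_n\circ\AND_n$ preserves the promise correctly) is the other delicate bookkeeping point, since terms that read outside the promise region must be controlled. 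Everything else---the $\OR$/$\AND$/$\pOR$ dual certificates, the composition bookkeeping, and the final application of \thm{simu}---is comparatively routine once the $\SIMON_n$ witness is in hand.
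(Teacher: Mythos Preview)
Your plan matches the paper's approach exactly: dual certificates for the component functions, the composition theorem of \cite{GJ16}, and then the lifting theorem of \cite{GLM+15}. You even correctly identify the $\SIMON_n$ certificate as the nontrivial piece (the paper extracts it from the hard distributions in \cite{BFNR08}).

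One clarification on the error-parameter issue you flag. The composition produces for $\STR$ a one-sided $(\Theta(n^{2.5}),\Theta(n^3))$ certificate, and the degree lower bound from such a certificate requires $\epsilon\beta \le O(\alpha)$, forcing $\epsilon = O(1/\sqrt n)$; so you do not get $\deg_{1/3}(\STR)=\tOmega(n^{2.5})$ directly. The paper does not try to fix this at the query level. Instead it applies the lifting theorem at this small $\epsilon$ (the theorem gives $\R_\epsilon(f\circ\IP_b)\ge\Omega(\deg_\epsilon(f)\cdot b)$ with matching error on both sides, requiring $b=\Theta(\log(n/\epsilon))$, which is still $\Theta(\log n)$ here), and only then uses $\R_{1/3}\ge \tOmega(\R_{1/n})$ via error reduction on the communication side. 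Similarly, for $\pTR_{n,m}$ the certificate one actually builds is for $\neg\pTR_{n,m}$ (since $\neg\pOR_n$ is what admits the nice one-sided certificate), and one uses $\R(f)=\R(\neg f)$ at the end. These are minor bookkeeping points rather than new ideas, but your write-up should route the error handling through $\R_\epsilon$ rather than $\deg_{1/3}$.
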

The plan for both of these lower bounds is similar, as outlined in Figure~\ref{fig:lower}.   
\begin{figure}[tbh]
        \label{fig:lower}
	\centering 
	\begin{tikzpicture}[scale=1.025]
	\pgfmathsetmacro{\left}{0};
	\pgfmathsetmacro{\right}{13};
	\pgfmathsetmacro{\up}{9.4};
	\pgfmathsetmacro{\down}{0.0};
	\pgfmathsetmacro{\middle}{(\left+\right)/2)};
	\pgfmathsetmacro{\sep}{2.8};
	\draw (\left,\down+.5) rectangle (\right,\up+1.2);
	\node (simon) at (\middle-3.25,\up) [draw, align=center]{\autoref{lem:simon_cert} \\ one-sided $(\sqrt{n}/2,0)$ junta \\ 
	certificate for $\SIMON_n$};
	\node (or) at (\middle+3.25,\up) [draw, align=center]{\autoref{lem:or_cert}\\two-sided $(n/2,n)$ junta \\ certificate for $\pOR_n$};
	\node[draw, align=center] (STR) at (\middle-3.25,\up-\sep) {$\deg_\eps(\STR) = \Omega(n^{2.5})$ \\ $\STR \coloneqq \SIMON_n \circ \OR_n \circ \AND_n$};
	\node[draw, align=center] (PTR) at (\middle+3.25,\up-\sep) {$\deg_\epsilon(\pTR) = \Omega(nm)$\\$\pTR_{n,m} \coloneqq \pOR_n \circ \AND_m$};
	\node[draw, align=center] (F) at (\middle-3.25,\up-2*\sep) {$R_\epsilon(\STR \circ \IP_b) = \Omega(n^{2.5})$\\$F \coloneqq \STR \circ \IP_b$};
	\node[draw, align=center] (H) at (\middle+3.25,\up-2*\sep) {$R_\epsilon(\pTR_{n,m} \circ \IP_b) = \Omega(nm)$\\$H \coloneqq \pTR_{n,m} \circ \IP_b$};
	\node[draw] (FG) at (\middle-3.25,\up-2.9*\sep) {$R_\epsilon(F_\G) = \tOmega(n^{2.5})$};
	\node[draw] (HG) at (\middle+3.25,\up-2.9*\sep) {$R_\epsilon(H_\G) = \tOmega(nm)$};
	\node[align=center] (lift) at (\middle,\up-1.5*\sep) {\autoref{thm:simu} \\ (Lifting theorem \cite{GLM+15})};
	\node[align=center] (lift) at (\middle,\up-2.5*\sep) {\autoref{thm:ICCS}\\(Lookup lower bound)};
	\draw [->] (simon) -- (STR);
	\draw [->] (or) -- (STR) node[midway,fill=white, align=center]{\autoref{thm:composition}\\ (Composition theorem \cite{GJ16})};
	\draw [->] (or) -- (PTR);
	\draw [->] (STR) -- (F);
	\draw [->] (PTR) -- (H);
	\draw [->] (F) -- (FG);
	\draw [->] (H) -- (HG);	
	\end{tikzpicture}
	\caption{Overview of the randomized communication complexity lower bounds for \autoref{thm:q-vs-r} and \autoref{thm:qe-vs-r}.}
\end{figure}
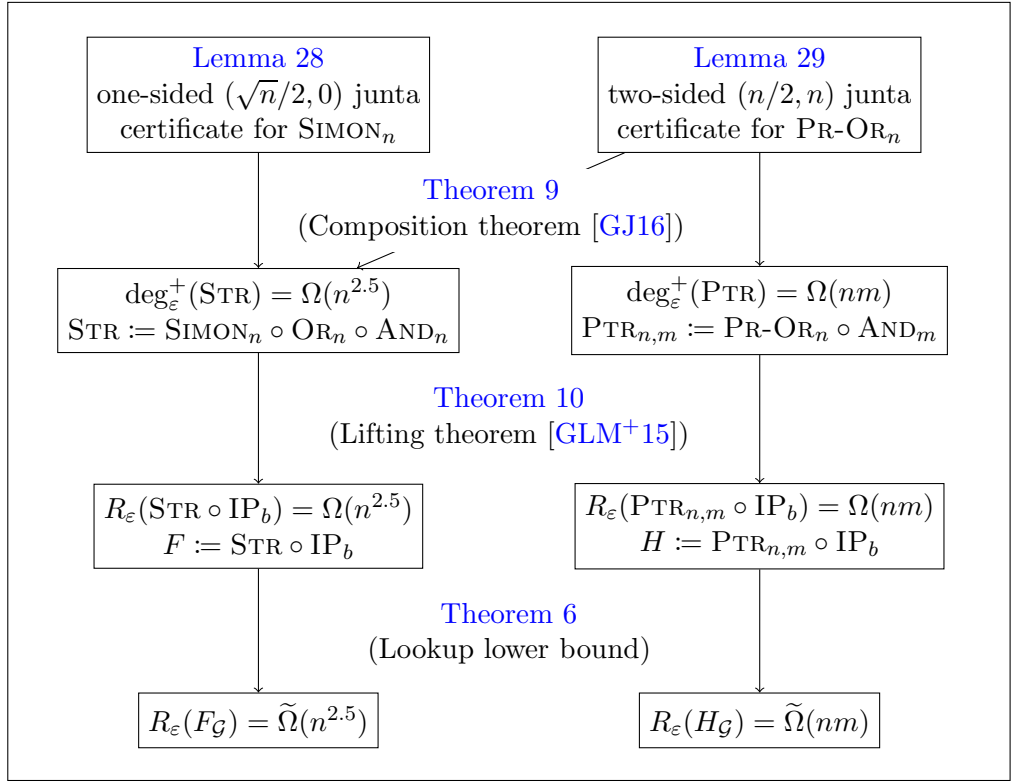
Following this outline, our first task in proving \autoref{thm:lifted-lb} is to give junta certificates for the component 
functions $\SIMON_n, \OR_n, \pOR_n, \AND_n$ that make up $\STR$ and $\pTR$.  This is done in the next subsection.

\subsection{Conical junta degree}
\label{sec:juntadegree}
A \emph{conical junta} $h$ is a nonnegative linear combination of conjunctions; more precisely, $h=\sum_C w_C C$ where $w_C\geq 0$ and the sum ranges over all conjunctions $C\colon\{0,1\}^n\to\{0,1\}$ of literals (input bits or their negations). For a conjunction $C$ we let $|C|$ denote its width, i.e., the number of literals in $C$.  The \emph{conical 
junta degree} of $h$, denoted $\deg(h)$, is the maximum width of a conjunction $C$ with $w_C>0$. Any conical junta $h$ naturally computes a nonnegative function $h\colon\{0,1\}^n\to\mathbb{R}_{\geq 0}$.  For a partial boolean function $f\colon\{0,1\}^n\to\{0,1,*\}$ we say that $h$ \emph{$\epsilon$-approximates} $f$ if and only if $|f(x) - h(x)|\leq \epsilon$ for all inputs $x\in\dom(f)$. The \emph{$\epsilon$-approximate conical junta degree of $f$}, denoted $\deg_\epsilon(f)$, is defined as the minimum degree of a conical junta $h$ that $\epsilon$-approximates $f$. (Note: $\deg_\epsilon(f)$ is also known as the query complexity analogue of the one-sided smooth rectangle bound~\cite{JK10}.)

In this subsection we establish the following lower bound.
\begin{theorem} \label{thm:junta_deg}
Let $\STR$ and $\pTR_{n,m}$ denote the partial functions defined in \eq{str} and \eq{ptr}. Then
\begin{align*}
\deg_{1/(64\sqrt{n})}(\STR) = \Omega(n^{2.5}) \qquad \textrm{and} \qquad 
\deg_{1/16}(\neg \pTR_{n,m}) = \Omega(nm).
\end{align*}
\end{theorem}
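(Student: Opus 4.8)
The plan is to obtain both lower bounds by constructing explicit dual witnesses for the linear program that defines $\deg_\eps$ — so-called \emph{junta certificates} — for the component functions, and then combining them through the conical-junta composition theorem of \cite{GJ16} (\thm{composition}). Recall that, by LP duality, $\deg_\eps(f)\ge d$ is witnessed by a signed measure $\Gamma$ on $\dom(f)$ for which $\langle\Gamma,f\rangle$ exceeds $\eps\,\|\Gamma\|_1$ while $\sum_{x\,:\,C(x)=1}\Gamma(x)\le 0$ for every conjunction $C$ of width less than $d$; a ``one-sided'' certificate has $\Gamma$ supported (up to a fixed uniform background term) on a single preimage $f^{-1}(b)$, while a ``two-sided'' one uses both preimages. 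The content of \thm{composition} is that such certificates multiply under function composition: from a one-sided certificate for the outer function $f$ and a two-sided certificate for the inner gadget $g$ one extracts a certificate for $f\circ g$ whose degree is, up to constants, the product of the two degrees, with an error parameter controlled by those of the pieces.

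The first and easier task is to certify the ``control'' functions. The disjunction $\OR_n$, its promise version $\pOR_n$, and the conjunction $\AND_m$ all admit junta certificates of degree $\Theta(n)$ (resp.\ $\Theta(m)$) by a direct argument (\lem{or_cert}): one writes down a small explicit signed measure — morally, a unit mass on a fixed constant-weight input against a uniform spread over the inputs of the adjacent Hamming weight — and checks by inspection that it is orthogonal to every narrow conjunction; for $\pOR_n$ this is arranged to give a two-sided certificate, and transposing $0\leftrightarrow 1$ yields the $\AND$ case. The main obstacle is \lem{simon_cert}: a one-sided, zero-error junta certificate for the property-testing version $\SIMON_n$ of Simon's problem introduced in \cite{BFNR08}, of degree $\sqrt n/2$. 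Here I would exploit the rigid structure of $\SIMON_n$ — positive instances carry a hidden $\F_2$-period, while a random instance is, with overwhelming probability, far from every instance having one — to build a distribution on positive instances whose restriction to any fewer than $\sqrt n/2$ coordinates of the truth table is \emph{exactly} uniform, hence indistinguishable from the corresponding restriction of a uniformly random negative instance (a birthday-type argument: fewer than $\sqrt n/2$ cells of a random $2$-to-$1$ table are unpaired, so independent and uniform). Such $(\sqrt n/2)$-wise independence is precisely a zero-error, degree-$(\sqrt n/2)$ one-sided junta certificate. This step is where essentially all the work lies; the underlying combinatorics parallels the classical $\Rdt(\SIMON_n)=\Omega(\sqrt n)$ bound of \cite{BFNR08}, but it must be recast in the dual junta language demanded by \thm{composition}, and the one-sidedness and zero-error features have to be extracted with care.

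With the pieces assembled, the theorem follows by composition. For $\STR=\SIMON_n\circ\OR_n\circ\AND_n$ I would feed \lem{simon_cert} and \lem{or_cert} through \thm{composition} — first certifying $\OR_n\circ\AND_n$ with degree $\Omega(n^2)$, then placing $\SIMON_n$ on top — obtaining $\deg_\eps(\STR)=\Omega(\sqrt n\cdot n\cdot n)=\Omega(n^{2.5})$; the admissible error comes out polynomially small, fixed to $1/(64\sqrt n)$ by the constants, since the zero-error $\SIMON_n$ layer sits outermost and \thm{composition}'s error accounting attaches an overhead to it. For $\neg\pTR_{n,m}=(\neg\pOR_n)\circ\AND_m$ I would negate the two-sided $\pOR_n$ certificate to get one for $\neg\pOR_n$ of degree $\Omega(n)$ and compose it with the degree-$m$ certificate for $\AND_m$, giving $\deg_{1/16}(\neg\pTR_{n,m})=\Omega(nm)$ with a constant error budget, both surviving certificates having constant tolerance. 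The only remaining point of care is to check at each application of \thm{composition} that its one-sided/two-sided hypotheses hold — one-sided outer, two-sided inner — which is exactly why \lem{simon_cert} is proved one-sided and \lem{or_cert} two-sided.
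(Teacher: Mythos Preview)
Your high-level plan is exactly the paper's: build junta certificates for the pieces via \lem{simon_cert} and \lem{or_cert}, compose through \thm{composition}, and read off the degree bound using \lem{lb-from-certs}. The shape of the argument is right, and for $\OR_n$/$\AND_n$ your description is essentially the paper's construction.

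There is, however, a real gap in your treatment of $\SIMON_n$. First, you have the sides swapped: in the paper's convention (the negation of the \cite{BFNR08} function), $1$-instances are \emph{far from periodic} and $0$-instances are periodic. This matters, because the required inequality
\[
\langle D_0,C\rangle \ge (1-|C|/\alpha)\,\langle D_1,C\rangle
\]
is genuinely one-sided and fails if you interchange the roles of $D_0$ and $D_1$ (see the remark preceding \autoref{lem:simon_balanced}). Second, your ``exactly uniform'' claim is false: the distribution $D_0$ on periodic Boolean functions is \emph{not} $(\sqrt n/2)$-wise independent. Conditioned on the hidden period $y$ avoiding all pairwise differences of the queried coordinates the bits are uniform, but unconditionally there is a $\Theta(|C|^2/n)$ probability of a collision that correlates two of the bits. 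The certificate works not because the marginals match exactly, but because the approximate inequality above suffices for feasibility~\eqref{eq:feasibility} with $\hat\Psi_1=0$; this is the content of \autoref{lem:simon_balanced} and the computation in the proof of \lem{simon_cert}. (Relatedly, the function here is Boolean-valued with period $z(x+y)=z(x)$, not a $2$-to-$1$ table.)

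Two smaller points. You cannot get a \emph{two-sided} certificate for $\pOR_n$ from the $\OR_n$ construction: the distribution $D_2$ on weight-$2$ inputs lies outside $\dom(\pOR_n)$, so only $\Psi_0,\hat\Psi_0$ (supported on weights $0,1$) survive, yielding the one-sided $(n/2,0)$ certificate for $\neg\pOR_n$ stated in \lem{or_cert}. This is all you need, since the outer function in \thm{composition} only has to be one-sided. Finally, your account of why $\eps=1/(64\sqrt n)$ appears is off: the polynomial error is forced by \lem{lb-from-certs} through the ratio $\alpha/\beta$, and it is the \emph{inner} $\OR_n\circ\AND_n$ layers that contribute the large $\beta=\Theta(n^3)$, not any overhead attached to the $\SIMON_n$ layer.
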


To prove \autoref{thm:junta_deg} we use a composition theorem for conical junta degree due to \cite{GJ16}.  
For example, for the first statement we would ideally like to conclude the $\Omega(n^{2.5})$ lower bound from the fact that $\SIMON_n$, $\OR_n$, $\AND_n$ (and some of their negations) have approximate conical junta degrees 
$\Omega(\sqrt{n})$, $\Omega(n)$, $\Omega(n)$, respectively. These facts are indeed implicit in existing literature; for example:
\begin{itemize}[itemsep=-2pt]
\item The result of~\cite{BFNR08} recorded in \autoref{lem:simon_balanced} implies $\deg_{1/3}(\SIMON_n)\geq\Omega(\sqrt{n})$.
\item Klauck~\cite{Klauck10} has proved even a communication analogue of $\deg_\eps(\OR_n)\geq\Omega(n)$. (For an exposition of the query version, see, e.g., \cite[\S4.1]{GJPW15}.)
\item Jain and Klauck~\cite[\S3.3]{JK10} proved that $\deg_{1/16}(\OR_n \circ \AND_n)\geq\Omega(n^2)$.
\end{itemize}
Unfortunately, the composition theorem from \cite{GJ16} assumes some regularity conditions from the \emph{dual certificates} witnessing these lower bounds. 
(In fact, without regularity assumptions, a composition theorem for a related ``average conical 
junta degree'' measure is known to fail! See~\cite[\S3]{GJ16} for a discussion.) 
We now review the composition theorem before constructing the special dual certificates in 
Section~\ref{sec:simon} and~\ref{sec:or_cert}.

\paragraph{Composition theorem.}
We recall the necessary definitions from \cite{GJ16} in order to state the composition theorem precisely. The theorem was originally phrased for total functions, but the result holds more generally for partial functions $f$ provided the dual certificates are supported on the domain of $f$. The following definitions make these provisions.

A function $\Psi\colon\{0,1\}^n\to\mathbb{R}$ is \emph{balanced} if $\sum_x \Psi(x)=0$. Write $X_{\geq 0}\coloneqq\max\{0,X\}$ for short. A \emph{two-sided $(\alpha,\beta)$ junta certificate} for a partial function $f\colon\{0,1\}^n\to\{0,1,*\}$ consists of four balanced functions $\{\Psi_v,\hat{\Psi}_v\}_{v=0,1}$ satisfying the following:
\begin{itemize}[noitemsep]
\item \emph{Special form:} There exist distributions $D_1$ over $f^{-1}(1)$ and $D_0$ over $f^{-1}(0)$ such that $\Psi_1 = \alpha\cdot(D_1 - D_0)$ and $\Psi_0 = -\Psi_1$. Moreover, $\hat{\Psi}_v$ is supported on $f^{-1}(v)$.
\item \emph{Bounded 1-norm:} For each $v\in\{0,1\}$ we have $\|\hat{\Psi}_v\|_1\leq \beta$.
\item \emph{Feasibility:} For all conjunctions $C$ and $v\in\{0,1\}$,
\begin{equation}\label{eq:feasibility}
\langle \Psi_v,C\rangle_{\geq 0} + \langle\hat{\Psi}_v,C\rangle\leq|C|\langle D_v,C\rangle.
\end{equation}
\end{itemize}
We also define a \emph{one-sided $(\alpha,\beta)$ junta certificate} for $f$ as a pair of balanced functions $\{\Psi_1,\hat{\Psi}_1\}$ that satisfies the above conditions but only for $v=1$.

\begin{theorem}[Composition theorem~\cite{GJ16}] \label{thm:composition}
Suppose $f:\{0,1,*\}^n \rightarrow \{0,1\}$ admits a two-sided (resp.~one-sided) $(\alpha_1,\beta_1)$ junta certificate, and $g$ admits a two-sided $(\alpha_2,\beta_2)$-junta certificate. Then $f\circ g$ admits a two-sided (resp.~one-sided) $(\alpha_1\alpha_2,\beta_1+n\beta_2)$ junta certificate.
\end{theorem}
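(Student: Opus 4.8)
The plan is to construct the composed certificate for $f\circ g$ explicitly out of the two given ones. Write $f$'s certificate as $\{\Psi_v,\hat\Psi_v\}$ with distributions $D_1,D_0$ (so $\Psi_1=\alpha_1(D_1-D_0)$, $\Psi_0=-\Psi_1$), and $g$'s certificate as $\{\Phi_v,\hat\Phi_v\}$ with distributions $E_1,E_0$ (so $\Phi_1=\alpha_2(E_1-E_0)$, $\Phi_0=-\Phi_1$). For $v\in\{0,1\}$ define a distribution on $(\{0,1\}^m)^n$ by $F_v\defeq\sum_a D_v(a)\,\bigotimes_{i=1}^n E_{a_i}$, i.e.\ sample $a\sim D_v$ and then, independently over blocks, $z_i\sim E_{a_i}$. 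Since $E_{a_i}$ is supported on $g^{-1}(a_i)$ and $D_v$ on $f^{-1}(v)$, one has $(g(z_1),\dots,g(z_n))=a\in f^{-1}(v)$ on the support, so $F_v$ is supported on $(f\circ g)^{-1}(v)$. Put $\Psi'_1\defeq\alpha_1\alpha_2(F_1-F_0)$ and $\Psi'_0\defeq-\Psi'_1$; these are balanced because $F_1,F_0$ are probability distributions, and $\langle\Psi'_1,h\rangle\approx\alpha_1\alpha_2$ for any $h$ approximating $f\circ g$ (since $f\circ g\equiv v$ on the support of $F_v$), which is the source of the advantage $\alpha_1\alpha_2$. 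What remains is to define balanced error functions $\hat\Psi'_v$, supported on $(f\circ g)^{-1}(v)$, with $\|\hat\Psi'_v\|_1\le\beta_1+n\beta_2$, and to verify feasibility.

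The mechanism is the block factorization of conjunctions: any conjunction on the composed domain is $C=\bigwedge_{i\in S}C_i$ with $C_i$ acting on the variables of block $i$ and $|C|=\sum_{i\in S}|C_i|$. Writing $\langle E_{a_i},C_i\rangle=\langle E_0,C_i\rangle+a_i\langle E_1-E_0,C_i\rangle$ and expanding the product $\prod_{i\in S}\langle E_{a_i},C_i\rangle$ multilinearly in the bits $a_i$ gives
\[
\langle\Psi'_1,C\rangle \;=\; \alpha_2\sum_{T\subseteq S}\Big(\prod_{i\in T}\langle E_1-E_0,C_i\rangle\Big)\Big(\prod_{i\in S\setminus T}\langle E_0,C_i\rangle\Big)\big\langle\Psi_1,\textstyle\bigwedge_{i\in T}a_i\big\rangle ,
\]
and the analogous identity (with $\Psi_1$ replaced by $D_1$ and no $\alpha_2$) expresses $\langle F_1,C\rangle$. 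Thus every term is controlled by the feasibility data of $f$, evaluated on the positive-literal conjunction $\bigwedge_{i\in T}a_i$ of width $|T|\le n$, and of $g$, evaluated on the blockwise conjunctions $C_i$. To carry this out cleanly I would process the blocks of $C$ one at a time in a hybrid/telescoping fashion, invoking the feasibility of $g$ (namely $\langle\Phi_{v},C_i\rangle_{\ge 0}+\langle\hat\Phi_{v},C_i\rangle\le|C_i|\langle E_{v},C_i\rangle$, using $v=1$ or, via $\Phi_0=-\Phi_1$, $v=0$ depending on the sign of $\langle\Phi_1,C_i\rangle$) to trade each factor $\langle E_{\cdot},C_i\rangle$ for a factor $|C_i|$ at the cost of a $\hat\Phi$-term, and finally invoking the feasibility of $f$ on $\bigwedge_{i\in T}a_i$ to handle the $\langle\Psi_1,\cdot\rangle$ factor. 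The $C$-independent error function $\hat\Psi'_1$ would then be the sum of a ``lift'' of $\hat\Psi_1$, namely $\sum_a\hat\Psi_1(a)\bigotimes_i E_{a_i}$ (of $1$-norm $\le\|\hat\Psi_1\|_1\le\beta_1$ since the $E_{a_i}$ are distributions, and manifestly balanced and supported on $(f\circ g)^{-1}(1)$), together with $n$ correction terms, one per block $i\in[n]$, each assembled from $\hat\Phi_0,\hat\Phi_1$ on block $i$ tensored with distributions on the remaining blocks and of $1$-norm $\le\beta_2$, for the claimed total $\beta_1+n\beta_2$.

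The main obstacle is making the feasibility inequality $\langle\Psi'_v,C\rangle_{\ge 0}+\langle\hat\Psi'_v,C\rangle\le|C|\langle F_v,C\rangle$ go through exactly. The positive-part operator $(\cdot)_{\ge 0}$ is nonlinear and does not distribute over the multilinear expansion or the block factorization, so one cannot simply substitute the term-by-term bounds; careful sign bookkeeping is required, and this is precisely why the definition is phrased with the special ``$\langle\Psi_v,C\rangle_{\ge 0}+\langle\hat\Psi_v,C\rangle$'' shape, and why $g$ must be two-sided even when $f$ is only one-sided (both $\langle E_1,C_i\rangle$ and $\langle E_0,C_i\rangle$, with either sign attached, enter the expansion). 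A second point needing care is that the sum over $T\subseteq S$ has $2^{|S|}$ terms, yet $\hat\Psi'_1$ must have $1$-norm only $\beta_1+n\beta_2$; the telescoping order of block processing is what collapses these, charging each of the $n$ blocks at most $\beta_2$ once and folding all remaining error into the single lifted copy of $\hat\Psi_1$. The remaining verifications are routine: $F_v$ and each piece of $\hat\Psi'_v$ live on the correct preimage, balancedness is preserved under tensoring with probability distributions, and restricting the entire argument to $v=1$ yields the one-sided statement (with $g$ still used two-sidedly).
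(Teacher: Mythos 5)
The paper does not prove this statement: it is imported verbatim from G\"o\"os--Jayram~\cite{GJ16} (the only adaptation the paper makes is the observation, stated just before the theorem, that the argument extends to partial $f$ when the dual certificates are supported on $\dom(f)$). So there is no in-paper proof to compare against, and your attempt must be judged on its own.

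Your construction of the lifted objects is the right one and matches what \cite{GJ16} does: $F_v=\sum_a D_v(a)\bigotimes_i E_{a_i}$, $\Psi'_1=\alpha_1\alpha_2(F_1-F_0)$, and $\hat\Psi'_1$ equal to the lift $\sum_a\hat\Psi_1(a)\bigotimes_i E_{a_i}$ plus per-block corrections built from $\hat\Phi_0,\hat\Phi_1$. The support, balancedness, special-form, and $1$-norm checks you give are all correct, and the multilinear block expansion of $\langle\Psi'_1,C\rangle$ is the right identity to start from. However, there is a genuine gap at exactly the point you flag as ``the main obstacle'': the feasibility inequality $\langle\Psi'_v,C\rangle_{\geq 0}+\langle\hat\Psi'_v,C\rangle\leq|C|\langle F_v,C\rangle$ is the entire technical content of the theorem, and you do not carry it out. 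Two things remain unresolved. First, your per-block correction terms are underspecified --- ``$\hat\Phi_0,\hat\Phi_1$ on block $i$ tensored with distributions on the remaining blocks'' does not say which signed combination and which distributions, and the inequality must hold for \emph{every} conjunction $C$ with these terms fixed in advance, so the choice cannot be deferred to the telescoping. Second, the claim that the telescoping ``collapses'' the $2^{|S|}$ terms of the multilinear expansion while charging each block only once, in the presence of the nonlinear $(\cdot)_{\geq 0}$ and of sign changes in $\langle E_1-E_0,C_i\rangle$ across blocks, is asserted rather than demonstrated; this is precisely where the argument in \cite{GJ16} requires a careful induction with sign bookkeeping (and where naive versions of the statement fail, as the paper itself warns when discussing average conical junta degree). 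As written, your proposal is a correct and well-motivated plan, but the step that makes the theorem true is missing.
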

\begin{lemma}[Junta degree lower bounds from certificates~\cite{GJ16}] \label{lem:lb-from-certs}
Suppose $f$ admits a one-sided $(\alpha,\beta)$ junta certificate. Then $\deg_\epsilon(f)\geq\Omega(\alpha)$ provided $\epsilon<1/4$ and $\epsilon\beta\leq \alpha/4$.
\end{lemma}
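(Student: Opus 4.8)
The plan is a weak-duality argument: the pair $\{\Psi_1,\hat\Psi_1\}$ serves as a dual object, and I would simply pair $\Psi_1+\hat\Psi_1$ against an arbitrary conical junta $h=\sum_C w_C C$ (with $w_C\geq 0$, the sum over conjunctions $C$) that $\epsilon$-approximates $f$, showing that its degree $d\defeq\deg(h)$ must be $\Omega(\alpha)$. The whole proof amounts to bounding $\langle\Psi_1,h\rangle+\langle\hat\Psi_1,h\rangle$ from both sides.

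For the lower bound, I would use the approximation guarantee together with the ``special form'' of the certificate. Since $\Psi_1=\alpha(D_1-D_0)$ with $D_1$ supported on $f^{-1}(1)$, where $h\geq 1-\epsilon$, and $D_0$ supported on $f^{-1}(0)$, where $0\leq h\leq\epsilon$, we get $\langle\Psi_1,h\rangle=\alpha\bigl(\langle D_1,h\rangle-\langle D_0,h\rangle\bigr)\geq\alpha(1-2\epsilon)$. For the second term, balancedness of $\hat\Psi_1$ gives $\langle\hat\Psi_1,h\rangle=\sum_x\hat\Psi_1(x)(h(x)-1)$; since $\hat\Psi_1$ is supported on $f^{-1}(1)$, where $|h(x)-1|\leq\epsilon$, this is at most $\epsilon\|\hat\Psi_1\|_1\leq\epsilon\beta$ in absolute value. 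Hence $\langle\Psi_1,h\rangle+\langle\hat\Psi_1,h\rangle\geq\alpha(1-2\epsilon)-\epsilon\beta$.

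For the upper bound, I would invoke feasibility conjunction by conjunction. For each $C$ with $w_C>0$ we have $|C|\leq d$ and $\langle D_1,C\rangle\geq 0$, so feasibility, combined with $\langle\Psi_1,C\rangle_{\geq 0}\geq\langle\Psi_1,C\rangle$, yields $\langle\Psi_1,C\rangle+\langle\hat\Psi_1,C\rangle\leq|C|\langle D_1,C\rangle\leq d\langle D_1,C\rangle$. Taking the $w_C$-weighted sum over $C$ gives $\langle\Psi_1,h\rangle+\langle\hat\Psi_1,h\rangle\leq d\langle D_1,h\rangle\leq d(1+\epsilon)$, using again that $h\leq 1+\epsilon$ on $f^{-1}(1)$. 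Putting the two bounds together, $d\geq\bigl(\alpha(1-2\epsilon)-\epsilon\beta\bigr)/(1+\epsilon)$, and the hypotheses $\epsilon<1/4$ and $\epsilon\beta\leq\alpha/4$ make the numerator at least $\alpha/4$ and the denominator at most $5/4$, so $d\geq\alpha/5=\Omega(\alpha)$.

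I do not expect a real obstacle: this is a routine linear-programming-duality computation. The two points to watch are keeping the feasibility inequality correctly oriented after dropping the $(\cdot)_{\geq 0}$ and after forming the weighted sum, and observing that the $\epsilon$-approximation hypothesis is only ever used on $\dom(f)$, which is legitimate because the ``special form'' condition forces $D_1$, $D_0$, and $\hat\Psi_1$ to be supported on $f^{-1}(0)\cup f^{-1}(1)\subseteq\dom(f)$.
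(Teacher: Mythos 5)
Your argument is correct. Note that the paper does not prove this lemma at all---it is imported verbatim from~\cite{GJ16}---so there is no in-paper proof to compare against; your weak-duality computation (pairing $\Psi_1+\hat\Psi_1$ against the conical junta, lower-bounding via the special form and balancedness, upper-bounding via feasibility summed with weights $w_C$) is the standard argument from that reference, and your bookkeeping of the constants, the orientation of the feasibility inequality after dropping $(\cdot)_{\geq 0}$, and the support/domain issue are all handled correctly.
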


\subsubsection{Junta certificate for \texorpdfstring{$\SIMON_n$}{SIMON}}
\label{sec:simon}
The partial function $\SIMON_n\colon\{0,1\}^n\to\{0,1,*\}$ is defined as follows. (For convenience, we actually use the negation of the function defined in~\cite{BFNR08}.) We interpret the input $z\in\{0,1\}^n$ as a function 
$z\colon \mathbb{Z}_2^d\to\{0,1\}$ where $d=\log n$ (we tacitly assume that $n$ is a power of 2, which can be achieved by padding). Call a function $z$ \emph{periodic} if there is some nonzero $y\in \mathbb{Z}_2^d$ such that $z(x+y)=z(x)$ for all $x\in \mathbb{Z}_2^d$. Furthermore, $z$ is \emph{far from periodic} if the Hamming distance between $z$ and every periodic function is at least $n/8$. Then
\[
\SIMON_n(z)\coloneqq
\begin{cases}
1 & \text{if $z$ is far from periodic,} \\
0 & \text{if $z$ is periodic,} \\
* & \text{otherwise.}
\end{cases}
\]
The key properties of this function, proved in~\cite[\S4]{BFNR08}, are:
\begin{itemize}[noitemsep]
\item \emph{Quantum query complexity:} $\Qdt(\SIMON_n)\leq O(\log n\log\log n)$.
\item \emph{Randomized query complexity:} $\Rdt(\SIMON_n)\geq \Omega(\sqrt{n})$.
\end{itemize}
Moreover, it is important for us that the randomized lower bound is robust: it is witnessed by a pair of distributions $(D_1,D_0)$ where $D_i$ is supported on $(\SIMON_n)^{-1}(i)$ such that any small-width conjunction that accepts under $D_1$ also accepts under $D_0$ with comparable probability. We formalize this property in the following lemma; for completeness, we present its proof (which is implicit in~\cite[\S4]{BFNR08}). One subtlety is that the property is \emph{one-sided} in that the statement becomes false if we switch the roles of $D_1$ and $D_0$.
\begin{lemma}[\cite{BFNR08}]
\label{lem:simon_balanced} 
Let $\alpha\coloneqq\sqrt{n}/2$. There exists a pair of distributions $(D_1,D_0)$ where $D_i$ is supported on $\SIMON_n^{-1}(i)$ such that for every conjunction $C$ with $|C|$ literals,
\begin{equation} \label{eq:simon-lb}
\Pr_{z \leftarrow D_0}[C(z)=1] \geq (1- |C|/\alpha)\cdot\Pr_{z \leftarrow D_1}[C(z)=1].
\end{equation}
\end{lemma}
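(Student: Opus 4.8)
\quad The plan is to take $D_0$ to be the natural ``periodic'' distribution and $D_1$ the ``random function'' distribution restricted to far-from-periodic inputs, and to exploit that a uniformly random period interacts destructively with any fixed small-width conjunction only with tiny probability. Concretely, I would let $D_0$ sample $y$ uniformly from the $n-1$ nonzero elements of $\mathbb{Z}_2^d$ and then sample $z$ uniformly among all $y$-periodic functions; every such $z$ is periodic, so $D_0$ is supported on $\SIMON_n^{-1}(0)$. I would let $D_1$ be the uniform distribution on $\B^n$ conditioned on the event $E$ that $z$ is far from periodic, so $D_1$ is supported on $\SIMON_n^{-1}(1)$. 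The first, routine, step is to check that conditioning on $E$ costs almost nothing: for a fixed nonzero $y$ the Hamming distance from a uniform $z$ to the nearest $y$-periodic function equals the number of cosets $\{x,x+y\}$ of $\langle y\rangle$ on which $z$ is non-constant, a $\mathrm{Binomial}(n/2,1/2)$ variable; a Chernoff bound makes this below $n/8$ with probability $2^{-\Omega(n)}$, and a union bound over the $n-1$ choices of $y$ gives $\Pr[\neg E]\le\eta$ with $\eta=2^{-\Omega(n)}$. Hence $\Pr_{z\leftarrow D_1}[C(z)=1]\le 2^{-|S|}/(1-\eta)$, where $S\subseteq\mathbb{Z}_2^d$ is the set of distinct coordinates that $C$ fixes, so $|S|\le|C|$; we may assume $C$ is \emph{consistent} (no coordinate fixed to two values), since otherwise $\Pr_{z\leftarrow D_1}[C(z)=1]=0$ and \eq{simon-lb} is vacuous.

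The heart of the argument is a lower bound on $\Pr_{z\leftarrow D_0}[C(z)=1]$, obtained by conditioning on $y$. Call a pair $\{x,x'\}\subseteq S$ with $x\ne x'$ \emph{bad} if $C$ forces $z(x)\ne z(x')$. If some bad pair satisfies $x+x'=y$, then no $y$-periodic function satisfies $C$, so the conditional acceptance probability is $0$; otherwise the periodicity constraint collapses $C$ into one constraint per coset of $\langle y\rangle$ meeting $S$, i.e.\ into $|S|-p(y)$ independent constraints on the $n/2$ free coset-bits, where $p(y)\ge 0$ counts the pairs of points of $S$ identified by $y$. So in the good case the conditional probability is $2^{-(|S|-p(y))}\ge 2^{-|S|}$. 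Since each fixed bad pair collides with the random period with probability exactly $1/(n-1)$, averaging over $y$ and a union bound give
\[
\Pr_{z\leftarrow D_0}[C(z)=1]\;\ge\;2^{-|S|}\cdot\Pr_y[\,\text{no bad pair collides}\,]\;\ge\;2^{-|S|}\Bigl(1-\binom{|S|}{2}/(n-1)\Bigr)\;\ge\;2^{-|S|}\bigl(1-|C|^2/n\bigr).
\]
I expect this to be the only step with real content: a random period can only \emph{increase} the acceptance probability of a consistent conjunction (collisions merge constraints) and is unlikely to \emph{kill} it (each of the at most $\binom{|C|}{2}$ bad pairs spoils only a $1/(n-1)$ fraction of periods).

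Finally I would divide the two estimates: $\Pr_{z\leftarrow D_0}[C(z)=1]/\Pr_{z\leftarrow D_1}[C(z)=1]\ge(1-\eta)(1-|C|^2/n)$. If $|C|>\alpha=\sqrt n/2$ then $1-|C|/\alpha<0$ and \eq{simon-lb} holds trivially; if $|C|=0$ both sides equal $1$; and if $1\le|C|\le\alpha$ then $|C|^2/n\le|C|/(2\sqrt n)$, so the ratio is at least $1-|C|/(2\sqrt n)-\eta$, which for $n$ large (using $\eta=2^{-\Omega(n)}$) is at least $1-2|C|/\sqrt n=1-|C|/\alpha$, as required. This is essentially a streamlined rendering of the analysis in~\cite[\S4]{BFNR08}.
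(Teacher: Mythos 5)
Your proposal is correct and follows essentially the same route as the paper: the same two distributions (uniform random period for $D_0$, uniform far-from-periodic for $D_1$), with the key point in both cases being that the random period avoids the $\le\binom{|C|}{2}$ difference vectors of the queried positions except with probability $O(|C|^2/n)$. The only (immaterial) differences are that you bound $\Pr_{z\leftarrow D_1}[C(z)=1]$ via a Chernoff-plus-union bound on the conditioning event rather than the paper's counting bound on the number of nearly-periodic functions, and your case analysis of colliding pairs (bad vs.\ merged constraints) is a slightly sharper version of the paper's cruder conditioning on ``$y\notin\mathcal{S}$''.
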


\begin{proof}
Assume $1\leq|C|\leq\alpha$ for otherwise the claim is trivial. Define $U$ and $D_1$ as the uniform distributions on $\{0,1\}^n$ and $\SIMON_n^{-1}(1)$, respectively. Define a distribution $D_0$ on periodic functions $z$ as follows: choose a nonzero period $y \in \mathbb{Z}_2^d$ uniformly at random, and for every edge of the matching $(x, x + y)$ in 
$\mathbb{Z}_2^d$ uniformly choose $b \in \B$ and set $b=z(x)=z(x+y)$.

Intuitively, $C$ can distinguish between $z \leftarrow D_0$ and a uniformly random string only if $C$ queries two input vectors whose difference is the hidden period $y$ that was used to generate $z$. Indeed, let $\mathcal{S}\subseteq\mathbb{Z}_2^d$, $|\mathcal{S}|\leq \binom{|C|}{2}$, be the set of vectors of the form $x+x'$ where $C$ queries both $z(x)$ and $z(x')$. Then, conditioned on the event ``$y\notin \mathcal{S}$'', the bits $C$ reads from $z$ are uniformly random. Hence
\begin{align} \label{eq:d0}
\Pr_{z\leftarrow D_0}[C(z)=1]
&\geq \Pr_{z\leftarrow D_0}[y\notin \mathcal{S}  \wedge  C(z)=1] \notag \\
&= \Pr_{z\leftarrow D_0}[y\notin \mathcal{S}]\cdot\Pr_{z\sim D_0}[C(z)=1\mid y\notin \mathcal{S}] \notag \\
&\geq\textstyle \big(1-\binom{|C|}{2}/(n-1)\big)\cdot 2^{-|C|} \notag \\
&\geq\textstyle \big(1-|C|/\sqrt{n}\big)\cdot 2^{-|C|},
\end{align}
where the last inequality holds because $|C|\leq \alpha$.

Since there are at most $n2^{n/2}$ periodic functions, there are at most $n2^{n/2}\cdot 2^{n H(1/8)} \leq 2^{2n/3}$ functions at Hamming distance $\leq n/8$ from periodic functions (here $H$ is the binary entropy function). Hence the total variation distance between $U$ and $D_1$ is tiny: $\VR(U,D_1) \leq 2^{-\Omega(n)}$. Thus
\begin{equation} \label{eq:d1}
\Pr_{z\leftarrow D_1}[C(z)=1]
\leq 2^{-|C|} + 2^{-\Omega(n)} \leq (1+2^{-\Omega(n)})\cdot 2^{-|C|}.
\end{equation}
Putting \eqref{eq:d0} and \eqref{eq:d1} together we get
\begin{align*}
\Pr_{z\leftarrow D_0}[C(z)=1]
&\geq \big(1-|C|/\sqrt{n})(1-2^{-\Omega(n)})\cdot \Pr_{z\leftarrow D_1}[C(z)=1]\\
&\geq \big(1-2|C|/\sqrt{n})\cdot \Pr_{z \leftarrow D_1}[C(z)=1]. \qedhere
\end{align*}
\end{proof}

With this lemma in hand, we now show that $\SIMON_n$ has the needed junta certificate.
\begin{lemma}
\label{lem:simon_cert}
$\SIMON_n$ has a one-sided $(\sqrt{n}/2,0)$ junta certificate.
\end{lemma}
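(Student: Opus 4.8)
The plan is to read the required certificate directly off \autoref{lem:simon_balanced}. Unpacking the definition of a one-sided $(\alpha,\beta)$ junta certificate, with the target parameters $\alpha=\sqrt{n}/2$ and $\beta=0$, we are forced to take $\hat\Psi_1\coloneqq 0$ (the only balanced function with $\|\hat\Psi_1\|_1\leq 0$; it is vacuously supported on $\SIMON_n^{-1}(1)$). So the only object left to produce is $\Psi_1=\alpha\cdot(D_1-D_0)$ with $\Psi_0=-\Psi_1$, and the only nontrivial condition left to verify is the feasibility inequality \eqref{eq:feasibility} for $v=1$. I would simply let $(D_1,D_0)$ be the pair of distributions guaranteed by \autoref{lem:simon_balanced}, with $D_i$ supported on $\SIMON_n^{-1}(i)$, and set $\alpha\coloneqq\sqrt{n}/2$ and $\Psi_1\coloneqq\alpha(D_1-D_0)$.

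Next I would dispatch the easy conditions. The ``special form'' holds by construction, and $\hat\Psi_1=0$ trivially meets the support and $1$-norm requirements. Balancedness of $\Psi_1$ is automatic: $\sum_x\Psi_1(x)=\alpha\bigl(\sum_x D_1(x)-\sum_x D_0(x)\bigr)=\alpha(1-1)=0$ since $D_1,D_0$ are probability distributions; and $\Psi_0=-\Psi_1$ is then balanced as well.

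It remains to check feasibility, i.e.\ $\langle\Psi_1,C\rangle_{\geq 0}+\langle\hat\Psi_1,C\rangle\leq |C|\langle D_1,C\rangle$ for every conjunction $C$; since $\hat\Psi_1=0$ this is just $\langle\Psi_1,C\rangle_{\geq 0}\leq |C|\langle D_1,C\rangle$. Writing $p_i\coloneqq\Pr_{z\leftarrow D_i}[C(z)=1]=\langle D_i,C\rangle$, I would rearrange the conclusion $p_0\geq(1-|C|/\alpha)\,p_1$ of \autoref{lem:simon_balanced} (equation \eqref{eq:simon-lb}) into $p_1-p_0\leq(|C|/\alpha)\,p_1$ — note this holds for \emph{every} conjunction $C$ with no case split, including $|C|>\alpha$ (where the original bound is vacuous) and $|C|=0$ — and multiply through by $\alpha$ to get
\[
\langle\Psi_1,C\rangle=\alpha(p_1-p_0)\leq |C|\,p_1=|C|\langle D_1,C\rangle .
\]
Since the right-hand side is nonnegative, the same bound holds for $\max\{0,\langle\Psi_1,C\rangle\}=\langle\Psi_1,C\rangle_{\geq 0}$, which is exactly \eqref{eq:feasibility} for $v=1$. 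Hence $\{\Psi_1,\hat\Psi_1=0\}$ is a one-sided $(\sqrt{n}/2,0)$ junta certificate for $\SIMON_n$.

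\textbf{Main obstacle.} There is essentially no obstacle here: all the real content sits in \autoref{lem:simon_balanced} (hence in \cite{BFNR08}), and this lemma is a pure bookkeeping translation. The only points requiring a moment's care are that the ``$\hat\Psi$'' part of the certificate is permitted to vanish (giving $\beta=0$), and that the rearrangement of \eqref{eq:simon-lb} above is legitimately uniform over all conjunction widths, so no separate argument for wide conjunctions is needed.
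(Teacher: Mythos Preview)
Your proposal is correct and follows exactly the same approach as the paper: set $\hat\Psi_1\coloneqq 0$, $\Psi_1\coloneqq\alpha(D_1-D_0)$ with $(D_1,D_0)$ from \autoref{lem:simon_balanced}, and derive the feasibility inequality \eqref{eq:feasibility} directly from \eqref{eq:simon-lb} via the rearrangement $\alpha(p_1-p_0)\leq |C|\,p_1$. Your treatment of the $(\cdot)_{\geq 0}$ operator is in fact slightly more careful than the paper's, which writes $\langle\Psi_1,C\rangle_{\geq 0}=\langle\Psi_1,C\rangle$ without commenting on the case $\langle\Psi_1,C\rangle<0$.
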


\begin{proof}
A one-sided $(\alpha,0)$ junta certificate, $\alpha\coloneqq\sqrt{n}/2$, is given by
\[
\arraycolsep=2pt
\begin{array}{rclp{1.5cm}rcl}
\Psi_1 &\coloneqq& \alpha\cdot(D_1-D_0), && \hat{\Psi}_1 &\coloneqq& 0, \\
\end{array}
\]
where $(D_1,D_0)$ are from~\autoref{lem:simon_balanced}. Note that \eq{simon-lb} can be rephrased as $\langle D_0,C\rangle\geq (1-|C|/\alpha)\langle D_1,C\rangle$ since $\langle D_v,C\rangle = \Pr_{z\sim D_v}[D(z)=1]$. The feasibility condition \eqref{eq:feasibility} follows:
\begin{align*} \label{eq:simon-calc}
\langle\Psi_1,C\rangle_{\geq 0}+\langle\hat{\Psi}_1,C\rangle 
&= \langle \Psi_1,C\rangle \\
&= \alpha\langle D_1,C\rangle - \alpha\langle D_0,C\rangle \\
&\leq \alpha\langle D_1,C\rangle - \alpha(1-|C|/\alpha)\langle D_1,C\rangle \\
&= |C|\langle D_1,C\rangle.
\end{align*}
\end{proof}

\subsubsection{Junta certificates for \texorpdfstring{$\OR_n,\AND_n$ and $\pOR_n$}{OR, AND and PR-OR}}
\label{sec:or_cert}
\begin{lemma}
\label{lem:or_cert}
$\OR_n$ and $\AND_n$ have two-sided $(n/2,n)$ junta certificates.  The negation of $\pOR_n$ 
has a one-sided $(n/2,0)$ junta certificate.
\end{lemma}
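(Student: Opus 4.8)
The plan is to write all three certificates down explicitly and then verify the three defining conditions (special form, bounded $1$-norm, feasibility) by a case analysis over conjunctions $C$, organized by the number of \emph{positive} literals in $C$. The only case with real content is $\OR_n$, so I would do that first. Since $\OR_n^{-1}(0)=\{0^n\}$ is a single point, the distribution $D_0$ is forced to be the point mass $\delta_{0^n}$, and $\hat\Psi_0$, being balanced and supported on one point, must be $0$. Set $\alpha\coloneqq n/2$, let $D_1$ be uniform over the $n$ Hamming-weight-one strings $e_1,\dots,e_n$, and put $\Psi_1\coloneqq\alpha(D_1-\delta_{0^n})$ and $\Psi_0\coloneqq-\Psi_1$. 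The crux is the correction term: I would take $\hat\Psi_1(z)\coloneqq 1/2$ when $|z|=1$, $\hat\Psi_1(z)\coloneqq -1/(n-1)$ when $|z|=2$, and $\hat\Psi_1(z)\coloneqq 0$ otherwise. This is supported on $\OR_n^{-1}(1)$; it is balanced because the $|z|=1$ strings contribute mass $n\cdot\tfrac12$ and the $|z|=2$ strings contribute $-\binom n2\cdot\tfrac{1}{n-1}=-\tfrac n2$; and $\|\hat\Psi_1\|_1=n=\beta$, with equality.

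\textbf{Feasibility for $\OR_n$.} The case $v=0$ is easy: $\hat\Psi_0=0$, and for $C$ with no positive literal $\langle\Psi_0,C\rangle=\tfrac n2\cdot\tfrac{|C|}{n}=\tfrac{|C|}{2}\le|C|=|C|\langle D_0,C\rangle$, while for $C$ with a positive literal $\langle\Psi_0,C\rangle\le 0$ (so $\langle\Psi_0,C\rangle_{\geq 0}=0$) and $\langle D_0,C\rangle=0$. For $v=1$ I would split conjunctions $C$ into three types. If $C$ has at least two positive literals, then $\langle D_1,C\rangle=0$, hence $\langle\Psi_1,C\rangle=0$, and $\langle\hat\Psi_1,C\rangle\le 0$ holds automatically because $\hat\Psi_1\le 0$ on every string of weight $\ge 2$ and $C$ is satisfied only by such strings. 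If $C=x_{i_0}\wedge\bigwedge_{j\in S}\bar x_j$ with $i_0\notin S$ and $|S|=s$, then $\langle D_1,C\rangle=1/n$, so $\langle\Psi_1,C\rangle_{\geq 0}=1/2$, and summing the contributions to $\hat\Psi_1$ from $e_{i_0}$ and from the weight-two strings $e_{i_0}+e_k$ with $k\notin S\cup\{i_0\}$ gives $\langle\hat\Psi_1,C\rangle=-\tfrac12+\tfrac{s}{n-1}$; the required bound $\tfrac12+\langle\hat\Psi_1,C\rangle\le\tfrac{1+s}{n}$ then reduces to $s\le n-1$, which holds. Finally, if $C$ has only negated literals, say $|S|=s$ of them, then $\langle\Psi_1,C\rangle\le 0$ and a binomial computation gives $\langle\hat\Psi_1,C\rangle=\tfrac{s(n-s)}{2(n-1)}\le\tfrac{s(n-s)}{n}=|C|\langle D_1,C\rangle$. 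Each of these is a one-line calculation once the certificate is in hand.

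\textbf{$\AND_n$ and $\neg\pOR_n$.} I would not redo any computation for $\AND_n$. Conjunctions of literals are closed under the substitution $x_i\mapsto\bar x_i$, and negating a function value merely interchanges the roles of $v=0$ and $v=1$ in all three conditions; since $\AND_n(x)=\neg\,\OR_n(\bar x)$, transporting the $\OR_n$ certificate through these two symmetries yields a two-sided $(n/2,n)$ certificate for $\AND_n$ (concretely: $D_1=\delta_{1^n}$, $D_0$ uniform over the $n$ weight-$(n-1)$ strings, $\hat\Psi_1=0$, and $\hat\Psi_0$ equal to $1/2$ on weight-$(n-1)$ strings and $-1/(n-1)$ on weight-$(n-2)$ strings). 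For $\neg\pOR_n$, whose $1$-inputs are $\{0^n\}$ and whose $0$-inputs are the $n$ weight-one strings, a one-sided certificate only uses $\{\Psi_1,\hat\Psi_1\}$, and $\beta=0$ forces $\hat\Psi_1=0$; the shape is then the same as the $\SIMON_n$ certificate of \autoref{lem:simon_cert}. I would take $D_1=\delta_{0^n}$, $D_0$ uniform over the weight-one strings, $\alpha=n/2$, $\Psi_1=\alpha(D_1-D_0)$. Feasibility at $v=1$ is immediate: $\langle\Psi_1,C\rangle\le 0$ when $C$ contains a positive literal, and $\langle\Psi_1,C\rangle=\tfrac n2\cdot\tfrac{|C|}{n}=\tfrac{|C|}{2}\le|C|\langle D_1,C\rangle$ when $C$ has only negated literals.

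\textbf{Main obstacle.} The one genuinely delicate point is the design of $\hat\Psi_1$ for $\OR_n$: the primal part $\Psi_1$ already overshoots the feasibility bound by almost $1/2$ on each short conjunction containing a positive literal, so $\hat\Psi_1$ must inject a compensating negative amount of total size $\Theta(n)$ while simultaneously remaining only $O(1)$ on short conjunctions of negated literals and keeping $\|\hat\Psi_1\|_1\le n$. Placing all the negative weight on the weight-two level (rather than, say, on $1^n$) is exactly what makes the positive ($|z|=1$) and negative ($|z|=2$) contributions nearly cancel on conjunctions $\bar x_j\wedge\cdots$, leaving the small residual $1/2$; it also forces $\hat\Psi_1\le 0$ off the weight-one level, which trivialises the $\ge 2$-positive-literal case. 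Once this choice is fixed, the rest of the lemma is symmetry and routine arithmetic.
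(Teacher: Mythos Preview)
Your proposal is correct and essentially identical to the paper's proof: the certificate you write down for $\OR_n$ is exactly the paper's $\hat\Psi_1=\tfrac n2(D_1-D_2)$ (with $D_i$ uniform on weight-$i$ strings) unpacked pointwise, and your case split by the number of positive literals in $C$, the duality reduction for $\AND_n$, and the observation that the $(\Psi_0,\hat\Psi_0)$ pair gives the one-sided certificate for $\neg\pOR_n$ all match the paper. The only cosmetic difference is that the paper tabulates the values $\langle D_i,C_j\rangle$ once and then computes with those symbols, whereas you sum over the satisfying assignments of $C$ directly.
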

\begin{proof}
We show that for $\OR_n$, a two-sided $(n/2,n)$ junta certificate is given by
\begin{equation}
\label{eq:or_cert}
\arraycolsep=2pt
\begin{array}{rclp{1.5cm}rcl}
\Psi_1 &\coloneqq& n/2\cdot(D_1-D_0), && \hat{\Psi}_1 &\coloneqq& n/2\cdot(D_1-D_2), \\
\Psi_0 &\coloneqq& n/2\cdot(D_0-D_1), && \hat{\Psi}_0 &\coloneqq& 0,
\end{array}
\end{equation}
where $D_i$ is the uniform distribution on inputs of Hamming weight $i$.  As $\Psi_0, \hat{\Psi}_0$ only have 
support on inputs of Hamming weight zero or one, this will also imply that the negation of $\pOR_n$ has a 
one-sided $(n/2,0)$ junta certificate.  By duality of $\OR_n$ and $\AND_n$ it will also imply that $\AND_n$ has a 
two-sided $(n/2,n)$ junta certificate.  Thus we focus on showing that \autoref{eq:or_cert} forms a valid certificate.

To check the feasibility conditions~\eqref{eq:feasibility}, we split into cases depending on how many positive literals $C$ 
contains. For notation, let $C_j$ be a conjunction of width $w\coloneqq|C|$ having $j$ positive literals (and thus $w-j$ 
negative literals). We have the following table of acceptance probabilities:
\begin{center}
\renewcommand{\arraystretch}{1.2}
\vspace{3mm}
\begin{tabular}{r|@{\hspace{1em}}c@{\hspace{1em}}c@{\hspace{1em}}c}
\toprule
$j$ & $\langle D_0,C_j\rangle$ & $\langle D_1,C_j\rangle$ & $\langle D_2,C_j\rangle$  \\
\midrule
$0$ & $1$ & $(n-w)/n$ & $\binom{n-w}{2}/\binom{n}{2}$ \\
$1$ & $0$ & $1/n$ & $(n-w)/\binom{n}{2}$ \\
$2$ & $0$ & $0$ & $1/\binom{n}{2}$ \\
$\ge 3$ & $0$ & $0$ & $0$ \\
\bottomrule
\end{tabular}
\vspace{3mm}
\end{center}
For $v=1$, it suffices to consider $j\in\{0,1\}$ since any $C_j$ with $j>1$ will have $\langle D_1,C_j\rangle =~ 0$ and hence $\langle \Psi_1,C_j\rangle, \langle \hat{\Psi}_1,C_j\rangle \leq 0$. For $v=~0$, it suffices to consider $j=~0$ since any $C_j$ with $j>0$ will have $\langle D_0,C_j\rangle =~ 0$ and hence $\langle \Psi_0,C_j\rangle \leq 0$. Here we go:
\begin{align*}
\textstyle
\langle\Psi_1,C_0\rangle_{\geq 0} + \langle\hat{\Psi}_1,C_0\rangle
~=~&\textstyle 0 + n/2\cdot\langle D_1-D_2,C_0\rangle\\
=~&\textstyle n/2\cdot\big(\frac{n-w}{n}-\binom{n-w}{2}/\binom{n}{2}\big)
=\textstyle n/2\cdot\big(\frac{n-w}{n}(1-\frac{n-w-1}{n-1})\big) \\
=~&\textstyle n/2\cdot\big(\frac{n-w}{n}\cdot\frac{w}{n-1}\big)
=\textstyle 1/2\cdot(n-w)\cdot\frac{w}{n-1} \\
\leq~&\textstyle (n-w)\cdot\frac{w}{n} = w\langle D_1,C_0\rangle,\\[3mm]
\langle\Psi_1,C_1\rangle_{\geq 0} + \langle\hat{\Psi}_1,C_1\rangle
~=~&\textstyle n/2\cdot \langle D_1,C_1\rangle + n/2\cdot \langle D_1-D_2,C_1\rangle\\
=~&\textstyle n \cdot \langle D_1,C_1\rangle - n/2\cdot \langle D_2,C_1\rangle\\
=~&\textstyle 1- n/2\cdot (n-w)/\binom{n}{2}
=\textstyle 1- \frac{n-w}{n-1}
=\textstyle \frac{w-1}{n-1} \\
\leq~&\textstyle w/n = w\langle D_1,C_1\rangle,\\[3mm]
\langle\Psi_0,C_0\rangle_{\geq0} + \langle\hat{\Psi}_0,C_0\rangle 
~=~&\textstyle \langle\Psi_0,C_0\rangle
=\textstyle n/2\cdot \langle D_0-D_1,C_0\rangle \\
=~&\textstyle n/2\cdot (1-(n-w)/n)
=\textstyle w/2 \\
\leq~&\textstyle w = w\langle D_0,C_0\rangle.
\end{align*}
\end{proof}

\subsubsection{Junta degree lower bounds}
With the composition theorem and the junta certificates just constructed, we can now prove 
\autoref{thm:junta_deg}.

\begin{proof}[Proof of \autoref{thm:junta_deg}]
To show that $\deg_\epsilon(\STR) = \Omega(n^{2.5})$ we begin with the 
two-sided $(n/2,n)$ certificates for $\OR_n, \AND_n$ from \autoref{lem:or_cert}.  
By \autoref{thm:composition} this gives a two-sided $(n^2/4, n^2 +n)$ junta certificate for $\OR_n \circ \AND_n$.  
Now composing with the one-sided $(\sqrt{n}/2,0)$ junta certificate for $\SIMON_n$ gives a 
one-sided $(n^{2.5}/8,n^3+n^2)$ junta certificate for $\STR$.  Finally applying \autoref{lem:lb-from-certs} gives 
$\deg_\epsilon(\STR) = \Omega(n^{2.5})$ for any $\epsilon \le (64\sqrt{n})^{-1}$.

The negation of $\pOR_n$ has a one-sided $(n/2,0)$ junta certificate and $\AND_m$ has a two-sided $(m/2,m)$ 
certificate.  Thus by \autoref{thm:composition}, $\neg \pTR_{n,m}$ has a one-sided $(nm/4, nm)$ junta certificate.  
Applying \autoref{lem:lb-from-certs} shows that $\deg_\epsilon(\neg \pTR_{n,m}) = \Omega(mn)$ for any 
$\epsilon \le 1/16$.
\end{proof}

\subsection{From query to communication}

The following theorem is a corollary of \cite[Theorem 31]{GLM+15} (originally, the theorem was stated for constant $\epsilon>0$, but the theorem holds more generally for $\epsilon = 2^{-\Theta(b)}$; note also that instead of $\deg_\epsilon(f)$, that paper uses the notation~$\textrm{WAPP}^\textrm{dt}_\epsilon(f)$). Here $\IP_b\colon\{0,1\}^b\times\{0,1\}^b\to\{0,1\}$ is the two-party inner-product function defined by $\IP_b(x,y)\coloneqq \langle x,y\rangle \bmod{2}$.
\begin{theorem}[Lifting theorem \cite{GLM+15}] \label{thm:simu}
For any $\epsilon>0$ define $b\coloneqq \Theta(\log(n/\epsilon))$ (with a large enough implicit constant). For every partial $f\colon\{0,1\}^n\to\{0,1,*\}$ we have
\[
\R_\epsilon(f\circ \IP_b) \geq \Omega(\deg_\epsilon(f)\cdot b).
\]
\end{theorem}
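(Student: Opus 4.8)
The plan is to prove this as the randomized (one-sided smooth rectangle / WAPP) lifting theorem of~\cite{GLM+15}: take a cheap protocol for $f\circ\IP_b$, \emph{project} its acceptance probability down onto $\{0,1\}^n$ along the gadget, and argue the result is, up to negligible error, a low-degree conical junta that approximates $f$. First I would fix an optimal public-coin protocol $\Pi$ for $f\circ\IP_b$ with error $\le\epsilon$ and cost $c:=\R_\epsilon(f\circ\IP_b)$; without loss of generality $\Pi$ is a distribution over deterministic cost-$c$ protocols $\Pi_r$, each partitioning $(\{0,1\}^b)^n\times(\{0,1\}^b)^n$ into at most $2^c$ leaf rectangles, the accepting ones forming a set $\mathcal{L}_r$. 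Then the acceptance probability is the nonnegative combination of rectangles $a(x,y)=\E_r\sum_{R\in\mathcal{L}_r}\mathds{1}_R(x,y)$, and $|a(x,y)-(f\circ\IP_b)(x,y)|\le\epsilon$ for every $(x,y)\in\dom(f\circ\IP_b)$.

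Next I would project to the query world. For $z\in\{0,1\}^n$ let $\nu_z$ be the uniform distribution on the fiber $\{(x,y):\IP_b(x_i,y_i)=z_i\text{ for all }i\}$ (all fibers have essentially equal size, $\IP_b$ being almost balanced), and put $h(z):=\E_{(x,y)\sim\nu_z}[a(x,y)]$ and $g_R(z):=\Pr_{(x,y)\sim\nu_z}[(x,y)\in R]$, so that $h=\E_r\sum_{R\in\mathcal{L}_r}g_R$ is nonnegative. Because $\dom(f\circ\IP_b)$ is a union of fibers and $f\circ\IP_b$ is constant (equal to $f(z)$) on the fiber over any $z\in\dom(f)$, averaging gives $|h(z)-f(z)|\le\epsilon$ there; i.e.\ $h$ already $\epsilon$-approximates $f$ exactly. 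It therefore suffices to replace $h$ by a conical junta of degree $O(c/b)$ that agrees with it in $\ell_\infty$ up to $2^{-\Omega(b)}$: choosing the constant in $b=\Theta(\log(n/\epsilon))$ large enough makes this slack negligible against $\epsilon$, and then, using that $\deg_\epsilon$ coincides with the query analogue of the WAPP / one-sided smooth rectangle measure up to the routine rescaling of a nonnegative junta (which does not change its degree), one gets $\deg_\epsilon(f)=O(c/b)$, i.e.\ exactly $\R_\epsilon(f\circ\IP_b)=\Omega(\deg_\epsilon(f)\cdot b)$.

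The technical heart, and the step I expect to be the main obstacle, is the \cite{GLM+15} structural lemma ``rectangles are nonnegative juntas'': for each deterministic protocol $\Pi_r$, the projected acceptance function $z\mapsto\sum_{R\in\mathcal{L}_r}g_R(z)=\Pr_{\nu_z}[\Pi_r\text{ accepts}]$ lies within $2^{-\Omega(b)}$ of a nonnegative junta of degree $O(c/b)$. The engine is the strong pseudorandomness of inner product ($\IP_b$ is a two-source extractor): for a leaf $R=A\times B$ one singles out a set $I(R)\subseteq[n]$ of ``structured'' coordinates --- the blocks on which $A$ or $B$ has lost $\Omega(b)$ bits of min-entropy --- and a Lindsey/discrepancy estimate shows $g_R(z)$ depends, up to $2^{-\Omega(b)}$, only on $z|_{I(R)}$; any nonnegative function of $z|_{I(R)}$ is trivially a nonnegative combination of width-$|I(R)|$ conjunctions. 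The delicate point is to run this \emph{globally over the protocol tree}, grouping leaves by their structured set and using $\sum_{R\in\mathcal{L}_r}g_R\le 1$ pointwise together with a charging argument that bounds the structured coordinates along any root-to-leaf path by $O(c/b)$, so that no $2^c$-factor blowup appears in the error. This is exactly where $b=\Theta(\log(n/\epsilon))$ is needed: $b$ must be large enough for the extractor error to beat $\epsilon$ across the $n$-fold product, yet small enough that the degree bound $O(c/b)$ still dominates $\deg_\epsilon(f)$.

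With the structural lemma in hand the rest is bookkeeping: averaging the per-protocol juntas over $r$ yields one conical junta $\tilde h$ of degree $O(c/b)$ with $\|\tilde h-h\|_\infty\le 2^{-\Omega(b)}$, hence $\tilde h$ approximates $f$ on $\dom(f)$ to within $\epsilon(1+o(1))$, and the reparametrization above closes the argument. An equivalent route is the dual one: take a one-sided junta certificate witnessing $\deg_\epsilon(f)=\Omega(\alpha)$ as in \lem{lb-from-certs}, push its distributions $D_1,D_0$ and the function $\hat{\Psi}_1$ through $\IP_b$, and verify that the lifted object is a valid dual witness for the one-sided smooth rectangle bound of $f\circ\IP_b$ with parameter $\Omega(\alpha b)$; but the feasibility check there reduces, via the same ``rectangles are nonnegative juntas'' fact, to the query-world feasibility \eq{feasibility}, so the core difficulty is unchanged.
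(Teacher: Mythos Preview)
The paper does not prove this theorem at all: it is simply quoted as a black box from~\cite{GLM+15}, with the remark that it is a corollary of Theorem~31 there (stated for constant~$\epsilon$ but holding for $\epsilon=2^{-\Theta(b)}$, and with $\deg_\epsilon$ written as $\textrm{WAPP}^{\textrm{dt}}_\epsilon$). So there is nothing in the present paper to compare your argument against.

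That said, your sketch is a faithful outline of the primal \cite{GLM+15} argument: project the acceptance probability of an optimal protocol through the $\IP_b$ fibers, then invoke the structural ``rectangles simulate low-degree nonnegative juntas'' lemma, whose engine is the two-source extractor property of inner product and whose delicate step is indeed the global tree-wide accounting of structured coordinates to avoid a $2^c$ blow-up in the error. Your identification of where $b=\Theta(\log(n/\epsilon))$ enters (extractor error must beat $\epsilon$ across $n$ blocks) is also correct. The alternative dual route you mention---lifting a junta certificate---is legitimate but, as you note, bottoms out in the same structural lemma, so it is not genuinely different. If you were writing this up from scratch rather than citing, the only place to be careful is making precise the grouping of leaves and the error aggregation; the per-rectangle statement ``$g_R$ is close to a junta on $I(R)$'' is not by itself enough, and the argument in~\cite{GLM+15} handles this via a careful decomposition of the protocol tree rather than a naive union bound over $2^c$ leaves.
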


\begin{proof}[\bf Proof of \autoref{thm:lifted-lb}]
Noting that $\R(f) = \R(\neg f)$ and $\R_{1/n}(f) \le O(\R(f)\cdot \log(n)))$ this follows immediately from the lifting theorem together with the junta degree lower bounds from 
\autoref{thm:junta_deg}.
\end{proof}

\section{Quantum upper bounds} \label{sec:quantum-ub}
In this section we explicitly define the lookup functions we will use for our 
bounded-error quantum and exact quantum vs. randomized communication complexity separations.  We show 
upper bounds on the quantum communication complexity of these functions which, together with the randomized 
lower bounds from \autoref{sec:R_lower}, will finish the proofs of \autoref{thm:q-vs-r} and \autoref{thm:qe-vs-r}.  

First we need some preliminary results about the behavior of quantum query algorithms under composition and 
the relation between the quantum query complexity of a function $f$ and the quantum communication complexity of 
a lifted version of $f$ after composition with a communication function.

\begin{fact}[Composition of quantum query complexity~\cite{Rei11}]
\label{fact:QdtComp}
Let $f : \B^n \rightarrow \{0,1,*\}$ and $g: \B^m \rightarrow \B$.  Then $\Qdt(f \circ g^n) = O(\Qdt(f) \Qdt(g))$ 
and $\QEdt(f \circ g^n) = O(\QEdt(f) \QEdt(g))$.
\end{fact}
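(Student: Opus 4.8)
I would dispatch the exact statement first, by direct simulation. Take an optimal exact quantum query algorithm $\mathcal{A}$ for $f$, making $\QEdt(f)$ queries. In $f\circ g^n$ the $i$-th ``input bit'' of $f$ is $g$ applied to the $i$-th block $x^{(i)}\in\B^m$, so I replace each query $\mathcal{A}$ makes by the following reversible subroutine: run an optimal exact algorithm for $g$ on $x^{(i)}$, CNOT its (certainly correct, since the algorithm is exact) output bit into the query-response register, then run the $g$-algorithm in reverse to restore all workspace and garbage. This subroutine implements exactly the query oracle of $f$ on the string $(g(x^{(1)}),\dots,g(x^{(n)}))$ and costs $2\QEdt(g)$ real queries, so $\QEdt(f\circ g^n)\le 2\,\QEdt(f)\,\QEdt(g)=O(\QEdt(f)\QEdt(g))$.

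The bounded-error statement does \emph{not} follow from the same naive simulation: amplifying the inner $g$-subroutine enough that all $\QEdt(f)$ outer queries succeed simultaneously would cost an extra logarithmic factor. The plan, following Reichardt~\cite{Rei11}, is to route through the general (negative-weight) adversary bound $\mathrm{Adv}^{\pm}$. I would invoke two ingredients: (i) the tight semidefinite characterization $\Qdt(h)=\Theta(\mathrm{Adv}^{\pm}(h))$, valid for every (possibly partial) Boolean function $h$; and (ii) perfect multiplicativity of the adversary bound under composition, $\mathrm{Adv}^{\pm}(f\circ g^n)=\mathrm{Adv}^{\pm}(f)\cdot\mathrm{Adv}^{\pm}(g)$, where ``$\le$'' is the explicit tensor-product adversary-matrix construction and ``$\ge$'' comes from the optimal composed algorithm together with (i). Chaining these gives $\Qdt(f\circ g^n)=\Theta(\mathrm{Adv}^{\pm}(f))\cdot\Theta(\mathrm{Adv}^{\pm}(g))=\Theta(\Qdt(f)\Qdt(g))$, which in particular yields the claimed $O(\Qdt(f)\Qdt(g))$ upper bound.

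The main point that needs care is that $f$ here is a \emph{partial} function (range $\{0,1,*\}$), while $g$ is total. So both (i) and (ii) must be used in their partial-function forms, as established by Reichardt and by the surrounding work of Lee, Mittal, Reichardt, \v{S}palek, and Szegedy: the promise on $f$ simply restricts the set of input pairs over which the adversary matrix's constraints and the algorithm's correctness are imposed, and I would check that this restriction is preserved when composing with a total $g$ (the composed promise is exactly the preimage under $g^n$ of $\dom(f)$). Granting those statements, the rest is bookkeeping, and I expect the partial-function verification to be the only real obstacle.
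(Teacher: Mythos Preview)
Your proposal is correct and is precisely the argument behind the cited reference~\cite{Rei11}; the paper itself does not give a proof of this fact but simply states it and attributes it to Reichardt. Your handling of the exact case by direct simulation with uncomputation, and of the bounded-error case via the tight adversary characterization $\Qdt(h)=\Theta(\mathrm{Adv}^{\pm}(h))$ together with the multiplicativity $\mathrm{Adv}^{\pm}(f\circ g^n)=\mathrm{Adv}^{\pm}(f)\cdot\mathrm{Adv}^{\pm}(g)$, is exactly the content of that reference (and its extension to partial functions in the work of Lee, Mittal, Reichardt, \v{S}palek, and Szegedy that you mention).
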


\begin{fact}[Composition with query function~\cite{BCW98}]
\label{fact:QComp}
Let $f\colon \{0,1\}^n\to \{0,1,*\}$ be a (partial) function.
For $i\in[n]$, let $F_i\colon \X\times \Y \to \{0,1,*\}$ be a communication problem.
Then $\Q(f\circ (F_1,\dots,F_n)) = O(\Qdt(f)\log \Qdt(f)\cdot \max_i \Q(F_i)\log n)$ and 
$\Q_E(f\circ (F_1,\dots,F_n)) = O(\QEdt(f)\cdot \max_i \Q_E(F_i)\log n)$.
\end{fact}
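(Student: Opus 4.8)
The plan is to simulate an optimal quantum query algorithm for $f$ inside the communication model, implementing each oracle query by a coherent ``clean'' run of a quantum communication protocol for the relevant $F_i$. Fix a $T$-query quantum algorithm $A$ computing $f$ with $T=\Qdt(f)$ (respectively an exact algorithm with $T=\QEdt(f)$). Alice will maintain the entire workspace of $A$ --- in particular the query-index register, which ranges over $[n]$ and therefore occupies $\lceil\log n\rceil$ qubits --- and she applies all input-independent unitaries of $A$ locally. Communication occurs only when $A$ makes a query.

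To implement the query map $|i\rangle|b\rangle\mapsto|i\rangle|b\oplus F_i(x,y)\rangle$ in superposition over $i$, Alice first makes the index register available to Bob by transmitting it ($O(\log n)$ qubits), and then the two players run a communication protocol for $F_i$ turned into a unitary subroutine. Two adjustments are needed. First, the protocol must be run cleanly: once it has written $F_i(x,y)$ into a fresh ancilla, Alice copies that ancilla into the target qubit $b$ with a CNOT and the protocol is then reversed to erase all transcript and workspace qubits, at the cost of a factor $2$ in communication. Second, in the bounded-error case the $F_i$-subroutine errs with constant probability, and these errors do not cancel under uncomputation; so each subroutine is first amplified to error $O(1/T)$ by $O(\log T)=O(\log\Qdt(f))$ independent clean repetitions combined with a \emph{reversible} majority vote, all of which is local computation. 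The exact case needs no amplification, since a clean run of an exact protocol is itself an exact unitary. Afterwards Bob returns the index register to Alice.

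Accounting: a single simulated query costs $O(\log n)$ qubits for the index plus $O(\log\Qdt(f)\cdot\max_i\Q(F_i))$ qubits for the clean amplified subroutine (respectively $O(\max_i\Q_E(F_i))$ in the exact case), i.e.\ $O(\log\Qdt(f)\cdot\log n\cdot\max_i\Q(F_i))$ per query. Summing over the $T=\Qdt(f)$ queries gives $\Q(f\circ(F_1,\dots,F_n))=O(\Qdt(f)\log\Qdt(f)\cdot\max_i\Q(F_i)\cdot\log n)$, and the exact bound follows identically with the $\log\Qdt(f)$ factor dropped. A union bound over the $T$ queries, each contributing error $O(1/T)$ on top of $A$'s own bounded error, keeps the overall error below $1/3$ (after one more constant-factor amplification of the whole protocol if desired).

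The main obstacle --- and essentially the only genuinely quantum point --- is using a bounded-error communication protocol as a coherent subroutine: a naive coherent run leaves the target register entangled with a garbage transcript, and the erroneous branch is not removed by uncomputation. This forces us to amplify the success probability \emph{before} cleaning up (hence the $\log\Qdt(f)$ factor), and to be careful that the majority vote is implemented as a reversible local circuit. The remaining ingredients --- isolating the $\log n$-qubit index register, the constant-factor overhead of clean computation, and the union bound over queries --- are routine.
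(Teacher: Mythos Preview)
Your argument is correct and is exactly the standard Buhrman--Cleve--Wigderson simulation; the paper itself does not supply a proof of this fact but simply cites \cite{BCW98}.

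One small expositional point worth tightening: when you say Alice ``transmits the index register'' to Bob and he later returns it, note that Alice \emph{also} needs access to $i$ throughout the subprotocol (to select her input $x_i$ and to control on which $F_i$ is running). The clean way to arrange this is for Alice to copy the index via a CNOT into a fresh $\lceil\log n\rceil$-qubit register, send that copy to Bob, and uncompute the CNOT after Bob returns it at the end of the (amplified, cleaned) subroutine. This adds only $O(\log n)$ qubits per query and leaves your accounting unchanged.
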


\subsection{Proof of \autoref{thm:q-vs-r}}
\label{sec:q-vs-r}
Let $F=\STR \circ \IP_b$ as defined in~\autoref{eq:str}, for $b=\Theta(\log n)$.  Let $c=10\log n$.  The definition of the 
family of functions $\G=\{G_0, \ldots, G_{2^c-1}\}$, closely resembles the construction of cheat sheet functions.  The 
most difficult property to achieve is to make $\G$ consistent outside $F$.  We do this by defining $G_i(\x,u,\y,v)$ to be 
$1$ if and only if $u \oplus v$ certifies that each $(x_i,y_i)$ is in the domain of $F$ (all functions $G_i$ will be the same).  
This condition naturally enforces 
consistency outside $F$.  We further require that $u \oplus v$ certifies this in a very specific fashion.  This is done 
so that the players can check $u \oplus v$ has the required properties efficiently using a distributed version of 
Grover's search algorithm.

We first define a helper function which will be like $G_i$ but just works to certify that a single copy $(x_j,y_j)$ of the 
input is in $\dom(F)$.  Let
\[
P \colon \left(\B^{bn^3} \times \B^{n (n\log n +1)}\right) \times \left(\B^{bn^3} \times \B^{n (n\log n +1)}\right)\rightarrow \B.
\]  
This function 
will be defined such that 
$P(x,u,y,v)=1$ if and only if $(x,y) \in \dom(F)$ is witnessed by $u \oplus v$ in a specific fashion, described next.
Decompose $x \in \B^{bn^3}$ as $x=(x_{i,j,k})_{i,j,k \in [n]}$ where each $x_{ijk} \in \B^{b}$, and similarly for $y$.  Let 
$z_{ijk} = \IP_b(x_{ijk},y_{ijk})$ for $i,j,k \in [n]$, and $z_i=\OR_n \circ \AND_n(z_{i11}, \ldots, z_{inn})$ for $i \in [n]$.  
Now $(x,y)$ will be in the domain of $F$ if and only if $(z_1, \ldots, z_n)$ is in the domain of 
$\SIMON_n$.  

If the players know $(z_1, \ldots, z_n)$ then they can easily verify if it is in $\dom(\SIMON_n)$.  
Let $w = u \oplus v$ and decompose this as $w=(q, \C)$, where $q \in \B^n$ and $\C=(C_1, \ldots, C_n)$ 
with each $C_i \in [n]^n$.  Intuitively, $q$ can be thought of as the purported value of $(z_1,\ldots, z_n)$, and 
$C_i$ as a ``certificate'' that $q_i=z_i$.  The function evaluates to $1$ if these certificates check out.

Formally, $P(x,u,y,v) =1$ if and only if 
\begin{enumerate}
\item $q \in \dom(\SIMON_n)$ 
\item for all $i \in [n]$: if $q_i=1$ then $C_i = (j,0,\ldots, 0)$ and 
$z_{ijk}=1$ for all $k \in [n]$, and if $q_i=0$ then $C_i=(t_1, \ldots, t_n)$ and $z_{ijt_j}=0$ for all $j \in [n]$.  
\end{enumerate}
Note that~(2) ensures that if $P(x,u,y,v)$ accepts then $z_i=q_i$ for all $i \in [n]$.

Finally, we can define $G_i$ for $i \in \{0, \ldots, 2^c-1\}$:
$G_i(\x,u_1, \ldots, u_c,\y, v_1, \ldots, v_c)=1$ if and only if $P( (x_j, u_j), (y_j,v_j))=1$ for all $j\in [c]$.  

\begin{claim}
\label{claim:Gcons}
The family of functions $\G$ defined above is consistent outside of $F$ and is a nontrivial XOR function.
\end{claim}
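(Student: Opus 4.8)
To establish the claim --- that $\G$ is consistent outside $F$ and is a nontrivial XOR family --- the plan is to reduce everything to three properties of the single building block $P$, exploiting the fact that all the $G_i$ are literally the same function: $G_i(\x,u_1,\dots,u_c,\y,v_1,\dots,v_c)=1$ iff $P((x_j,u_j),(y_j,v_j))=1$ for every $j\in[c]$. I would prove: (i) $P(x,u,y,v)$ depends on $(u,v)$ only through $u\oplus v$; (ii) if $P(x,u,y,v)=1$ then $(x,y)\in\dom(F)$; and (iii) whenever $(x,y)\in\dom(F)$ there is a choice of $(u,v)$ with $P(x,u,y,v)=1$ and a (different) choice with $P(x,u,y,v)=0$. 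Fact (i) gives the XOR condition of \defn{xor} immediately: $G_i$ is a conjunction of values of $P$, hence depends on $(\fu,\fv)$ only through the coordinatewise XORs $u_j\oplus v_j$, so $\fu\oplus\fv=\fu'\oplus\fv'$ forces $G_i(\x,\fu,\y,\fv)=G_i(\x,\fu',\y,\fv')$. Fact (ii) will give consistency outside $F$, and fact (iii) will give nontriviality.

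For (i), I would unwind the definition of $P$: the bits $z_{ijk}=\IP_b(x_{ijk},y_{ijk})$, and hence the derived bits $z_i=\OR_n\circ\AND_n(z_{i11},\dots,z_{inn})$, are determined by $(x,y)$ alone, whereas the remaining requirements in ``$P=1$'' --- that $q\in\dom(\SIMON_n)$ and that the certificates $C_1,\dots,C_n$ check out --- refer to $(u,v)$ only through $w=u\oplus v=(q,\C)$. For (ii), if $P(x,u,y,v)=1$ then condition~(1) gives $q\in\dom(\SIMON_n)$ and condition~(2) forces $z_i=q_i$ for every $i\in[n]$ (precisely the ``Note'' following the definition of $P$), so $(z_1,\dots,z_n)=q\in\dom(\SIMON_n)$; since $F=(\SIMON_n\circ\OR_n\circ\AND_n)\circ\IP_b$ and $\OR_n\circ\AND_n$ is total, $(x,y)\in\dom(F)$ is equivalent to $(z_1,\dots,z_n)\in\dom(\SIMON_n)$, so $(x,y)\in\dom(F)$. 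Consequently, if $\ell=F^c(\x,\y)\notin\B^c$ --- i.e.\ some $(x_j,y_j)\notin\dom(F)$ --- then by the contrapositive of (ii), $P((x_j,u_j),(y_j,v_j))=0$ for all $u_j,v_j$, hence $G_i(\x,\fu,\y,\fv)=0$ for every $i,\fu,\fv$; in particular $G_0(\x,\fu,\y,\fv)=G_i(\x,\fu',\y,\fv')$, which is exactly \defn{consistency}. The step that will require the most care is fact (ii) --- it is the defining feature of the construction, and it hinges on condition~(2) genuinely pinning down $q=(z_1,\dots,z_n)$ rather than merely being consistent with it.

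For (iii), I would fix $\x,\y$ with $\ell=F^c(\x,\y)\in\B^c$, so every $(x_j,y_j)\in\dom(F)$; since the $G_i$ coincide it suffices to treat one of them, call it $G$. By (i) we may prescribe each $w_j:=u_j\oplus v_j$ freely (e.g.\ take $v_j=0$, $u_j=w_j$). To force value $1$, set $w_j=(q^{(j)},\C^{(j)})$ with $q^{(j)}:=(z_1^{(j)},\dots,z_n^{(j)})$, which lies in $\dom(\SIMON_n)$ since $(x_j,y_j)\in\dom(F)$, and with honest certificates $\C^{(j)}$: whenever $z_i^{(j)}=1$ some $\AND_n$-clause of copy $i$ is all-ones, and whenever $z_i^{(j)}=0$ every $\AND_n$-clause of copy $i$ contains a zero, so in either case a valid $C_i$ as demanded by the definition of $P$ exists; thus $P((x_j,u_j),(y_j,v_j))=1$ for all $j$ and $G=1$. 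To force value $0$ it suffices to violate $P$ on the first copy: pick $w_1$ whose first component $q^{(1)}$ lies outside $\dom(\SIMON_n)$, which is possible because $\dom(\SIMON_n)\subsetneq\B^n$ for $n$ larger than a small constant --- for instance, flipping one coordinate of the all-zeros function yields a string that is not periodic (its set of $1$'s has odd size) yet is at Hamming distance $1<n/8$ from a periodic function, hence neither periodic nor far from periodic. Then $P((x_1,u_1),(y_1,v_1))=0$ and $G=0$. So $G$ attains both values on this slice, establishing nontriviality and completing the plan.
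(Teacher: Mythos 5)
Your proposal is correct and follows essentially the same route as the paper's proof: the XOR property is immediate because $P$ reads $(u,v)$ only through $u\oplus v$, consistency follows because $P$ can only accept when $q=(z_1,\dots,z_n)\in\dom(\SIMON_n)$ (so $P\equiv 0$ whenever some $(x_j,y_j)\notin\dom(F)$), and nontriviality follows by exhibiting accepting and rejecting settings of $u\oplus v$ on the domain. Your only (harmless) deviation is forcing $G=0$ by choosing $q\notin\dom(\SIMON_n)$ rather than by supplying incorrect certificates as the paper does; both work, and your extra detail (e.g.\ verifying that honest certificates always exist and that $\dom(\SIMON_n)\subsetneq\B^n$) just fills in steps the paper leaves implicit.
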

\begin{proof}
Each $G_i$ is an XOR function by definition.  Also, if $F^c(\x,\y) \not \in \B^c$ because (say) $(x_j, y_j) \not \in \dom(F)$, 
then $P( (x_j,u), (y_j,v))$ will always evaluate to $0$ no matter what $u,v$.  This is because $P( (x_j,u), (y_j,v))$ can only 
evaluate to $1$ if $u \oplus v = (q,\C)$ where $\C$ certifies that $z_i=q_i$ for all $i \in [n]$ as in item~(2) above.  If this 
holds, then $P$ will reject when $q=(z_1, \ldots, z_n) \not \in \dom(F)$.  This means $\G$ is consistent outside 
$F$.  

Finally, let $(\x, \y) \in \dom(F^c)$.  Then there will exist $u,v$ such that $u \oplus v$ provides correct 
certificates of this, and $u', v'$ providing incorrect certificates.  Thus each $G_i$ is nontrivial.
\end{proof}

We can now finish the separation.

\begin{theorem}
Let $F=\STR \circ \IP_b$ be defined as in~\eq{str} for $b=\Theta(\log n)$, 
$\G$ be the family of functions defined above, and $F_\G$ be the $(F, \G)$-lookup function.  Then $F_\G$ is a total function satisfying
\[\Q(F_\G)= \tO(bn) = \tO(n) \qquad \textrm{and} \qquad \R(F_\G) = \tOmega(n^{2.5}).\]
\end{theorem}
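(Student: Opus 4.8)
The plan is to read off both statements from earlier results. Totality of $F_\G$ is immediate from \defn{lookup}. For the randomized lower bound I will invoke the lookup lower bound \thm{ICCS}: since $\R(F)\le\D(F)\le\Ddt(\STR)\cdot\D(\IP_b)=\tO(n^3)<2^c$ we have $c=10\log n\ge\log\R(F)$, and by Claim~\ref{claim:Gcons} the family $\G$ is a nontrivial XOR family, so \thm{ICCS} gives $\R(F_\G)=\Omega(\R(F)/c^2)$. Feeding in $\R(F)=\R(\STR\circ\IP_b)=\tOmega(n^{2.5})$ from \thm{lifted-lb} and $c=O(\log n)$ yields $\R(F_\G)=\tOmega(n^{2.5})$.

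For the quantum upper bound, note that $\G$ is consistent outside $F$ by Claim~\ref{claim:Gcons}, so \thm{lookup_upper} gives $\Q(F_\G)=O(\Q(F)\cdot c\log c)+\max_i O(\Q(G_i))$. As $c=O(\log n)$, it suffices to prove $\Q(F)=\tO(n)$ and $\Q(G_i)=\tO(n)$ for every $i$. The first is routine: by \fct{QdtComp}, $\Qdt(\STR)=\Qdt\big(\SIMON_n\circ(\OR_n\circ\AND_n)\big)=O\big(\Qdt(\SIMON_n)\,\Qdt(\OR_n)\,\Qdt(\AND_n)\big)=\tO(1)\cdot O(\sqrt n)\cdot O(\sqrt n)=\tO(n)$, and since $\Q(\IP_b)=O(b)=\tO(1)$, \fct{QComp} gives $\Q(F)=\Q(\STR\circ\IP_b^{\,n^3})=\tO(n)$.

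The main work is bounding $\Q(G_i)$. Recall $G_i$ is the $\AND$ over $j\in[c]$ of $P\big((x_j,u_j),(y_j,v_j)\big)$, i.e.\ $G_i=\AND_c\circ(P,\dots,P)$, so by \fct{QComp} $\Q(G_i)=O\big(\Qdt(\AND_c)\log\Qdt(\AND_c)\cdot\Q(P)\cdot\log c\big)=\tO(\Q(P))$ since $c=\tO(1)$; thus it is enough to show $\Q(P)=\tO(n)$. I will write $P$ as a composed function $P=p\circ\big(\IP_b^{\,n^3},\ \XOR^{\,n(n\log n+1)}\big)$, where the $\IP_b$-gadgets produce the bits $z_{ijk}$, the $\XOR$-gadgets produce the bits of $w=u\oplus v$, and $p$ is the Boolean function on $(z,w)$ testing conditions (1)--(2) in the definition of $P$; since $\Q(\IP_b)=O(b)$ and $\Q(\XOR)=O(1)$, \fct{QComp} then gives $\Q(P)=\tO(\Qdt(p))$, reducing the task to $\Qdt(p)=\tO(n)$. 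Reading the $n$ bits of $q$ costs $n$ queries, after which condition~(1) is a free computation. For condition~(2) I run an outer Grover search over $i\in[n]$ looking for an index at which the certificate is violated; for a fixed $i$ the checker reads $q_i$ and then, if $q_i=1$, reads the single relevant coordinate $j$ of $C_i$, Grover-searches the remaining coordinates of $C_i$ for a nonzero entry, and Grover-searches $k\in[n]$ for some $k$ with $z_{ijk}=0$; while if $q_i=0$ it Grover-searches $j\in[n]$ for some $j$ with $z_{ijt_j}=1$, reading the $\log n$-bit entry $t_j=C_i[j]$ inside each iteration. Each checker costs $\tO(\sqrt n)$ queries after the standard $O(\log n)$-fold error amplification needed to nest it inside the outer search, so condition~(2) costs $\sqrt n\cdot\tO(\sqrt n)=\tO(n)$ queries; together with condition~(1) this gives $\Qdt(p)=\tO(n)$. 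Hence $\Q(F_\G)=\tO(n)$, and since $b=\Theta(\log n)$ this equals $\tO(bn)=\tO(n)$.

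The step I expect to require the most care is the $\Qdt(p)=\tO(n)$ estimate. The danger is that reading each per-block certificate $C_i$ in full costs $n\log n$ queries, hence $\tO(n^2)$ over all $i$, which is too much; the nested Grover searches must therefore be arranged so that $C_i$ is only verified or probed one $\log n$-bit coordinate at a time inside Grover iterations, and the error budget must be propagated correctly through the two levels of search so that the outer Grover over $n$ items still succeeds.
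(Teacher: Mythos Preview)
Your proposal is correct and follows the paper's overall strategy exactly: \thm{ICCS} plus \thm{lifted-lb} for the lower bound, and \thm{lookup_upper} plus Grover-based estimates on $\Q(F)$ and $\Q(G_i)$ for the upper bound. The one place where you diverge is the analysis of $\Q(G_i)$ (equivalently $\Q(P)$). You first reduce to a query problem $p$ via \fct{QComp} and then run a two-level nested Grover search (outer over $i\in[n]$, inner over $k$ or $j$), carefully reading only one $\log n$-bit coordinate of $C_i$ per inner iteration and tracking error amplification between levels. The paper instead works directly in the communication model and flattens condition~(2) into a \emph{single} Grover search over $n^2$ items $g_{i,t}$, where each $g_{i,t}$ depends on one coordinate $C_{it}$ and one bit $z_{i\,t\,C_{it}}$, hence costs $O(b+\log n)$ communication to evaluate; this yields $\tO(bn)$ immediately without any nesting or error-propagation bookkeeping. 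Your approach is more systematic (everything is pushed to query complexity first), but the paper's single-level search sidesteps precisely the delicate point you flagged in your last paragraph.
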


\begin{proof}
We start with the randomized lower bound.  As $c=10 \log n \ge \R(F)$ we can apply~\autoref{thm:ICCS} to obtain 
$\R(F_\G) =\tOmega(\R(F))= \tOmega(n^{2.5})$ 
by~\autoref{thm:lifted-lb}.

Now we turn to the quantum upper bound.
By \autoref{thm:lookup_upper} it suffices to show $\Q(F) = \widetilde O(bn)$ and $\max_s \Q(G_s) =\widetilde O(bn)$.
As $\Qdt(\SIMON_n) = O(\log n \log \log n)$ and $\Qdt(\OR_n \circ \AND_n)=O(n)$, by the composition theorem 
\autoref{fact:QdtComp} $\Q(\STR) = \widetilde O(n)$.  Thus $\Q(F) = \widetilde O(bn)$ by 
\fct{QComp}, as $\Q(\IP_b) \le b$.

We now turn to show $\max_s \Q(G_s) =\widetilde O(bn)$.  Fix $s$ and
let the input to $G_s$ be $(\x,\fu,\y,\fv)$.  For each $\ell \in [c]$ the players do the following procedure to evaluate 
$P(x_\ell, u_\ell, y_\ell, v_\ell)$.  For ease of 
notation, fix $\ell$ and let $x=x_\ell, y=y_\ell, u=u_\ell, v=v_\ell$.  As above, let $x=(x_{i,j,k})_{i,j,k \in [n]}$ where each 
$x_{ijk} \in \B^{b}$ and similarly for $y$,
$z_{ijk} = \IP_b(x_{ijk},y_{ijk})$ for $i,j,k \in [n]$, and 
$z_i=\OR_n \circ \AND_n(z_{i11}, \ldots, z_{inn})$ for $i \in [n]$.  Also let $w = u \oplus v$ and $w=(q, \C)$ where 
$\C=(C_1, \ldots, C_n)$ and each $C_i \in [n]^n$.  We will further decompose $C_i=(C_{i1}, \ldots, C_{in})$.

Alice and Bob first exchange $n$ bits to learn $q$.  If $q \not \in \dom(\SIMON_n)$ they reject.  Otherwise, they 
proceed to 
check property~(2) above, that $C_i$ certifies that $q_i=z_i$ for all $i \in [n]$.  They view this as a search problem on 
$n^2$ items 
$g_{i,t} \in \B$ for $i,t \in [n]$.  If $q_i=1$ then $g_{i,t}=1$ if and only if $z_{itC_{it}}=1$.  If $q_i=0$ then 
$g_{i,t}=1$ if and only if $z_{itC_{it}}=0$.  Then $(x,u,y,v)$ satisfy property~(2) in the definition of $P$ if and only 
if $g_{i,t}=1$ for all $i,t \in [n]$.  Each $g_{i,t}$ can be evaluated using $O(b + \log n)$ bits of communication.
Hence, using Grover search and 
\fct{QComp}, it takes $\widetilde O(b n)$ qubits of quantum communication to verify that all $g_{i,t}=1$.
\end{proof}

\subsection{Proof of \autoref{thm:qe-vs-r}}
We now turn to the separation between exact quantum and randomized communication complexities.
Fix $n$ and $m \le n$ and let $H = \pTR_{n,m} \circ \IP_b$ 
where $b = \Theta(\log n)$.  The separation will be given for a $(H,\G)$ lookup function with 
$m=\Theta(\sqrt{n})$, where the family 
$\G$ is defined in a similar way as in \autoref{sec:q-vs-r}.

For $c = 10 \log n$ let $\G = \{ G_0, \ldots, G_{2^c-1} \}$, where the functions 
$G_i : (\{0,1\}^{cbnm} \times [m]^{cn}) \times (\{0,1\}^{cbnm} \times [m]^{cn}) \rightarrow \{0,1\}$ will not depend 
on $i$.  To define $G_i$ it is useful to first define a helper function 
$P: (\{0,1\}^{bnm} \times ([m]^{n})) \times (\{0,1\}^{bnm} \times ([m]^{n})) \rightarrow \{0,1\}$, 
where $P(x,u,y,v)=1$ if and only if $u \oplus v$ witnesses that $(x,y) \in \dom(F)$ in a specific fashion, described next.  

Decompose $x = (x_{11},\ldots, x_{nm})$ where each $x_{ij} \in \{0,1\}^b$, and similarly for $y$.  Further let 
$x_i=(x_{i1}, \ldots, x_{im})$ for $i \in [n]$, and similarly for $y_i$.  Let $w= u \oplus v$ 
and decompose $w = (C_1, \ldots, C_n)$ where each $C_i \in [m]$. To show that 
$z=\AND_m \circ \IP_b^m(x_1, y_1), \ldots, \AND_m \circ \IP_b^m(x_n, y_n)$ is in the domain of $\pOR_n$ we need to 
point out $n-1$ zeros of $z$.  Formally, 
$P(x,u,y,z) = 1$ if and only if $u \oplus v = (C_1, \ldots, C_n)$ and $\IP_b(x_{iC_i}, y_{iC_i}=0$ for at least 
$n-1$ many $i's$.  

Finally, we can define $G_i$ for $i \in \{0, \ldots, 2^c-1\}$:
$G_i(\x,u_1, \ldots, u_c,\y, v_1, \ldots, v_c)=1$ if and only if $P( (x_j, u_j), (y_j,v_j))=1$ for all $j\in [c]$.  Note that 
if $(H(x_1, y_1), \ldots, H(x_c,y_c)) \not \in \{0,1\}^c$ then $G_i(\x,\fu,\y,\fv)=0$ for all $i$, thus $\G$ is consistent 
outside of $H$.  Furthermore $\G$ is an XOR function by definition and is nontrivial.

\begin{theorem}
\label{thm:qe_upper}
Fix $n$ and $m \le n$ and let $H = \pTR_{n,m} \circ \IP_b$ where $b = \Theta(\log n)$.
Let $H_\G$ be the $(H,\G)$ lookup function defined above.  Then $H_\G$ is a total function 
satisfying
\[\Q_E(H_\G) = \tO(\sqrt{n}m + n) \qquad \textrm{and} \qquad \R(H_\G) = \tOmega(mn).\]
In particular, $\R(H_\G) = \tOmega(\Q_E(H_\G)^{1.5})$ when $m = \Theta(\sqrt{n})$.
\end{theorem}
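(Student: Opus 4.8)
The plan is to prove the two bounds separately: the randomized lower bound will come from combining the lookup lower bound \thm{ICCS} with the lifted query lower bound \thm{lifted-lb}, while the exact quantum upper bound will come from the exact-quantum part of the lookup upper bound \thm{lookup_upper} together with the exact quantum query complexity of $\pTR_{n,m}$. Throughout, recall that $H_\G$ is automatically total by \defn{lookup}, and that $\G$ is a nontrivial XOR family that is consistent outside $H$ (both observed right after the definition of $\G$ above), so that both \thm{ICCS} and \thm{lookup_upper} apply.

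\emph{Lower bound.} Since $H$ is a communication problem on inputs of total bit length $O(bnm) = \poly(n)$ (using $m\le n$ and $b=\Theta(\log n)$), we have $\log\R(H) = O(\log n) \le c = 10\log n$ for all sufficiently large $n$, so \thm{ICCS} applies and gives $\R(H_\G) = \Omega(\R(H)/c^2) = \tOmega(\R(H))$. Choosing the implicit constant in $b=\Theta(\log n)$ large enough that $b \ge t\log n$, \thm{lifted-lb} gives $\R(H) = \R(\pTR_{n,m}\circ\IP_b) = \Omega(nm)$, hence $\R(H_\G) = \tOmega(nm)$.

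\emph{Upper bound.} I would invoke the exact bound of \thm{lookup_upper}, namely $\Q_E(H_\G) = c\cdot\Q_E(H) + \max_i \Q_E(G_i)$, and bound the two terms. For the first term: the exact quantum query complexity $\QEdt(\pTR_{n,m}) = O(\sqrt{n}\,m)$ (the promise-$\OR$ admits an exact Grover-type algorithm and exactness is preserved under composition by \fct{QdtComp}), so \fct{QComp} together with $\Q_E(\IP_b)=O(b)$ yields $\Q_E(H) = O(\QEdt(\pTR_{n,m})\cdot b\cdot\log(nm)) = \tO(\sqrt{n}\,m)$, and therefore $c\cdot\Q_E(H) = \tO(\sqrt{n}\,m)$. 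For the second term I would not use quantum at all: each value $P((x_j,u_j),(y_j,v_j))$ depends only on $w_j := u_j\oplus v_j \in [m]^n$ and on the $n$ bits $\IP_b$ evaluated at the positions $w_j$ designates; so the players exchange the $c$ strings $w_j$ (costing $O(cn\log m)$ bits) and the $cn$ relevant $\IP_b$ values (costing $O(cnb)$ bits), after which $G_i$ is a fixed function of data both players hold. Hence $\Q_E(G_i) \le \D(G_i) = \tO(n)$, giving $\Q_E(H_\G) = \tO(\sqrt{n}\,m) + \tO(n) = \tO(\sqrt{n}\,m + n)$. Substituting $m = \Theta(\sqrt{n})$ then gives $\Q_E(H_\G) = \tO(n)$ and $\R(H_\G) = \tOmega(n^{3/2})$, i.e.\ $\R(H_\G) = \tOmega(\Q_E(H_\G)^{1.5})$.

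The only genuinely delicate point is the exact quantum query upper bound $\QEdt(\pTR_{n,m}) = O(\sqrt{n}\,m)$: because exact algorithms cannot be error-reduced, one needs a subroutine that decides the Hamming-weight-$0$-versus-$1$ promise \emph{with certainty} in $O(\sqrt{n})$ queries, and a composition theorem (\fct{QdtComp}) that preserves exactness — both of which are available from prior work (\cite{ABK16,Rei11}). Everything else is routine bookkeeping of polylogarithmic factors, since the structural heavy lifting has already been carried out in \thm{ICCS} and \thm{lifted-lb}, and the argument is parallel to that for \autoref{thm:q-vs-r}.
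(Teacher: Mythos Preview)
Your proposal is correct and follows essentially the same approach as the paper: the lower bound via \thm{ICCS} and \thm{lifted-lb}, and the upper bound via \thm{lookup_upper} by bounding $\Q_E(H)=\tO(\sqrt{n}\,m)$ through $\QEdt(\pOR_n)=O(\sqrt{n})$, $\QEdt(\AND_m)=m$, and \fct{QComp}, together with the deterministic $\tO(n)$ protocol for $G_i$ that exchanges the pointer strings and the $n$ designated $\IP_b$ values. The paper's proof is terser but structurally identical, including the observation $\Q_E(G_i)\le \D(G_i)=\tO(n)$.
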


\begin{proof}
As $\G$ is a nontrivial XOR function consistent outside of $H$, the lower bound follows from 
\autoref{thm:lifted-lb} and \autoref{thm:ICCS}.

For the upper bound, by \autoref{thm:lookup_upper} it suffices to show upper bounds on 
$\Q_E(H)$ and $\max_{s} \Q_E(G_s)$.  
That $\Q_E(H)=\tO(\sqrt{n}m)$ follows by the composition of exact quantum query complexity and \autoref{fact:QComp},
as $\QEdt(\pOR_n) = \sqrt{n}$ and $\QEdt(\AND_n) = m$.  

For the second part we show that $\Q_E(G_i) \le \D(G_s) = \tO(n)$.  As all functions in the family $\G$ are the same, 
we drop the subscript $i$.  To evaluate $G(\x,\fu,\y,\fv)$ for each $j \in [c]$ the players do the following to evaluate 
$P(x_j, u_j, y_j, v_j)$.  They exchange $u_j, v_j$ with $O(n\log(m))$ to learn $u_j \oplus v_j = (C_1, \ldots, C_n)$.  
For each $t \in [n]$ they evaluate $\IP_b(x_{tC_t}, y_{tC_t})$.  If at least $n-1$ of these values are zero, then they accept.
\end{proof}

\section{Partitions vs.\ randomized communication}
\label{sec:RvsUN}

In this section, we prove \autoref{thm:un-vs-r}, which we restate for convenience:
\unvsr*

The proof closely follows the analogous result obtained for query complexity in~\cite{AKK16} using the cheat sheet 
framework.  For a total communication function $F$, we will define a special case of $(F, \G)$-lookup functions that are a 
communication analog of cheat sheets in query complexity.  
\begin{definition}[Cheat sheets for total functions]
	\label{def:cheat}
	Let $F\colon\X \times \Y \to \B$ be a total function.  Fix a cover $\mathcal{R}=\{R_0, \ldots, R_{2^{N(F)}-1}\}$ 
	of $\X \times \Y$ by rectangles monochromatic for $F$.  Let $N=\min\{\log |\X|, \log |\Y|\}$ and $c=10\log N$.  
	Define a function 
	\[
	G: (\X^c \times \B^{cN(F)}) \times (\Y^c \times \B^{cN(F)}) \rightarrow \B
	\]
	where
     	$G(x_1,\ldots, x_c, a_1, \ldots, a_c, y_1, \ldots, y_c, b_1, \ldots, b_c) = 1$ if and only if 
	$(x_i, y_i) \in R_{a_i \oplus b_i}$ 
	for all $i=1,\ldots, c$.  The \emph{cheat sheet} function $F_\CS$ of $F$ is the $(F, \{G_0, \ldots, G_{2^c-1}\})$ 
	lookup function where $G_i=G$ for all $i$.  In other words, 
	$F_\CS(x_1, \ldots, x_c, u_0, \ldots, u_{2^c-1}, y_1, \ldots, y_c, v_0, \ldots, v_{2^c-1})$ evaluates to 
	$G(x_1, \ldots, x_c, u_\ell, y_1,\ldots, y_c, v_\ell)$, where $\ell = (F(x_1, y_1), \ldots, F(x_c,y_c))$.
\end{definition}

\begin{remark}
Note that $F_\CS$ is in particular a $(F,\G)$-lookup function where $\G$ is a nontrivial $\XOR$ family (\autoref{def:xor}), 
thus \autoref{thm:ICCS} applies.  Further letting $\X' \times \Y'$ be the domain of $F_\CS$, note that 
$N'=\min \{ \log |\X'|, \log |\Y'|\} = O (c N + c \cdot 2^c N(F)) = O(N^{12})$.
\end{remark}

Recall that the function $\TR_{n^2}$ on $n^2$ input bits is the composition $\OR_n \circ \AND_n$.
The separating function of \autoref{thm:un-vs-r} is constructed by starting with disjointness on $n$ variables
and alternately taking the cheat sheet function of it and composing $\TR_{n^2}$ with it. Repeating this process 
gives a function with a larger and larger gap between $\R$ and $\UN$, converging to a quadratic gap between 
these measures.

To prove this result, we first need to understand how the composition
operations affect $\R$ and $\UN$. We start with $\UN$, for which we wish
to prove an upper bound.

\begin{lemma}[AND/OR composition]
\label{lem:composition}
For any communication function $F$, the following bounds hold:

\vspace{0.5em}
\centering
\begin{minipage}[t]{0.5\textwidth}
\begin{itemize}[noitemsep]
\item $\N_0(\AND_n \circ F) \leq \N_0(F) + \log n$
\item $\N_1(\AND_n \circ F) \leq n \N_1(F)$
\item $\UN_0(\AND_n \circ F) \leq \UN_0(F)+(n-1)\UN_1(F)$
\item $\UN_1(\AND_n \circ F) \leq n \UN_1(F)$
\end{itemize}
\end{minipage}%
\begin{minipage}[t]{0.5\textwidth}
\begin{itemize}[noitemsep]
\item $\N_0(\OR_n\circ F) \leq n \N_0(F)$
\item $\N_1(\OR_n \circ F) \leq \N_1(F) + \log n$
\item $\UN_0(\OR_n \circ F) \leq n\UN_0(F)$
\item $\UN_1(\OR_n \circ F) \leq (n-1)\UN_0(F)+\UN_1(F)$
\end{itemize}
\end{minipage}
\end{lemma}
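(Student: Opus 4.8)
The plan is to prove each of the eight inequalities by writing down an explicit cover (for the $\N$-statements) or partition (for the $\UN$-statements) of the inputs of the composed function into monochromatic rectangles, assembled coordinatewise from optimal objects for $F$. First I would reduce the four $\OR_n$ bounds to the four $\AND_n$ bounds by De Morgan: since $\OR_n\circ F=\neg(\AND_n\circ\neg F)$ with the same partition of the input bits between Alice and Bob, and since negating a communication function interchanges $0$- and $1$-monochromatic rectangles (so that $\N_b(\neg G)=\N_{1-b}(G)$ and $\UN_b(\neg G)=\UN_{1-b}(G)$), each $\OR_n$ inequality is exactly the matching $\AND_n$ inequality applied to $\neg F$ with the roles of $0$ and $1$ exchanged. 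So it suffices to prove the four $\AND_n$ statements.

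For the two ``$1$-side'' bounds, fix a smallest $1$-cover (resp.\ unambiguous $1$-cover) $\{R_1,\dots,R_t\}$ of $F$ and take the family of all product rectangles $R_{j_1}\times\cdots\times R_{j_n}$, where the $i$-th factor forces coordinate $i$ of Alice's and Bob's inputs into $R_{j_i}$. Each such product is a combinatorial rectangle in $\X^n\times\Y^n$, and it is $1$-monochromatic for $\AND_n\circ F$ because on it every coordinate evaluates to $1$ or $*$ on $F$, hence the conjunction evaluates to $1$ or $*$. Every $1$-input of $\AND_n\circ F$ has all $n$ coordinates $1$-inputs of $F$, so it lies in some product, and in a unique product if we started from a partition. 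This gives a $1$-cover (resp.\ partition) of $\AND_n\circ F$ of size $t^n$, so $\N_1(\AND_n\circ F)\le\lceil n\log t\rceil\le n\lceil\log t\rceil=n\N_1(F)$, and identically $\UN_1(\AND_n\circ F)\le n\UN_1(F)$.

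For $\N_0(\AND_n\circ F)\le\N_0(F)+\log n$: a $0$-input of $\AND_n\circ F$ has at least one coordinate $i$ with $F(x_i,y_i)=0$, so a certificate is the index $i$ together with a $0$-certificate for $(x_i,y_i)$; read as a rectangle this constrains only coordinate $i$ and is $0$-monochromatic for $\AND_n\circ F$. This is a $0$-cover of size $n\,\C_0(F)$, whence $\N_0(\AND_n\circ F)\le\lceil\log n\rceil+\N_0(F)$. The remaining bound $\UN_0(\AND_n\circ F)\le\UN_0(F)+(n-1)\UN_1(F)$ is the delicate one, since now the pieces must be pairwise disjoint; here one points to the \emph{first} bad coordinate instead of an arbitrary one. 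Fix a smallest unambiguous $1$-cover $\{R^1_j\}$ (say $p_1$ rectangles) and unambiguous $0$-cover $\{R^0_k\}$ (say $p_0$ rectangles) of $F$, and for each $i\in[n]$, each tuple $(j_1,\dots,j_{i-1})$ of $1$-labels, and each $0$-label $k$, form the rectangle requiring coordinates $1,\dots,i-1$ to lie in $R^1_{j_1},\dots,R^1_{j_{i-1}}$, coordinate $i$ to lie in $R^0_k$, and coordinates $i+1,\dots,n$ to be unconstrained. Each such rectangle is $0$-monochromatic (coordinate $i$ already forces the conjunction to $0$), and the rectangles are pairwise disjoint: for any fixed $0$-input its first bad coordinate $i$ is determined, coordinates $1,\dots,i-1$ each lie in a unique $R^1$, coordinate $i$ lies in a unique $R^0$, and no $1$-input can lie in a $0$-monochromatic rectangle nor a $0$-input in a $1$-monochromatic one; so exactly one rectangle of the family contains it. The family has $\sum_{i=1}^n p_1^{\,i-1}p_0$ rectangles, and taking logarithms produces the stated bound $\UN_0(F)+(n-1)\UN_1(F)$ (the geometric-sum and rounding contributions being absorbed just as the $\log n$ was in the $\N_0$ estimate).

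The step I expect to be the main obstacle is exactly this last $\UN_0(\AND_n\circ F)$ bound: one has to check that ``first bad coordinate'' really yields a \emph{partition} of the $0$-inputs (not merely a cover), and that the size $\sum_i p_1^{\,i-1}p_0$, after taking logs, is controlled by $\UN_0(F)+(n-1)\UN_1(F)$ without the book-keeping for the coordinate index $i$ swamping the bound. Everything else — the two $1$-side bounds, the $\N_0$ bound, and the De Morgan reduction handling the whole $\OR_n$ half of the lemma — is a routine coordinatewise rectangle construction.
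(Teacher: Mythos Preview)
Your approach is essentially identical to the paper's: the paper also handles only the $\AND_n$ side and appeals to duality (your De~Morgan reduction) for $\OR_n$; it uses the product of $1$-certificates for the $1$-side bounds, index-plus-$0$-certificate for $\N_0$, and exactly your ``least~$i$ with $F(x_i,y_i)=0$, together with unambiguous $1$-certificates for positions $1,\dots,i-1$'' construction for $\UN_0$. The concern you flag about the geometric sum $\sum_{i}p_1^{\,i-1}p_0$ is real but minor: when $p_1\ge 2$ the sum is at most $2p_0p_1^{\,n-1}$, so the log is within an additive~$1$ of the claimed bound, and the degenerate case $p_1=1$ costs an extra $\log n$; the paper's proof glosses over this bookkeeping just as much as you do, and in all applications the bounds are used inside $O(\cdot)$ where it is irrelevant.
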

\begin{proof}
We prove the statements for the functions of the form $\AND_n \circ F$. The proofs for the functions $\OR_n \circ F$
are immediate by duality.  A 0-certificate for $\AND_n \circ F$ on input $((x_1, y_1),\ldots, (x_n,y_n))$
can be the index $i$ such that $F(x_i,y_i)=0$, and 0-certificate for $(x_i,y_i)$ on $F$.  A 1-certificate for $\AND_n \circ F$ 
can be 1-certificates for each $(x_i,y_i)$ on $F$, for $i=1,\ldots, n$. For an unambiguous 0-certificate we can choose an 
unambiguous 0-certificate for $(x_i,y_i)$ on $F$ for the least $i$ such that $F(x_i,y_i)=0$, and unambiguous 1-certificates
for $(x_j,y_j)$ on $F$ for all $j=1,\ldots, i-1$.  For an unambiguous 1-certificate we can choose 
an unambiguous 1-certificate for each $(x_i,y_i)$ on $F$, for $i=1,\ldots, n$.
\end{proof}
We have the following corollary.

\begin{corollary}[Tribes composition]
\label{cor:tribes-composition}
Let $\TR_{n^2} = \OR_n \circ \AND_n$. For any function $F$, we have:
\begin{itemize}[noitemsep,topsep=5pt]
\item $\N(\TR_{n^2} \circ F) = O(n \N(F)+ n \log n)$
\item $\UN(\TR_{n^2} \circ F) \leq n \UN_0(F) + n^2 \UN_1(F)$
\end{itemize}

\end{corollary}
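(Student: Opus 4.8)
The plan is to derive the corollary directly from \lem{composition} by expressing the tribes function as a two-level composition and applying the lemma twice. First I would note that function composition is associative in the relevant sense: since $\TR_{n^2}=\OR_n\circ\AND_n$, an input to $\TR_{n^2}\circ F$ consists of $n^2$ input pairs for $F$ which we may think of as $n$ blocks of $n$ pairs each, and evaluating $\TR_{n^2}\circ F$ amounts to evaluating $F$ on each pair, taking $\AND_n$ within each block, and then $\OR_n$ across the $n$ block-values. Hence $\TR_{n^2}\circ F=\OR_n\circ(\AND_n\circ F)$. Write $G\defeq\AND_n\circ F$, so that $\TR_{n^2}\circ F=\OR_n\circ G$.

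Next I would apply the $\AND$-half of \lem{composition} to $G=\AND_n\circ F$ to get $\N_0(G)\le\N_0(F)+\log n$, $\N_1(G)\le n\N_1(F)$, $\UN_0(G)\le\UN_0(F)+(n-1)\UN_1(F)$, and $\UN_1(G)\le n\UN_1(F)$. Then I apply the $\OR$-half of \lem{composition} to $\OR_n\circ G$ and substitute these bounds. For the nondeterministic complexity this yields $\N_0(\TR_{n^2}\circ F)\le n\N_0(G)\le n\N_0(F)+n\log n$ and $\N_1(\TR_{n^2}\circ F)\le\N_1(G)+\log n\le n\N_1(F)+\log n$; taking the maximum of the two gives $\N(\TR_{n^2}\circ F)=O\big(n\N(F)+n\log n\big)$, as claimed. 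For the unambiguous complexity the same substitution gives
\[
\UN_0(\TR_{n^2}\circ F)\le n\UN_0(G)\le n\UN_0(F)+n(n-1)\UN_1(F)
\]
and
\[
\UN_1(\TR_{n^2}\circ F)\le (n-1)\UN_0(G)+\UN_1(G)\le(n-1)\UN_0(F)+\big((n-1)^2+n\big)\UN_1(F).
\]
Since $n(n-1)\le n^2$ and $(n-1)^2+n=n^2-n+1\le n^2$, both of these are at most $n\UN_0(F)+n^2\UN_1(F)$, so $\UN(\TR_{n^2}\circ F)=\max\{\UN_0,\UN_1\}\le n\UN_0(F)+n^2\UN_1(F)$.

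There is no real difficulty here beyond careful bookkeeping; the only point that requires attention is the asymmetry in \lem{composition} between $\UN_0(\AND_n\circ F)$ (which costs one cheap $0$-certificate plus $n-1$ expensive $1$-certificates) and $\UN_1(\AND_n\circ F)$ (which costs $n$ expensive $1$-certificates), and likewise for the $\OR$ layer. Tracking which one-sided quantity of $F$ feeds into which one-sided quantity of $G$, and then of $\OR_n\circ G$, is exactly what produces the $n\UN_0(F)$-versus-$n^2\UN_1(F)$ split in the final bound, and one must make sure the cross terms $(n-1)\UN_0(F)$ coming from $\UN_1$ do not accidentally get a quadratic coefficient.
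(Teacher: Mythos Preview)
Your proposal is correct and is exactly the derivation the paper has in mind: the corollary is stated immediately after \lem{composition} with no separate proof, and your two-step application of the $\AND$-bounds followed by the $\OR$-bounds, together with the bookkeeping $n(n-1)\le n^2$ and $(n-1)^2+n\le n^2$, is precisely the intended argument.
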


We now analyze the properties of $\N$ and $\UN$ under the cheat sheet operation.

\begin{lemma}[Nondeterministic complexity of cheat sheet functions]
\label{lem:cheat}
Let $F_\CS$ be the cheat-sheet version of a total function $F\colon \X \times \Y 
\rightarrow \{0,1\}$ where $N = \min\{\log|\X|,\log|\Y|\}$. Then
\[
\N(F_\CS) = O(\N(F) \log N), \qquad \UN_1(F_\CS) = O(\N(F) \log N), \qquad \UN_0(F_\CS) = O(\UN(F) \log N).
\]
\end{lemma}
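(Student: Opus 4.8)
The plan is to exhibit explicit nondeterministic certificates for $F_\CS$ and read off their lengths. By \defn{cheat} we have $c=10\log N$, so a certificate of length $O(c\cdot\N(F))$ gives the bound $O(\N(F)\log N)$ and one of length $O(c\cdot\UN(F))$ gives $O(\UN(F)\log N)$. Throughout, fix an input $(\x,\fu,\y,\fv)$ to $F_\CS$; since $F$ is total the address $\ell=(F(x_1,y_1),\dots,F(x_c,y_c))\in\B^c$ is well defined, and we write $r_i\defeq(u_\ell)_i\oplus(v_\ell)_i$ for the $i$-th rectangle pointer stored in the selected cell and $R_{r_i}=A_{r_i}\times B_{r_i}$ for the rectangle it names. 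Since $\mathcal R$ is a cover of $\X\times\Y$ by $F$-monochromatic rectangles, each $R_r$ carries a fixed colour $\mathrm{col}(r)\in\B$, and membership $(x_i,y_i)\in R_{r_i}$ forces $F(x_i,y_i)=\mathrm{col}(r_i)$.

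The key observation is that on a $1$-input the selected cell already pins down which address to read, so no separate certificate for the address is needed. For the $\N_1$ and $\UN_1$ bounds I would use the $1$-certificate consisting of $\ell$ together with, for each $i\in[c]$, the two half-pointers $(u_\ell)_i$ and $(v_\ell)_i$. Alice verifies that each $(u_\ell)_i$ equals the $i$-th block of her cell $u_\ell$, that $x_i\in A_{r_i}$, and that $\mathrm{col}(r_i)=\ell_i$; Bob performs the symmetric checks with $v_\ell$ and $y_i\in B_{r_i}$. If all checks pass then $(x_i,y_i)\in R_{r_i}$ for every $i$, hence $F(x_i,y_i)=\mathrm{col}(r_i)=\ell_i$, so $\ell$ is indeed the address and $G(\x,u_\ell,\y,v_\ell)=1$, i.e.\ $F_\CS=1$; conversely the honest tuple passes on every $1$-input. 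The length is $c+2c\,\N(F)=O(\N(F)\log N)$. Moreover, on a $1$-input both $\ell$ and the half-pointers are read off the input, so this certificate is unique and the $1$-cover is in fact a partition; hence $\UN_1(F_\CS)=O(\N(F)\log N)$, and a fortiori $\N_1(F_\CS)=O(\N(F)\log N)$.

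For $\N_0(F_\CS)$ I would certify a $0$-input by: the address $\ell$; an $\ell_j$-certificate for $(x_j,y_j)$ with respect to $F$ for each $j\in[c]$ (making $\ell$ trustworthy, as a wrong coordinate would have no certificate); an index $i\in[c]$; a bit naming which player's side fails; and the half-pointers $(u_\ell)_i,(v_\ell)_i$. The verifiers check the $F$-certificates, that the half-pointers match their cells, and that $x_i\notin A_{r_i}$ (resp.\ $y_i\notin B_{r_i}$) on the named side. Soundness: the $F$-certificates force $\ell$ to be the true address, and on a genuine $1$-input $(x_i,y_i)\in R_{r_i}$ for all $i$, so neither side can fail. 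The length is $O(c\,\N(F))=O(\N(F)\log N)$, and with the previous paragraph this gives $\N(F_\CS)=O(\N(F)\log N)$. To obtain $\UN_0$ I would make the $0$-certificate canonical: use \emph{unambiguous} $\ell_j$-certificates (length $\leq\UN(F)$ each), take $i$ to be the \emph{least} index at which the selected cell fails, resolve the ``which side'' ambiguity by a fixed rule (say, always name Alice unless $x_i\in A_{r_i}$), and additionally include $(u_\ell)_j,(v_\ell)_j$ for all $j\leq i$ so that the verifiers can confirm $(x_j,y_j)\in R_{r_j}$ for every $j<i$, enforcing minimality of $i$. Every component is then determined by the $0$-input, so the $0$-cover is a partition of length $O(c\,\UN(F))=O(\UN(F)\log N)$.

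The main obstacle is precisely this last point: producing a genuinely \emph{unambiguous} $0$-cover of $F_\CS$---not merely a small one---while keeping its length $O(\UN(F)\log N)$. This is what forces the canonicalization choices above (least failing coordinate, deterministic tie-break between the two players, and supplying the prefix of half-pointers so the verifier can enforce minimality), and it is also the reason the $0$-side bound involves $\UN(F)$ rather than $\N(F)$: making the whole certificate unique requires the sub-certificates pinning down $\ell$ to be unique as well, hence drawn from \emph{unambiguous} $F$-covers. The $1$-side escapes this because, on a $1$-input, the selected cell---which is part of the input---already determines the address, and it does so through rectangle pointers whose verification is local and colour-consistent, leaving no ambiguity.
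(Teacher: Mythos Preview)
Your proof is correct and follows essentially the same strategy as the paper: for the $1$-side you reveal $\ell$ together with the contents $u_\ell,v_\ell$ of the addressed cell and observe that this is uniquely determined by the input; for the $0$-side you first pin down $\ell$ via $F$-certificates on each coordinate, then exhibit a failure of the cheat-sheet check at the addressed cell, upgrading from a cover to a partition by using unambiguous $F$-certificates. This matches the paper's argument.

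The one place where you differ is in how you make the $0$-side acceptance sets into genuine rectangles. The paper simply hands over the full $a=u_\ell$ and $b=v_\ell$ and asks the players to verify that $a\oplus b$ fails to certify; you instead name a specific failing coordinate $i$, a side, and (for the unambiguous version) the prefix of half-pointers needed to enforce minimality of $i$. Your version is more explicit about rectangle structure, which is a virtue, but it is also more machinery than is needed: once the full $a,b$ are revealed, the condition ``$a\oplus b$ does not certify'' is equivalent to ``Alice sees some $x_i\notin A_{r_i}$ or Bob sees some $y_i\notin B_{r_i}$,'' and a single extra bit (``Alice has a failure'' versus ``Alice has no failure but Bob does'') already splits this into two disjoint rectangles with no need to name the failing index or supply a prefix. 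Either way the length stays $O(c\cdot\UN(F))$, so both routes give the same bound.
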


\begin{proof}
We first upper bound $\N_1(F_\CS)$ by giving a protocol.  Let $\x = (x_1, \ldots, x_c), \y=(y_1, \ldots, y_c)$ and 
consider an input $(\x, u_0, \ldots, u_{2^c-1}, \y, v_0, \ldots, v_{2^c-1})$ to $F_\CS$.   The prover provides a proof 
of the 
form $(\ell, a, b)$ where $\ell \in \{0,\cdots, 2^c-1\}, a,b \in \B^{cN(F)}$.  Note that the length of the proof is 
$O(c N(F)) = O(N(F) \log N)$.  The players accept if and only if $u_\ell =a, v_\ell=b$, and 
$a \oplus b$ provides certificates that $F(x_i,y_i) = \ell_i$ for all $i=1, \ldots, c$.  If $F_\CS$ evaluates to $1$ on this 
input, a valid proof always exists by giving $\ell = F^c(\x,\y)$ and $a=u_\ell, b=v_\ell$.  On the 
other hand if $F_\CS$ evaluates to $0$ on this input, then by definition of the cheat sheet function for any 
message $(\ell, a,b)$ it cannot be that $a,b$ agree with $u_\ell, v_\ell$ and 
that $a \oplus b$ certifies that $F^c(\x, \y) = \ell$.  

This protocol is in fact unambiguous.  Say that $F_\CS$ evaluates to $1$ on the input 
$(\x, \fu, \y, \fv)$ and let $\ell=F^c(\x,\y)$.  A valid proof is given by 
$(\ell, u_\ell, v_\ell)$.  Consider another proof $(\ell', a, b)$.  First, if $\ell' \ne \ell$, then $a \oplus b$ cannot certify that 
$F^c(\x,\y) = \ell'$, as $F^c(\x,\y)=\ell$.  Now if $\ell' = \ell$, then the players will only accept if $a=u_\ell$ and 
$b=v_\ell$.  Thus there is a unique accepting proof.

We now turn to bound the $\N_0$ complexity.  Fix a cover $C_1, \ldots, C_{2^{N(F)}}$ of $F$ by 
monochromatic rectangles.  In this case the prover provides a message of the form $(\ell, i_1, \ldots, i_c,a,b)$, 
where $\ell \in \{0,\ldots, 2^c-1\}, i_j \in \B^{N(F)}, a,b \in \B^{cN(F)}$.  Thus the length of the proof is 
$O(cN(F))= O(N\log N)$.  Alice and Bob accept if and only if 
\begin{enumerate}
\item $(x_j, y_j) \in C_{i_j}$ for all $j=1, \ldots, c$.
\item $C_{i_j}$ is $\ell_j$-monochromatic on $F$ for $j=1, \ldots, c$, 
\item $u_\ell =a, v_\ell=b$ and $a \oplus b$ does not provide valid certificates that $F^c(\x,\y)=\ell$.  
\end{enumerate}
If $F_\CS(\x,\fu,\y,\fv)=0$ then there is a valid proof by giving 
$\ell =F^c(\x,\y)$, providing valid certificates for these values, and giving $u_\ell, v_\ell$.  On the other hand, if 
$F_\CS(\x,\fu, \y, \fv)=1$, then if the steps~1,2 of the verification pass then it must be the case that $a,b$ do 
not agree with $u_\ell, v_\ell$, as in this case $u_\ell \oplus v_\ell$ do provide valid certificates.

To upper bound the $\UN_0$ complexity, the protocol is exactly the same except now a partition $R_1, \ldots, 
R_{\chi(F)}$ of rectangles monochromatic for $F$ is used instead of a cover.  In this case, there is a unique 
choice of witnesses $(i_1, \ldots, i_c)$ to certify the correct value $F^c(\x,\y)=\ell$.  The second part $(a,b)$ of a valid 
proof is also uniquely specified as it must agree with the part of the input $(u_\ell, v_\ell)$.
\end{proof}

\begin{corollary}\label{cor:UN}
	For any total function $F\colon \X \times \Y \rightarrow \{0,1\}$
	with $N = \min\{\log|\X|,\log|\Y|\}$, we have
	\begin{itemize}[noitemsep,topsep=5pt]
		\item $\UN(\TR_{n^2}\circ F_\CS)=O(n\UN(F)\log N+n^2\N(F)\log N)$
		\item $\N(\TR_{n^2}\circ F_\CS)=O(n\N(F)\log N)$.
	\end{itemize}
\end{corollary}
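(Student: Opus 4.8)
The plan is to chain together the two results just established. First observe that $F_\CS$ is a total communication function (immediate from \defn{cheat}, since $F$ is total), so \cor{tribes-composition} applies with $F_\CS$ in the role of the base function. This gives
\begin{align*}
\N(\TR_{n^2}\circ F_\CS) &= O\big(n\,\N(F_\CS) + n\log n\big), \\
\UN(\TR_{n^2}\circ F_\CS) &\le n\,\UN_0(F_\CS) + n^2\,\UN_1(F_\CS).
\end{align*}

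Next I would plug in the three estimates of \lem{cheat}: $\N(F_\CS) = O(\N(F)\log N)$, $\UN_0(F_\CS) = O(\UN(F)\log N)$, and $\UN_1(F_\CS) = O(\N(F)\log N)$. Substituting into the $\UN$ bound yields $O(n\,\UN(F)\log N + n^2\,\N(F)\log N)$, exactly the claimed inequality; the asymmetry between the $\UN(F)$ and $\N(F)$ terms is inherited from \lem{cheat}, where an unambiguous $1$-certificate for $F_\CS$ only costs $\N(F)$ per coordinate while an unambiguous $0$-certificate needs the more expensive $\UN(F)$. Substituting into the $\N$ bound yields $O(n\,\N(F)\log N + n\log n)$; the trailing $n\log n$ is lower-order in all our applications (we may assume $F$ is non-constant, so $\N(F)\log N\ge 1$, and whenever we invoke this corollary in the proof of \thm{un-vs-r} the tribes arity obeys $n=\poly(N)$, so $\log n = O(\log N) = O(\N(F)\log N)$), hence it is absorbed and we are left with $O(n\,\N(F)\log N)$.

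I do not expect any real obstacle here: the corollary is essentially bookkeeping once \cor{tribes-composition} and \lem{cheat} are available, and the only point needing a moment's care is the harmless absorption of the additive $n\log n$ term noted above.
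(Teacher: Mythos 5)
Your derivation is correct and is exactly the intended one: the paper states \autoref{cor:UN} without proof precisely because it is the immediate combination of \autoref{cor:tribes-composition} applied to $F_\CS$ with the three bounds of \autoref{lem:cheat}. Your extra care in absorbing the additive $n\log n$ term (valid in all the paper's applications, where $F$ is non-constant and $n=\poly(N)$) is a point the paper glosses over, and your handling of it is sound.
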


We put these together to get an upper bound on $\UN$ for the iterated function.
Let $F_0=\Disj_n$ and $F_{i+1}\defeq\TR_{n^2}\circ (F_i)_\CS$
for all $i\geq 0$. The function $F_k$ for appropriately chosen $k$ will
provide the near-quadratic separation.

\begin{claim}
\label{claim:un}
	There is a constant $a$ such that
	for any $k\geq 0$, we have
		\begin{itemize}[noitemsep,topsep=5pt]
			\item $\UN(F_k)=O(n^{k+2}a^k k^k\log^k n)$
			\item $\N(F_k)=O(n^{k+1}a^k k^k \log^k n).$
		\end{itemize}
	When $k$ is constant, these simplify to $\tO(n^{k+2})$ and
	$\tO(n^{k+1})$, respectively.
\end{claim}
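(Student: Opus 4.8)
The plan is to prove both estimates simultaneously by induction on $k$, using the recursive definition $F_{k+1}=\TR_{n^2}\circ (F_k)_\CS$ together with \autoref{cor:UN}. The induction must carry, besides bounds on $\UN(F_k)$ and $\N(F_k)$, a bound on the input size $N_k\defeq\min\{\log|\X_k|,\log|\Y_k|\}$ of $F_k$, since the ``$\log N$'' factor appearing in \autoref{cor:UN} refers to this quantity and it grows with $k$. The base case $k=0$ is immediate: for $F_0=\Disj_n$, Alice can send her $n$-bit set, so $\D(\Disj_n)\le n+1$ and hence $\UN(\Disj_n),\N(\Disj_n)=O(n)$ with $N_0=n$, which (reading $0^0$ and $\log^0 n$ as $1$) lies inside the claimed $O(n^2)$ and $O(n)$.

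For the inductive step, \autoref{cor:UN} applied with $F=F_k$ gives
\[
\N(F_{k+1})=O\!\left(n\,\N(F_k)\log N_k\right),\qquad
\UN(F_{k+1})=O\!\left(n\,\UN(F_k)\log N_k+n^2\,\N(F_k)\log N_k\right).
\]
The structural point that makes the exponents come out right is the factor-of-$n$ gap between the hypotheses $\N(F_k)=O(n^{k+1}a^k k^k\log^k n)$ and $\UN(F_k)=O(n^{k+2}a^k k^k\log^k n)$: with this gap the two summands $n^2\N(F_k)$ and $n\,\UN(F_k)$ are of the same order, so $\UN$ and $\N$ each pick up exactly one factor of $n$, one factor of $\log n$, and a multiplicative correction of size $O(\log N_k/\log n)$ per level. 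Assuming, as a fourth part of the induction hypothesis, that $\log N_k=O(k\log n)$, that correction is $O(k)$, which is absorbed by passing from $k^k$ to $(k+1)^{k+1}$ (since $(k+1)^{k+1}/k^k\ge k+1$), the remaining constant being absorbed into a large enough absolute constant $a$; so both bounds reproduce themselves at level $k+1$.

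It remains to establish the fourth hypothesis $\log N_k=O(k\log n)$, which is the crux. Composing with $\TR_{n^2}$ multiplies the input size by $n^2$, and the cheat-sheet construction (\autoref{def:cheat}) turns an input of size $M$ into one of size $O(cM+2^c\cdot c\cdot\N(F_k))$, where $c$ is the address length. Crucially, \autoref{thm:ICCS} only requires $c\ge\log\R(F_k)$, so I would take $c=\Theta(\log\R(F_k))=O(k\log n)$; then $2^c=n^{O(k)}$, and using the inductive bounds (all of $\N(F_k)$, $\R(F_k)\le\D(F_k)=O(N_k)$, and $N_k$ being $n^{O(k)}$) we get $N_{k+1}=n^{O(k)}\cdot N_k$, which solves to $\log N_k=O(k\log n)$. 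The main obstacle is precisely this: one must resist choosing $c$ proportional to $\log(\text{current input size})$ — which would make $\log N_k$ grow exponentially in $k$ and spoil the $a^k k^k$ form — and instead exploit the weaker requirement $c\ge\log\R(F_k)$, carrying a bound on $N_k$ (hence on $\log\R(F_k)$) through the induction alongside the bounds on $\UN$ and $\N$. Finally, specializing to constant $k$ collapses $a^k k^k\log^k n$ to $\polylog(n)$, giving $\UN(F_k)=\tO(n^{k+2})$ and $\N(F_k)=\tO(n^{k+1})$.
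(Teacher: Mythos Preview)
Your induction via \autoref{cor:UN} with base case $\N(\Disj_n),\UN(\Disj_n)=O(n)$ is exactly the paper's approach, and you are right that the only nontrivial point is controlling $\log N_k$. You have in fact been more careful than the paper here: the paper simply asserts ``$\log N=O(k\log n)$'', but under its own \autoref{def:cheat} (which fixes $c=10\log N$, so that the cheat-sheet input size is $O(N^{12})$ as the remark after that definition computes) one would literally get $\log N_k=\Theta(12^k\log n)$. Your observation that \autoref{thm:ICCS} only needs $c\ge\log\R(F_k)$, which is $O(k\log n)$, is the correct remedy.

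That said, your argument for $\log N_k=O(k\log n)$ does not close as written. First, the recurrence you state, $N_{k+1}=n^{O(k)}\cdot N_k$, would only give $\log N_k=O(k^2\log n)$, not $O(k\log n)$. Second, the derivation is circular: you bound $c_k$ via $\R(F_k)\le\D(F_k)=O(N_k)$ and then feed the inductive bound on $N_k$ back into the recurrence for $N_{k+1}$, so the hidden constant in ``$n^{O(k)}$'' never stabilises. The clean fix is to bound $\D(F_k)$ by its own recurrence, which does not involve $N_k$ at all: from $\D((F_k)_\CS)=O(c_k\D(F_k))$ one gets $\D(F_{k+1})=O(n^2 c_k\D(F_k))$, and taking $c_k=\Theta(\log\D(F_k))$ (or simply $c_k\coloneqq C(k{+}1)\log n$ a priori) yields $\log\D(F_k)=O(k\log n)$ with a fixed constant. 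With $c_k=O(k\log n)$ now established independently, the input-size recurrence becomes $N_{k+1}\le n^{O(1)}N_k+n^{O(k)}$ (the $2^{c_k}c_k\N(F_k)$ contribution is an \emph{additive} $n^{O(k)}$, not a multiplier of $N_k$), and this additive form does solve to $\log N_k=O(k\log n)$.
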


\begin{proof}
	This follows from \autoref{cor:UN} by induction on $k$.
	In the base case, we have $\N(\Disj_n)=O(\UN(\Disj_n))=O(n)$.
	The induction step follows immediately from \autoref{cor:UN}.
	The only subtlety is the size of $N$, which increases polynomially with
	each iteration, which means $\log N=O(k\log n)$. This gives the
	$a^kk^k\log^k n$ factor.
\end{proof}

Next, we prove a lower bound on $\R(F_k)$. To do this, we need to get a handle
on the behavior of $\R$ when the function is composed with $\AND_n$ and
$\OR_n$. We use the following definition and fact.

\begin{definition}
	Let $F\colon \X \times \Y \rightarrow \{0,1,*\}$
	be a (partial) function and let $\eps\in(0,1/2)$.
	For any protocol $\Pi$ and any $b\in\{0,1\}$,
	\[\IC^b(\Pi)\defeq\max_{\mu\text{ on } F^{-1}(b)}
	\IC^\mu(\Pi).\]
\end{definition}
The following claim shows a composition result for one-sided information complexity. A result similar in spirit for the $\OR_n \circ \AND$ function was shown by~\cite{BJKS04}.
\begin{claim}[Composition] \label{fact:AND}
	Let $F\colon \X \times \Y \rightarrow \{0,1,*\}$
	be a (partial) function, and let $\eps\in(0,1/2)$ be a constant.
	For any protocol $\Pi$ for $\OR_n\circ F$ with
	worst case error at most $\eps$, there is a protocol $\Pi^\prime$
	for $F$ with worst error at most $\eps$ such that
	\[\IC^0(\Pi^\prime)=O(\IC^0(\Pi)/n)\quad \textrm{and} \quad
	\CC(\Pi^\prime)=O(\CC(\Pi)).\]
	Similarly, if $\Pi$ is a protocol for $\AND_n\circ F$ with
	worst case error at most $\eps$,
	there is a protocol $\Pi^\prime$ for $F$ with
	worst case error at most $\eps$, such that
	\[\IC^1(\Pi^\prime)=O(\IC^1(\Pi)/n)\quad \textrm{and} \quad
	\CC(\Pi^\prime)=O(\CC(\Pi)).\]
\end{claim}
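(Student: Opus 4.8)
The plan is to prove this by a direct-sum / embedding argument in the style of Bar-Yossef, Jayram, Kumar and Sivakumar~\cite{BJKS04}. I describe it for $\OR_n\circ F$ and $\IC^0$; the $\AND_n\circ F$, $\IC^1$ statement is entirely dual, obtained by interchanging the roles of $0$ and $1$ everywhere and padding with $1$-inputs of $F$ instead of $0$-inputs, so that the composed instance again evaluates to $F(x,y)$.

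First the construction of $\Pi'$. We may assume $F^{-1}(0)\neq\emptyset$, as otherwise $\IC^0(\Pi')=0$ and there is nothing to prove. Fix a $0$-monochromatic rectangle $A\times B$ of $F$ and a product distribution $\rho=\rho^A\otimes\rho^B$ supported on $A\times B$ (for instance $\rho$ could be a point mass on a single $0$-input). On input $(x,y)$, the players use public coins to pick an index $I\in[n]$ uniformly, and then build an instance $(\mathbf x,\mathbf y)$ of $\OR_n\circ F$ by placing $(x,y)$ in coordinate $I$ and, in each coordinate $j\neq I$, an independent pair $(\tilde x_j,\tilde y_j)$ with $\tilde x_j\sim\rho^A$ sampled by Alice and $\tilde y_j\sim\rho^B$ sampled by Bob — this is possible precisely because $\rho$ is a product distribution. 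They run $\Pi$ on $(\mathbf x,\mathbf y)$ and output its answer. Since every padding coordinate is a $0$-input of $F$, we have $(\OR_n\circ F)(\mathbf x,\mathbf y)=F(x,y)$ for every $(x,y)\in\dom(F)$, hence $\err(\Pi')\le\err(\Pi)\le\eps$. All the extra randomness is public or private, so $\CC(\Pi')=\CC(\Pi)$.

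Next the information bound. Fix any distribution $\nu$ on $F^{-1}(0)$ and let $(x,y)\sim\nu$. The transcript of $\Pi'$ is $(\Pi,I)$ together with $\Pi$'s public coins, and $I$ is independent of $(x,y)$; unwinding $\IC^\nu(\Pi')=\I(x:\Pi I~|~y)+\I(y:\Pi I~|~x)$ and conditioning on $I$ gives
\[
\IC^\nu(\Pi') \;=\; \frac1n\sum_{i=1}^n\Big(\I(x:\Pi~|~y,\,I=i)+\I(y:\Pi~|~x,\,I=i)\Big),
\]
where, given $I=i$, the instance fed to $\Pi$ has coordinate $i$ distributed as $\nu$ and all other coordinates independently distributed as $\rho$. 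I would then compare this with the information cost of $\Pi$ on a single ``direct-sum'' distribution $\sigma$ on $(\OR_n\circ F)^{-1}(0)$ in which \emph{all} $n$ coordinates are active; since $\sigma$ is supported on $0$-inputs, $\IC^\sigma(\Pi)\le\IC^0(\Pi)$. Expanding $\I(\mathbf X:\Pi~|~\mathbf Y)$ under $\sigma$ by the chain rule (\fct{chain-rule}) as $\sum_i\I(X_i:\Pi~|~\mathbf Y, X_{<i})$, and using that conditioning on variables independent of $(X_i,Y_i)$ only increases mutual information (a consequence of \fct{chain-rule} and \fct{nonneg}), each summand is bounded below by the $i$-th single-coordinate information appearing above; the analogous computation for Bob then yields $\sum_{i=1}^n\big(\I(x:\Pi~|~y,I=i)+\I(y:\Pi~|~x,I=i)\big)\le O(\IC^\sigma(\Pi))\le O(\IC^0(\Pi))$. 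Combined with the displayed identity this gives $\IC^\nu(\Pi')=O(\IC^0(\Pi)/n)$, and since $\nu$ was arbitrary, $\IC^0(\Pi')=O(\IC^0(\Pi)/n)$, as claimed.

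The delicate point — and the step I expect to be the main obstacle — is making the last paragraph genuinely produce the \emph{sum} of the $n$ single-coordinate informations rather than their average: this is what yields the factor $n$, and it forces the padding coordinates to be ``active'' under a product distribution and requires care in setting up $\sigma$ so that the coordinates are independent and the random index $I$ costs only an $O(\log n)$ additive overhead (negligible at the scale of the quantities propagated in the recursion; it may also require observing that the hard $0$-distribution for $\OR_n\circ F$ can be taken of a convenient, coordinate-factored form). This is exactly the role played by the collapsing/product structure in~\cite{BJKS04}, with the product padding distribution $\rho$ on a monochromatic rectangle serving that purpose here. Everything else — the correctness of $\Pi'$, the bound on $\CC(\Pi')$, and the $\AND_n$/$\IC^1$ case — is routine.
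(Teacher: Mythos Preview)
Your outline has a real gap in the information-cost step. You pad the non-active coordinates with a \emph{fixed} product distribution $\rho$ on a $0$-rectangle and then try to match the resulting single-coordinate informations $\I(x:\Pi\mid y,\,I{=}i)$ against the chain-rule summands $\I(X_i:\Pi\mid\mathbf Y,X_{<i})$ coming from $\sigma=\nu^{\otimes n}$. These are not comparable: in the chain-rule term the other coordinates are (partially fixed samples from) $\nu$, whereas in your protocol they are samples from $\rho$. The statement ``conditioning on variables independent of $(X_i,Y_i)$ only increases mutual information'' is true but beside the point---the discrepancy is not about extra conditioning but about $\Pi$ being run on \emph{differently distributed} padding in the two expressions. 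A product $\rho$ on a monochromatic rectangle cannot stand in for an arbitrary (in general non-product) $\nu$, and nothing forces the worst-case $0$-distribution on $F$ to factor across Alice and Bob.

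The paper's proof (following \cite{BJKS04}) repairs exactly this with the $(D,U)$ device: take $(X_iY_i)\sim\mu$ i.i.d., let $D_i\in\{A,B\}$ be a public coin, and set $U_i\in\{X_i,Y_i\}$ accordingly. Conditioned on $(D_i,U_i)$ the \emph{remaining} half of coordinate $i$ can be sampled privately by the appropriate player from the correct conditional of $\mu$, so the padding is genuinely $\mu$-distributed rather than a surrogate product. Superadditivity (\fullbref{fact:infoind}) then yields $\I(XY:\Pi\mid DU)\ge\sum_i\I(X_iY_i:\Pi\mid DU)$, and expanding the chosen term over $D_j\in\{A,B\}$ turns it into $\tfrac12\,\IC^\mu(\Pi^1)$ for the embedded protocol $\Pi^1$. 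Note that $\Pi^1$ depends on $\mu$ (both through the coordinate $j$ and through the private conditional sampling), so a final minimax step (\fullbref{fact:equiv}) together with error reduction (\fullbref{fact:boost}) is required to produce a single $\Pi'$ achieving the bound for every $\mu$ simultaneously. Your construction is $\nu$-independent by design, which is precisely why its information bound cannot be pushed through; the random index $I$ and the hope that the hard distribution ``can be taken coordinate-factored'' do not substitute for this.
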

\begin{proof}
We show the result for $\OR_n\circ F$. The result for $\AND_n\circ F$ follows similarly.  Let $\mu$ be a distribution on $F^{-1}(0)$. We will exhibit a protocol $\Pi^1$ for $F$ with worst case error at most $\eps$, such that 
\[\IC^\mu(\Pi^1)=O(\IC^0(\Pi)/n)\quad \textrm{and} \quad
	\CC(\Pi^1)=O(\CC(\Pi)).\]
	The desired result then follows from~\fullbref{fact:equiv} and~\fullbref{fact:boost}.
Let us define random variables:
\begin{enumerate}
\item  $XY=(X_1Y_1, \ldots, X_nY_n)$ where each $(X_iY_i) \sim \mu$ and i.i.d. 
\item $D=(D_1, \ldots, D_n)$ where each $D_i$ is uniformly distributed in $\{A,B\}$ and i.i.d. 
\item $U=(U_1, \ldots, U_n)$ where for each $i$, $U_i = X_i$ if $D_i=A$ and $U_i=Y_i$ if $D_i=B$. 
\end{enumerate}
Using~\fullbref{fact:infoind} we have,
$$ I(XY:\Pi~|~DU) \geq \sum_{i=1}^n   I(X_iY_i:\Pi~|~DU).$$
This implies there exists $j \in [n]$ such that  
\begin{align*}
\frac{1}{n} I(XY:\Pi~|~DU) &\geq I(X_jY_j:\Pi~|~DU)  \\
&=  I(X_jY_j:\Pi~|~DjUj D_{-j}U_{-j}) \\
& =   \frac{1}{2} \left(I(X_j:\Pi~|~Y_j D_{-j}U_{-j})  + I(X_j:\Pi~|~Y_j D_{-j}U_{-j}) \right) \\
& =   \frac{1}{2} \left(I(X_j:\Pi D_{-j}U_{-j} ~|~Y_j )  + I(X_j:\Pi D_{-j}U_{-j}~|~Y_j) \right). & \hspace{-4.3pt}\mbox{(\fullref{fact:barhopping})}
\end{align*}
Define protocol $\Pi^1$ as follows. Alice and Bob insert their inputs at the $j$-th coordinate and generate $(D_{-j}U_{-j})$ using public-coins. They go ahead and simulate $\Pi$ afterwards.
From above we have 
\begin{equation}
\frac{1}{n} I(XY:\Pi~|~DU) \geq \frac{1}{2} \IC^\mu(\Pi^1). \label{eq:bjks}
\end{equation}
It is clear that $\CC(\Pi^1) \leq \CC(\Pi)$ and the worst case error of $\Pi^1$ is upper bounded by the worst case error of $\Pi$.
Consider,
\begin{align*}
\IC^0(\Pi) & \geq \IC^{XY}(\Pi) \\
& = I(X:\Pi~|~Y) + I(Y:\Pi~|~X) \\
& =   I(X:\Pi~|~Y) + I(DU : \Pi~|~ XY) + I(Y:\Pi~|~X) + I(DU : \Pi~|~ XY) \hspace{-3.5em} & \quad (DU \leftrightarrow XY \leftrightarrow \Pi)  \\
& = I(XDU:\Pi~|~Y) +  I(YDU:\Pi~|~X)  & \quad \mbox{(\fullref{fact:chain-rule})}\\
& \geq I(X:\Pi~|~YDU) +  I(Y:\Pi~|~XDU)  & \quad \mbox{(\fullref{fact:barhopping})}\\
& = I(X:\Pi~|~YDU) +  I(X:Y~|~DU) +  I(Y:\Pi~|~XDU)  & \quad (X \leftrightarrow DU \leftrightarrow Y) \\
& = I(X:\Pi Y~|~DU) +  I(Y:\Pi~|~XDU) & \quad \mbox{(\fullref{fact:chain-rule})}\\
& \geq I(X:\Pi~|~DU) +  I(Y:\Pi~|~XDU) & \quad \mbox{(\fullref{fact:mono})}\\
& = I(XY:\Pi~|~DU). & \quad \mbox{(\fullref{fact:chain-rule})}
\end{align*}
This along with Eq.~\eqref{eq:bjks} shows the desired.
\end{proof}

To be able to use this, we need a way of converting between
$\IC^0$, $\IC^1$, and $\IC$.
The following fact was shown by \cite[Corollary 18]{GJPW15} using the
``information odometer'' of Braverman and
Weinstein \cite{BW15} (the upper bound on
$\CC(\Pi^\prime)$ was not stated explicitly
in \cite{GJPW15}, but it traces back to \cite[Theorem 3]{BW15}, which was used in \cite{GJPW15}).

\begin{fact}\label{fact:IC_onesided}
	Let $F\colon \X \times \Y \rightarrow \{0,1\}$
	be a function.
	Let $1/2> \delta > \eps >0$ and $b \in \{0,1\}$. Then
	for any protocol $\Pi$ with $\err(\Pi)<\eps$,
	there is a protocol $\Pi^\prime$ with
	$\err(\Pi^\prime)<\delta$ such that
	\[\IC(\Pi^\prime) = O(\IC^b(\Pi)+\log\CC(\Pi))\quad \textrm{and} \quad
	\CC(\Pi^\prime) = O(\CC(\Pi)\log\CC(\Pi)).\]
\end{fact}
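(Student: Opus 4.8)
\emph{Proof strategy.} Since this is a known statement (it is \cite[Corollary~18]{GJPW15}), the plan is to recall and re-derive it from the \emph{information odometer} of Braverman and Weinstein~\cite{BW15}. Assume for concreteness $b=0$; the case $b=1$ is symmetric. The odometer lets Alice and Bob run $\Pi$ while maintaining, at an additive overhead of only $O(\log\CC(\Pi))$ bits of communication and $O(\log\CC(\Pi))$ bits of information, an online estimate of the information revealed so far about their inputs. Using it I would build, for each fixed input distribution $\mu$ on $\dom(F)$, a protocol $\Pi_\mu$ as follows: run $\Pi$ with the odometer attached, and the instant the running estimate of accumulated information crosses a threshold $\theta:=\Theta\big(\IC^0(\Pi)+\log\CC(\Pi)\big)$, abort and output $1$; on runs that never abort, output whatever $\Pi$ outputs. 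By construction $\CC(\Pi_\mu)=O(\CC(\Pi)\log\CC(\Pi))$, and since $\Pi_\mu$ halts before revealing more than $\theta$ bits, $\IC^\mu(\Pi_\mu)=O(\IC^0(\Pi)+\log\CC(\Pi))$.

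\emph{Packaging.} I would then argue correctness as follows. On a $1$-input, aborting to $1$ is the correct answer, and on non-aborting runs $\Pi_\mu$ agrees with $\Pi$, so its error there is at most $\eps$. On a $0$-input an abort is an error, but the hypothesis bounds the expected accumulated information on any distribution supported on $F^{-1}(0)$ by $\IC^0(\Pi)\le\theta/O(1)$, so Markov's inequality applied to the nonnegative accumulated-information variable keeps the abort probability there below $\delta-\eps$ once the implicit constant in $\theta$ is large enough; hence $\err^\mu(\Pi_\mu)<\delta$ for every $\mu$. Finally I would feed the family $\{\Pi_\mu\}_\mu$ into \fullbref{fact:equiv} (with a small constant $\alpha$) and clean up the error with \fullbref{fact:boost}, obtaining a single protocol $\Pi'$ with $\IC(\Pi')=O(\IC^0(\Pi)+\log\CC(\Pi))$, $\CC(\Pi')=O(\CC(\Pi)\log\CC(\Pi))$, and $\err(\Pi')<\delta$, as required.

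\emph{Main obstacle.} The hard part is precisely the claim that $\Pi_\mu$ can simultaneously (i) halt before revealing more than $\Theta(\IC^0(\Pi))$ bits relative to the adversarial $\mu$, and (ii) rarely abort on the $0$-inputs inside $\mu$. These pull against each other: to control the two-sided quantity $\IC^\mu(\Pi_\mu)$ one wants the odometer to measure information relative to $\mu$ itself, but then a threshold as small as $\IC^0(\Pi)$ can be crossed often on the $0$-part of $\mu$ (whose $\mu$-conditioned information need not be small), injecting error on $0$-inputs; instrumenting the odometer only against the $0$-part of $\mu$ keeps aborts rare on $0$-inputs but leaves the information carried on the $1$-inputs uncontrolled, so $\IC^\mu(\Pi_\mu)$ could degrade to $\Theta(\CC(\Pi))$. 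Reconciling this — showing that, up to the unavoidable $O(\log\CC(\Pi))$ online-estimation slack, the information one must track to stop \emph{safely} is no larger than $\IC^0(\Pi)$ exactly on the inputs where stopping hurts — is the technical core I would import from \cite{BW15,GJPW15}; everything else (Markov, the minimax of \fullbref{fact:equiv}, the boosting of \fullbref{fact:boost}) is routine.
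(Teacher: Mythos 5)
The paper does not prove this fact at all: it is imported verbatim as \cite[Corollary~18]{GJPW15}, with the communication bound traced to \cite[Theorem~3]{BW15}. So there is no in-paper argument to compare against; what you have written is a reconstruction of the external proof, and its overall shape (odometer, abort to the answer $1-b$ once the running information estimate crosses a threshold $\theta=\Theta(\IC^b(\Pi)+\log\CC(\Pi))$, then the minimax of \fullbref{fact:equiv} and the boosting of \fullbref{fact:boost}) is the right one.

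However, as a standalone proof your write-up has a genuine gap, and it sits exactly where you place your ``main obstacle'': the Markov step in the packaging paragraph is not valid as stated. The odometer's accumulated-information variable is a pointwise divergence measured \emph{relative to the full distribution $\mu$}, whereas the hypothesis $\IC^0(\Pi)$ only controls the expectation of the analogous variable measured relative to distributions supported on $F^{-1}(0)$. Writing $\mu=p\mu_0+(1-p)\mu_1$, the conditional expectation over the $0$-part of $\mu$ of the $\mu$-relative odometer reading is \emph{not} bounded by $\IC^{\mu_0}(\Pi)\le\IC^0(\Pi)$; the priors in the divergences differ, and the discrepancy (roughly a $\log(1/p)$-type additive term coming from $\DIV(P\,\|\,pQ_0+(1-p)Q_1)\le\DIV(P\|Q_0)+\log(1/p)$, with the further complication that the relevant mixing weight depends on the conditioning player's input) is unbounded when $p$ is small. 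Closing this requires trading that loss against the fact that a light $0$-part contributes little distributional error — or whatever bookkeeping \cite{GJPW15} actually performs — and you explicitly defer this to the very citations the paper uses rather than supplying it. You correctly diagnose the tension, but diagnosing it is not resolving it, so the proposal does not constitute a self-contained proof. (A minor additional quibble: you describe the odometer's communication overhead as additive $O(\log\CC(\Pi))$ but then correctly claim the multiplicative bound $\CC(\Pi_\mu)=O(\CC(\Pi)\log\CC(\Pi))$; the overhead is per-round, hence multiplicative, which is why the statement carries the $\log\CC(\Pi)$ factor.)
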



\begin{theorem}
\label{thm:recursion}
There is a constant $b$ such that
for every $k \leq n^{1/10}$, we have
\[\R(F_k) = \Omega\left(\frac{n^{2k+1}}{b^k k^{3k}\log^{3k} n}\right).\]
\end{theorem}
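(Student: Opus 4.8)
The plan is to prove, by induction on $k$, the slightly stronger statement that $\IC(F_k) = \Omega\!\left(n^{2k+1}/(b^k k^{3k}\log^{3k}n)\right)$; since $\R(F_k)\geq\IC(F_k)$ this implies the theorem. The base case $k=0$ is the classical information-complexity lower bound $\IC(\Disj_n) = \Omega(n)$ \cite{BJKS04}. For the inductive step I must show that one construction step $F_{i+1} = \TR_{n^2}\circ (F_i)_\CS = \OR_n\circ\AND_n\circ (F_i)_\CS$ multiplies information complexity by $\Omega(n^2)$ while losing only $\poly(k,\log n)$ factors. One bookkeeping point is needed first: I take the cheat-sheet index length at stage $j$ to be $c_j = \Theta(\log\R(F_j)) = O(k\log n)$ (all that \thm{ICCS} requires), which keeps every $F_j$ of input size $n^{O(k)}$, so every protocol encountered below can be assumed to have $\CC = n^{O(k)}$ and hence every ``$\log\CC$'' term that arises is $O(k\log n)$.

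Fix a $1/3$-error protocol $\Pi$ for $F_{i+1}$ with $\CC(\Pi) = \R(F_{i+1})$, so $\IC(\Pi)\leq\CC(\Pi)$. I peel the tribes gadget one De Morgan layer at a time. Apply the $\OR_n$ half of \fct{AND} to obtain a protocol $\Pi_1$ for $\AND_n\circ (F_i)_\CS$ with $\IC^0(\Pi_1) = O(\IC^0(\Pi)/n) = O(\IC(\Pi)/n)$ and $\CC(\Pi_1) = O(\CC(\Pi))$. Because $\IC^0$ and $\IC^1$ do not chain, I use \fct{IC_onesided} to convert $\Pi_1$ into a protocol $\Pi_2$ for $\AND_n\circ (F_i)_\CS$ with two-sided $\IC(\Pi_2) = O(\IC^0(\Pi_1)+\log\CC(\Pi_1))$ and $\CC(\Pi_2) = O(\CC(\Pi_1)\log\CC(\Pi_1))$, so in particular $\IC^1(\Pi_2)\leq\IC(\Pi_2)$. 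Now the $\AND_n$ half of \fct{AND} yields a protocol $\Pi_3$ for $(F_i)_\CS$ with $\IC^1(\Pi_3) = O(\IC^1(\Pi_2)/n)$ and $\CC(\Pi_3) = O(\CC(\Pi_2))$, and a second use of \fct{IC_onesided} converts it to a protocol $\Pi_4$ for $(F_i)_\CS$ with two-sided $\IC(\Pi_4) = O(\IC^1(\Pi_3)+\log\CC(\Pi_3))$ and $\CC(\Pi_4) = O(\CC(\Pi_3)\log\CC(\Pi_3))$. Since $(F_i)_\CS$ is a $(F_i,\G)$-lookup function for a nontrivial XOR family (the Remark after \defn{cheat}), \thm{ICCS} applies and gives a $1/3$-error protocol $\Pi_5$ for $F_i$ with $\IC(\Pi_5) = O(c_i^3\,\IC(\Pi_4))$ and $\CC(\Pi_5) = O(c_i^2\,\CC(\Pi_4))$. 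Chasing these bounds, using $c_i = O(k\log n)$ and $\CC(\Pi) = \R(F_{i+1}) = n^{O(k)}$ (so $\log\CC(\Pi) = O(k\log n)$), yields
\[
\IC(F_i) \;\leq\; \IC(\Pi_5) \;=\; O\!\left(c_i^3\Big(\tfrac{\IC(\Pi)}{n^2} + \log\CC(\Pi)\Big)\right) \;=\; O\!\left(\tfrac{k^3\log^3 n}{n^2}\,\R(F_{i+1})\right) + O(k^4\log^4 n).
\]
Because $k\leq n^{1/10}$, the inductive lower bound $\IC(F_i) = \Omega(n^{2i+1}/(b^i i^{3i}\log^{3i}n))$ (and the eventual target) dominates the additive $O(k^4\log^4 n)$ term, so rearranging gives $\R(F_{i+1}) = \Omega\!\left(n^2\IC(F_i)/(k^3\log^3 n)\right) \geq \IC(F_{i+1})$, which closes the induction with $b$ chosen to absorb the implied constants; one also intersperses \fct{boost} after each step to keep the worst-case error at $1/3$, which costs only further constant factors.

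The main obstacle is precisely this double bookkeeping. Because \fct{AND} controls $\IC^0$ for an outer $\OR_n$ but only $\IC^1$ for an outer $\AND_n$, peeling the two halves of $\TR_{n^2}$ forces a detour through \fct{IC_onesided}, whose guarantee degrades with $\log\CC$; moreover \fct{IC_onesided}, \fct{AND}, and \thm{ICCS} all inflate the communication, so one must verify that $\CC$ stays $n^{O(k)}$ throughout the whole recursion (this is where the polylogarithmic choice of index length and the hypothesis $k\leq n^{1/10}$ enter). Everything else — unrolling the recursion $I_j = O(c^3(I_{j-1}/n^2) + \poly(k,\log n))$ over $k$ steps, checking the geometric series converges, and collecting the accumulated $c^{3k} = O(b^k k^{3k}\log^{3k}n)$ factor — is routine.
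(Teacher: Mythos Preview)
Your overall strategy—peel off $\OR_n$, then $\AND_n$ (each via \fct{AND} followed by \fct{IC_onesided} and \fct{boost}), then the cheat sheet via \thm{ICCS}—matches the paper's. But the inductive step as written does not close the induction you set up. You announce an induction on $\IC(F_k)$, yet you begin the step with a protocol $\Pi$ chosen so that $\CC(\Pi)=\R(F_{i+1})$ and then only use $\IC(\Pi)\le\CC(\Pi)=\R(F_{i+1})$; the displayed inequality therefore bounds $\IC(F_i)$ in terms of $\R(F_{i+1})$, and rearranging yields a lower bound on $\R(F_{i+1})$, not on $\IC(F_{i+1})$. The appended ``$\ge\IC(F_{i+1})$'' is the standard inequality $\R\ge\IC$, which points the wrong way for what you need: from $\R(F_{i+1})\ge L$ and $\R(F_{i+1})\ge\IC(F_{i+1})$ nothing follows about $\IC(F_{i+1})$ versus $L$. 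So the induction hypothesis is never reproduced at level $i+1$.

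The easy fix is to start instead with a protocol $\Pi$ for $F_{i+1}$ with $\IC(\Pi)\le\IC(F_{i+1})+1$; your own observation that every protocol may be assumed to have $\CC\le n^{O(k)}$ keeps the $\log\CC$ terms at $O(k\log n)$ regardless of which $\Pi$ you pick, and then your chain gives $\IC(F_i)\le O(c_i^3/n^2)\,\IC(F_{i+1})+O(k^4\log^4 n)$, which genuinely closes the induction on $\IC$. The paper sidesteps the whole issue by not doing induction at all: it starts with a single $\R$-optimal protocol for $F_k$, peels all $k$ layers in sequence while tracking both $\IC$ and $\CC$ of that one evolving protocol, lands at a protocol $\Psi$ for $\Disj_n$, and reads off the bound from $\IC(\Disj_n)=\Omega(n)$ after summing the resulting geometric series in $(k^3\log^3 n)/n^2$.
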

\begin{proof}
	Consider the protocol $\Pi$ for $F_k$
	with error at most $1/3$ such that $\CC(\Pi)=\R(F_k)$,
	and hence $\IC(\Pi)=O(\R(F_k))$. Recall that
	$F_k=\TR_{n^2}\circ (F_{k-1})_\CS=\OR_n\circ\AND_n\circ(F_{k-1})_\CS$.
	Using \autoref{fact:AND}, we get a protocol $\Pi'$ for
	$\AND_n\circ(F_{k-1})_\CS$ with $\err(\Pi')\leq 1/3$,
	$\IC^0(\Pi')=O(\IC^0(\Pi)/n)=O(\IC(\Pi)/n)=O(\R(F_k)/n)$,
	and $\CC(\Pi')=O(\CC(\Pi))$. Using \autoref{fact:IC_onesided},
	we get a protocol $\Pi''$ for $\AND_n\circ(F_{k-1})_\CS$ with
	$\err(\Pi'')\leq 2/5$,
	$\IC(\Pi'')=O(\IC^0(\Pi')+\log\CC(\Pi'))=O(\R(F_k)/n+\log\R(F_k))$,
	and $\CC(\Pi'')=O(\CC(\Pi')\log\CC(\Pi'))=O(\R(F_k)\log\R(F_k))$.
	Using \fullbref{fact:boost}, we get a protocol $\Pi'''$ for
	$\AND_n\circ(F_{k-1})_\CS$ with $\err(\Pi''')\leq 1/3$,
	$\IC(\Pi''')=O(\R(F_k)/n+\log\R(F_k))$, and $\CC(\Pi''')=O(\R(F_k)\log\R(F_k))$.
	
	We can repeat this process to strip away the $\AND_n$; that is, we use
	\autoref{fact:AND}, \autoref{fact:IC_onesided}, and \fullbref{fact:boost}
	to get a protocol $\Pi''''$ for $(F_{k-1})_\CS$ with $\err(\Pi'''')\leq 1/3$,
	$\IC(\Pi'''')=O(\R(F_k)/n^2+\log\R(F_k))$, and
	$\CC(\Pi'''')=O(\R(F_k)\log^2\R(F_k))$. Then \autoref{thm:ICCS}
	gives a protocol $\Pi'''''$ for $F_{k-1}$ with $\err(\Pi''''')\leq 1/3$,
	$\IC(\Pi''''')=O((\R(F_k)\log^3 N)/n^2+\log\R(F_k)\cdot \log^3 N)$,
	and $\CC(\Pi''''')=O(\R(F_k)\log^2 \R(F_k)\log^2 N)$,
	where $N$ is the input size of $F_{k-1}$. Here $N=n^{O(k)}$, so
	$\log N=O(k\log n)$ and $\log\R(F_k)=O(k\log n)$, and hence
	$\IC(\Pi''''')=O((\R(F_k)k^3\log^3 n)/n^2+k^4\log^4 n)$ and
	$\CC(\Pi''''')=O(\R(F_k)k^4\log^4 n)$.
	
	We now repeat this $k$ times to get a protocol $\Psi$ for $F_0=\Disj_n$.
	Then we have
	$\CC(\Psi)=O(b^k\R(F_k)k^{4k}\log^{4k} n)$ for some constant $b$,
	and the communication
	complexity of every intermediate protocol in the construction is also
	at most $O(b^k\R(F_k)k^{4k}\log^{4k} n)$. To calculate $\IC(\Psi)$,
	note that each iteration divides $\IC$ by $n^2$, adds a $\log \CC$ term,
	and multiplies by $k^3\log^3 n$.
	Thus we get, for some constant $b$,
	\[\IC(\Psi)=O\left((\R(F_k) b^k k^{3k}\log^{3k}n)/n^{2k}
	+k^3\log^3 n\cdot\log\CC(\Psi)
	\sum_{i=0}^{k-1}\left(\frac{k^3\log^3 n}{n^2}\right)^i\right).\]
	Since $k=O(n^{1/10})$, the sum is $O(1)$, so we get
	\begin{align*}
		\IC(\Psi)&=O((\R(F_k) b^k k^{3k}\log^{3k}n)/n^{2k}
		+k^3\log^3 n\cdot\log\CC(\Psi))\\
		&=O((\R(F_k) b^k k^{3k}\log^{3k}n)/n^{2k}+
		k^3\log^3 n\cdot (\log\R(F_k)+k\log k+k\log\log n))\hspace{-4em} \\
		&=O((\R(F_k) b^k k^{3k}\log^{3k}n)/n^{2k}+
		k^3\log^3 n\cdot \log\R(F_k)+n^{8/10}).& \text{(since $k=O(n^{1/10})$)}
	\end{align*}
	Now, since $\IC(\Disj_n)=\Omega(n)$, we get either
	$\R(F_k)=2^{\Omega(n/k^3\log^3 n)}=\Omega(2^{\sqrt{n}})$ or
	\[\R(F_k)=\Omega\left(\frac{n^{2k+1}}{b^k k^{3k}\log^{3k} n}\right).\]
	Because $k=O(n^{1/10})$, the value of $2^{\sqrt{n}}$
	is even larger than the desired lower bound, so the desired result follows.
\end{proof}

Finally, we get prove the near-quadratic separation.

\unvsr*

\begin{proof}
	We take $F=F_k$ with $k$ some slowly growing function of $n$.
	In particular, let $k=\sqrt{\frac{\log n}{\log\log n}}$.
	This gives $\R(F_k)\geq \frac{n^{2k+1}}{2^{O(\sqrt{\log n\log\log n})}}$
	and $\UN(F_k)\leq n^{k+2}2^{O(\sqrt{\log n\log\log n})}$,
	so $\log\UN(F_k)=\log^{3/2} n/\log\log^{1/2} n+O(\sqrt{\log n\log\log n})$
	and
	\begin{align*}
	\log\R(F_k)&=2\log^{3/2}n/\log\log^{1/2} n-O(\sqrt{\log n\log\log n})\\
	&=2\log\UN(F_k)-O(\log^{2/3}\UN(F_k)\log\log^{4/3}\UN(F_k)).
	\end{align*}
	Thus
	\[\R(F_k)\geq\UN(F_k)^{2-O(\alpha(\UN(F_k)))}\]
	where $\alpha(x)= \frac{\log\log^{4/3}x}{\log^{1/3} x}=o(1)$.
\end{proof}

\section*{Acknowledgements}
Part of this work was performed when the authors met during the workshop 
``Semidefinite and Matrix Methods for Optimization and Communication" hosted at the Institute for Mathematical Sciences, Singapore. We thank them for their hospitality. R.J would like to thank Ankit Garg for helpful discussions.

This work is partially supported by ARO grant number W911NF-12-1-0486, by the Singapore Ministry of Education and the 
National Research Foundation, also through NRF RF Award No. NRF-NRFF2013-13, and the Tier 3 Grant ``Random 
numbers from quantum processes'' MOE2012-T3-1-009.
This research is also partially supported by the European Commission IST STREP project Quantum Algorithms
(QALGO) 600700 and by the French ANR Blanc program under contract ANR-12-BS02-
005 (RDAM project).
M.G.\ is partially supported by the Simons Award for Graduate Students in TCS.


\DeclareUrlCommand{\Doi}{\urlstyle{sf}}
\renewcommand{\path}[1]{\small\Doi{#1}}
\renewcommand{\url}[1]{\href{#1}{\small\Doi{#1}}}
\newcommand{\eprint}[1]{\href{http://arxiv.org/abs/#1}{\small\Doi{#1}}}
\bibliographystyle{alphaurl}
\phantomsection\addcontentsline{toc}{section}{References} 
\bibliography{cheat}

\end{document}